\DeclareMathOperator*{\argmax}{arg\,max}
\DeclareMathOperator*{\argmin}{arg\,min}
\theoremstyle{definition}
\newtheorem{theorem}{Theorem}
\newtheorem{lemma}{Lemma}[section]
\newtheorem{proposition}[lemma]{Proposition}
\newtheorem{definition}[lemma]{Definition}
\newtheorem{claim}{Claim}[section]
\definecolor{light-gray}{gray}{0.925}
\newmdtheoremenv[
linewidth=2pt,
backgroundcolor=light-gray,
linecolor=light-gray,
innertopmargin=0
]
{result}{Theorem}
\renewcommand{\paragraph}[1]{
    \medskip\noindent{\bf #1}
    }
\newcommand{\N}{\mathbb{N}}
\newcommand{\poly}{\mathsf{poly}}
\newcommand\floor[1]{\lfloor#1\rfloor}
\newcommand{\junk}[1]{}
\newcommand{\Placed}{\mathsf{Placed}}
\newcommand{\iter}[1]{^{(#1)}}
\newcommand{\overlap}{\eta}
\newcommand{\step}{\varsigma}
\newcommand{\superstep}[1]{^{(\step#1)}}
\newcommand{\chain}{\mathscr{C}}
\newcommand{\link}[1]{c_{#1}}
\newcommand{\LP}[1]{LP~(#1)}
\newcommand{\start}[1]{S_{#1}}
\newcommand{\comp}[1]{C_{#1}}
\newcommand{\LPmakespan}{C}
\newcommand{\phasevariable}[2]{y_{#1,#2}}
\newcommand{\machinephase}[3]{z_{#1,#2,#3}}
\newcommand{\jobmachine}[2]{x_{#1,#2}}
\newcommand{\groupspeed}[1]{\overline{s}_{#1}}
\newcommand{\groupsize}[1]{m_{#1}}
\newcommand{\group}[1]{\Gamma_{#1}}
\newcommand{\assignfunc}{\kappa} 
\newcommand{\med}[1]{\mathsf{median}(#1)}
\newcommand{\jobtogroup}[1]{\kappa(#1)}
\newcommand{\grouptojobs}[1]{\kappa^{\scriptscriptstyle-1}(#1)}
\newcommand{\fastergroupstojobs}[1]{\kappa^{\scriptscriptstyle-1}_{\ge}(#1)}
\newcommand{\delay}{\rho}
\newcommand{\optmakespan}{C^*}
\newcommand{\speed}[1]{s_{#1}}
\newcommand{\proc}[1]{p_{#1}}
\newcommand{\setproc}[1]{\proc{}(#1)}
\newcommand{\predex}[1]{A_{#1}}
\newcommand{\band}[1]{B_{#1}}
\newcommand{\rhophase}{\rho\text{-phase}}
\newif\ifnotes\notesfalse
\newcommand{\david}[1]{\textcolor{cyan}{\textrm{[David says: #1]}}}
\newcommand{\zoya}[1]{\textcolor{blue}{\textrm{[Zoya says: #1]}}}
\newcommand{\biswa}[1]{\textcolor{olive}{\textrm{[biswa says: #1]}}}
\newcommand{\rajmohan}[1]{\textcolor{red}{\textrm{[Rajmohan says: #1]}}}
\newcommand{\aravind}[1]{\textcolor{orange}{\textrm{[Aravindan says: #1]}}}
\newcommand{\david}[1]{}
\newcommand{\zoya}[1]{}
\newcommand{\biswa}[1]{}
\newcommand{\rajmohan}[1]{}
\newcommand{\aravind}[1]{}
\begin{document}

\title{Scheduling Precedence-Constrained Jobs on Related Machines with Communication Delay}
\author{
    Biswaroop Maiti\thanks{Northeastern University, Boston, MA, USA. Email: \texttt{m.biswaroop@gmail.com}}
\and    
    Rajmohan Rajaraman\thanks{Northeastern University, Boston, MA, USA. Email: \texttt{rraj@ccs.neu.edu}} 
\and
    David Stalfa\thanks{Northeastern University, Boston, MA, USA. Email: \texttt{stalfa@ccis.neu.edu}}
\and    
    Zoya Svitkina\thanks{Google Research, Mountain View, CA, USA. 
    Email: \texttt{zoya@google.com}}
\and
    Aravindan Vijayaraghavan\thanks{Northwestern University, Evanston, IL, USA.  Email: \texttt{aravindv@northwestern.edu}}
}
\date{}

\maketitle

\begin{abstract}
    We consider the problem of scheduling $n$ precedence-constrained jobs on $m$ uniformly-related machines in the presence of an arbitrary, fixed communication delay $\delay$. Communication delay is the amount of time that must pass between the completion of a job on one machine and the start of any successor of that job on a different machine.  We consider a model that allows job duplication, i.e.\ processing of the same job on multiple machines, which, as we show, can reduce the length of a schedule (i.e., its makespan) by a logarithmic factor. Our main result is an $O(\log m \log \delay / \log \log \delay)$-approximation algorithm for minimizing makespan, assuming the minimum makespan is at least $\delay$.  Our algorithm is based on rounding a linear programming relaxation for the problem, which includes carefully designed constraints capturing the interaction among communication delay, precedence requirements, varying speeds, and job duplication.  To derive a schedule from a solution to the linear program, we balance the benefits of duplication in satisfying precedence constraints early against its drawbacks in increasing overall system load. Our result builds on two previous lines of work, one with communication delay but identical machines \cite{LR02}, and the other with uniformly-related machines but no communication delay \cite{chudak1999approximation,Li17}.

We next show that the integrality gap of our mathematical program is $\Omega(\sqrt{\log \delay})$.  Our gap construction employs expander graphs and exploits a property of robust expansion and its generalization to paths of longer length, which may be of independent interest. Finally, we quantify the advantage of duplication in scheduling with communication delay.  We show that the best schedule without duplication can have makespan $\Omega(\rho/\log \rho)$ or $\Omega(\log m/\log\log m)$ or $\Omega(\log n/\log \log n)$ times that of an optimal schedule allowing duplication.  Nevertheless, we present a polynomial time algorithm to transform any schedule to a schedule without duplication at the cost of a $O(\log^2 n \log m)$ factor increase in makespan.  Together with our makespan approximation algorithm for schedules allowing duplication, this also yields a polylogarithmic-approximation algorithm for  the setting where duplication is not allowed.  
\end{abstract}

\thispagestyle{empty}
\newpage


\setcounter{page}{1}

\newpage
\section{Introduction}

As computational workloads get larger and more complex, it becomes necessary to distribute tasks across multiple heterogenous processors. For example, the process of training and evaluating neural network models is often distributed over diverse devices such as CPUs, GPUs, or other specialized hardware; this process, commonly referred to as {\em device placement}\/ has gained significant interest \cite{mirhoseini2017device,gao+cl:place,mirhoseini:hierPlace}.
This gives rise to a multiprocessor scheduling problem of optimizing both the assignment of tasks to processors and the order of their execution.  
We address this problem, taking into account several complications that such a distributed setting presents, including job dependencies, heterogeneous machine speeds, and a communication delay between them.

The jobs comprising a workload can have data dependencies between them,  where the output of one job serves as the input to another. As is common in scheduling literature, we model these  dependencies using a directed acyclic graph (DAG), where a directed edge $uv$ represents that job $u$ must be scheduled before $v$. However, if these two jobs are executed on different machines, additional time is needed to transfer the data from one machine to the other. We model this time as a communication delay: this delay is zero if the two jobs run on the same machine, and is equal to some value $\delay$ if they run on different machines. 
Considering that the communication delay can be substantial, another aspect of the problem comes into play. Instead of a machine waiting for the result of some computation to be communicated from another machine, it may be advantageous for it to perform this computation itself, thus {\em duplicating}\/ work in order to obtain the result sooner (as highlighted in early work~\cite{papadimitriou1990towards}). Indeed, the technique of duplication to hide latency has been incorporated in schedulers proposed for grid computing and cloud environments~\cite{bozdag+oc:schedule,casas+trwz:schedule,hu+k:schedule,song+yjc:schedule}.
In addition, jobs may have different processing sizes and the  devices may run at different speeds, representing either different types (e.g.\ CPU, GPU, or TPU), or  differences in machine model.  

Optimization problems associated with scheduling
under  communication  delays  have  been  studied  over  the  last  three  decades, but  provably  good
approximation  bounds  are  few  and  several  challenging  open  problems remain~\cite{ahmad+k:schedule,bampis+gk:schedule,darbha+a:schedule,hoogeveen+lv:schedule,LR02,munier1999approximation,munier+h:schedule,palis1996task,papadimitriou1990towards,picouleau1991two,rayward1987uet}.  It is known
that scheduling a DAG of uniform size jobs on identical machines with a communication delay is
NP-hard, even when the number of machines is infinite~\cite{rayward1987uet,picouleau1991two}.  Several inapproximability
results have also been derived~\cite{bampis+gk:schedule,hoogeveen+lv:schedule}.  
However, these results are very limited and the approximability status of scheduling under communication delay is listed as one of the top ten open problems in scheduling surveys~\cite{bansal:survey,schuurman+w:survey}.  For the special case of uniform speeds and unit jobs, a logarithmic-approximation algorithm is presented in~\cite{LR02}.
In recent work~\cite{kulkarni}, a quasi-polynomial time approximation scheme is developed for the problem when the number of machines is $O(1)$, communication delays are $O(1)$, and the machines are identical.  Our focus in this paper is on deriving approximation algorithms for scheduling a DAG with {\em non-uniform size}\/ jobs on an {\em arbitrary}\/ number of {\em related machines}\/ (arbitrary speeds) and an {\em arbitrary communication delay}. 

    \subsection{Our results and techniques}
    We study the problem of scheduling a DAG with $n$ jobs of arbitrary sizes on $m$ {\em related machines}, connected by a network with a fixed communication delay.
In the related machines model, machine $i$ has a speed $s_i$, and the time taken to complete a job $v$ of size $p_v$ on $i$ is given by $p_v/s_i$.  We represent the network communication delay 
as $\delay$ times the processing time of the smallest job on the fastest machine.

\paragraph{Approximation algorithm for makespan.}
We focus on the {\em makespan}\/ objective, which is defined as the time taken by a given schedule to complete the given DAG on the machines.  We consider scheduling policies that allow duplication of jobs, which, as we discuss below, can reduce  makespan when compared to schedules that do not allow duplication. 

\begin{result}[Makespan approximation]
    There is a polynomial time algorithm that, given an instance of DAG scheduling with fixed communication delay, computes a schedule whose makespan is $O(\log m \log \delay/ \log \log \delay)(OPT + \delay)$, where $OPT$ is the optimal makespan for the given instance.
\end{result}
\noindent
We thus obtain an $O(\log m \log \delay/\log \log \delay)$-approximation algorithm as long as $OPT \ge \delay$, which is a natural requirement since it takes $\delay$ time to distribute the jobs to the machines at the start of the schedule as well as to synchronize termination at the end of the schedule. We note that the $\log m$ factor in our approximation corresponds to an upper bound on the number of geometrically separated speed groups. This entails that, for the special case of uniform speeds, our algorithm constructs a schedule with makespan upper bounded by $O(\log\delay/\log\log\delay)(OPT+\delay)$, thus extending the result of~\cite{LR02} to non-uniform job sizes.

A central component of our algorithm is a linear programming relaxation.
A significant challenge in this regard is to capture the precedence requirement in the presence of communication delays: 
we would like to determine where and when to schedule individual jobs while, at the same time, adjusting the start time of each job to account for communication delays in relation to \textit{all} its predecessors. 
We consider several related LPs and their natural extensions and show that these approaches are inadequate for our algorithm (see Appendix~\ref{sec:alternate_LPs}).
To overcome these challenges, we introduce a set of variables that indicate whether a job and its predecessor are scheduled within $\delay$ time of each other, and incorporate these variables into two new sets of constraints. The first enforces the delay requirement on jobs that do not start within $\delay$ time of each other, and the second upper bounds the total size of all predecessors that can be executed within $\delay$ time of their successor. The addition of these constraints exponentially reduces the integrality gap of our program.

Our rounding algorithm has two components.  First, we process a fractional solution to our linear relaxation to determine a tentative assignment of jobs to groups of machines, along the lines of \cite{chudak1999approximation}.
Next, we convert the group assignment to an actual schedule.  Unlike in the case of related machines with no communication delay, we cannot invoke a list scheduling type of policy.  Furthermore, our algorithm needs to duplicate jobs judiciously so as to hide the communication latency and achieve the desired approximation ratio. 
The main challenge in this regard is that, in order to make sufficient progress on the LP solution, we must duplicate some jobs on machines much slower than their assigned machine. We overcome this obstacle by upper bounding the total size of any duplicated jobs and structuring the machines such that those with slower speed have, as a whole, higher capacity. 

\paragraph{Integrality gap.}
We next study the integrality gap of the linear program underlying our approximation algorithm, and its dependence on the communication delay $\delay$.  Previous work of~\cite{chudak1999approximation} on scheduling on related machines implies an integrality gap of $\Omega(\log m/\log \log m)$ for non-uniform speeds and non-uniform job sizes, but it does not consider communication delays and hence does not yield any gap in terms of $\delay$.  

\junk{
\begin{itemize}
    \item We show that the integrality gap of the linear program is $\Omega(\sqrt{\log \delay})$, even when all the machines are identical and all the jobs are of the same size.
\end{itemize}
(See Theorem~\ref{thm:IG} for a formal statement).
}
\begin{result}[Integrality gap]
    There is a family of instances with uniform speeds and uniform job sizes such that for any $\rho$ that is at least some sufficiently large constant, our linear programming relaxation has a gap of at least $\Omega(\sqrt{\log \rho})$.
\end{result}
\noindent
This integrality gap gives the first evidence that constant factor approximations may not be tractable or may be out of reach of existing techniques when the communication delay $\delay$ is super-constant, even with uniform job sizes and identical machines. 
The integrality gap also extends to variants of time-indexed linear programs and, we suspect, to a wider class of mathematical programming relaxations. 
Given that without communication delay, the unit speed and unit job size case has an integrality gap of at most 2 by Graham's list scheduling~\cite{graham:schedule}, our result suggests a separation in the approximability between the variants of precedence-constrained scheduling with and without communication delays. 

Our gap construction consists of a layered DAG with $L=\omega(1)$ layers, where the dependency graph between successive layers corresponds to a random graph. 
The main technical challenge is to argue that $\Omega(L)$ phases (a phase here corresponds to roughly $\delay$ time units) are needed in order to schedule all the jobs for the optimal integral solution. The expansion of the random graph implies that at most $o(1)$ fraction of the jobs can be scheduled in the first phase. However, in the next phase, the jobs that were completed previously are now available on all the machines; moreover, the remaining graph (on the unscheduled jobs) in subsequent phases is {\em not random} any longer!
To overcome this technical hurdle, 
we identify and exploit a property of ``robust expansion'' and its generalization to paths of longer length, which may be of independent interest.  Section~\ref{sec:overview.integrality} provides an overview of our integrality gap result, and Section~\ref{sec:integrality} contains the full proof.

\paragraph{Bounding the duplication advantage.}
Given the potential of duplication to effectively hide communication latency, a natural question arises: how much smaller can the makespan of a schedule with duplications be, when compared to a \emph{no-duplication} schedule, i.e., a schedule in which each job is processed exactly once?  Our final set of results formally quantifies the {\em duplication advantage}.
\junk{
\begin{itemize}
    \item We show that the makespan of the best no-duplication schedule can be $\Omega()$ times the makespan of the best general schedule.
    We next show that any schedule can be transformed into a no-duplication schedule with an increase in makespan bounded by a factor of $O(\log^2 n \log m)$ in polynomial time.  
\end{itemize}
}
\begin{result}[Bounding the duplication advantage]
\textbf{Upper bound:} Given any instance with $n$ jobs, $m$ machines, communication delay $\rho$, and a schedule with makespan $C^* \ge \rho$, there exists a polynomial-time computable no-duplication schedule with makespan $O(C^* \cdot \log^2 n \log m)$.
\textbf{Lower bound:} There exists an instance with $n/2 = m = 2^\rho$ for which any no-duplication schedule has makespan at least $\rho/\log \rho$ times the optimal makespan.  
\end{result}
\noindent
Together with our makespan algorithm for general schedules, the algorithm of the above theorem yields a {\em polylogarithmic approximation makespan algorithm for no-duplication schedules}. 
Note that the preceding approximation ratio holds even when the makespan of an optimal no-duplication schedule is less than $\rho$ since it is straightforward to determine whether there is a no-duplication schedule that completes all jobs in less than $\rho$ time without any communication. Section~\ref{sec:overview.duplication} gives an overview of our algorithm that transforms a general schedule to a no-duplication schedule, and Section~\ref{sec:duplication} contains the full proofs for bounding the duplication advantage.
    
    \subsection{Related work}

Scheduling theory has a rich history and there is extensive work on scheduling jobs with precedence constraints dating back to over three decades.  In the following, we review scheduling work most closely related to this paper: scheduling DAGs on related machines, and scheduling DAGs under communication delays.

\paragraph{Scheduling DAGs on related machines.}
The problem of scheduling DAGs on related machines (with no communication delays) to minimize weighted completion time was first studied by Jaffe, who gave an $O(\sqrt{m})$ approximation algorithm~\cite{Jaffe_1980}.  This was significantly improved by Chudak and Shmoys who first derived an $O(\log m)$ asymptotic approximation ratio for minimizing makespan~\cite{chudak1999approximation} and then invoked a general framework due to Hall et al~\cite{Hall_1997} and Queyranne and Sviridenko~\cite{Queyranne_2002} to convert an approximation algorithm for makespan to an approximation algorithm for weighted completion time.  The Chudak-Shmoys algorithm for makespan minimization first solves an LP relaxation for the problem, and then assigns each job to a group of machines whose speeds are within a factor of two of one another.  Using Graham's list scheduling~\cite{graham:schedule}, they then schedule the jobs within each group of machines.  The $O(\log m)$ factor arises due to the number of machine groups.  In subsequent work, Chekuri and Bender derived the same $O(\log m)$ approximation via a combinatorial algorithm~\cite{chekuri1999precedence}.  In recent work, Shi Li improved the approximation ratio to $O(\log m/\log\log m)$ by a more careful tradeoff between the factor lost for organizing the machines into groups and the factor lost while assigning jobs to machine groups~\cite{Li17}.  

With regard to hardness, it is known that the problem is hard to approximate to within a constant factor even for the special case of identical machines, where the particular constant depends on underlying complexity theory assumptions~\cite{LenstraK78,Bansal_2009,Svensson_2010}.  Recent work has also shown that the problem is hard to approximate to within any constant assuming the hardness of a particular optimization problem on $k$-partite graphs~\cite{Bazzi_2015}. 

\paragraph{ Scheduling under communication delays.}
As discussed above, optimization problems associated with scheduling under communication delays have been studied for three decades since the early work of~\cite{rayward1987uet,papadimitriou1990towards,veltman+ll:schedule}, but provably good approximation bounds are few.  All  previous work assumes uniform machines and either uniform job sizes or special cases such as $O(1)$ machines and $O(1)$ communication delay.  For instance, in the special case of unit-size jobs, identical machines, and unit communication delay, a 7/3-approximation is presented in~\cite{munier+h:schedule}, while~\cite{hoogeveen+lv:schedule} show that it is NP-hard to approximate better than a factor of 5/4.  Hardness results are also shown in~\cite{bampis+gk:schedule,picouleau1991two,rayward1987uet}.  To the best of our knowledge, our work is the first to develop algorithms for scheduling non-uniform jobs with precedence constraints on related machines connected by an arbitrary communication network with fixed delay. 

\junk{
It is known
that scheduling a DAG of uniform size jobs on identical machines with a communication delay is
NP-hard, even when the number of machines is infinite~\cite{rayward1987uet,picouleau1991two}.  Several inapproximability
results have also been derived~\cite{bampis+gk:schedule,hoogeveen+lv:schedule}.  }

The natural idea of duplication to hide communication latency was first studied by Papadimitriou and Yannakakis, who proposed a 2-approximation algorithm for scheduling DAGs on an unbounded number of identical machines with a fixed communication delay~\cite{papadimitriou1990towards}.  Improved bounds for infinite machines have been given in~\cite{ahmad+k:schedule,darbha+a:schedule,munier+k:schedule,palis1996task}.  For the case of a bounded number of machines,~\cite{munier1999approximation,munier+h:schedule} give approximation algorithms under some special cases of either very small or very large communication delay or with the DAG restricted to be a tree-precedence graph.  The only provable guarantee for a bounded number of machines with an arbitrary communication delay parameter is the work of Lepere and Rapine, who present an approximation algorithm for scheduling a DAG of unit-size jobs on identical machines with communication delay of $\delay$ units, which achieves a makespan $O((OPT + \delay)\log\delay/\log\log\delay)$~\cite{LR02}.  The recent work of~\cite{kulkarni} presents a novel quasi-polynomial time approximation scheme, based on the Sherali-Adams hierarchy framework, for the problem with $O(1)$ identical machines, non-uniform job sizes, and $O(1)$ communication delays. 
\junk{The approximation guarantee is asymptotic owing to the $\delay\log\delay/\log\log\delay$ additive term in the bound.}

\section{Problem formulation and notation}
\label{sec:prelim}
An instance of precedence constrained scheduling with fixed communication delay is a triple $(G,M,\delay)$  where $G$ is a directed acyclic graph, $M$ is a set of machines, and $\delay$ is the communication delay. In the graph $G=(V,E)$, the $n$ nodes of $V$ represent jobs and the edges of $E$ represent precedence constraints. 
Each job $v$ has a \textit{size} $\proc{v} > 0$ and for any subset $U\subseteq V$, we define $\setproc{U} = \sum_{u \in U} \proc{u}$. 
In the set of machines $M = \{1, \ldots, m \}$, each machine $i \in M$ has a speed $s_i$. We order the machines such that $s_1 \le s_2 \le \ldots \le s_m$. Processing a job $v$ on a machine $i$ takes $\proc{v}/s_i$ units of time. We normalize these values so that the shortest job has size 1 and the fastest machine has speed 1, in which case one time unit is defined as the time needed to process the shortest job on the fastest machine. 
Each job may be \emph{duplicated}, i.e.\ copies of it processed on different machines. 
Preemption is not allowed, and at most one job can run on a machine at any given time.

\begin{wrapfigure}{r}{0.55\textwidth}
    \vspace{-20pt}
    \begin{center}
        \begin{tabular}{|cl|cl|}
            \hline
            $m$ & number of machines & $n$ & number of jobs \\
            $i,j$ & machines & $v,u$ & jobs \\
            $\speed{i}$ &  speed of machine $i$ & $\proc{v}$ &  size of job $v$ \\
            $\delay$ & communication delay & $\predex{v}$ &  predecessors of $v$ \\
            \hline
        \end{tabular}
    \end{center}
    \vspace{-20pt}
\end{wrapfigure}
We say that $u$ is a {\em predecessor} of $v$, denoted $u \prec v$, if there is some (non-zero length) directed path from $u$ to $v$ in $G$. We denote the set of all predecessors of $v$ by $\predex{v}$ (note that $v \not\in \predex{v}$). The parameter $\delay$ specifies the time needed to communicate the result of a job computed on one machine to a different machine.  So if $u \prec v$ and $v$ starts on machine $i$ at time $t$, then there must be a copy of $u$ that completes either on machine $i$ by time $t$ or on a different machine by time $t-\delay$. 

We represent a schedule as a function $\sigma: V \times M \to \mathbb{R} \cup \{\infty\}$ mapping pair $(v,i)$ to the start time of $v$ on $i$, or to $\infty$ if $v$ is not scheduled on $i$. \junk{Note that this implicitly assumes each job is executed at most once on each machine, which can be done without any loss in makespan.} 
We say that $\sigma$ is a schedule of $(G,M,\delay)$ if all jobs in $G$ have a finite start time on some machine in $M$ subject to the constraints listed above.
The objective is to find a $\sigma$ with minimum makespan, which is the maximum (finite) completion time in $\sigma$ of any copy of any job. 
Since this objective is trivial if there is only one job or one machine, we assume $n, m \ge 2$. In the three field notation, this problem is denoted $Q | \mbox{duplication, prec}, c | C_{\max}$ where $c$ indicates uniform communication delay. 

\section{Overview of the results and techniques}
\label{sec:overview}

    \subsection{Approximation algorithm for makespan}
    \label{sec:overview.makespan}
    At a high level, our algorithm finds a fractional solution to the scheduling problem and then, through a series of refinements, constructs a final schedule for the given instance.  The various components of the algorithm are highlighted in the figure below.  The first step is a standard preprocessing of the instance, in which we eliminate machines that are slower than the fastest machine by a factor of $m$ or more, while incurring at most a constant factor increase in makespan.  We refer the reader to Section~\ref{sec:makespan} for details.

\begin{figure}[H]
    \centering
    \includegraphics[scale=0.9]{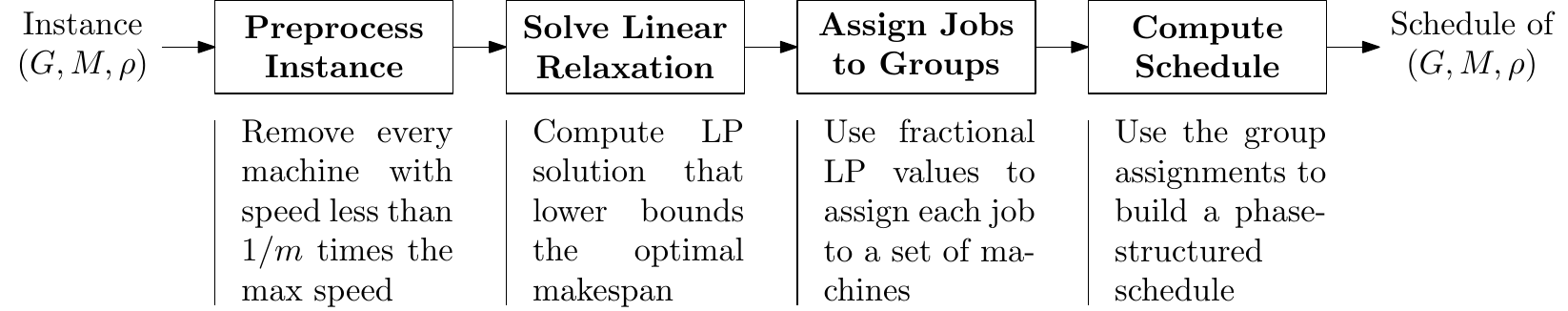}
    \label{fig:high_level_algo}
\end{figure}

\vspace{-5mm}
\paragraph{Approaches based on previous related work.}
We briefly review natural approaches to the problem of scheduling on related machines with communication delay, based on previous related work and indicate the ways in which these approaches are inadequate for our setting. 

One approach is that taken in \cite{LR02}, which uses a combinatorial algorithm for the case with unit-speed machines, unit-size jobs, communication delay $\delay$, and duplication allowed. The crux of the algorithm is to repeatedly find jobs that can be completed in $\delay$ steps and schedule them (duplicating their uncompleted predecessors, if necessary) until all such jobs have at least half their remaining predecessors already scheduled. At this point the algorithm introduces a delay on all machines, removes all previously scheduled jobs, and repeats. While this approach may work for arbitrary job sizes, accounting for variable speeds is more difficult. In Appendix~\ref{sec:combinatorial_approach} we show that a natural extension of this combinatorial algorithm fails. 

A more effective approach is to develop a suitable LP relaxation. We consider natural variants of two relaxations developed in related work. The first captures precedence constraints and communication delays by relating job-machine assignment variables to job start and completion time variables. In this way, precedence constraints can be addressed effectively, as has been shown by~\cite{chudak1999approximation, Li17}, but communication delays are much more challenging to capture. One natural approach is to add same-machine indicator variables $\delta_{u,v,i}$. Intuitively $\delta_{u,v,i} =1$ if $u$ and $v$ are both scheduled on machine $i$, otherwise $\delta_{u,v,i} = 0$. We could then add the following constraint, where $S_v$ and $C_v$ represent the start and completion times of job $v$.
\vspace{-0.1in}
\begin{align*}
    S_v \ge C_u + \delay \Big(1 - \sum_i \delta_{u,v,i}\Big) &&\forall u,v,i : u \prec v
\end{align*}
\vspace{-0.2in}

\noindent We can think of the constraint as stating that any job $v$ must begin at least $\delay$ steps after any of its predecessors $u$, if $v$ and $u$ are not executed on the same machine.  
Unfortunately, a simple instance with a fractional solution that spreads each job among all the machines and sets the $\delta$ values to $1/m$ leads to an integrality gap as large as a polynomial in  $\delay$, $m$, and $n$. See Appendix~\ref{sec:same_machine_variable} for details.  

A different strategy for constructing a linear relaxation is to use time-indexed job-machine assignment variables $x_{v,i,t}$ to indicate the completion time $t$ of job $v$ on machine $i$. 
Indeed, such a program capturing both precedence constraints and communication delays is used in~\cite{kulkarni} to obtain a quasi-polynomial time approximation scheme when the number of machines $m$ is $O(1)$, the communication delays are $O(1)$, and all machines are identical.  Unlike \cite{kulkarni}, however, we are working with an arbitrary number of machines of arbitrary speeds, and arbitrarily large communication delay.  In this case, the time-indexed relaxation has an integrality gap as large as a polynomial in $\delay$, $m$, and $n$. See Appendix~\ref{sec:time_indexed}. 

\junk{
Our approach is along the lines of \cite{chudak1999approximation,Li17}. We develop a more sophisticated set of linear delay constraints, in addition to the standard scheduling constraints. We then use elements of the rounding in \cite{chudak1999approximation} and the algorithm in \cite{LR02} to obtain a schedule from a fractional solution to our relaxation. The combination of these prior rounding techniques with the new program yields a schedule with polylogarithmic approximation guarantees.}

\paragraph{Developing our relaxation.}
To overcome the challenges mentioned above, we introduce two new sets of constraints - \textit{delay constraints} and \textit{phase constraints} - in addition to the usual related machines scheduling constraints of \cite{chudak1999approximation,Li17}, where a \textit{phase} is any interval of $\delay$ time in a schedule. To build intuition, we introduce these constraints in the setting with unit speeds and unit job sizes. We then provide a natural (but weak) generalization of these constraints to the setting with arbitrary speeds and job sizes which, unfortunately, has a large integrality gap. 
Finally, we refine the constraints yielding our linear relaxation. 

For unit speeds and unit jobs size, the phase constraints require that if a job $v$ is scheduled to start at time $t$ on machine $i$, then the total number of $v$'s predecessors that are scheduled to start in the interval $[t-\delay,t)$ is at most $\delay$ because they must all be scheduled on the same machine. To capture this property, we introduce {\em same-phase} variables $\phasevariable{u}{v}$ for each pair of jobs $u,v$ such that $u \prec v$. We can view $\phasevariable{u}{v}$ as indicating whether $u$ is scheduled within $\delay$ steps of the start of $v$. We can then give the following constraints.  
\vspace{-0.08in}
\begin{align*}
    \start{v} \ge \start{u} + \delay(1-\phasevariable{u}{v}) \;\;\; \forall u,v: u \prec v 
    \hspace{.8in} \text{and} \hspace{1in}
    \delay \ge \sum_{u \prec v} \phasevariable{u}{v}    \;\;\;\forall v
\end{align*}
\vspace{-0.15in}

\noindent The first is the delay constraint and states that the difference in start times for $v$ and $u$ is at least $\delay$ if $u$ is not scheduled within $\delay$ of the start time of $v$. The second is the phase constraint and states that the total number of $v$'s predecessors that are scheduled to start within $\delay$  time of $v$ is at most $\delay$. While this relaxation has a small integrality gap in the unit case, adapting it to the non-unit case is not straightforward.

In the case with arbitrary speeds and job sizes, we would like to capture the property analogous to the one used in the unit case: if a job $v$ is scheduled to start at time $t$ on machine $i$ then the set of all $v$'s predecessors that are scheduled to start in the interval $[t-\delay,t)$ should have total size at most $\delay\speed{i}$.
The following relaxation, which retains the same-phase variables of the unit relaxation as well as the unit delay constraint, shows a natural way to extend the phase constraint to capture this property.
\vspace{-0.08in}
\begin{align*}
    \delay \sum_{i}  \speed{i} \jobmachine{v}{i}  &\ge \sum_{u \prec v} \proc{u} \phasevariable{u}{v}   &\forall v
\end{align*}
\vspace{-0.15in}

However, these constraint have a flaw.  If, say a small fraction of $v$ is placed on the fastest machine and the rest on the slowest, then the left-hand term will allow too many predecessors to be scheduled in the same phase. As shown in Appendix~\ref{sec:simple_same_phase_extension}, this leads to an integrality gap as large as $\delay$ or polynomial in $m$ and $n$. 

A key idea in our linear relaxation is the introduction of {\em machine-dependent same-phase}\/ variables, which tie the notion of a phase to the speed of a particular machine. Using these variables, we introduce new phase and delay constraints which rely crucially on our ordering of machines by increasing speed. 
Our linear relaxation LP minimizes $C$ subject to the following constraints. 

\vspace{-3mm}
\noindent
\begin{minipage}{.56\linewidth}
  \begin{flalign}
        \LPmakespan &\ge \start{v} + \proc{v} \sum_i \jobmachine{v}{i} / \speed{i} &\forall v
        \tag{\ref{phaseLP_completionlb}}
        \\
        \start{v} &\ge \start{u} + \proc{u} \sum_i  \jobmachine{u}{i} / \speed{i} &\forall u,v: u \prec v
        \tag{\ref{phaseLP_execution}}
        \\
        \start{v} &\ge \start{u} + \delay \Big( \sum_{j \le i} \jobmachine{v}{j} - \machinephase{u}{v}{i} \Big) &\forall u,v,i: u \prec v
        \tag{\ref{phaseLP_delay}}
        \\
        \sum_{j \le i} \jobmachine{v}{j} &\ge \sum_{u \prec v} \proc{u} \machinephase{u}{v}{i} / \delay\speed{i}   &\forall v, i
        \tag{\ref{phaseLP_phasepredecessors}}
  \end{flalign}
\end{minipage}\hfill
\begin{minipage}{.405\linewidth}
  \begin{flalign}
        \LPmakespan\speed{i} &\ge \sum_v \proc{v} \jobmachine{v}{i} &\forall i
        \tag{\ref{phaseLP_loadlb}}
        \\
        \sum_i  \jobmachine{v}{i} &= 1 &\forall v
        \tag{\ref{phaseLP_finishjob}}
        \\
        \start{v} &\ge 0 &\forall v
        \tag{\ref{phaseLP_nonnegstart}}
        \\
        \jobmachine{v}{i} &\in (0,1) &\forall v,i
        \tag{\ref{phaseLP_jobmachine}}
        \\
        \machinephase{u}{v}{i} &\in (0,1) &\forall u,v,i: u \prec v
        \tag{\ref{phaseLP_samephase}}
  \end{flalign}
\end{minipage}
\vspace{2mm}

We provide some intuition behind the variables and constraints.  We interpret the variables $\jobmachine{v}{i}$ as giving the ``primary" placement of $v$ and $\start{v}$ as the corresponding start time of $v$.  Then, for any jobs $u$ and $v$ such that $u \prec v$ and for any machine $i$, we can understand the variable $\machinephase{u}{v}{i}$ as indicating, first, whether $v$ is executed on a machine indexed $i$ or lower, and second, whether the start time of $u$ is within $\delay$ of the start time of $v$. The significance of this indication is that, if these conditions are met, then some \textit{copy} of $u$ must execute on the same machine as $v$ within $\delay$ time of $v$ and, therefore, only predecessors of total size at most $\delay\speed{i}$  can meet these conditions.  The remaining variable $C$ captures the makespan of the resulting schedule. 

\junk{
In the duplication setting, we can interpret the values of $\jobmachine{v}{i}$ as giving the \textit{primary} placement of $v$. That is, in order to finish $v$ as early as possible, we should place it on a machine where it has high mass according to $\jobmachine{v}{i}$. Now, suppose we decide to schedule $v$ on machine $i$ and consider the $\delay$ length time interval before the start of $v$. Since there is no communication during this time, we can select some subset of $v$'s predecessors with total size up to $\delay \speed{i}$ to place on $i$ in this interval.  We can then think of $\machinephase{u}{v}{i}$ as giving this \textit{secondary} placement of $u$. That is, if we schedule $v$ on machine $i$ and $\machinephase{u}{v}{i}$ is large then, in order to finish $v$ as early as possible, we should place a \textit{copy} of $u$ in the same phase as $v$ on $i$. 
\david{see Zoya's comment on this explanation in Section 4}
}

The delay constraint~(\ref{phaseLP_delay}) states that if $v$ is scheduled on a machine slower than $i$, then $v$ should start at least $\delay$ time after any predecessor $u$ unless $u$ is scheduled in the same phase as $v$.  The phase constraint~(\ref{phaseLP_phasepredecessors}) states that if $v$ is scheduled on a machine slower than $i$, then the total size of $v$'s predecessors scheduled in the same phase is at most $\delay\speed{i}$.
The remaining constraints ensure that no job completion time exceeds the makespan 
(\ref{phaseLP_completionlb}), that jobs are executed completely and in order  (\ref{phaseLP_execution}, \ref{phaseLP_finishjob}), and that the total load on any machine does not exceed the makespan (\ref{phaseLP_loadlb}).

\paragraph{Group assignment.} 
The fractional solution we obtain for the relaxation of LP gives us a fractional assignment of jobs to machines, as well as lower bounds on start times of jobs. The objective function is the maximum over all job completion times as well as over all machine loads, and so lower bounds the optimal makespan. 
The next step is to convert this solution into an assignment $\assignfunc$ of each job to some set of machines. This assignment will guide our final construction of the schedule. We partition the set of machines into $K \le \log m$ groups $\group{1}, \ldots, \group{K}$ of increasing speed and define a job's ``median'' machine group as the lowest (slowest) one such that the job's total fractional assignment to this and slower groups is at least $\sfrac{1}{2}$.  Our group assignment follows an approach similar to~\cite{chudak1999approximation,Li17}: we assign each job to the highest capacity group that is at least as fast as its median group. 
Note that, if there are jobs assigned to groups $\group{k}$ and $\group{k'}$, with $k < k'$, then the minimum speed in group $\group{k}$ is less than that in group $\group{k'}$, but the capacity of $\group{k}$ is at least that of $\group{k'}$, since the jobs assigned to $\group{k}$ could have been assigned to group $\group{k'}$ but were not.  

\paragraph{Computing the schedule.}
Our scheduling algorithm (Algorithm \ref{alg:scheduler}) takes the group assignment $\assignfunc$ and produces a schedule, with possible duplications, for all jobs. 
The main challenge in constructing the schedule is balancing two conflicting incentives. On the one hand, the more we allow a set of jobs to be duplicated, the faster we can finish any jobs preceded by jobs in the set. On the other hand, if we duplicate too often, then we risk overloading machines with too many jobs to execute. 
Specifically, we want to avoid scheduling too much load assigned to higher capacity groups on lower capacity (faster speed) groups, even when doing so would allow us to complete some jobs earlier. We strike this balance by allowing a job to be duplicated only in groups with capacity higher than its assigned group. Furthermore, similar to \cite{LR02}, when the scheduler places a set of jobs on a machine, we require that at least a $1/\overlap$ fraction of the total size of that set be from jobs that have not yet been placed on any machine, where $\overlap$ will be set later.

The scheduling algorithm proceeds in a series of rounds. In each round, the algorithm iterates through each machine group $\group{k}$ and considers each job $v$ with $\assignfunc(v)=k$ that has not yet been scheduled. On a machine $i \in \group{k}$ the algorithm schedules $v$ and its predecessors that have not been completed in earlier phases if the following three conditions are satisfied: \ref{condition:predecessor_size} $v$'s incomplete predecessors can be completed on $i$ in time $O(\delay)$; \ref{condition:overlap} the total size of $v$ and its predecessors not already scheduled (on any machine) is at least a $1/\overlap$ fraction of the total size of its uncompleted predecessors; and \ref{condition:predecessor_groups} all of $v$'s remaining predecessors have been assigned to higher indexed groups.  
Condition~\ref{condition:predecessor_groups} ensures that we duplicate jobs only from lower capacity groups to higher capacity groups.
Condition~\ref{condition:predecessor_size} ensures that any jobs we duplicate from lower capacity, higher speed groups won't take too long on the lower speed group. 
Condition~\ref{condition:overlap} ensures two things: first, it guarantees that the total increase in load from duplication is no more than $\overlap$ and, second, it guarantees any large gaps in the schedule result from the fact that all those jobs with a small number of predecessors have the total size of their remaining predecessor reduced by a factor of $\overlap$. 
The usefulness of these conditions is made more explicit in the analysis section. 

\junk{
Some of the intricacies of the algorithm are due to the fact that there could be long jobs that do not finish in one phase. So consider first a special case when there are no long jobs. 
By conditions (C1) and (C2), if $v$ itself can finish in at most $\delay$ amount of time on machine $i$, then even together with its predecessors it will finish by the end of that phase. There is some idle time in the beginning of the next phase to allow for the communication delay, which ensures that the resulting schedule is feasible. In case that there are long jobs in the problem instance, the algorithm only schedules their successors in a phase after the long job has been completed. This is enforced by (C1).
So the communication requirement is satisfied in that case as well. 
}

\paragraph{Overview of the analysis.}
For the purposes of analysis, we divide our schedule into phases of length $\delay$ and partition these phases into three types. We then bound the makespan of our schedule by bounding the total number of phases of each type. 
Our analysis combines elements of the analysis in \cite{chudak1999approximation} and \cite{LR02}.

The three types of phases are \textit{chain} phases, \textit{load} phases, and \textit{height} phases. 
We define a chain $\chain$ such that each element in $\chain$ precedes the next, and each element has an instance which takes a sufficiently long time in the schedule. Chain phases are those phases in which some machine spends most of its time working on some chain element. 
All non-chain phases are divided into load and height phases.
Load phases are those non-chain phases in which every machine of some group is working on jobs for most of the phase. 
The remaining phases are height phases. 
We can think of the three categories more intuitively as follows. Chain phases primarily reduce the remaining execution time of the chain. Load phases primarily reduce the remaining execution time of the set of all jobs.
Height phases primarily reduce the amount of time before the next chain phase (or the end of the schedule if the chain has been completed).
Figure~\ref{fig:chain_consruction_no_label} depicts the relationship between the chain and the sets of jobs on which height phases make progress.
\begin{figure}
    \centering
    \includegraphics[width=\textwidth]{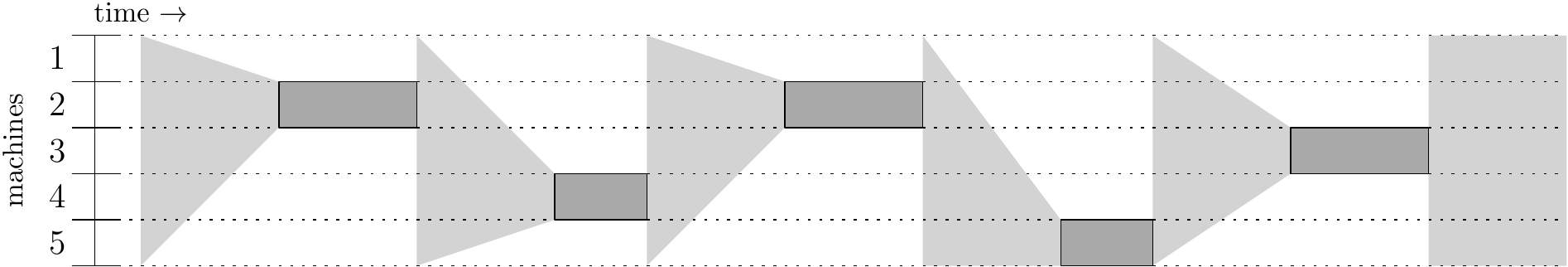}
    \caption{Machines are shown vertically on the left and time increases from left to right. The chain is shown as dark gray boxes. Each Light gray, borderless area represents the set of jobs that precede the chain job to its right (if it exists) and complete after the chain job to its left (if it exists).  
    }
    \label{fig:chain_consruction_no_label}
\end{figure}

We now briefly overview how we bound the number of phases of each type. We first discuss chain phases. Since chain jobs take a long time in the schedule, condition~\ref{condition:predecessor_size} ensures that every chain job is scheduled only on machines in its assigned group. Since we derived the group assignments from LP, the time spent executing jobs in the chain is at most $O(OPT)$, so the total number of chain phases is at most $O(OPT/\delay)$. We next consider load phases. Condition~\ref{condition:predecessor_groups} guarantees that the set of jobs scheduled on groups $\group{k},\ldots,\group{K}$ is a subset of the jobs assigned to these groups by $\assignfunc$. So, by condition~\ref{condition:overlap}, we have that for any $k$, the total load on groups $\group{k},\ldots,\group{K}$ is at most an $\overlap$ factor above the total load assigned to those groups by $\assignfunc$. Using a lemma from \cite{chudak1999approximation}, this entails that the total number of phases is no more than $O(OPT \cdot K\overlap/\delay)$.

Bounding the number of height phases is more involved as it requires a closer analysis of the linear program as well as a more detailed understanding of the step-by-step operation of the scheduling algorithm. We first partition the jobs in \textit{bands} $\band{1},\band{2},\ldots$ according to their start times as given by LP. We show that, for each job $v$ in a band, the total size of $v$'s predecessors in the same band is small enough to be completed in $O(\delay)$ time on $v$'s assigned group. Then, for each height phase $\tau$, we consider the lowest band $\band{r}$ with some job scheduled after phase $\tau$ and the slowest group $\group{k}$ with a job in that band. Let $v$ be some unscheduled job in $\band{r}$ assigned to group $\group{k}$. We consider a series of height phases separated by at most $O(1)$ height phases. We show, for each height phase in this series, that there is some iteration of our scheduling algorithm in which the algorithm \textit{considers} placing $v$ with its remaining predecessors on some machine in $\group{k}$ and in which all of $v$'s predecessors that started in the previous height phase in the series have completed with enough time to communicate the results to all machines. Due to our choice of $v$, we can then infer that, if the algorithm does not place $v$ in this iteration, it is because $v$'s uncompleted predecessor set violates condition~\ref{condition:overlap}. This entails that by the next height phase in the series, the size of $v$'s remaining predecessor set is reduced by a factor of $\overlap$. Since $v$'s predecessors within the band can be completed in $O(\delay)$ time on any machine in group $\group{k}$, we have that after $O(\log_{\overlap} \delay)$ height phases $v$'s predecessor set is empty. This entails that $v$ is scheduled before (or during) the next height phase in the series. Letting $r^*$ be the number of bands, this argument upper bounds the number of height phases by $O(Kr^* \log_{\overlap} \delay)$. We then show that the number of bands $r^*$ is $O((OPT+\delay)/\delay)$, which gives the desired bound on the number of height phases.

Finally, we set $\overlap$ to $\log\delay/\log\log\delay$. Summing over the number of phases of each type, we have that the length of our schedule is upper-bounded by $O(K \cdot \log\delay/\log\log\delay)(OPT + \delay)$.

\junk{
A chain $\chain$ of  long jobs from the precedence constraint graph is selected and phases in the first category, called the \emph{chain} phases, are the ones that process jobs in that chain. By condition (C1) of the algorithm, a long job (one that would take more than $\delay$ time on the machine being considered) would never be scheduled as another job's predecessor, but only as the main job being placed (possibly with its own predecessors). The algorithm schedules the main job on a machine in the group that it is assigned to by $\assignfunc$. Since that is at least as fast as  the median group from the LP for the job from the chain, we get the result that the phases from the first category take up an amount of time within a constant factor of the maximum completion time in the LP, or $O(OPT)$. 

Phases in the second category are called \emph{load} phases, and they are subdivided into $O(\log m)$ groups, one for each machine group. A load phase for machine group $k$ is one in which all machines in group $k$ are sufficiently busy. By condition (C3), the algorithm only schedules sets with sufficiently many ``new'' (not duplicated) jobs, so a constant fraction \zoya{not constant anymore} of the load on each machine is dedicated to non-redundant work. Because of condition (C4), these machines must be busy processing jobs that $\assignfunc$ assigns to group $k$ or faster. 
Jobs are assigned to machine groups so that the set of groups with any jobs assigned to them have decreasing capacity with increasing speed. \zoya{is that still true? it contradicts the previous description of how jobs are assigned to groups} This means that the total amount of work that can be scheduled on all groups faster than group $k$ is within $O(k)$ of the total load assigned to these groups. Therefore, the total number of load phases across all groups is $O(\log m)\cdot OPT$. 

The third category consists of the \emph{height} phases, in which each machine group has at least one idle machine, i.e.\ one that is busy for less than $\delay$ time units. 
To analyze the number of height phases, we focus on a subset of jobs which are either predecessors of some job in the chain $\chain$ or are scheduled after the last job of $\chain$. There must be an unscheduled job from this set during each height phase, as otherwise either the next chain job would be scheduled in this phase (making it a chain phase) or the schedule would be finished. We bound the number of height phases by partitioning the jobs in this subset into $O(\log m \cdot OPT/\delay)$ parts and showing that each part finishes in at most $O(\log \delay/\log\log\delay)$ height phases. As each phase takes $O(\delay)$ time, the total length of all the height phases becomes $O(\log m \log \delay/ \log \log \delay) \cdot OPT$.
The partition of the jobs is based on their start time in the fractional \NCP{}{} solution (discretized into \emph{bands} so that we get $O(OPT/\delay)$ parts), the processing time of the job's predecessors within a band (this increases the number of parts by a constant factor), and the machine group to which the job is assigned by $\assignfunc$ (increasing the number of parts by $O(\log m)$ factor). Given that we are considering a height phase, (C2) is satisfied for some machine in each group. If we focus on the part which comes from the lowest uncompleted band, which has the least predecessors, and which is assigned to the slowest machine group, then conditions (C1) and (C4) are satisfied for the jobs in that part as well. The only reason that a job in it may not be scheduled in this phase is that more than a $(1-\log\log\delay/\log\delay)$ fraction of its predecessors (by processing size) are already scheduled, and thus (C3) is violated. However, by selection of the part, this can happen for at most $O(\log \delay/\log\log\delay)$ height phases until the job itself is scheduled.
Repeating the argument for all the other parts, we get the result.

We can think of the three categories more intuitively as follows. Chain phases primarily reduce the remaining execution time of the chain. Load phases primarily reduce the remaining execution time of the subset of jobs assigned to a particular group. Height phases primarily reduce the amount of time before the next chain phase (or the end of the schedule if the chain has been completed).
Figure~\ref{fig:chain_consruction_no_label} depicts the relationship between the chain and the sets of jobs on which height phases make progress.
\begin{figure}
    \centering
    \includegraphics[width=\textwidth]{Figures/chain_construction_new_no_label.pdf}
    \caption{\david{Label vertical axis as machines} Machines are shown vertically on the left and time increases from left to right. The chain is shown as dark gray boxes. Light gray, borderless areas represent completion times of short jobs. The light gray rectangular area represents completion times of short jobs at the end of the schedule, and light gray trapezoidal areas represent completion times of short predecessors of the chain job to its right. }
    \label{fig:chain_consruction_no_label}
\end{figure}

Because our approximation ratio is in terms of the number of bands and the bands approximate how much work the linear program does within a single communication phase, our approximation incurs an additive $O(\delay)$ cost over the optimal solution.
Summing the bounds for each phase type, the length of the schedule is $O(\log m\log \delay /\log \log \delay)( OPT + \delay)$.
}

    \subsection{Integrality gap}
    \label{sec:overview.integrality}
    We construct a new integrality gap instance that achieves a $\omega(1)$ integrality gap in the presence of communication delays. The gap construction consists of a layered DAG with $L=\omega(1)$ layers and $n$ vertices in each layer, where each job in layer $\ell$ has dependencies on $d$ randomly chosen jobs in $V_{\ell+1}$ as shown in Figure~\ref{fig:lb:intro}. In particular, $\rho=d^L=n^{c}$ for a small constant $c>0$. The parameters of the construction are set up in such a way that fractionally all the jobs can be assigned in one phase (hence the LP solution value is at most $\rho$).

\begin{wrapfigure}{r}{0.6\textwidth}
  \vspace{-20pt}
  \begin{center}
    \includegraphics[width=0.59\textwidth]{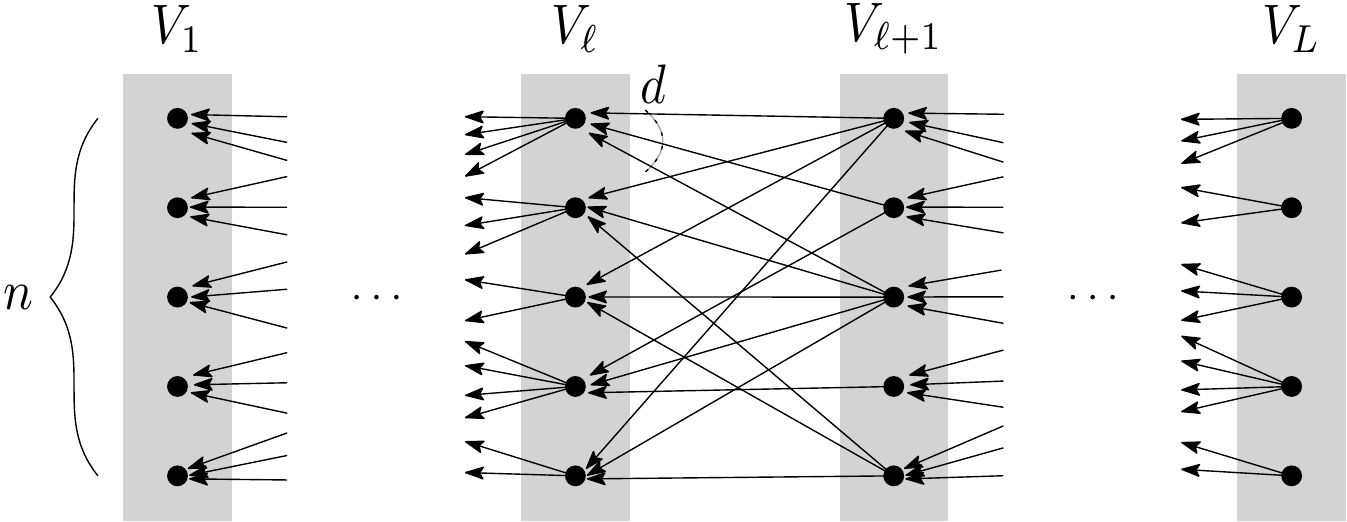}
  \end{center}
  \vspace{-10pt}
  \caption{\label{fig:lb:intro} The figure shows the DAG with $L$ layers $V_1, \dots, V_L$ representing the $nL$ jobs. Each of the $n$ jobs in $V_{\ell}$ has  dependencies on $d$ randomly chosen jobs in $V_{\ell+1}$. We set $\rho=d^{L}$, $m=\rho$, and the parameters $L=c_1 \sqrt{\log n}, d= 2^{c_2 \sqrt{\log n}}$ for some appropriate constants $c_1, c_2>0$.} 
\vspace{-10pt}
\end{wrapfigure}

The main technical challenge is to argue that $\Omega(L)$ phases are needed to schedule all the jobs in order to get a lower bound of $\Omega(L\rho)$ for the integer solution value. This gives a gap of $\Omega(L)=\Omega(\sqrt{\log \rho})$. From the expansion of the random graph in each layer, it is easy to argue that at most a $o(1)$ fraction of the jobs in layers $\{1,\dots, L-2\}$ can be scheduled in the first phase (since at most $\rho \ll n$ of the jobs can be on one machine). However, in the next phase the  results of all jobs that were scheduled previously are now available to all the machines; moreover the choice of these jobs could depend on the randomness in the DAG. Hence the remaining graph in each layer (after removing vertices that have already been scheduled) in the subsequent phases is {\em not random} any longer!

To overcome this technical hurdle, 
we identify and exploit a property of {\em robust expansion}, which may be of independent interest. The standard vertex expansion property of a random graph says that w.h.p.\ any subset $S \subset V_\ell$ of size $|S| \le n/d$ has a neighborhood of size $|\Gamma(S)| =\Omega(d |S|)$. However, random graphs have the stronger property that no subset $T$ of size $o(d |S|)$ can have $\Omega(d |S|)$ of the edges from $S$ incident on it. For our analysis, we need to prove a generalization for paths of length $\ell < L$ (Lemma~\ref{lem:generalized}): {\em w.h.p. for any $S \subset V_i$ (of sufficiently small size), there is no subset of size $o(d^\ell |S|)$ that can have $\Omega(d^\ell |S|)$ of the length-$\ell$ paths going into $S$}. 

Each job $u$ in layer $i$ (i.e. a vertex in $V_i$) has $d^{\ell-i}$ incoming paths from layer $V_\ell$, and all of the vertices in these paths need to be scheduled before scheduling $u$ -- either in a previous phase, or on the same machine in the current phase. The above robust expansion property is used to upper bound the number of jobs completed in each phase in two different ways: 1) to upper bound the number of jobs in $V_i$ whose dependencies in $V_\ell$ ``mostly'' consists of jobs scheduled in previous phases, and 2) to upper bound the number of jobs in $V_i$ such that most of their dependencies in $V_{\ell}$ need to be resolved in the current phase. This allows us to prove that we need at least $L/2$ phases before most of the jobs in $V_1$ can be scheduled. 

We believe that our integrality gap argument applies to a wider class of relaxations for the problem.  Any program that captures communication delay and precedence requirements through individual constraints for each job and has independent load constraints for each machine is likely to incur a similar gap.  
    
    \subsection{Bounding the duplication advantage}
    \label{sec:overview.duplication}
    The final contribution of this paper is to quantitatively characterize the duplication advantage.  While it is easy to construct instances where the makespan of a schedule allowing duplication (which we refer to as a general schedule) is better than that of a no-duplication schedule (one in which all jobs are processed exactly once), our goal is to place upper and lower bounds on the duplication advantage.

\paragraph{Lower bound.} We first present a simple family of instances with $m$ identical machines, $n = 2m$ unit jobs, and $\rho = \log m$, for which any no-duplication schedule has makespan $\Omega(\rho^2/\log \rho)$, while the optimal makespan is at most $\rho$.  The DAG for such an instance consists of a rooted binary tree with $m$ leaves and edges directed away from the root, such that an optimal schedule executes each root-leaf path on a separate machine (with necessary duplication), while any no-duplication schedule is essentially forced to decompose the tree into $\rho/(\log \rho)$ phases, interspersed with communication delays. Note that we thus have  $\Omega(\log m/\log \log m)$ and $\Omega(\log n/\log\log n)$ bounds on the duplication advantage. 

\paragraph{Upper bound.} Our main result in this section is that the duplication advantage is, in fact, also upper-bounded by a polylogarithmic factor $O(\log^2 n \log m)$.  Our proof is through a polynomial-time algorithm that transforms any schedule to a no-duplication schedule with the polylogarithmic factor loss in makespan.  The algorithm processes a given (general) schedule in ``phases" of length $\rho$.  The core of the algorithm is to transform each phase into a no-duplication schedule of length 
$O(\rho \log^2 n \log m)$.  There are technical complications since (i) processing of jobs may span multiple phases of the schedule, and (ii) the number of phases may be super-polynomial in the size of the instance.  Both these can be handled relatively easily by considering machines that are processing ``long'' jobs separately, and ignoring phases where no jobs are started or completed.  These are detailed in Section~\ref{sec:duplication}.

We now give an overview of the algorithm's core.  Consider the sub-DAG $D$ of the original DAG formed by the jobs that are processed within a particular phase of the general schedule.  We face several technical challenges while designing a no-duplication schedule for this sub-DAG.  First, we need to determine the relative order between the jobs.  For example, if a node serves as an predecessor of many other jobs, it could be given higher priority, but those successors may have been processed in the general schedule at many distinct machines, along with a copy of the predecessor, something we cannot do in the no-duplication schedule.  Second, if we choose to process two jobs on two different machines in a phase, we have to ensure that they do not share a common predecessor.  

To address these challenges, we organize and process the jobs of $D$ as follows.  First, we divide them into $O(\log n \log m)$ groups based on their level of duplication in the general schedule; each group consists of jobs whose duplication level is within a factor of $(1 + 1/(2\log n))$ of one another.  We then process the groups from the highest level of duplication down to the lowest, since the duplication level of a job in $D$ is at least that of any of its successors.  Within a given group, we focus on the sink jobs (which have no predecessors in the group) and construct an undirected graph $H$ over them in which an edge exists between two sinks if they share a common predecessor.  Our key insight about $H$ is that any subset of jobs in $H$ that is composed of regions of diameter $O(\log n)$ that do not share any common neighbors among them can be processed in a single phase in a no-duplication schedule.  We show that using a classic low-diameter decomposition technique from approximation algorithms and distributed computing (e.g., see~\cite{awerbuch+p:partition,linial+s:decomposeJ,peleg:distributeBook}), we can find a subset of $\Omega(H)$ jobs that has the desired structure in $H$.  A recursive use of this subroutine, together with the other techniques indicated above, yields the desired no-duplication schedule. 

\setcounter{theorem}{0}

\section{Approximation Algorithm for Makespan Minimization}
\label{sec:makespan}
Our algorithm for minimizing makespan is based on a linear programming relaxation of DAG Scheduling with Communication Delay. This linear program is rounded to get an assignment of jobs to groups of machines, and the assignment is then used to schedule each job. In Section~\ref{sec:preprocess} we present the first subroutine of our algorithm, which removes all machines that are too slow to be useful.  Section~\ref{sec:lp} contains the second subroutine, which consists of solving the linear programming relaxation. In Section~\ref{sec:group_assign} we present the third subroutine, which consists of finding an assignment of each job to a group of machines, given a fractional assignment of jobs to machines given by the LP solution. In Section~\ref{sec:scheduler} we present our scheduling subroutine which takes an assignment of jobs to groups of machines and computes a schedule. Finally, in Section~\ref{sec:analysis} we show that the schedule produced by our algorithm has makespan within a polylogarithmic factor of the optimal makespan.

\begin{figure}[ht]
    \centering
    \includegraphics[width=\textwidth]{Figures/high_level_algo_4.pdf}
\end{figure}

    \subsection{Preprocessing the Instance}
    \label{sec:preprocess}
    In this section, we present the first subroutine of our algorithm. Given an instance $(G,M,\delay)$, this subroutine outputs a new instance $(G, M', \delay)$ such that $M' \subseteq M$ where the speed of each machine in $M'$ is within a factor $m$ of the speed of the fastest machine in $M$. More formally, 
\[ M' = \{ i \in M : \speed{i} \ge \speed{m}/m\}  \]
where machine $m$ is the fastest machine in $M$. The following lemma is proved in Appendix \ref{sec:elim_slow_machines}.

\begin{lemma}
    Let $\optmakespan$ be the optimal makespan of any schedule of $(G,M,\delay)$ and let $C'$ be the optimal makespan of any schedule of $(G,M',\delay)$. Then $C' = O(\optmakespan)$.
    \label{lem:elim_slow}
\end{lemma}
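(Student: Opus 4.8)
Since $M' \subseteq M$, every schedule of $(G,M',\delay)$ is also a schedule of $(G,M,\delay)$, so $C' \ge \optmakespan$; the content of the lemma is the bound $C' = O(\optmakespan)$, which I would prove constructively. Fix an optimal schedule $\sigma$ for $(G,M,\delay)$ of makespan $\optmakespan$. The one quantitative input is that the discarded machines carry little total work: each $i \in M \setminus M'$ has $\speed{i} < \speed{m}/m$ and there are fewer than $m$ of them, so the total size of all job copies that $\sigma$ runs on $M \setminus M'$ is less than $\speed{m}\,\optmakespan$; equivalently, machine $m$ alone could reprocess all of that work in fewer than $\optmakespan$ additional time units.

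The construction of $\sigma'$ is a time dilation by a small constant $c \ge 2$: map every event of $\sigma$ at time $t$ to time $ct$. Concretely, every copy of a job $v$ that $\sigma$ places on a fast machine $j \in M'$ during $[a,b]$ is kept in $\sigma'$ on the same $j$, now during $[ca, ca+(b-a)]$. Distinct copies on a common machine stay disjoint, and an original precedence gap of $0$ (same machine) or $\ge \delay$ (different machines) between a predecessor copy and a successor copy only grows under the map, so the kept copies still satisfy the precedence and communication constraints among themselves. The dilation makes every machine idle for a $(1-1/c)$ fraction of the stretched horizon; in particular machine $m$ gains at least $\optmakespan$ total idle time. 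Into these idle slots I would insert, in nondecreasing order of their completion times in $\sigma$, a fresh copy of $v$ for \emph{every} copy that $\sigma$ ran on a discarded machine (not only for jobs left with no fast copy). Because all reinserted copies live on the single machine $m$, there is no communication delay among them, and because any fast copy on which a reinserted copy depends has been pushed back by the factor $c$, the required $\delay$ gap to it is automatically present.

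The remaining step — and the one I expect to be the real obstacle — is to carry out this reinsertion so that the copy of each job $v$ completes by $O(\optmakespan)$ in $\sigma'$; it cannot simply be deferred to the end of the schedule, since a reinserted copy may be a predecessor of a fast copy that runs early in $\sigma'$. The natural argument is by induction on the completion order in $\sigma$, using two ingredients: (i) the earliest completion time $f(v)$ of a job in $\sigma$ strictly precedes that of any successor (the copy of $v$ completing at $f(v)$, say started on machine $i$ at $f(v) - \proc{v}/\speed{i}$, forces some copy of each predecessor to complete before $f(v) - \proc{v}/\speed{i} < f(v)$), which lets one show inductively that all predecessors of $v$ are available in $\sigma'$ by roughly $c$ times $v$'s start time; and (ii) an EDD-style counting bound — the reinserted copies with $\sigma$-completion at most $\tau$ have total size less than $\speed{m}\tau$, hence need less than $\tau$ time on a fastest machine. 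The genuinely delicate point, going beyond the classical delay-free preprocessing of~\cite{chudak1999approximation}, is that machine $m$'s idle capacity in $\sigma'$ may be concentrated late (e.g.\ if $\sigma$ keeps machine $m$ busy with one long job starting near time $0$), so one cannot place all reinserted work on machine $m$; the reinserted copies must instead be distributed across the fast machines of $M'$ in a way that still avoids communication delay among them, and making this balance work with only a constant-factor loss is the crux. Summing the per-job bounds then yields a schedule $\sigma'$ of makespan $O(\optmakespan)$, hence $C' = O(\optmakespan)$.
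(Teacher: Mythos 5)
There is a genuine gap, and you have correctly located it yourself: the reinsertion step is never carried out, and the constant-factor time dilation does not by itself supply idle capacity \emph{at the right times on a fast enough machine}. Your dilation $[a,b]\mapsto[ca,ca+(b-a)]$ creates no idle time on machine $m$ before its first long block ends (e.g.\ a single job occupying $[0,\optmakespan]$ on $m$ is not moved at all), yet the discarded work may be a predecessor of fast copies that start early; and your fallback of spreading the reinserted copies over the other machines of $M'$ fails quantitatively, since those machines may be a factor $m$ slower than $m$ (so work of size $t\,\speed{m}$ can cost time $tm$ there) and the reinserted copies would then sit on several machines with precedence among them, reintroducing the $\delay$ gaps you were trying to avoid. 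So the proposal establishes the easy bookkeeping (total discarded work $<\speed{m}\optmakespan$, dilated copies remain feasible among themselves) but not the lemma.

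The paper's proof resolves exactly this crux with a different, two-stage construction rather than dilation. First it rounds the schedule into phases of length $\delay$ and builds $\sigma_1$ on $(M\setminus M_0)\cup\{m'\}$, where $m'$ is an auxiliary machine of speed $\speed{m}$: all jobs started on slow machines during phase $\tau$ are run consecutively, in topological order, on $m'$ inside a ``step'' of length $\max\{\delay,\setproc{U_\tau}/\speed{m}\}+\delay$, the trailing $\delay$ serving as a communication pad so every step's output is available everywhere afterwards. Since $\sum_{i\in M_0}\speed{i}<\speed{m}$, the total serialization cost over all steps is at most $C_0$, giving $C_1\le 3C_0$; the phase structure, not per-job induction, is what guarantees predecessors are ready in time. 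Second, the auxiliary machine is merged into the real machine $m$ by inserting each of $m'$'s jobs just before the later work on $m$ and shifting all subsequent start times, charging the total shift to disjoint busy intervals on $m$ and $m'$, which loses only another factor $2$ and yields $C_2\le 6C_0$. This insertion-and-charging argument on a single fastest machine is the missing ingredient your per-job, capacity-counting plan would need, and it is not recoverable from the dilation framework without essentially redoing the paper's phase construction.
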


The proof in Appendix~\ref{sec:elim_slow_machines} shows that any schedule $\sigma_1$ on $(G,M,\delay)$ with makespan $C_1$ can be converted into a schedule $\sigma_2$ of $(G,M',\delay)$ with makespan $C_2 \le 6 C_1$.
We note that an alternative strategy for eliminating these machines is given in \cite{chudak1999approximation}. There, the authors argue that any solution to their linear program defined over $(G,M)$ can be converted into a solution to their linear program on $(G,M')$. A corresponding result can be proved for our linear program.  We choose to use the method given in Appendix \ref{sec:elim_slow_machines} because the bound proved there holds for the optimal \textit{makespan}, independent of a particular relaxation or algorithm used to solve the problem, and is therefore more general. 

    \subsection{Linear Programming Relaxation}
    \label{sec:lp}
    In this subroutine, we formulate and solve a linear programming relaxation for the given instance of DAG Scheduling with Communication Delay.  The following linear program, called LP, minimizes $\LPmakespan$ subject to the following constraints. Indices $u$ and $v$ refer to jobs and $i$ and $j$ refer to machines.


\noindent
\begin{minipage}{.56\linewidth}
  \begin{flalign}
        \LPmakespan &\ge \start{v} + \proc{v} \sum_i \jobmachine{v}{i} / \speed{i} &\forall v
        \label{phaseLP_completionlb}
        \\
        \start{v} &\ge \start{u} + \proc{u} \sum_i  \jobmachine{u}{i} / \speed{i} &\forall u,v: u \prec v
        \label{phaseLP_execution}
        \\
        \start{v} &\ge \start{u} + \delay \Big( \sum_{j \le i} \jobmachine{v}{j} - \machinephase{u}{v}{i} \Big) &\forall u,v,i: u \prec v
        \label{phaseLP_delay}
        \\
        \sum_{j \le i} \jobmachine{v}{j} &\ge \sum_{u \prec v} \proc{u} \machinephase{u}{v}{i} / \delay\speed{i}   &\forall v, i
        \label{phaseLP_phasepredecessors}
  \end{flalign}
\end{minipage}\hfill
\begin{minipage}{.405\linewidth}
  \begin{flalign}
        \LPmakespan\speed{i} &\ge \sum_v \proc{v} \jobmachine{v}{i} &\forall i
        \label{phaseLP_loadlb}
        \\
        \sum_i  \jobmachine{v}{i} &= 1 &\forall v
        \label{phaseLP_finishjob}
        \\
        \start{v} &\ge 0 &\forall v
        \label{phaseLP_nonnegstart}
        \\
        \jobmachine{v}{i} &\in (0,1) &\forall v,i
        \label{phaseLP_jobmachine}
        \\
        \machinephase{u}{v}{i} &\in (0,1) &\forall u,v,i: u \prec v
        \label{phaseLP_samephase}
  \end{flalign}
\end{minipage}
\vspace{4mm}

We give an intuitive interpretation of the variables and constraints.  We interpret the variables $\jobmachine{v}{i}$ as giving the ``primary" placement of $v$ and $\start{v}$ as the corresponding start time of $v$.  Then, for any jobs $u$ and $v$ such that $u \prec v$ and for any machine $i$, we can understand the variable $\machinephase{u}{v}{i}$ as indicating, first, whether $v$ is executed on a machine indexed $i$ or lower, and second, whether the start time of $u$ is within $\delay$ of the start time of $v$. The significance of this indication is that, if these conditions are met, then some \textit{copy} of $u$ must execute on the same machine as $v$ within $\delay$ time of $v$ and, therefore, only predecessors of total size at most $\delay\speed{i}$  can meet these conditions.  The remaining variable $C$ captures the makespan of the resulting schedule. 

\junk{
In the duplication setting, we can interpret the values of $\jobmachine{v}{i}$ as giving the \textit{primary} placement of $v$. That is, in order to finish $v$ as early as possible, we should place it on a machine where it has high mass according to $\jobmachine{v}{i}$. Now, suppose we decide to schedule $v$ on machine $i$ and consider the $\delay$ length time interval before the start of $v$. Since there is no communication during this time, we can select some subset of $v$'s predecessors with total size up to $\delay \speed{i}$ to place on $i$ in this interval.  We can then think of $\machinephase{u}{v}{i}$ as giving this \textit{secondary} placement of $u$. That is, if we schedule $v$ on machine $i$ and $\machinephase{u}{v}{i}$ is large then, in order to finish $v$ as early as possible, we should place a \textit{copy} of $u$ in the same phase as $v$ on $i$. 
\david{see Zoya's comment on this explanation in Section 4}
}

Constraint~(\ref{phaseLP_completionlb}) states that the makespan should be at least as the amount of time to execute any job after its start time. Constraint~(\ref{phaseLP_execution}) states that a job should start after the completion of its predecessor.
The delay constraint~(\ref{phaseLP_delay}) states that if $v$ is scheduled on a machine slower than $i$, then $v$ should start at least $\delay$ time after any predecessor $u$ unless $u$ is scheduled in the same phase as $v$.  The phase constraint~(\ref{phaseLP_phasepredecessors}) states that if $v$ is scheduled on a machine slower than $i$, then the total size of $v$'s predecessors scheduled in the same phase is at most $\delay\speed{i}$. 
Constraint~(\ref{phaseLP_loadlb}) states that the makespan should be at least as large as the total load on any machine. Constraint~(\ref{phaseLP_finishjob}) states that each job should be completely scheduled. 

\begin{lemma}[{\bf LP is a valid relaxation}]
    For any instance $(G,M,\delay)$ for which the optimal makespan is $\optmakespan$, the value of the optimal solution to LP is at most  $2\optmakespan$. 
\label{lem:Gen:Relaxation}
\end{lemma}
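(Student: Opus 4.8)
The plan is to exhibit, from an optimal schedule, an explicit feasible point of LP whose value is at most $2\optmakespan$. Fix an optimal schedule $\sigma^*$ of makespan $\optmakespan$. For each job $v$ let $\hat v$ be the copy of $v$ that completes earliest in $\sigma^*$, let $i(v)$ be the machine running $\hat v$, and let $t_v$ be its start time; note that $t_v + \proc v/\speed{i(v)} \le \optmakespan$, so in particular $\proc v/\speed{i(v)}\le\optmakespan$. I would then set $\jobmachine v i := 1$ if $i=i(v)$ and $0$ otherwise, $\start v := t_v$, $\LPmakespan := 2\optmakespan$, and, for each $u\prec v$ and each machine $i$, set $\machinephase u v i$ to the smallest value that constraint~(\ref{phaseLP_delay}) allows, namely $\max\{0,\ \sum_{j\le i}\jobmachine v j - (\start v-\start u)/\delay\}$. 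Constraints~(\ref{phaseLP_finishjob}), (\ref{phaseLP_nonnegstart}), (\ref{phaseLP_jobmachine}), (\ref{phaseLP_samephase}) then hold by construction (reading the ranges as the closed interval $[0,1]$, or perturbing infinitesimally), and (\ref{phaseLP_delay}) holds because $\machinephase u v i$ was chosen precisely so that its right-hand side is at most $\start v$.

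The ``scheduling'' constraints would follow directly from feasibility of $\sigma^*$, with the factor $2$ giving comfortable slack. Constraint~(\ref{phaseLP_completionlb}) is just the completion time of $\hat v$, which is at most $\optmakespan\le\LPmakespan$. For~(\ref{phaseLP_execution}) with $u\prec v$: since $\hat v$ respects the communication rule, some copy of $u$ completes no later than $t_v$, hence the earliest completion time of a copy of $u$ is at most $t_v$; writing that earliest completion time as $\start u + \proc u/\speed{i(u)} = \start u + \proc u\sum_i\jobmachine u i/\speed i$ gives exactly~(\ref{phaseLP_execution}) (this also shows $\start v>\start u$, which we used above). Constraint~(\ref{phaseLP_loadlb}) holds because $\sum_v\proc v\jobmachine v i$ is the total size of jobs whose earliest copy runs on $i$, which is bounded by the total work $\sigma^*$ performs on $i$, i.e.\ by $\optmakespan\,\speed i\le\LPmakespan\,\speed i$.

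The real work, and where I expect the main obstacle, is the phase constraint~(\ref{phaseLP_phasepredecessors}). Fix $v$ and $i$. If $i<i(v)$ then $\sum_{j\le i}\jobmachine v j=0$ forces every $\machinephase u v i=0$ and the constraint is vacuous, so assume $i\ge i(v)$, where $\sum_{j\le i}\jobmachine v j=1$ and it suffices to prove $\sum_{u\prec v}\proc u\,\machinephase u v i\le\delay\speed{i(v)}$. Only predecessors $u$ with $\start u>\start v-\delay$ contribute; for such a $u$ no copy of $u$ completes anywhere by $t_v-\delay$ (its earliest copy starts after $t_v-\delay$ and so completes later), so by the communication rule $\hat v$ must be preceded on machine $i(v)$ by a copy of $u$ that completes inside the length-$\delay$ window $(t_v-\delay,\,t_v]$. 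These copies lie on the single machine $i(v)$ and are pairwise non-overlapping, so all but at most one of them execute entirely within the window and therefore have total size at most $\delay\speed{i(v)}$ — which nearly closes the argument. The obstacle is the one remaining predecessor whose $i(v)$-copy straddles the left edge of the window: if it is a long job on the slow machine $i(v)$ that \emph{also} runs a fast, short copy elsewhere completing just before $v$, then constraint~(\ref{phaseLP_delay}) still forces $\machinephase{}{}{}>0$ for it, yet its size can exceed $\delay\speed{i(v)}$. This single-straddler case is exactly where the monotone ordering of machines by speed and the factor~$2$ are spent: I would absorb that lone predecessor either by reassigning $v$'s fractional mass to the faster group that already carries its fast copy (legitimate because slower groups were arranged to have larger capacity, so the load constraint still tolerates the $2\optmakespan$ bound) or by charging its size against the portion of $i(v)$'s timeline it occupies before the window. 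Verifying this case, and that it remains consistent with~(\ref{phaseLP_delay}) simultaneously, is the crux; the rest is bookkeeping.
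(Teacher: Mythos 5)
You have the easy constraints right and you have correctly located the crux, but the proposal does not close it, and the two remedies you sketch for the straddling predecessor do not work. The difficulty is exactly the one you name: a predecessor $u$ whose earliest-completing copy starts within $\delay$ of $\start{v}$ (so constraint~(\ref{phaseLP_delay}) forces $\machinephase{u}{v}{i}>0$ in your assignment) but whose copy on $i(v)$ begins before the window $(\,t_v-\delay,\,t_v\,]$ and merely completes inside it; its size can be arbitrarily larger than $\delay\speed{i(v)}$ (take $\proc{u}=3\delay$ on a unit-speed $i(v)$, with a fast duplicate of $u$ elsewhere finishing just before $t_v$), and then constraint~(\ref{phaseLP_phasepredecessors}) is simply false for the point you wrote down. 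Neither fix rescues it: ``charging against the portion of $i(v)$'s timeline before the window'' has no meaning for (\ref{phaseLP_phasepredecessors}), which is a static per-job inequality that does not involve $\LPmakespan$, so slack in $\LPmakespan=2\optmakespan$ buys you nothing there; and ``reassigning $v$'s fractional mass to the faster group that carries $u$'s fast copy'' both perturbs every other constraint in which $\jobmachine{v}{\cdot}$ appears (including (\ref{phaseLP_delay}) and (\ref{phaseLP_phasepredecessors}) for $v$'s own successors) and appeals to a ``slower groups have larger capacity'' structure that belongs to the rounding stage of the algorithm, not to an arbitrary optimal schedule, so it is not available in a relaxation proof.

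The paper spends the factor $2$ in a different place, and that is the missing idea: before defining any LP variables, it converts $\sigma$ into a phase-shifted schedule $\sigma'$ by delaying every job that starts in phase $\tau=[\tau\delay,(\tau+1)\delay)$ by an additional $\tau\delay$, which at most doubles the makespan and inserts an idle interval of length $\delay$ between consecutive phases. The LP point is then read off $\sigma'$, not $\sigma$. Under this shift, any predecessor whose copy on the machine of $v$'s first completion started more than $\delay$ before $v$ (your straddler) lies in an earlier phase, so in $\sigma'$ its completion is separated from $\start{v}$ by at least $\delay$ and it never enters the set with $\machinephase{u}{v}{i}=1$; consequently every predecessor counted in (\ref{phaseLP_phasepredecessors}) both starts and completes on that machine inside a window of length $\delay$, giving total size at most $\delay\speed{i}$. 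In short, your construction uses the factor $2$ only as slack in the objective variable, where it is useless; the paper's proof uses it to restructure the schedule so that the straddler case cannot occur. Without that (or an equivalent device), the ``single-straddler case'' you flag is not bookkeeping but a genuine counterexample to your assignment.
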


\begin{proof}
    Let $\sigma$ be a schedule of $(G,M,\delay)$ with makespan $C'$. We construct a solution to LP with objective value $2C'$. We first construct the schedule $\sigma'$ as follows. We define \textit{phase} $\tau$ to be the interval of time $[\tau\delay,(\tau+1)\delay)$. For any job-machine pair $(v,i)$ such that $v$ is scheduled on $i$ and $\sigma(v,i) \in [\tau\delay, (\tau+1)\delay)$, we set $\sigma'(v,i) = \sigma(v,i) + \tau\delay$. Note that the makespan of $\sigma'$ is at most $2C'$.  
    We show that $\sigma'$ is a valid schedule of the instance $(G,M,\delay)$. To show that the precedence and communication requirements are satisfied, suppose job $v$ is scheduled on machine $i$ and job $u$ is scheduled on machine $j$. By definition of the phases, we have $\sigma'(v,i) - \sigma'(u,j) \ge (\sigma(v,i) - \sigma(u,j) + \delay \cdot \floor{\sigma(v,i) - \sigma(u,j)/\delay} \ge \sigma(v,i) - \sigma(u,j)$. So, the time between two executions in $\sigma'$ is at least the between the same executions in $\sigma$. This entails that both precedence and communication requirements are satisfied in $\sigma'$. The other requirements are easy to check. Therefore, $\sigma'$ is a valid schedule.
    
    Given $\sigma'$, we set the variables of LP as follows. We assume, without loss of generality, that each job is executed at most once on each machine. For each $v$, let $i^*$ be some machine on which $\sigma'$ first completes $v$, choosing arbitrarily if there is more than one. Set $\jobmachine{v}{i^*} = 1$ and $\jobmachine{v}{i} = 0$ for $i \ne i^*$. Let $t^*$ be the start time of $v$ on $i^*$ in $\sigma'$ and set $\start{v} = t^*$. For all $u \prec v$ and $i \ge i^*$, set $\machinephase{u}{v}{i} = 1$ if $\start{v} - \start{u} \le \delay$, and $\machinephase{u}{v}{i} = 0$ otherwise. Set $C = 2C'$.
    
    We now show that our assignment of values to each variable satisfies all constraints of the linear program.
    It is easy to verify that constraints (\ref{phaseLP_completionlb}), (\ref{phaseLP_execution}), (\ref{phaseLP_loadlb}) - (\ref{phaseLP_samephase}) are satisfied. 
    For Constraint~(\ref{phaseLP_delay}), let us fix $u,v$ and $i$, and let $\sum_{j \le i} \jobmachine{v}{j} = \alpha$ and $\machinephase{u}{v}{i} = \beta$. Then there are four cases to verify: (a) $\alpha =\beta = 1$, (b) $\alpha= 1$ and $\beta = 0$, (c) $\alpha = 0$ and $\beta = 1$, or (d) $\alpha = \beta = 0$. In cases (a), (c), and (d) it is easy to see that the constraint is satisfied by the fact that $\start{v} \ge \start{u}$. In case (b), we see that $\machinephase{u}{v}{i}$ is set to 0 only if $\start{v} - \start{u} > \delay$, which entails that constraint is satisfied, or if $i < i^*$, which entails $\sum_{j \le i} \jobmachine{v}{j} = 0$.
    
    \junk{
    To show that  Constraint~(\ref{phaseLP_phasepredecessors}) is satisfied, fix $v$ and $i$ and let $A$ be the set of jobs $u \in \predex{v}$ for which $\machinephase{u}{v}{i} = 1$. By our setting of $\machinephase{u}{v}{i}$, if $\sum_{j \le i} \jobmachine{v}{i} = 0$, then $A = \varnothing$ and the constraint is trivially satisfied. So assume that $\sum_{j \le i} \jobmachine{v}{i} = 1$. This implies that $v$ is first completed on a machine $j$ no faster than $\speed{i}$. Since all jobs in $A$ start their first completed execution less than $\delay$ time before the start of $v$, they all have to be executed on $j$ in $\sigma$, implying that the total size of $A$ is at most $\delay\speed{i}$.
    \zoya{Counterexample to constraint (4). Assume $\rho=10$. $u$, with $p_u=20$, starts on machine $i$, which has speed 1, at time 0 and finishes at time 20. It also starts on a faster machine j, which has speed 20, at time 15 and finishes at time 16. $v$ starts on $i$ at time 20. Then $S_v=20$, $S_u=15$, $z_{u,v,i}=1$. The constraint says $1 \ge 20 / 10$, which is false.} 
    }
    
    To show that  Constraint~(\ref{phaseLP_phasepredecessors}) is satisfied, fix $v$ and $i$ and let $A$ be the set of jobs $u \in \predex{v}$ for which $\machinephase{u}{v}{i} = 1$. By our setting of $\machinephase{u}{v}{i}$, if $\sum_{j \le i} \jobmachine{v}{j} = 0$, then $A = \varnothing$ and the constraint is trivially satisfied. So assume that $\sum_{j \le i} \jobmachine{v}{j} = 1$. This implies that $v$ is first completed on a machine $j$ no faster than $\speed{i}$. Since all jobs in $A$ start their first completed execution less than $\delay$ time before the start of $v$, they all must have some copy executed on $j$ that also completes less than $\delay$ time before the start of $v$. In $\sigma$, let $u$ be some job that completes on $j$ less than $\delay$ time before $\sigma(v,j)$ and starts on $j$ more than $\delay$ time before $\sigma(v,j)$. In this case, $u$ starts in a lower phase than $v$. Therefore, by our construction of $\sigma'$, there is a gap of at least $\delay$ in $\sigma'$ between the completion of $u$ and the start of $v$ during which no jobs are executed. Therefore, in $\sigma'$, all jobs in the set $A$ both start and complete on $j$ less than $\delay$ time before $\sigma'(v,j)$. Therefore, $\setproc{A} \le \delay\speed{i}$, from which (\ref{phaseLP_phasepredecessors}) follows. 
\end{proof}
    
    
    \subsection{Assigning Jobs to Groups}
    \label{sec:group_assign}
    In this section, we present the third subroutine of our algorithm, which takes as input a given instance of DAG Scheduling with Communication Delay, as well as a fractional assignment $\{\jobmachine{v}{i}\}_{v,i}$ of jobs to machines,  and returns an integral assignment of jobs to groups of machines. 
Per Lemma~\ref{lem:elim_slow}, suppose that $\speed{i} \ge \speed{m}/m$ for each $i \in M$.

Recall that the machines are ordered such that $\speed{i} \le \speed{j}$ if $i < j$. We first partition the set of machines $M$ into groups $\group{1}, \group{2}, \ldots, \group{K}$. We define the groups iteratively, with group $\group{1} = \{i : \speed{i} \in [\speed{1}, 2\speed{1})\}$ and group $\group{k+1} = \{i : \speed{i} \in [\speed{j},2\speed{j}) \text{ where } j = \argmin_{j'}\{j' \not\in \bigcup_{k'=1}^k \group{k'}\}\}$. 
Since the speeds of all machines are within a factor of $m$, we have that the number of groups $K$ is at most $\log m$. 

\begin{definition}
    For each job $v$, we define the group $\jobtogroup{v}$ of $v$ as follows.
    Let $\med{v} = \min \{ k : \sum_{k' \le k} \sum_{i \in \group{k}} \jobmachine{v}{i} \ge 1/2 \}$.
    Then, for any $v$, 
    \[ \jobtogroup{v} = \arg\max_k \{ \groupsize{k} \groupspeed{k} : k \ge \med{v} \} . \]
    We also define $\grouptojobs{k} = \{ v : \jobtogroup{v} = k \}$ and $\fastergroupstojobs{k} = \{ v : \jobtogroup{v} \ge k \}$. 
    \label{def:group}
\end{definition}

Informally, $\jobtogroup{v}$ is the highest capacity group such that at least half of job $v$ is assigned to groups no faster than the slowest machine in $\group{k}$.
In the remainder of the section, we prove several properties of this assignment. Toward this end, we consider a set of start times $\{\start{v}\}_{v}$ and makespan $\{\LPmakespan\}$ such that  $\{\start{v}\}_{v} \cup \{\LPmakespan\} \cup \{\jobmachine{v}{i}\}_{v,i}$ is a feasible solution to LP for the given problem instance. We partition $V$ into sets called \textit{bands} where the elements $v$ of each band are determined by the value $\start{v}$.

\begin{definition}
    Band $\band{r}= \{v: \delay(r-1)/4 \leq \start{v} < \delay r/4\}$. 
    \label{def:band}
\end{definition}

In the following lemma, we show that, for any job $v$, the total size of $v$'s predecessors that occupy the same band is no more than the amount that can be completed in a constant number of communication phases on some machine in $v$'s assigned group.

\begin{lemma}[{\bf Upper bound on total size of predecessors of a job in a band}]
    For any job $v \in \band{r}$, we have $\setproc{\band{r} \cap \predex{v}} \le 8\delay \groupspeed{\jobtogroup{v}}$.
\label{lem:Gen:HeightBound}
\end{lemma}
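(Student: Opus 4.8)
The plan is to exhibit a single machine $i^{*}$ for which the delay constraint~(\ref{phaseLP_delay}) forces every same-band predecessor of $v$ to be (fractionally) co-scheduled with $v$, and then read off the size bound directly from the phase constraint~(\ref{phaseLP_phasepredecessors}) applied at that same machine.

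First I would fix $v\in\band{r}$, set $k^{*}=\med{v}$, and let $i^{*}$ be the fastest machine in the group $\group{k^{*}}$. Since the groups are contiguous blocks in the speed order, $\group{1}\cup\dots\cup\group{k^{*}}$ is exactly the set of machines $\{1,\dots,i^{*}\}$, so the defining property of $\med{v}$ gives $\sum_{j\le i^{*}}\jobmachine{v}{j}\ge 1/2$. Next, for any $u\in\band{r}\cap\predex{v}$, the band definition together with $\start{u}\le\start{v}$ (a consequence of~(\ref{phaseLP_execution})) yields $0\le\start{v}-\start{u}<\delay/4$. Plugging $i=i^{*}$ into the delay constraint~(\ref{phaseLP_delay}) gives $\delay/4>\start{v}-\start{u}\ge\delay\bigl(\sum_{j\le i^{*}}\jobmachine{v}{j}-\machinephase{u}{v}{i^{*}}\bigr)\ge\delay\bigl(1/2-\machinephase{u}{v}{i^{*}}\bigr)$, and hence $\machinephase{u}{v}{i^{*}}>1/4$ for every same-band predecessor $u$ of $v$.

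Now I would invoke the phase constraint~(\ref{phaseLP_phasepredecessors}) at $v$ and $i^{*}$: since $\sum_{j\le i^{*}}\jobmachine{v}{j}\le 1$ by~(\ref{phaseLP_finishjob}), this gives $\sum_{u\prec v}\proc{u}\machinephase{u}{v}{i^{*}}\le\delay\speed{i^{*}}$. Restricting the sum to $u\in\band{r}\cap\predex{v}$ and using $\machinephase{u}{v}{i^{*}}>1/4$ yields $\tfrac14\setproc{\band{r}\cap\predex{v}}<\delay\speed{i^{*}}$. Finally I would convert $\speed{i^{*}}$ into a group speed: since $i^{*}\in\group{k^{*}}$ and all machine speeds within one group lie within a factor of two, $\speed{i^{*}}\le 2\groupspeed{k^{*}}$; and since $\jobtogroup{v}\ge\med{v}=k^{*}$ while the $\groupspeed{k}$ are nondecreasing in $k$ (consecutive groups have strictly increasing speed ranges), $\groupspeed{k^{*}}\le\groupspeed{\jobtogroup{v}}$. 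Chaining these inequalities gives $\setproc{\band{r}\cap\predex{v}}<4\delay\speed{i^{*}}\le 8\delay\groupspeed{\jobtogroup{v}}$, which is the claimed bound.

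The argument is short and I do not anticipate a genuine obstacle. The one real design choice is to apply the delay and phase constraints at $i^{*}=$ the fastest machine of the \emph{median} group, so that $\sum_{j\le i^{*}}\jobmachine{v}{j}\ge 1/2$ is exactly what is needed to make~(\ref{phaseLP_delay}) bite; using instead the machine on which $v$ is ``primarily'' placed would not give this. The only point requiring care is the bookkeeping that links $\speed{i^{*}}$ to $\groupspeed{\jobtogroup{v}}$, i.e.\ the factor-two width of each speed group and the fact that $\jobtogroup{v}$ is never slower than the median group.
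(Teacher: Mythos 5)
Your proposal is correct and follows essentially the same route as the paper's proof: choose $i^{*}$ as the fastest machine of the median group so that $\sum_{j\le i^{*}}\jobmachine{v}{j}\ge 1/2$, use the delay constraint~(\ref{phaseLP_delay}) and the band width $\delay/4$ to force $\machinephase{u}{v}{i^{*}}\ge 1/4$ for every same-band predecessor, and then apply the phase constraint~(\ref{phaseLP_phasepredecessors}) together with $\speed{i^{*}}\le 2\groupspeed{\jobtogroup{v}}$. The only cosmetic difference is that the paper phrases the final step as a contradiction while you argue directly.
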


\begin{proof}
    We fix $r$ and choose any $v \in \band{r}$. Let $A = \band{r} \cap \predex{v}$ and let $i^* = \argmax_{i \in \med{v}}\{\speed{i}\}$. By definition of the groups, if $i^* \in \group{k}$ then $k \le \jobtogroup{v}$. In this case, $\speed{i^*} \le 2 \groupspeed{\jobtogroup{v}}$ and it is sufficient to show that $\setproc{A} \le 4\delay\speed{i^*}$.
    For any $u \in A$,
    \begin{align*}
        \start{v} &\ge \start{u} + \delay \Big( \sum_{j \le i^*} \jobmachine{v}{j} - \machinephase{u}{v}{i^*} \Big) &\text{by (\ref{phaseLP_delay})}
        \\
        &\ge \start{u} + \delay ( \sfrac{1}{2} - \machinephase{u}{v}{i^*} ) &\text{by definition of } i^*.
    \end{align*}
    By definition of $\band{r}$ we have that $\start{v} \le \start{u} + \delay/4$, so $\machinephase{u}{v}{i^*} \ge \sfrac{1}{4}$ for all $u \in A$. 
    Suppose for the sake of contradiction that $\setproc{A} > 4 \delay\speed{i^*}$. Then
    \begin{align*}
        \sum_{u \in \predex{v}} \proc{u} \machinephase{u}{v}{i^*} &\ge \sum_{u \in A} \proc{u} \machinephase{u}{v}{i^*} \ge  \setproc{A}/4 > \delay \speed{i^*} &\text{by assumption}
        \\
        &= \delay \speed{i^*} \sum_j \jobmachine{v}{j} \ge \delay \speed{i^*} \sum_{j \le i^*} \jobmachine{v}{j} &\text{by (\ref{phaseLP_finishjob}).}
    \end{align*}
    Therefore, Constraint~(\ref{phaseLP_phasepredecessors}) is violated.
\end{proof}

Although we can upper-bound the size of a job's predecessor set within a band, there is no limit on the total amount of work that can be put on a single band. For example, if all jobs are independent, then the solution to LP might place all jobs on the first band. The following lemma, due to \cite{chudak1999approximation}, addresses this issue by providing a lower bound on the optimal linear programming solution in terms of the total load across all groups. We repeat the lemma and provide a proof for our linear relaxation.

\begin{lemma}[{\bf \cite{chudak1999approximation} Lower bound on $\pmb{C^*}$ in terms of load}]
    $\displaystyle \quad \sum_{k=1}^K \frac{\setproc{\grouptojobs{k}}}{\groupsize{k}\groupspeed{k}} = O(KC^*).$
    \label{lem:LoadBound}
\end{lemma}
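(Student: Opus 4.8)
The plan is to follow the charging argument of Chudak and Shmoys, adapted to our group assignment $\jobtogroup{\cdot}$. For a group $\group{j}$ write $L_j = \sum_v \proc{v}\sum_{i\in\group{j}}\jobmachine{v}{i}$ for the total fractional load that the LP solution places on $\group{j}$, and for a job $v$ write $\ell_{v,j} = \proc{v}\sum_{i\in\group{j}}\jobmachine{v}{i}$ for $v$'s contribution to it, so $L_j = \sum_v \ell_{v,j}$. The first ingredient is a per-group load bound: every machine $i\in\group{j}$ has $\speed{i} < 2\groupspeed{j}$, and constraint~(\ref{phaseLP_loadlb}) gives $\sum_v \proc{v}\jobmachine{v}{i} \le \LPmakespan\speed{i}$, so summing over $i\in\group{j}$ yields $L_j \le \LPmakespan\sum_{i\in\group{j}}\speed{i} < 2\LPmakespan\,\groupsize{j}\groupspeed{j}$; applying this with the optimal LP solution and Lemma~\ref{lem:Gen:Relaxation} ($\LPmakespan \le 2\optmakespan$) gives $L_j = O(\optmakespan\,\groupsize{j}\groupspeed{j})$.

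The second ingredient pins down, for each job $v$, which groups to charge $\proc{v}$ against. By minimality of $\med{v}$ in Definition~\ref{def:group}, strictly less than half of $v$'s fractional mass is assigned to groups $1,\dots,\med{v}-1$, so by constraint~(\ref{phaseLP_finishjob}) more than half of it lies on groups $\group{j}$ with $j\ge\med{v}$, i.e.\ $\sum_{j\ge\med{v}}\ell_{v,j} > \proc{v}/2$. At the same time, $\jobtogroup{v}$ is by definition the capacity-maximizing group among those indexed $\ge\med{v}$, so $\groupsize{j}\groupspeed{j} \le \groupsize{\jobtogroup{v}}\groupspeed{\jobtogroup{v}}$ for every $j\ge\med{v}$. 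These two facts let me charge $\proc{v}$, divided by the capacity of $\jobtogroup{v}$, against $v$'s own fractional load on groups $j\ge\med{v}$, divided by the (no larger) capacities of those groups.

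Carrying out the charge: fix $k$ with $\grouptojobs{k}\neq\varnothing$; then, using the two facts above,
\begin{align*}
\frac{\setproc{\grouptojobs{k}}}{\groupsize{k}\groupspeed{k}}
&= \sum_{v:\jobtogroup{v}=k}\frac{\proc{v}}{\groupsize{k}\groupspeed{k}}
< 2\sum_{v:\jobtogroup{v}=k}\ \sum_{j\ge\med{v}}\frac{\ell_{v,j}}{\groupsize{k}\groupspeed{k}} \\
&\le 2\sum_{v:\jobtogroup{v}=k}\ \sum_{j\ge\med{v}}\frac{\ell_{v,j}}{\groupsize{j}\groupspeed{j}}
\le 2\sum_{v:\jobtogroup{v}=k}\ \sum_{j}\frac{\ell_{v,j}}{\groupsize{j}\groupspeed{j}}.
\end{align*}
Summing over $k$, using that the sets $\grouptojobs{k}$ partition $V$, and then invoking the load bound from the first ingredient,
\begin{align*}
\sum_{k=1}^K \frac{\setproc{\grouptojobs{k}}}{\groupsize{k}\groupspeed{k}}
\le 2\sum_{v}\sum_{j}\frac{\ell_{v,j}}{\groupsize{j}\groupspeed{j}}
= 2\sum_{j=1}^K \frac{L_j}{\groupsize{j}\groupspeed{j}}
= O\Big(\sum_{j=1}^K \optmakespan\Big)
= O(K\optmakespan).
\end{align*}

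The argument is essentially bookkeeping once the right set of groups is identified; the single place demanding care is the direction of the median inequality — $\proc{v}$ must be charged against the more-than-half of $v$'s mass lying on groups \emph{at least as fast as} $\med{v}$ (precisely the groups over which $\jobtogroup{v}$ maximizes capacity), not against the mass on the slower side, since only for the former can the group capacities be dominated by that of $\jobtogroup{v}$. Beyond that point and keeping the constant factors straight, I anticipate no real obstacle.
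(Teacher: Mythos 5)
Your proof is correct, and it rests on exactly the same three facts as the paper's: by minimality of $\med{v}$ and Constraint~(\ref{phaseLP_finishjob}), more than half of $v$'s fractional mass sits on groups indexed at least $\med{v}$; by Definition~\ref{def:group}, $\jobtogroup{v}$ has the largest capacity $\groupsize{k}\groupspeed{k}$ among those groups; and Constraint~(\ref{phaseLP_loadlb}) together with the within-group factor-$2$ speed bound gives $\sum_{i\in\group{j}}\sum_v \proc{v}\jobmachine{v}{i} \le 2\LPmakespan\,\groupsize{j}\groupspeed{j}$, with $\LPmakespan = O(\optmakespan)$ via Lemma~\ref{lem:Gen:Relaxation}. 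The only difference is packaging: the paper routes the argument through an auxiliary assignment LP, observing that the integral assignment $y_{v,k}=\mathbf{1}[\jobtogroup{v}=k]$ is optimal for it (since that LP decomposes job by job and $\jobtogroup{v}$ minimizes the per-job cost among feasible groups) and then exhibiting the feasible fractional solution $y_{v,k} \propto \sum_{i\in\group{k}}\jobmachine{v}{i}$ whose objective is at most $4K\LPmakespan$; you instead perform the identical charging directly, bounding $\proc{v}/(\groupsize{\jobtogroup{v}}\groupspeed{\jobtogroup{v}})$ by $2\sum_{j\ge\med{v}}\ell_{v,j}/(\groupsize{j}\groupspeed{j})$ and summing. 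Your version is a bit more elementary and makes the per-group load bound explicit as a separate step, while the paper's framing mirrors the Chudak--Shmoys presentation it cites; the constants and the final $O(K\optmakespan)$ bound come out the same either way, and your flagged caution about the direction of the median inequality is precisely the point where the argument could otherwise go wrong.
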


\begin{proof}
     Let $y_{v,k} = 1$ if $\jobtogroup{v} = k$ and 0 otherwise, for all $v \in V$ and $k = 1,\ldots,K$. Then, by definition of the group assignment, the set of all variables $y_{v,k}$ gives an optimal solution to the following linear program. 
    \begin{align}
        \min ~ \sum_{k=1}^K \sum_{v \in V} \frac{ \proc{v} y_{v,k}}{\groupsize{k}\groupspeed{k}}
        \notag
        \\
        \text{subject to} \quad \sum_{k=1}^K y_{v,k} &= 1 &\text{for all}~ v \in V, k = 1,\ldots,K
        \label{simplex:cover}
        \\
        y_{v,k} &= 0 &\text{for all}~ v \in V, k = 1,\ldots, \med{v}-1
        \label{simplex:slow}
        \\
        y_{v,k} &\ge 0 &\text{for all}~ v \in V, k = 1,\ldots,K
        \label{simplex:nonneg}
    \end{align}
    We give a feasible solution to this linear program whose objective values is at most $4K\LPmakespan$, thereby proving the lemma. 
    
    Let $\jobmachine{v}{i}, \start{v}, \comp{v},$ and $\LPmakespan$, for all $v \in V$ and $i \in M$, represent the variables in the the feasible solution to LP. 
    Then, for all jobs $v$ and groups $\group{k}$, we set $y_{v,k}$ as follows.  
     \[ y_{v,k} \leftarrow 
    \begin{cases}
        \frac{\sum_{i\in \group{k}} \jobmachine{v}{i} }{ \sum_{\ell = \med{v}}^K \sum_{i \in \group{\ell}} \jobmachine{v}{i}} &\text{if } k \ge \med{v}
        \\
        0 &\text{otherwise.}
    \end{cases}\]
    It is easy to see that this assignment satisfies all constraints (\ref{simplex:cover}) --  (\ref{simplex:nonneg}). We show that the objective is no more than $4K\LPmakespan$.
    \begin{align*}
        \sum_{k=1}^K ~ \sum_{v} ~ \frac{ \proc{v} y_{v,k}}{\groupsize{k}\groupspeed{k}} 
        &= \sum_{k=1}^K ~ \frac{1}{\groupsize{k}\groupspeed{k}} \sum_{v: \med{v} \le k} \frac{ \proc{v} \sum_{i \in \group{k}} \jobmachine{v}{i}}{\sum_{\ell = \med{v}}^K \sum_{i \in \group{\ell}} \jobmachine{v}{i}} &\text{by assignment of}~ y_{v,k}
        \\
        &\le \sum_{k=1}^K ~ \frac{2}{\sum_{i \in \group{k}} \speed{i}} \sum_{v: \med{v} \le k} \frac{\proc{v} \sum_{i \in \group{k}} \jobmachine{v}{i}}{\sum_{\ell = \med{v}}^K \sum_{i \in \group{\ell}} \jobmachine{v}{i}} &\text{by definition of } \groupsize{k} \groupspeed{k}
        \\
        &\le \sum_{k=1}^K ~ \frac{2C}{\sum_{i \in \group{k}} \sum_v \proc{v} \jobmachine{v}{i}} \sum_{v: \med{v} \le k} \frac{\proc{v} \sum_{i \in \group{k}} \jobmachine{v}{i}}{\sum_{\ell = \med{v}}^K \sum_{i \in \group{\ell}} \jobmachine{v}{i}} &\text{by Constraint~(\ref{phaseLP_loadlb})}
        \\
        &\le \sum_{k=1}^K ~ \frac{2C}{\sum_{i \in \group{k}} \sum_v \proc{v} \jobmachine{v}{i}} \sum_{v: \med{v} \le k} 2 \proc{v} \sum_{i \in \group{k}} \jobmachine{v}{i}  &\text{by $\med{v}$}
        \\
        &\le \sum_{k=1}^K ~ \frac{2C}{\sum_{i \in \group{k}}  \sum_v \proc{v} \jobmachine{v}{i}} \cdot \sum_{v} 2 \proc{v} \sum_{i \in \group{k}} \jobmachine{v}{i} = \sum_{k=1}^K 4C = 4KC.
    \end{align*}
    The lemma follows.
\end{proof}

We note that an alternative group construction is given in \cite{Li17} where the author reduces the approximation ratio of the algorithm in \cite{chudak1999approximation} from $O(\log m)$ to $O(\log m /\log\log m)$. 
However, given our analysis, we have been unable to apply the method of \cite{Li17} to similarly reduce our approximation ratio.
While the speeds of all machines within a group in our algorithm are within a factor of 2 of each other, the construction of \cite{Li17} allows for these speeds to be off by up to a factor of $\log m/\log\log m$. 
Use of this alternative construction incurs an additional factor of $O(\log m /\log\log m)$ for the upper bound proved in Lemma~\ref{lem:Gen:HeightBound} on the processing time of a job's predecessors in a band, resulting in a worse approximation ratio than our current bound. 
    
    \subsection{Computing the Schedule}
    \label{sec:scheduler}
    In this section, we describe the fourth subroutine of our algorithm that outputs a schedule $\sigma$ for a given instance $(G,M,\delay)$ and a given assignment of each job $v$ to a group $\jobtogroup{v}$ of machines in $M$. 
Our algorithm builds on ideas used in the scheduling algorithms given in \cite{LR02} and \cite{chudak1999approximation}. We apply a similar approach as that used in \cite{LR02} to decide when to place a job (and all its remaining predecessors). However, unlike \cite{LR02}, our algorithm accounts for arbitrarily slow machines and arbitrarily large jobs, both of which may cause a job's execution to last well beyond after the time of its placement. As in \cite{chudak1999approximation}, we use the group assignment to guide where jobs are scheduled. One fundamental difference between our algorithm and that in \cite{chudak1999approximation} is that, given certain conditions, we allow a job to be scheduled on machines in groups to which the job is not assigned. The subroutine in given in Algorithm~\ref{alg:scheduler}.



\begin{algorithm}
\KwData{instance $(G,M,\delay)$, assignment $\assignfunc$ of jobs to groups, and overlap parameter $\overlap \ge 1$}
\KwResult{a schedule $\sigma$ of $G$ on $M$}
    {\bf Initialize:} $T \leftarrow 0$; \ $\Placed \leftarrow \varnothing$; \ $\forall j: T_j \leftarrow 0$; \ $\forall v,i:\sigma(v,i) = \infty$ \label{line:init}\;
    \While{$\Placed \ne V$ \label{line:job_loop}}{
        \ForAll{machine groups $k = 1,\ldots,K$}{
            \ForAll{jobs $v \in \grouptojobs{k}$ \label{line:assigned_jobs_loop}}{
                $i \leftarrow \argmin_{j \in \group{k}} \{T_j\}$ \label{line:minmachine}\;
                $A \leftarrow (\predex{v} \cup \{v\}) \setminus \{u: \sigma(u,i) + \proc{u}/\speed{i} \le T_{i} \text{ or } \exists\ j,\ \sigma(u,j) + \proc{u}/\speed{j} \le T_{i}-\delay\}$\label{line:A_def}\;
                \If{\begin{enumerate*}[label={\bf\small(\alph*)}]
                    \item $\setproc{A \setminus \{v\}} \le 8\delay\groupspeed{k}$ and \label{condition:predecessor_size}
                    \item $\setproc{A \setminus \Placed} \ge \setproc{A}/\overlap$ and \label{condition:overlap}
                    \item $A \subseteq \fastergroupstojobs{k}$  \label{condition:predecessor_groups}
                  \end{enumerate*}\label{line:conditions}}{
                    \ForAll{$u \in A$ in topological order \label{line:ancestor_loop}}{
                        $\sigma(u,i) \leftarrow T_{i}$ \label{line:start_time}\;
                        $T_{i} \leftarrow T_{i} + \proc{u}/\speed{i}$ \label{line:update_Ti}\;
                    }
                    $\Placed \leftarrow \Placed \cup A$ \label{line:update_placed}\;
                }
            }
        }
        $T \leftarrow \min \{t: t > T \text{ and either } t = \sigma(v,i) + \proc{v}/\speed{i} \text{ or } t = \delay + \sigma(v,i) + \proc{v}/\speed{i} \text{ for some } v,i \}$ \label{line:updateT}\;
        $\forall j:\ T_j \leftarrow \max\{ T, T_j \}$ \label{line:skipahead}\;
    }
\caption{Group-Based Scheduling with Duplication and Communication Delay}
\label{alg:scheduler}
\end{algorithm}

\junk{
For a given iteration $t$ of line~\ref{line:assigned_jobs_loop}, we use $i\iter{t}$ to denote the value of $i^*$ in iteration $t$ of Algorithm~\ref{alg:scheduler}. Otherwise, we use $\langle \mathsf{term} \rangle \iter{t}$ to denote the value of $\langle \mathsf{term} \rangle$ in iteration $t$ of Algorithm~\ref{alg:scheduler}, where $\langle \mathsf{term} \rangle$ may be $v,A,k,T,\Placed$ or $T_i$ for a given $i$. $\Placed\iter{t}$ is the value of $\Placed$ in iteration $t$ \textit{before} updating it at line~\ref{line:update_placed}. We also use $\sigma$ to denote the schedule output by Algorithm~\ref{alg:scheduler}. We call any non-zero length interval $(g_1,g_2)$ in $\sigma$ during which a machine $i$ is not executing any jobs a \textit{gap} on $i$.
}

We introduce notation to keep track of how the algorithm variables change during the execution.  We refer to each execution of a line of Algorithm~\ref{alg:scheduler} as a \emph{step} of the algorithm. 
For a given step $\step$ of Algorithm~\ref{alg:scheduler}, we use $\langle \mathsf{term} \rangle \superstep{}$ to denote the value of $\langle \mathsf{term} \rangle$ immediately before executing step $\step$, where $\langle \mathsf{term} \rangle$ may be $v,i,A,k,T,\Placed$, or $T_i$ for a given $i$. We also use $\sigma$ to denote the schedule output by Algorithm~\ref{alg:scheduler}. 
Define
    \[\mathcal{T} = \{0\} \cup \{t : \exists\ (v,i), \sigma(v,i) \ne \infty \text{ and } t = \sigma(v,i) + \proc{v}/\speed{i} \text{ or } t = \delay + \sigma(v,i) + \proc{v}/\speed{i} \}\]
    and $V(j,\step)$ to be the set of jobs that have been placed on machine $j$ prior to executing step $\step$.  The following two lemmas characterize how $T_i$ and $T$ vary, respectively, during the execution of the algorithm.  

\begin{lemma}[{\bf $\pmb{T_i}$ tracks maximum completion time on $\pmb{i}$}]
 For any machine $j$ and any step $\step$ executing line 10 or 13, we have 
        $T_j\superstep{+1} = \max \{ T\superstep{+1}, \max_{v \in V(j,\step)} \{ \sigma(v,j) + \proc{v}/\speed{j} \}\}$.
      \label{lem:Ti_in_algo}
\end{lemma}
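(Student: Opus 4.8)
The plan is to establish the identity as a loop invariant, proved by induction along the subsequence $\varsigma_1 < \varsigma_2 < \cdots$ of steps that execute line~\ref{line:update_Ti} (the update $T_i \leftarrow T_i + \proc{u}/\speed{i}$) or line~\ref{line:skipahead} (the update $\forall j:\ T_j \leftarrow \max\{T,T_j\}$). First I would record the monotonicity facts that drive the argument: $T$ is modified only at line~\ref{line:updateT}, where it strictly increases; each $T_j$ is modified only at line~\ref{line:update_Ti} (adding the positive quantity $\proc{u}/\speed{j}$) or line~\ref{line:skipahead} (a max with $T$); and $\sigma(v,j)$ becomes finite only at line~\ref{line:start_time}, which sets $\sigma(v,j) \leftarrow T_j$. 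In particular $T$ and every $T_j$ are non-decreasing throughout the execution.

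The one structural point to nail down is the block of steps strictly between $\varsigma_t$ and $\varsigma_{t+1}$. Using that line~\ref{line:start_time} is always immediately followed by line~\ref{line:update_Ti}, and line~\ref{line:updateT} always immediately followed by line~\ref{line:skipahead}, a short case analysis shows: if $\varsigma_{t+1}$ executes line~\ref{line:update_Ti} for a job $u$ on a machine $i$, then $\varsigma_{t+1}-1$ is exactly the line~\ref{line:start_time} step that set $\sigma(u,i)\leftarrow T_i$, and no step strictly between $\varsigma_t$ and $\varsigma_{t+1}$ alters $T$, any $T_j$, $\sigma$, or any ``placed-on-$j$'' set $V(j,\cdot)$; symmetrically, if $\varsigma_{t+1}$ executes line~\ref{line:skipahead}, then $\varsigma_{t+1}-1$ is the line~\ref{line:updateT} step and, apart from that one update, nothing between $\varsigma_t$ and $\varsigma_{t+1}$ touches those quantities. (Hence $T_j$ has the same value right after $\varsigma_t$ and right before $\varsigma_{t+1}$ for every $j$; and since neither line~\ref{line:update_Ti} nor line~\ref{line:skipahead} places a job, $V(j,\varsigma_{t+1}) = V(j,\varsigma_{t+1}+1)$.) The base case is then immediate: before the first such step all $T_j = T = 0$, and that step must be a line~\ref{line:update_Ti} step (a source of the DAG satisfies conditions \ref{condition:predecessor_size}--\ref{condition:predecessor_groups} already in the first while-iteration, hence is placed there), and a direct substitution verifies the identity.

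For the inductive step with $\varsigma_{t+1}$ executing line~\ref{line:update_Ti} for $u$ on $i$: for each machine $j\neq i$ neither side of the identity changes from its post-$\varsigma_t$ value (the lone intervening line~\ref{line:start_time} step writes to $i\neq j$), so the claim is inherited from the hypothesis; for $j=i$, let $\tau$ denote the common value of $T_i$ just before line~\ref{line:start_time} and just before line~\ref{line:update_Ti}, which equals its value right after $\varsigma_t$. The hypothesis at $\varsigma_t$ gives $\tau \ge T$ and $\tau \ge \sigma(v,i)+\proc{v}/\speed{i}$ for every $v$ placed on $i$ so far; line~\ref{line:start_time} sets $\sigma(u,i) = \tau$ and line~\ref{line:update_Ti} sets $T_i = \tau + \proc{u}/\speed{i}$; since $\proc{u}/\speed{i}>0$, this new term for $u$ strictly exceeds $T$ and every other completion time, so the right-hand side collapses to $\tau+\proc{u}/\speed{i} = T_i$. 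For the inductive step with $\varsigma_{t+1}$ executing line~\ref{line:skipahead}: $\sigma$ and all $V(j,\cdot)$ are unchanged since $\varsigma_t$, line~\ref{line:updateT} raises $T$ from its old value $T^-$ to some $T^+\ge T^-$, and line~\ref{line:skipahead} then sets each $T_j \leftarrow \max\{T^+, T_j^{\text{old}}\}$; substituting the hypothesis for $T_j^{\text{old}}$ and using $T^+\ge T^-$ to absorb the stale $T^-$ term gives exactly $T_j = \max\{T^+,\ \max_{v\in V(j,\cdot)}\{\sigma(v,j)+\proc{v}/\speed{j}\}\}$.

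I expect the only real difficulty to be getting this bookkeeping exactly right — in particular remembering that $V(j,\varsigma)$ counts jobs placed \emph{before} step $\varsigma$, that lines \ref{line:update_Ti} and \ref{line:skipahead} place nothing themselves, and that the between-steps analysis isolates precisely one relevant update per case. One remark worth including (though not strictly needed) is that a job is never placed twice on the same machine: once $u$ is placed on $i$, its completion time is at most $T_i$ forever after by monotonicity, so $u$ is thereafter filtered out of $A$ at line~\ref{line:A_def}; and in any case the $j=i$ argument above is robust to a hypothetical re-placement, since overwriting $\sigma(u,i)$ by the current $T_i$ and then incrementing $T_i$ by $\proc{u}/\speed{i}$ restores the identity regardless.
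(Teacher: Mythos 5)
Your proposal is correct and follows essentially the same route as the paper's proof: induction over the subsequence of steps executing lines~\ref{line:update_Ti} and~\ref{line:skipahead}, splitting the inductive step by which of the two lines is executed and, in the line~\ref{line:update_Ti} case, by whether $j$ is the machine that just received a job, with the new completion time absorbing the maximum exactly as in the paper. The only nitpick is a phrasing slip in your structural paragraph (the line~\ref{line:start_time} step at $\step_{t+1}-1$ does alter $\sigma$ and $V(i,\cdot)$, so "no step strictly between" should read "no other step"), but your inductive argument already accounts for that update correctly.
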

\begin{proof}
       Let $\step_0 ,\step_1, \step_2, \ldots$ be the sequence of steps defined as follows: $\step_0$ is first step of the algorithm (executing line\ref{line:init}) and $\step_{d+1}$ is the $d$\textsuperscript{th} step, in order, in which either line~\ref{line:update_Ti} or \ref{line:skipahead} is executed. We show that, for any $j$ and $d$, $T_i\superstep{_d+1} = \max \{ T\superstep{_d+1}, \max_{v \in V(j,\step_d)} \{ \sigma(v,j) + \proc{v}/\speed{j} \}\}$.  The proof is by induction on $d$.  The equality holds easily for $d=0$ by the initialization step. So we assume the induction hypothesis equality holds up to $d \ge 0$. We consider two cases. 
    First, suppose that $\step_{d+1}$ executes line~\ref{line:update_Ti}. Then the job $u = u\superstep{_{d+1}}$ was just placed on machine $i\superstep{_{d+1}}$ in line~\ref{line:start_time}.  For $j \neq i\superstep{_{d+1}}$,  neither $T_j$ nor $T$ has changed, and $V(j,\step_{d+1})$ is the same as $V(j,\step_d)$, so the claim follows from the induction hypothesis.  For $j = i\superstep{_{d+1}}$, by the execution of line~\ref{line:start_time} on step $\step_{d+1}-1$ and line~\ref{line:update_Ti} on step $\step_{d+1}$, we have $T_j\superstep{_{d+1}+1} = \sigma(u,j) + \proc{u}/\speed{j}$.  We thus derive
    \begin{align*}
      T_j\superstep{_{d+1}+1} & =  \sigma(u,j) + \proc{u}/\speed{j} & \text{lines~\ref{line:start_time} and~\ref{line:update_Ti}}\\
      & \le  \max_{v \in V(\step_{d+1},j)} \{\sigma(v,j) + \proc{v}/\speed{j}\} &
      \text{$u \in V(\step_{d+1}+1,j)$}\\
      & =  \max\{\sigma(u,j) + \proc{u}/\speed{j}, \max_{v \in V(\step_{d},j)} \{\sigma(v,j) + \proc{v}/\speed{j}\}\} & \text{$V(i,\step_{d+1}) = V(i,\step_d) \cup \{u\}$}\\
      & \le  \max\{\sigma(u,j) + \proc{u}/\speed{j}, T_j\superstep{_d+1}\} & \text{induction hypothesis}\\
      & = \max\{\sigma(u,j) + \proc{u}/\speed{j}, T_j\superstep{_{d+1}}\} & \text{$T_j\superstep{_{d+1}} = T_j\superstep{_d+1}$} \\
      & = \max\{\sigma(u,j) + \proc{u}/\speed{j}, \sigma(u,j)\} & \text{line~\ref{line:start_time}}\\
      & = \sigma(u,j) + \proc{u}/\speed{j} = T_j\superstep{_{d+1}+1}.
      \end{align*}
      It thus follows that every inequality in the above sequence is, in fact, an equality yielding the equation
      \[
      T_j\superstep{_{d+1}+1} = \max_{v \in V(\step_{d+1},j)} \{\sigma(v,j) + \proc{v}/\speed{j}\}.
      \]
    We also have
    \[
    T_j\superstep{_{d+1}+1} \ge T_j\superstep{_{d+1}} = T_j\superstep{_d+1} \ge T\superstep{_d+1} = T\superstep{_{d+1}+1},
    \]
    yielding the desired claim for induction step:
    \[
    T_j\superstep{_{d+1}+1} = \max\{T\superstep{_{d+1}+1}, \max_{v \in V(\step_{d+1},j)} \{\sigma(v,j) + \proc{v}/\speed{j}\}\}.
    \]
    \junk{
    \begin{align*}
        T_i\superstep{_{d+1}+1} &= T_i\superstep{_d+1} + \proc{u}/\speed{i} = \max\{ T\superstep{_d+1}, \max_{v \in V(i,\step_d)} \{ \sigma(v,i) + \proc{v}/\speed{i}\} \} + \proc{u}/\speed{i}  
        \\
        &= \max\{ T\superstep{_{d+1}}, \max_{v \in V(i,\step_{d+1})} \{ \sigma(v,i) + \proc{v}/\speed{i}\} \}.
    \end{align*}
    where the last equality follows from the fact that $T\superstep{_d+1} = T\superstep{_{d+1}}$ when $\step_{d+1}$ executes line~\ref{line:update_Ti} and the fact that $V(i,\step_{d+1}) = V(i,\step_d) \cup \{u\}$ where $\sigma(u,i) = T_i\superstep{_{d+1}}$ by line~\ref{line:start_time}.}
    
    We next consider the second case of the lemma, in which $\step_{d+1}$ executes line~\ref{line:skipahead}.  Fix any machine $j$.  We derive
    \begin{align*}
        T_j\superstep{_{d+1}+1} & = \max\{T\superstep{_{d+1}}, T_j\superstep{_{d+1}}\} & \text{line~\ref{line:skipahead}}\\
        & = \max\{T\superstep{_{d+1}+1}, T_j\superstep{_{d+1}}\}  & \text{$T\superstep{_{d+1}+1} = T\superstep{_{d+1}}$}\\
        & = \max\{T\superstep{_{d+1}+1}, 
        T_j\superstep{_d+1}\} & \text{$T\superstep{_{d+1}} = T_j\superstep{_d+1}$}\\
        & = \max\{T\superstep{_{d+1}+1},
        \max_{v \in V(j,\step_d)}\{\sigma(v,j) + \proc{v}/\speed{j}\}\} & \text{induction hypothesis}\\
        & = \max\{T\superstep{_{d+1}+1},
        \max_{v \in V(j,\step_{d+1})}\{\sigma(v,j) + \proc{v}/\speed{j}\}\},
    \end{align*}
    thus completion the induction step for the second case of the lemma.
\end{proof}

\begin{lemma}[{\bf $\pmb{T}$ iterates through $\pmb{\mathcal{T}}$}]
For any integer $1 \le \ell \le |\mathcal{T}|$, if step $\step$ is the $\ell$th execution of line~\ref{line:updateT}, then $T\superstep{}$ is the $\ell$th minimum element of $\mathcal{T}$.
\label{lem:T_in_algo}
\end{lemma}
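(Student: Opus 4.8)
The plan is to prove, by induction on $\ell$, that whenever the $\ell$-th execution $\step_\ell$ of line~\ref{line:updateT} exists we have $T\superstep{_\ell}=\tau_\ell$, where $\tau_1<\tau_2<\cdots$ is the increasing enumeration of $\mathcal{T}$. The base case is immediate: before line~\ref{line:updateT} is ever run, $T$ has only been touched by the initialization on line~\ref{line:init}, so $T\superstep{_1}=0=\tau_1$ (every other element of $\mathcal{T}$ is strictly positive, since $\proc{v}>0$, $\speed{i}>0$, and $\delay\ge 0$).

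Before the inductive step I would record three elementary invariants of Algorithm~\ref{alg:scheduler}, each read directly off the pseudocode. \textbf{(i)} The global variable $T$ is changed only on line~\ref{line:updateT}, where it is set to a minimum taken over values strictly larger than its current value; hence $T$ is constant between consecutive executions of line~\ref{line:updateT}, and the values $T\superstep{_1}<T\superstep{_2}<\cdots$ are strictly increasing. \textbf{(ii)} Once $\sigma(v,i)$ is assigned a finite value on line~\ref{line:start_time}, it is never changed again: line~\ref{line:update_Ti} immediately makes $T_i\ge\sigma(v,i)+\proc{v}/\speed{i}$, and since $T_i$ is non-decreasing afterwards (this is essentially Lemma~\ref{lem:Ti_in_algo}), the set $A$ formed on line~\ref{line:A_def} permanently excludes $v$ on machine $i$. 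In particular the partial schedule agrees with the final schedule on every pair it has placed, so every finite element of the minimand on line~\ref{line:updateT} lies in $\mathcal{T}$. \textbf{(iii)} If a while-loop iteration starts with $T=t$, then any job $v$ it places on a machine $i$ has $\sigma(v,i)=T_i\ge t$ — because line~\ref{line:skipahead} at the close of the previous iteration raised every $T_j$ to at least $t$ and $T_i$ only increases within an iteration — and therefore both $\sigma(v,i)+\proc{v}/\speed{i}$ and $\delay+\sigma(v,i)+\proc{v}/\speed{i}$ strictly exceed $t$.

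For the inductive step, assume $T\superstep{_\ell}=\tau_\ell$ and that $\step_{\ell+1}$ exists. By invariant (i) the value $\tau^\ast$ assigned by $\step_\ell$ equals $T\superstep{_{\ell+1}}$, and $\tau^\ast=\min\{t: t>\tau_\ell \text{ and } t \text{ is a completion time, or a completion time plus }\delay\text{, of a job already placed at that moment}\}$ (this minimand is non-empty in any terminating run, which also forces $|\mathcal{T}|\ge\ell+1$). For the lower bound, invariant (ii) shows every candidate lies in $\mathcal{T}\cap(\tau_\ell,\infty)$, so $\tau^\ast\ge\tau_{\ell+1}$. For the upper bound, fix a pair $(v,i)$ realizing $\tau_{\ell+1}$ in the final schedule. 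If $(v,i)$ were placed only in some while-loop iteration strictly later than the one containing $\step_\ell$, that iteration would begin with $T\ge\tau^\ast\ge\tau_{\ell+1}$ by invariant (i), and then invariant (iii) would force the completion time (or completion-plus-$\delay$) of $v$ on $i$ — namely $\tau_{\ell+1}$ — to strictly exceed $\tau_{\ell+1}$, a contradiction. Hence $(v,i)$ is already placed when $\step_\ell$ runs, so $\tau_{\ell+1}$ is itself a candidate in the minimand (it is $>\tau_\ell=T$), giving $\tau^\ast\le\tau_{\ell+1}$. Combining the bounds, $T\superstep{_{\ell+1}}=\tau_{\ell+1}$.

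The three invariants are routine bookkeeping about how lines~\ref{line:init},~\ref{line:start_time},~\ref{line:update_Ti},~\ref{line:updateT}, and~\ref{line:skipahead} update $T$, $T_i$, and $\sigma$. The one genuinely delicate point is the upper-bound half of the inductive step: because $\mathcal{T}$ is defined via the \emph{final} schedule, a priori line~\ref{line:updateT} could skip an element of $\mathcal{T}$ whose realizing job is placed only in a later iteration. Ruling this out is exactly where the strict monotonicity of $T$ (invariant (i)) must be played against invariant (iii) — a later iteration starts at a $T$-value already $\ge\tau^\ast$, so every job placed there completes strictly after $\tau_{\ell+1}$ and cannot be the one responsible for $\tau_{\ell+1}$.
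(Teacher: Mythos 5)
Your proof is correct, and it follows the same inductive skeleton as the paper's (induction on $\ell$, $T$ monotone and modified only at line~\ref{line:updateT}, and the key fact that any job placed after an update starts no earlier than the new value of $T$, so its completion and completion-plus-$\delay$ strictly exceed it). The genuine difference is in how the new value of $T$ is pinned down. The paper splits the step into (i) "if the minimand contains some element exceeding the current $T$, the assigned value is the next element of $\mathcal{T}$" and (ii) a separate contradiction argument for non-emptiness of the minimand: if no candidate exceeded $T$, no job was placed in the last iteration, so some unplaced job with all predecessors completed and communicated would have $A=\{v\}$ and satisfy conditions \ref{condition:predecessor_size}--\ref{condition:predecessor_groups} of line~\ref{line:conditions}, hence would have been placed. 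You instead bracket the assigned value: candidates lie in $\mathcal{T}\cap(\tau_\ell,\infty)$ by your write-once invariant (ii), giving the lower bound, and the pair realizing $\tau_{\ell+1}$ must already be placed before $\step_\ell$ (a later placement would start at or after the new $T\ge\tau_{\ell+1}$ and so could not realize $\tau_{\ell+1}$), giving the upper bound and non-emptiness simultaneously. This is legitimate within the lemma's scope $\ell+1\le|\mathcal{T}|$ and has the advantage of never touching the scheduling conditions of line~\ref{line:conditions} or Lemma~\ref{lem:Ti_in_algo} in an essential way; the trade-off is that the paper's version additionally certifies that line~\ref{line:updateT} never stalls while $\Placed\ne V$, a fact the paper reuses (e.g., in the runtime argument), which your proof does not provide. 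Your parenthetical claim that the minimand is "non-empty in any terminating run" is unproven as stated, but it is not load-bearing, since your upper-bound argument exhibits $\tau_{\ell+1}$ as a member of the minimand.
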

\begin{proof}
  Our proof is by induction on $\ell$.  Let $\step_\ell$ denote the $\ell$th execution of line~\ref{line:updateT}.  For the base case $\ell = 1$, we note that $T\superstep{_1} = 0$, which is the smallest element of $\mathcal{T}$.  For the induction hypothesis, suppose the claim holds up to $\ell \ge 1$.  Consider the claim for $(\ell+1)$th execution of line~\ref{line:updateT}.  Since $T$ is only updated in line~\ref{line:updateT}, $T\superstep{_{\ell+1}} = T\superstep{_\ell+1}$.    
        
        By Lemma~\ref{lem:Ti_in_algo} and the setting of $\sigma$ in line~\ref{line:start_time}, for every job $v$ placed after the $(\ell-1)$th execution of line~\ref{line:updateT} and before the $\ell$th execution of line~\ref{line:updateT}, there exists an $i$ such that $\sigma(v,i) + \proc{v}/\speed{i}$ is finite and
        exceeds $T\superstep{_\ell}$, which by induction is the $\ell$th minimum element of $\mathcal{T}$.  So, if there is any job $v$ such that $\sigma(v,i) + \proc{v}/\speed{i} + \rho$ is finite and
        exceeds $T\superstep{_\ell}$, then $T\superstep{_\ell+1}$ equals the $(\ell+1)$th minimum element of $\mathcal{T}$ since every subsequent job placed has its start time at least $T\superstep{_\ell+1}$. 
        
        It remains to show that there is at least one such job.  For the sake of contradiction, suppose not.  Then, it must be the case that for every $v$ already placed, $\sigma(v,i) + \proc{v}/\speed{i} + \rho$ is at most $T\superstep{_\ell}$, and no job was placed after the $(\ell-1)$th execution of line~\ref{line:updateT} and before the $\ell$th execution of line~\ref{line:updateT}.
        If $\Placed$ equals $V$ before the $\ell$th execution of line~\ref{line:job_loop}, then there is nothing to prove since then there is no $\ell$th execution of line~\ref{line:updateT}.  Otherwise, there is a job $v$ not placed, all of whose ancestors already have been placed.  Then, in an execution of line~\ref{line:A_def} between the $(\ell-1)$th and $\ell$th executions of line~\ref{line:updateT}, $A$ is set to $\{v\}$, ensuring that all conditions of line~\ref{line:conditions} are satisfied and $v$ is placed, yielding a contradiction.  
\end{proof}
\junk{
\begin{lemma}[{\bf $\pmb{T}$ iterates through $\pmb{\mathcal{T}}$ and $\pmb{T_i}$ tracks max completion time on $\pmb{i}$}]
    \begin{enumerate*}[label={(\roman*)}]
        \item For any machine $j$ and any step $\step$ executing line 10 or 13, we have 
        $T_j\superstep{+1} = \max \{ T\superstep{+1}, \max_{v \in V(j,\step)} \{ \sigma(v,j) + \proc{v}/\speed{j} \}\}$;
        \junk{
        any line other than lines~\ref{line:init}, \ref{line:update_Ti} and \ref{line:skipahead}, we have $T_j\superstep{} = \max \{ T\superstep{}, \max_{v \in V(j,\step)} \{ \sigma(v,j) + \proc{v}/\speed{j} \}\}$};
        \label{claim:Ti_update}
        \item For any integer $1 \le \ell \le |\mathcal{T}|$, if step $\step$ is the $\ell$th execution of line~\ref{line:updateT}, then $T\superstep{}$ is the $\ell$th minimum element of $\mathcal{T}$.  \label{claim:T_update}
    \end{enumerate*}
    \label{lem:T_and_Ti_in_algo}
\end{lemma}
\begin{proof}
    We first prove claim \ref{claim:Ti_update}. Let $\step_0 ,\step_1, \step_2, \ldots$ be the sequence of steps defined as follows: $\step_0$ is first step of the algorithm (executing line\ref{line:init}) and $\step_{d+1}$ is the $d$\textsuperscript{th} step, in order, in which either line~\ref{line:update_Ti} or \ref{line:skipahead} is executed. We show that, for any $j$ and $d$, $T_i\superstep{_d+1} = \max \{ T\superstep{_d+1}, \max_{v \in V(j,\step_d)} \{ \sigma(v,j) + \proc{v}/\speed{j} \}\}$.  The proof is by induction on $d$.  The equality holds easily for $d=0$ by the initialization step. So we assume the induction hypothesis equality holds up to $d \ge 0$. We consider two cases. 
    First, suppose that $\step_{d+1}$ executes line~\ref{line:update_Ti}. Then the job $u = u\superstep{_{d+1}}$ was just placed on machine $i\superstep{_{d+1}}$ in line~\ref{line:start_time}.  For $j \neq i\superstep{_{d+1}}$,  neither $T_j$ nor $T$ has changed, and $V(j,\step_{d+1})$ is the same as $V(j,\step_d)$, so the claim follows from the induction hypothesis.  For $j = i\superstep{_{d+1}}$, by the execution of line~\ref{line:start_time} on step $\step_{d+1}-1$ and line~\ref{line:update_Ti} on step $\step_{d+1}$, we have $T_j\superstep{_{d+1}+1} = \sigma(u,j) + \proc{u}/\speed{j}$.  We thus derive
    \begin{align*}
      T_j\superstep{_{d+1}+1} & =  \sigma(u,j) + \proc{u}/\speed{j} & \text{lines~\ref{line:start_time} and~\ref{line:update_Ti}}\\
      & \le  \max_{v \in V(\step_{d+1},j)} \{\sigma(v,j) + \proc{v}/\speed{j}\} &
      \text{$u \in V(\step_{d+1}+1,j)$}\\
      & =  \max\{\sigma(u,j) + \proc{u}/\speed{j}, \max_{v \in V(\step_{d},j)} \{\sigma(v,j) + \proc{v}/\speed{j}\}\} & \text{$V(i,\step_{d+1}) = V(i,\step_d) \cup \{u\}$}\\
      & \le  \max\{\sigma(u,j) + \proc{u}/\speed{j}, T_j\superstep{_d+1}\} & \text{induction hypothesis}\\
      & = \max\{\sigma(u,j) + \proc{u}/\speed{j}, T_j\superstep{_{d+1}}\} & \text{$T_j\superstep{_{d+1}} = T_j\superstep{_d+1}$} \\
      & = \max\{\sigma(u,j) + \proc{u}/\speed{j}, \sigma(u,j)\} & \text{line~\ref{line:start_time}}\\
      & = \sigma(u,j) + \proc{u}/\speed{j} = T_j\superstep{_{d+1}+1}.
      \end{align*}
      It thus follows that every inequality in the above sequence is, in fact, an equality yielding the equation
      \[
      T_j\superstep{_{d+1}+1} = \max_{v \in V(\step_{d+1},j)} \{\sigma(v,j) + \proc{v}/\speed{j}\}.
      \]
    We also have
    \[
    T_j\superstep{_{d+1}+1} \ge T_j\superstep{_{d+1}} = T_j\superstep{_d+1} \ge T\superstep{_d+1} = T\superstep{_{d+1}+1},
    \]
    yielding the desired claim for induction step:
    \[
    T_j\superstep{_{d+1}+1} = \max\{T\superstep{_{d+1}+1}, \max_{v \in V(\step_{d+1},j)} \{\sigma(v,j) + \proc{v}/\speed{j}\}\}.
    \]
    \junk{
    \begin{align*}
        T_i\superstep{_{d+1}+1} &= T_i\superstep{_d+1} + \proc{u}/\speed{i} = \max\{ T\superstep{_d+1}, \max_{v \in V(i,\step_d)} \{ \sigma(v,i) + \proc{v}/\speed{i}\} \} + \proc{u}/\speed{i}  
        \\
        &= \max\{ T\superstep{_{d+1}}, \max_{v \in V(i,\step_{d+1})} \{ \sigma(v,i) + \proc{v}/\speed{i}\} \}.
    \end{align*}
    where the last equality follows from the fact that $T\superstep{_d+1} = T\superstep{_{d+1}}$ when $\step_{d+1}$ executes line~\ref{line:update_Ti} and the fact that $V(i,\step_{d+1}) = V(i,\step_d) \cup \{u\}$ where $\sigma(u,i) = T_i\superstep{_{d+1}}$ by line~\ref{line:start_time}.}
    
    We next consider the second case of claim~\ref{claim:Ti_update}, in which $\step_{d+1}$ executes line~\ref{line:skipahead}.  Fix any machine $j$.  We derive
    \begin{align*}
        T_j\superstep{_{d+1}+1} & = \max\{T\superstep{_{d+1}}, T_j\superstep{_{d+1}}\} & \text{line~\ref{line:skipahead}}\\
        & = \max\{T\superstep{_{d+1}+1}, T_j\superstep{_{d+1}}\}  & \text{$T\superstep{_{d+1}+1} = T\superstep{_{d+1}}$}\\
        & = \max\{T\superstep{_{d+1}+1}, 
        T_j\superstep{_d+1}\} & \text{$T\superstep{_{d+1}} = T_j\superstep{_d+1}$}\\
        & = \max\{T\superstep{_{d+1}+1},
        \max_{v \in V(j,\step_d)}\{\sigma(v,j) + \proc{v}/\speed{j}\}\} & \text{induction hypothesis}\\
        & = \max\{T\superstep{_{d+1}+1},
        \max_{v \in V(j,\step_{d+1})}\{\sigma(v,j) + \proc{v}/\speed{j}\}\},
    \end{align*}
    thus completion the induction step for the second case of claim~\ref{claim:Ti_update}.
    
        We now prove claim \ref{claim:T_update}.
        Our proof is by induction on $\ell$.  Let $\step_\ell$ denote the $\ell$th execution of line~\ref{line:updateT}.  For the base case $\ell = 1$, we note that $T\superstep{_1} = 0$, which is the smallest element of $\mathcal{T}$.  For the induction hypothesis, suppose the claim holds up to $\ell \ge 1$.  Consider the claim for $(\ell+1)$th execution of line~\ref{line:updateT}.  Since $T$ is only updated in line~\ref{line:updateT}, $T\superstep{_{\ell+1}} = T\superstep{_\ell+1}$.    
        
        By claim~\ref{claim:Ti_update} and the setting of $\sigma$ in line~\ref{line:start_time}, for every job $v$ placed after the $(\ell-1)$th execution of line~\ref{line:updateT} and before the $\ell$th execution of line~\ref{line:updateT}, there exists an $i$ such that $\sigma(v,i) + \proc{v}/\speed{i}$ is finite and
        exceeds $T\superstep{_\ell}$, which by induction is the $\ell$th minimum element of $\mathcal{T}$.  So, if there is any job $v$ such that $\sigma(v,i) + \proc{v}/\speed{i} + \rho$ is finite and
        exceeds $T\superstep{_\ell}$, then $T\superstep{_\ell+1}$ equals the $(\ell+1)$th minimum element of $\mathcal{T}$ since every subsequent job placed has its start time at least $T\superstep{_\ell+1}$. 
        
        It remains to show that there is at least one such job.  For the sake of contradiction, suppose not.  Then, it must be the case that for every $v$ already placed, $\sigma(v,i) + \proc{v}/\speed{i} + \rho$ is at most $T\superstep{_\ell}$, and no job was placed after the $(\ell-1)$th execution of line~\ref{line:updateT} and before the $\ell$th execution of line~\ref{line:updateT}.
        If $\Placed$ equals $V$ before the $\ell$th execution of line~\ref{line:job_loop}, then there is nothing to prove since then there is no $\ell$th execution of line~\ref{line:updateT}.  Otherwise, there is a job $v$ not placed, all of whose ancestors already have been placed.  Then, in an execution of line~\ref{line:A_def} between the $(\ell-1)$th and $\ell$th executions of line~\ref{line:updateT}, $A$ is set to $\{v\}$, ensuring that all conditions of line~\ref{line:conditions} are satisfied and $v$ is placed, yielding a contradiction.
        \junk{
        Recall that $T\superstep{_d+1}$ is the value of $T$ immediately after executing step $\step_d$. Let $\step_d$ be an arbitrary step executing line~\ref{line:updateT}. By claim~\ref{claim:Ti_update}, all jobs added between updating $T$ to $T\superstep{_d}$ and step $\step_d$ have their start times $t > T\superstep{_d}$. Therefore, all new times added to $\mathcal{T}$ after updating $T$ to $T\superstep{_d}$ are at least $T\superstep{_d}$. It remains to show that there is always some value in $t$ picked out in line~\ref{line:updateT}. Suppose there is no such value. Then $T$ equals the maximum completion time of any job placed so far plus $\delay$. In this case, if there were any remaining jobs they would have been placed prior to step $\step{_d}$ (because all jobs' predecessors have finished and communicated to all machines by $T\superstep{_d}$). So it must be that all jobs have been placed and the schedule is complete. }
\end{proof}
}

\begin{lemma}[{\bf Correctness and Runtime}]
    Algorithm \ref{alg:scheduler} outputs a valid schedule in $\poly(n,m)$ time. 
\end{lemma}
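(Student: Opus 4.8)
The plan is to bound the running time first and then verify, one at a time, the defining properties of a valid schedule: that every job gets a finite start time on some machine, that the jobs placed on any single machine occupy pairwise-disjoint intervals (so there is no overlap and no preemption), and that for every precedence edge $u \prec v$ some copy of $u$ finishes on $v$'s machine by the start of $v$ or on a different machine $\delay$ units earlier. For the running time I would first argue that each pass of the \texttt{while} loop of line~\ref{line:job_loop} takes $\poly(n,m)$ time: it iterates over the $K \le \log m$ groups and, within a group, over at most $n$ assigned jobs, forming $A$ in line~\ref{line:A_def} in $O(nm)$ time, topologically sorting and placing $A$ in lines~\ref{line:ancestor_loop}--\ref{line:update_Ti} in $O(n^2)$ time, checking the three conditions of line~\ref{line:conditions} in $O(n)$ time, and updating $T$ and $\{T_j\}$ in lines~\ref{line:updateT}--\ref{line:skipahead} in $O(nm)$ time. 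Then I would bound the number of passes: each pass executes line~\ref{line:updateT} exactly once, and by Lemma~\ref{lem:T_in_algo} the $\ell$th such execution is associated with the $\ell$th smallest element of $\mathcal{T}$, so the number of passes is at most $|\mathcal{T}| \le 2nm+1$ (using that $\sigma$ has at most $nm$ finite entries). The sub-argument inside the proof of Lemma~\ref{lem:T_in_algo} also gives termination with $\Placed = V$: as long as $\Placed \ne V$, some not-yet-placed job all of whose predecessors are placed has its set $A$ collapse to a singleton, which passes all three conditions and is placed, so the loop cannot stall before all jobs are placed.

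Before checking validity I would record a short invariant that is used repeatedly: line~\ref{line:start_time} assigns a finite value to $\sigma(v,i)$ at most once for each pair $(v,i)$, so $\sigma$ is well defined and, once finite, never changes. Indeed, once $v$ is placed on $i$, line~\ref{line:update_Ti} makes $T_i \ge \sigma(v,i) + \proc{v}/\speed{i}$, an inequality preserved thereafter by Lemma~\ref{lem:Ti_in_algo}; hence at any later execution of line~\ref{line:A_def} whose current machine is $i$, the job $v$ fails the membership test defining $A$ and is removed, so it is never placed on $i$ again.

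With this in hand the validity checks are routine. For no overlap and no preemption: each batch $A$ placed on a machine $i$ in lines~\ref{line:ancestor_loop}--\ref{line:update_Ti} is laid down back-to-back starting from the current value of $T_i$, which by Lemma~\ref{lem:Ti_in_algo} is at least the completion time of every job already placed on $i$, and line~\ref{line:skipahead} only increases the $T_j$; hence the execution intervals on each machine are pairwise disjoint. For precedence and communication: fix a placed copy of $v$ on $i$, placed in the batch formed at the step $\step$ executing line~\ref{line:A_def}, and fix $u \prec v$; since $T_i$ only grows between step $\step$ and the placement of $v$, we have $\sigma(v,i) \ge T_i\superstep{}$. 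If $u \in A$, the topological order of line~\ref{line:ancestor_loop} places $u$ before $v$ in the batch, so $\sigma(u,i) + \proc{u}/\speed{i} \le \sigma(v,i)$. If $u \notin A$, then because $u \in \predex{v}$ the definition of $A$ in line~\ref{line:A_def} forces either $\sigma(u,i) + \proc{u}/\speed{i} \le T_i\superstep{} \le \sigma(v,i)$ or $\sigma(u,j) + \proc{u}/\speed{j} \le T_i\superstep{} - \delay \le \sigma(v,i) - \delay$ for some $j$, and by the invariant these $\sigma$-values are already final, so a copy of $u$ finishes early enough in either case. This holds for every placed copy of $v$ and every predecessor $u$, so together with termination and no overlap, $\sigma$ is a valid schedule.

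I expect the one genuinely delicate step to be the $u \notin A$ case of the precedence argument: a bound on $T_i$ at the moment $A$ was formed must be converted into a bound on the eventual start time $\sigma(v,i)$ and then related to a completion time of a copy of $u$ that may reside on a machine other than $i$, which is exactly where the ``placed at most once per machine'' invariant and the monotonicity of $T_i$ supplied by Lemma~\ref{lem:Ti_in_algo} are needed. Everything else is careful but standard loop accounting.
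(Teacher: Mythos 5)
Your proposal is correct and follows essentially the same route as the paper: the runtime is bounded by counting executions of line~\ref{line:updateT} against the elements of $\mathcal{T}$ via Lemma~\ref{lem:T_in_algo}, and feasibility follows from the definition of $A$ in line~\ref{line:A_def} (predecessors not yet completed, or not completed $\delay$ earlier elsewhere, are duplicated in topological order just before $v$ on the same machine). The only difference is that you spell out the no-overlap and ``placed at most once per machine'' invariants via Lemma~\ref{lem:Ti_in_algo}, details the paper dismisses as straightforward.
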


\begin{proof}
    We show that precedence and communication delay requirements are satisfied. It is straightforward to establish that remaining requirements (given in Section~\ref{sec:prelim}) are met. 
    We show first that precedence requirements are met. This follows from the fact that when any job is placed on a machine, all of its uncompleted predecessors are placed in topological order on the same machine. Therefore, for any job $v$ with predecessor $u$, the start time of $u$ is less than the start time of $v$. 
    We now show that communication delay requirements are met. This follows from the fact that when Algorithm~\ref{alg:scheduler} places a set of jobs on a machine $i$ starting at time $t$, it duplicates any jobs that have not completed previously on $i$ or have not completed by time $t-\delay$ on any other machine.
    
    We now show that Algorithm~\ref{alg:scheduler} runs in time polynomial in $n$ and $m$.  We first argue that there are at most $2m(n-1)+2$ executions of line~\ref{line:job_loop}.  For the sake of contradiction, suppose there is an $(2m(n-1)+3)$th execution of line~\ref{line:job_loop}.  By Lemma~\ref{lem:T_in_algo}, if $\step$ represents the $(2m(n-1)+1)$th execution of line~\ref{line:updateT}, then $T\superstep{}$ equals the $(2m(n-1)+1)$ smallest element of $\mathcal{T}$.  Since every job $v$ has at most $2m$ finite values of the form $\sigma(v,i) + \proc{v}/\speed{i}$ or $\sigma(v.i) + \proc{v}/\speed{i} + \rho$ in $\mathcal{T}$, it follows from the pigeon hole principle that every job $v$ has at least one finite value of the form  $\sigma(v,i) + \proc{v}/\speed{i}$ that is at most the $(2m(n-1)+1)$ smallest element of $\mathcal{T}$, indicating that every job has been placed before the $(2m(n-1)+1)$th execution of line~\ref{line:updateT}.  This ensures that $\Placed$ equals $V$ no later than the $(2m(n-1)+2)$th execution of line~\ref{line:job_loop}, after which the algorithm terminates, leading to a contradiction.  We thus obtain that line~\ref{line:job_loop} iterates at most $O(nm)$ times. All other loops iterate at most $n$ or $m$ times, and all other instructions take at most time $nm$.
    \junk{
    We show that there is some job placed during every $2mn$ iterations of line~\ref{line:job_loop}. Suppose the algorithm executes $2mn-1$ iterations of line~\ref{line:job_loop} without placing any job on any machine. We observe that each job can be duplicated at most $m$ times, and each job completion yields two values of $T$ by the update at line~\ref{line:updateT}. So, in iteration $2mn$ we have that  $\forall i, T_i = \max_{v,j:\sigma(v,j)\ne\infty} \{\delay + \sigma(v,j) + \proc{v}/\speed{j}\}$. So there is some job $v$ such that all its predecessors have completed at least $\delay$ time before the value of $T_i$ in this iteration, for all machines $i$. Therefore, $v$ is scheduled in this iteration. \zoya{should show that v satisfies all other conditions to be scheduled} Since each job can be duplicated up to $m$ times, we have that line~\ref{line:job_loop} iterates at most $O((nm)^2)$ times. All other loops iterate at most $n$ or $m$ times, and all other instructions take at most time $nm$.}
\end{proof}

\junk{
\begin{lemma}[]
    Let $u \prec v$, let $b$ be the earliest completion time of $u$ in $\sigma$, and let $c$ be the earliest completion time of $v$. If there is some gap $(g_1,g_2)$ on some machine in $\group{\jobtogroup{v}}$ such that, for some time $a$,  $g_1 \le a + \delay$ and $g_2 \ge a$ and $b \le a - \delay$ and $c \ge a + \delay$, then there is some iteration $t$ of line~\ref{line:assigned_jobs_loop} such that $v\iter{t} = v$, $i\iter{t} \in\group{\jobtogroup{v}}$, $T_{i\iter{t}}\iter{t} < a + \delay$, and $u \not\in A\iter{t}$.
    \david{$c$ should be earliest \textit{start} time of $v$.} \david{draw figure to show these values.}
\end{lemma}

We call any non-zero length interval $(g_1,g_2)$ in $\sigma$ during which a machine $i$ is not executing any jobs a \textit{gap} on $i$.

\begin{lemma}[{\bf Predecessor reduction in gaps}]
    Let $u \prec v$, let $c$ be the earliest completion time of $u$ in $\sigma$, let $b$ be the earliest start time of $v$, and suppose that $b > c+\delay$. If there is some gap $(g_1,g_2)$ on some machine in $\group{\jobtogroup{v}}$ such that $g_1 < b$ and $g_2 \ge c + \delay$, then there is some iteration $t$ of line~\ref{line:assigned_jobs_loop} such that $v\iter{t} = v$, $i\iter{t} \in\group{\jobtogroup{v}}$, $T_{i\iter{t}}\iter{t} = \max\{g_1, c+\delay \}$, and $u \not\in A\iter{t}$.
    Figure~\ref{fig:gap_fill} depicts an instance for which these conditions are met.
\end{lemma}
}

The following lemma establishes that, for $u \prec v$ and a gap that occurs before $v$ and sufficiently long after $u$, there are steps in the algorithm that consider placing $v$ and its remaining predecessors $A$ to start inside the gap such that $u \not\in A$. This lemma is used in the proof of Lemma~\ref{lem:heightphase_bound}.

\begin{lemma}[{\bf Predecessor removal in gaps}]
    Let $u \prec v$, let $c$ be the earliest completion time of $u$ in $\sigma$, let $b$ be the earliest start time of $v$ in $\sigma$, and suppose that $b > c+\delay$. If there is some gap $(g_1,g_2)$ on any machine in $\group{\jobtogroup{v}}$ such that $g_1 < b$ and $g_2 \ge c + \delay$, then there is step $\step$ checking the conditions of line~\ref{line:conditions} such that $v\superstep{} = v$, $i\superstep{} \in\group{\jobtogroup{v}}$, $T_{i\superstep{}}\superstep{} = \max\{g_1, c+\delay \}$, and $u \not\in A\superstep{}$.
    \label{lem:predecessor_removal}
\end{lemma}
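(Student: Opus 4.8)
The plan is to let $k:=\jobtogroup{v}$ and $\theta^*:=\max\{g_1,c+\delay\}$, and to take $\step$ to be the step, inside the iteration of the outer \texttt{while}-loop (line~\ref{line:job_loop}) during which $T=\theta^*$, at which the inner loop over $\grouptojobs{k}$ reaches $v$. First I would record the elementary facts about $\theta^*$. Taking $(g_1,g_2)$ without loss of generality to be a maximal idle interval of $i$ in $\sigma$, either $g_1=0$ or $g_1$ is the completion time on $i$ of the job $\sigma$ runs just before the interval, so $g_1\in\mathcal{T}$; since $c$ is a completion time of $u$ we also have $c+\delay\in\mathcal{T}$, and hence $\theta^*\in\mathcal{T}$. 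From $g_1<b$ and $c+\delay<b$ I get $\theta^*<b$; from $g_1<g_2$ and $c+\delay\le g_2$ I get $\theta^*\le g_2$; and trivially $\theta^*\ge g_1$ and $\theta^*\ge c+\delay$.

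Next I would locate $\theta^*$ in the execution. By Lemma~\ref{lem:Ti_in_algo}, once the job completing at $g_1$ on $i$ has been placed, $T_i=\max\{T,g_1\}$ until a job is placed on $i$ with start time $\ge g_2$; but such a placement needs $T_i\ge g_2$, and with $g_1<g_2$ this forces $T\ge g_2\ge\theta^*$, so the job $\sigma$ runs on $i$ just after $g_2$ is not placed until $T\ge\theta^*$. By Lemma~\ref{lem:T_in_algo}, $T$ takes the elements of $\mathcal{T}$ in increasing order, one per outer iteration, as long as $\Placed\ne V$; since some job (that post-gap job on $i$, or $v$ itself, whose earliest start in $\sigma$ is $b>\theta^*$) is still unplaced when $T$ first reaches $\theta^*\in\mathcal{T}$, the iteration with $T=\theta^*$ is executed. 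In it the algorithm processes group $k$, and since $v\in\grouptojobs{k}$ it reaches a step $\step$ checking line~\ref{line:conditions} with $v\superstep{}=v$ and current group $k$. I would then argue $T_i=\theta^*$ at $\step$: by Lemma~\ref{lem:Ti_in_algo}, $T_i=\max\{T\superstep{},\max_{w\in V(i,\step)}\{\sigma(w,i)+\proc{w}/\speed{i}\}\}$, and every job already placed on $i$ either completes by $g_1\le\theta^*=T\superstep{}$ in $\sigma$, or $\sigma$ starts it at time $\ge g_2$; the latter cannot have occurred yet, since such a placement would have set $\sigma(\cdot,i)=T_i=\theta^*\in[g_1,g_2)$, contradicting that $i$ is idle on $(g_1,g_2)$. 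Hence $T_i=\theta^*$, and since $T_j\ge T\superstep{}=\theta^*$ for every $j$, the machine $i\superstep{}=\argmin_{j\in\group{k}}T_j$ chosen in line~\ref{line:minmachine} satisfies $i\superstep{}\in\group{\jobtogroup{v}}$ and $T_{i\superstep{}}\superstep{}=\theta^*$.

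Finally I would verify $u\notin A\superstep{}$. Pick $j^*$ with $\sigma(u,j^*)+\proc{u}/\speed{j^*}=c$. Because $T_{j^*}$ is never below the current $T$, the start time $\sigma(u,j^*)=T_{j^*}$ assigned to this copy is at least the value of $T$ at the iteration when it is placed, so that iteration has $T\le\sigma(u,j^*)=c-\proc{u}/\speed{j^*}<c+\delay\le\theta^*$ and thus precedes the iteration $T=\theta^*$; hence $\sigma(u,j^*)$ is finite at $\step$. Moreover $\sigma(u,j^*)+\proc{u}/\speed{j^*}=c\le\max\{g_1-\delay,c\}=\theta^*-\delay=T_{i\superstep{}}\superstep{}-\delay$, so the second clause of the definition of $A$ in line~\ref{line:A_def} (taking $j=j^*$) removes $u$ from $\predex{v}\cup\{v\}$, i.e.\ $u\notin A\superstep{}$.

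The hard part will be the middle step — pinning the chosen machine's clock to exactly $\theta^*$ at the instant $v$ is examined. This needs both that the iteration with $T=\theta^*$ really runs (which I would derive from Lemma~\ref{lem:T_in_algo} together with the observation that the idle interval keeps the post-gap job on $i$, and hence some job, unplaced until $T\ge\theta^*$) and that nothing is placed on $i$ earlier within that iteration, which is exactly where the gap hypothesis does its work, since any such placement would start inside $[g_1,g_2)$ on $i$. I also expect to need a small separate argument for the boundary subcase $\theta^*=g_2$ (which forces $c+\delay=g_2$): there one checks that the job $\sigma$ starts on $i$ at $g_2$ is not processed before $v$ in that iteration, or one simply observes that this subcase does not arise in the use of the lemma in the proof of Lemma~\ref{lem:heightphase_bound}.
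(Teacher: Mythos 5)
Your proof follows essentially the same route as the paper's: you work with the same quantity $\max\{g_1, c+\delay\}\in\mathcal{T}$, invoke Lemma~\ref{lem:T_in_algo} to reach the while-iteration in which $T$ equals it, use Lemma~\ref{lem:Ti_in_algo} together with the gap to pin the clock of the $\argmin$ machine in $\group{\jobtogroup{v}}$ to that value at the step examining $v$, and remove $u$ from $A$ via the second clause of line~\ref{line:A_def}. The additional care you take (justifying that the iteration with $T=\max\{g_1,c+\delay\}$ is actually executed, that the relevant copy of $u$ is already placed at step $\step$, and flagging the boundary subcase $c+\delay=g_2$) only tightens points the paper's own proof passes over silently, so this is the same argument.
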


Figure~\ref{fig:gap_fill} depicts an instance for which the conditions of Lemma~\ref{lem:predecessor_removal} are met. 

\begin{figure}
    \centering
    \includegraphics{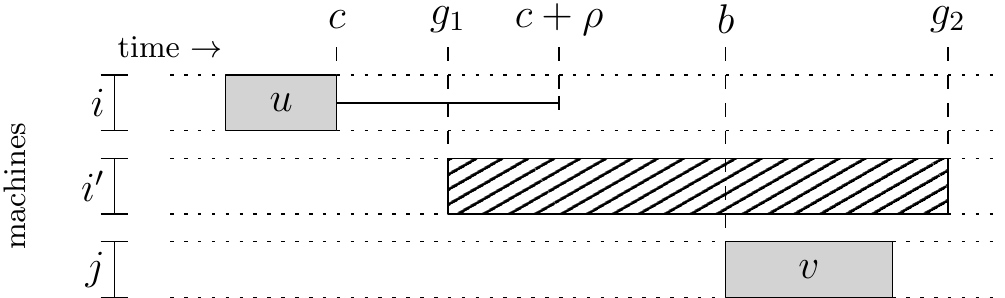}
    \caption{A depiction of jobs $u$ and $v$ for which the conditions of Lemma~\ref{lem:predecessor_removal} are met, given that $i' \in \group{\jobtogroup{v}}$. Machines are listed vertically on the left and time proceeds from left to right. The jobs $u$ and $v$ are shown as gray boxes, and the gap $(g_1,g_2)$ is shown as a diagonal striped box.}
    \label{fig:gap_fill}
\end{figure}

\begin{proof}
    Suppose there are jobs $u$ and $v$ and a gap $(g_1,g_2)$ on machine $j$ satisfying the conditions of the lemma. 
    Let $c_{\max} = \max\{c+\delay,g_1\}$ and let $b_{\min} = \min\{g_2,b\}$. By definition of a gap we have that $(c_{\max}, b_{\min})$ is a gap. Also, since $c$ and $g_1$ is both completion times of some jobs, $c_{\max}$ is in $\mathcal{T}$.  Therefore, by Lemma~\ref{lem:T_in_algo}, there is a step $\step$ executing line~\ref{line:updateT} , immediately before which $T$ equals $c_{\max}$.  Since $T$ does not change between executions of line~\ref{line:updateT}, and line~\ref{line:assigned_jobs_loop} iterates through all jobs between two consecutive executions of line~\ref{line:updateT}, there exists a step $\step^{*} < \step$ executing line~\ref{line:conditions} for which $T\superstep{^*} = c_{\max}$ and $v\superstep{^*} = v$.
     By the assignment of $i$ at line~\ref{line:minmachine}, we have that $i\superstep{^*} \in \group{\jobtogroup{v}}$. The fact that $(c_{\max},b_{\min})$ is a gap entails that the max completion time of any job placed on $j$ prior to step $\step^*$ is at most $c_{\max}$. Therefore, $T_j\superstep{^*} = T\superstep{^*}$ by Lemma~\ref{lem:Ti_in_algo}, so $T\superstep{^*}_{i\superstep{^*}} = T\superstep{^*}$ by the assignment at line~\ref{line:minmachine}. Finally, $u \not\in A\superstep{^*}$ follows from the assignment of $A$ at line~\ref{line:A_def} and the fact that $c +\delay \le c_{\max} = T\superstep{^*}_{i\superstep{^*}}$.
\end{proof}

\begin{lemma}[{\bf Total load within $\pmb{\overlap}$ factor of total job size}]
    Let $V(i)$ be the set of all jobs with some copy scheduled on machine $i$. Then, for any $k$, 
    \[ \sum_{k' \ge k} \sum_{i \in \group{k'}} \setproc{V(i)} \le \overlap \cdot \setproc{\fastergroupstojobs{k}}.\]
    \label{lem:eta_load_increase}
\end{lemma}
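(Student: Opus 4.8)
The plan is to track the successful placements made by Algorithm~\ref{alg:scheduler} and charge each of them against the jobs of $\fastergroupstojobs{k}$, using condition~\ref{condition:predecessor_groups} to control \emph{which} groups a job can be placed on and condition~\ref{condition:overlap} to control the total mass of placed jobs relative to the mass of distinct placed jobs. Concretely, I would let $\step_1 < \step_2 < \cdots$ enumerate the steps that execute line~\ref{line:update_placed} (the steps at which a set of jobs is actually placed), and for the $\ell$th such step write $A_\ell = A\superstep{_\ell}$, $i_\ell = i\superstep{_\ell}$, and $P_\ell = \Placed\superstep{_\ell}$. Between the evaluation of line~\ref{line:conditions} and the execution of line~\ref{line:update_placed} only $\sigma$ and $T_i$ change (lines~\ref{line:start_time}--\ref{line:update_Ti}), so condition~\ref{condition:overlap} yields $\setproc{A_\ell \setminus P_\ell} \ge \setproc{A_\ell}/\overlap$, and condition~\ref{condition:predecessor_groups} yields $A_\ell \subseteq \fastergroupstojobs{k_\ell}$, where $k_\ell$ is the (unique) group index with $i_\ell \in \group{k_\ell}$, since $i_\ell$ is chosen on line~\ref{line:minmachine} inside the loop iteration for group $k_\ell$. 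Also, right after step $\step_\ell$ the set $\Placed$ equals $P_\ell \cup A_\ell$.

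First I would reduce the left-hand side to a sum over placements. For each machine $i$ we have $V(i) = \bigcup_{\ell : i_\ell = i} A_\ell$, because $\sigma(v,i)$ is set only on line~\ref{line:start_time} for jobs $v \in A$ placed on $i$; hence subadditivity of $\setproc{\cdot}$ gives $\setproc{V(i)} \le \sum_{\ell : i_\ell = i} \setproc{A_\ell}$. Summing over $i \in \group{k'}$ for all $k' \ge k$ and using that the groups partition $M$, we obtain $\sum_{k' \ge k} \sum_{i \in \group{k'}} \setproc{V(i)} \le \sum_{\ell \in S} \setproc{A_\ell}$, where $S = \{\ell : k_\ell \ge k\}$ is the set of placements made onto groups of index at least $k$.

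Next I would apply the two conditions. Condition~\ref{condition:overlap} gives $\sum_{\ell \in S}\setproc{A_\ell} \le \overlap \sum_{\ell \in S}\setproc{A_\ell \setminus P_\ell}$, so it remains to show $\sum_{\ell \in S}\setproc{A_\ell \setminus P_\ell} \le \setproc{\fastergroupstojobs{k}}$. Here I would observe that the sets $\{A_\ell \setminus P_\ell\}_\ell$ are pairwise disjoint: $\Placed$ is nondecreasing in the step index, and it contains $A_\ell$ immediately after $\step_\ell$, so for $\ell < \ell'$ every element of $A_\ell$ lies in $P_{\ell'}$ and is therefore excluded from $A_{\ell'} \setminus P_{\ell'}$. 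Moreover, for $\ell \in S$ we have $A_\ell \subseteq \fastergroupstojobs{k_\ell} \subseteq \fastergroupstojobs{k}$, using condition~\ref{condition:predecessor_groups} together with the fact that $\fastergroupstojobs{k'} \subseteq \fastergroupstojobs{k}$ whenever $k' \ge k$ (immediate from Definition~\ref{def:group}). Thus $\{A_\ell \setminus P_\ell\}_{\ell \in S}$ is a family of pairwise disjoint subsets of $\fastergroupstojobs{k}$, so $\sum_{\ell \in S}\setproc{A_\ell \setminus P_\ell} = \setproc{\bigcup_{\ell \in S}(A_\ell \setminus P_\ell)} \le \setproc{\fastergroupstojobs{k}}$. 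Chaining the three displayed inequalities gives the claimed bound.

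I expect the only delicate point to be the bookkeeping: verifying that the $A$ and $\Placed$ appearing in condition~\ref{condition:overlap} are exactly $A_\ell$ and $P_\ell$ (i.e., that nothing between lines~\ref{line:conditions} and~\ref{line:update_placed} alters them), and the disjointness argument for $\{A_\ell \setminus P_\ell\}$. Both are routine once the right notation is fixed; everything else is a direct reading of conditions~\ref{condition:overlap} and~\ref{condition:predecessor_groups}.
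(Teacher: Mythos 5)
Your proposal is correct and follows essentially the same route as the paper: the paper organizes the charging as an induction over the successive placements onto machines in groups $\ge k$, using condition~\ref{condition:overlap} to pay for duplication and condition~\ref{condition:predecessor_groups} plus the monotonicity of $\Placed$ to ensure the newly placed jobs are fresh elements of $\fastergroupstojobs{k}$, which are exactly your three ingredients. Your version just replaces the induction with a direct summation and an explicit pairwise-disjointness argument for the sets $A_\ell \setminus P_\ell$, which is a fine (and if anything cleaner) rendering of the same proof.
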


\begin{proof}
    Let $\step_1, \step_2, \ldots$ be the series of steps such that $\step_d$ is the first execution of line~\ref{line:start_time} when placing the set $A\superstep{_d}$ on any machine in $\bigcup_{k'\ge k} \group{k}$.
    Let $V(i,\step)$ be the set of jobs placed on machine $i$ prior to executing step $\step$. We show by induction on $d^*$ that
    \begin{equation}
        \lambda(d^*) \equiv \sum_{d=1}^{d^*} \sum_{k' \ge k} \sum_{i \in \group{k'}} \setproc{V(i,\step_d)} \le \overlap \cdot p\bigg( \bigcup_{d=1}^{d^*} \ \bigcup_{k' \ge k} \ \bigcup_{i \in \group{k'}} V(i,\step_d) \bigg).
        \label{eq:load_eta_factor}
    \end{equation}
    Since $\bigcup_{k' \ge k} \ \bigcup_{i \in \group{k'}} V(i) \subseteq \fastergroupstojobs{k}$ by condition \ref{condition:predecessor_groups} in line~\ref{line:conditions}, inequality (\ref{eq:load_eta_factor}) is sufficient to prove the lemma. Inequality (\ref{eq:load_eta_factor}) holds trivially for $d^* = 1$, so we suppose for induction that it holds up to step $d^* \ge 1$. 
    Then 
    \begin{align*}
        \lambda(d^*+1) &= \lambda(d^*) + \setproc{A\superstep{_{d^*+1}}} &\text{by definition of } \lambda
        \\
        &\le \lambda(d^*) + \overlap \cdot \setproc{A\superstep{_{d^*+1}} \setminus \Placed\superstep{_{d^*+1}}} &\text{by condition \ref{condition:overlap}}
        \\
        &\le \overlap \cdot p\bigg( \bigcup_{d=1}^{d^*} \ \bigcup_{k' \ge k} \ \bigcup_{i \in \group{k'}} V\superstep{_d}(i) \bigg) + \overlap \cdot \setproc{A\superstep{_{d^*+1}} \setminus \Placed\superstep{_{d^*+1}}} &\text{by induction}
        \\
        &\le \overlap \cdot p\bigg( \bigcup_{d=1}^{d^*+1} \ \bigcup_{k' \ge k} \ \bigcup_{i \in \group{k'}} V\superstep{_d}(i) \bigg)
    \end{align*}
    where the last inequality follows from the fact that for all jobs $v \in \Placed\superstep{_{d^*+1}}$, $v$ is either an element of $\bigcup_{d=1}^{d^*} \ \bigcup_{k' \ge k} \ \bigcup_{i \in \group{k'}} V\superstep{_d}(i)$ or placed on machines in groups lower indexed than $k$. 
\end{proof}
    
    \subsection{Analysis}
    \label{sec:analysis}
    In this section, we prove that our algorithm yields an approximation ratio with respect to the optimal makespan $\optmakespan$. We assume that we have a preprocessed instance in which all machines of speed less than $1/m$ have been removed, that each job $v$ has been assigned to a group $\jobtogroup{v}$ via the rounding of Section~\ref{sec:group_assign} based on an optimal solution to the linear program LP of Section~\ref{sec:lp} with makespan $C \le \optmakespan$, and that these assignments have been given as input to Algorithm~\ref{alg:scheduler}.  

Our analysis combines elements of the analysis given in \cite{chudak1999approximation} and \cite{LR02}. 
As in \cite{chudak1999approximation}, we define a chain of jobs whose execution time serves as a lower bound on the optimal length of any schedule. 
The chain also has the following property: for any non-chain job $v$ that precedes a job in the chain and is not executed in parallel with the chain, if $v$ is scheduled on any machine $i$ then $v$'s execution time on $i$ is no longer than $8\delay$. This property allows us to leverage the analysis from \cite{LR02} to show that the time spent in $\sigma$ not executing the chain can be bounded by the highest band number and the load on all groups, both of which give lower bounds on the optimal makespan. 

We define the \textit{chain} $\chain = \{ \link{1}, \link{2} , \ldots, \link{|\chain|} \}$ as follows. Let $D$ be the set of pairs $(v,i)$ such that $\sigma$ schedules some copy of $v$ on $i$. Let $L$ be the set of pairs $(v,i) \in D$ such that $\proc{v}/\speed{i} > 8\delay$. Then 
\begin{align*}
    \link{1} &= \argmax_{(v,i) \in L} \{ \sigma(v,i) + \proc{v}/\speed{i} \}
    \\
    \link{q+1} &= \argmax_{(v,i) \in L} \{\sigma(v,i) + \proc{v}/\speed{i} : \link{q} = (u,j) \text{ and } v \in \predex{u}\}.
\end{align*}
We also define sets of jobs that precede jobs in the chain and execute between jobs in the chain. 
Let $V_{0}$ be the set of all jobs that complete after $\link{1}$, i.e.\ for $\link{1} = (i,v)$, $V_{0}$ is the set of $u \in V$ such that, for some machine $j$ we have $(u,j) \in D$ and $\sigma(u,j) + \proc{u}/\speed{j} \ge \sigma(v,i) + \proc{v}/\speed{i}$.
Then, for $0 < q < |\chain|$, let $V_q$ be the set of predecessors of $\link{q}$ that complete after $\link{q+1}$, i.e.\ for $\link{q} = (v,i)$, $V_{q}$ is the set of all jobs $u$ such that $u = v$ or $u \prec v$ and for some machine $j$, $(u,j) \in D$ and $\sigma(u,j) + \proc{u}/\speed{j} \le \sigma(v,i)$ and $\sigma(u,j) + \proc{u}/\speed{j} \ge \sigma(v',i') + \proc{v'}/\speed{i'}$ where $\link{q+1}=(v', i')$. Finally, let $V_{|\chain|}$ be the set of jobs that precede $\link{|\chain|}$.
Figure \ref{fig:chain_construction} depicts the construction of $\chain$ and each $V_q$.

\begin{figure}
    \centering
    \includegraphics[width=\textwidth]{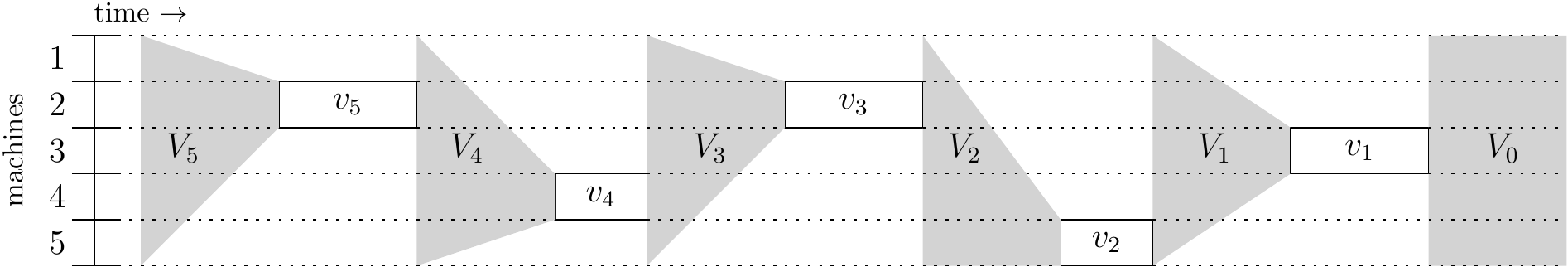}
    \caption{Machines are shown vertically on the left and time increases from left to right. The dotted lines track the jobs being executed on each machine. Jobs $v_1$ through $v_5$ are the only jobs with copies that take longer than $\delay$ time. Other jobs are scheduled but not shown. In this case, $\chain = \{ \link{1} : (v_1, 3), \link{2}:(v_2, 5), \link{3}:(v_3, 2), \link{4}:(v_4,4), \link{5}:(v_5,2)\}$. The sets $V_q$ are shown as shaded regions. 
    All jobs $u \in \predex{v_{q}}$ such that some copy of $v$ is scheduled on $i$ and some completion time of $v$ on $i$ falls inside the shaded trapezoidal region to the left of $v_{q}$ are elements of $V_q$.}
    \label{fig:chain_construction}
\end{figure}

We divide the schedule into phases of length $\delay$ where phase $\tau = [\delay(\tau-1), \delay\tau)$. We say that a machine $i$ is \textit{busy} in phase $\tau$ if the total time spent executing jobs on $i$ in phase $\tau$ is at least $\delay/2$. Otherwise, we say that $i$ is idle in $\tau$.  We classify the phases into three different types. Phase $\tau$ is a \textit{chain phase} if at least $\delay/2$ time is spent working on any job in the chain on any machine. A $\link{q}$-phase is a chain phase in which at least $\delay/2$ time is spent working on $\link{q}$. Phase $\tau$ is a \textit{load phase} if $\tau$ is not a chain phase and every machine in some group is busy in $\tau$. 
The remaining phases are height phases. The different phase types are illustrated in Figure~\ref{fig:phase_types}.

\begin{figure}
    \centering
    \includegraphics[width=\textwidth]{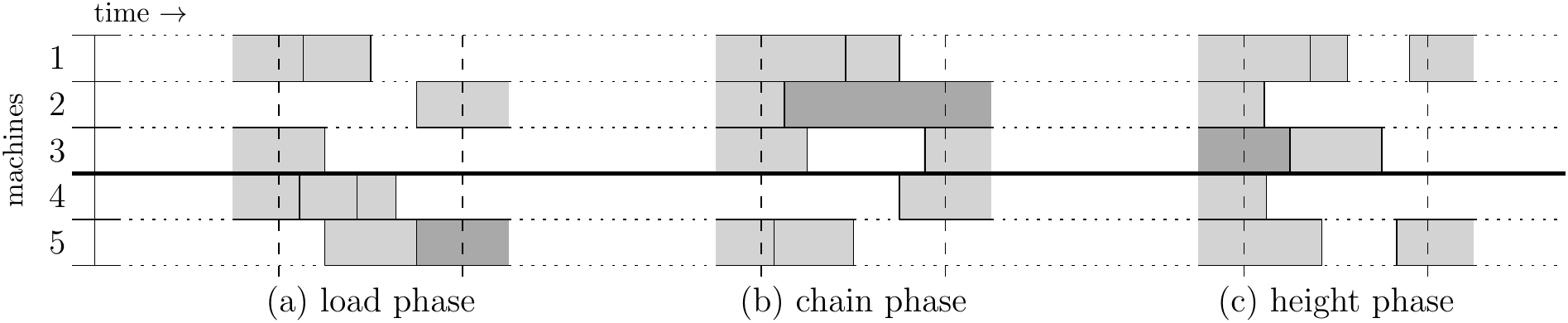}
    \caption{Phase types. Machines are shown vertically on the left and time proceeds from left to right. There are two groups of machines $\group{1} = \{1,2,3\}$ and $\group{2} = \{4,5\}$, separated by a dark line. Chain jobs are shown in dark gray and all other jobs are shown in light gray. For each phase, the time between two dashed lines is $\delay$. (a) A load phase: all machines in group $\{4,5\}$ are busy for at least $\delay/2$ time in the phase. (b) A chain phase: some element in the chain is executing for at least $\delay/2$ time in the phase. (c) A height phase: all groups have some machine that is busy for less than $\delay/2$ time in the phase: machine 2 in $\group{1}$ and machine 4 in $\group{2}$.}
    \label{fig:phase_types}
\end{figure}

\begin{lemma}[{\bf Chain Phase Bound}]
    The number of chain phases is $O(C^*/\delay)$.
    \label{lem:ChainBound}
\end{lemma}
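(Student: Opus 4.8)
The plan is to charge each chain phase to the time the schedule spends executing chain jobs on their designated machines, and then to bound that total by the LP value $C \le \optmakespan$. Write $\link{q} = (v_q,i_q)$ for $q = 1,\dots,|\chain|$; by construction $v_{|\chain|} \prec \dots \prec v_2 \prec v_1$ is a precedence chain and each $(v_q,i_q)\in L$, i.e.\ $\proc{v_q}/\speed{i_q} > 8\delay$. I would proceed in three steps: (i) each $v_q$ can be placed on $i_q$ by Algorithm~\ref{alg:scheduler} only as the ``main'' job of an iteration, so $i_q \in \group{\jobtogroup{v_q}}$; (ii) $\sum_q \proc{v_q}/\speed{i_q} = O(C)$; (iii) a short counting argument turns this into the bound $O(\optmakespan/\delay)$. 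If $L = \varnothing$ the chain is empty and there is nothing to prove.

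The hard part is step (i): this is where the ``long-job'' threshold $8\delay$ defining $L$ must interact with condition~\ref{condition:predecessor_size}. I would note that $\sigma(v_q,i_q)$ is assigned a finite value only at an execution of line~\ref{line:start_time}, which sits inside a pass of the loop at line~\ref{line:ancestor_loop} that places a set $A = A\superstep{}$ on the machine $i = i\superstep{} \in \group{k}$ chosen at line~\ref{line:minmachine}, for the current group index $k$ and main job $v = v\superstep{} \in \grouptojobs{k}$. If $v_q \neq v$ then $v_q \in A \setminus \{v\}$, and since the tests at line~\ref{line:conditions} passed, condition~\ref{condition:predecessor_size} gives $\proc{v_q} \le \setproc{A\setminus\{v\}} \le 8\delay\groupspeed{k}$; but $i_q = i \in \group{k}$, so $\speed{i_q} \ge \groupspeed{k}$ and hence $\proc{v_q}/\speed{i_q}\le 8\delay$, contradicting $(v_q,i_q)\in L$. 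Therefore $v_q = v$, which forces $k = \jobtogroup{v_q}$ and $i_q \in \group{\jobtogroup{v_q}}$. (One should also check, via Lemma~\ref{lem:Ti_in_algo}, that $T_{i_q}$ never drops below the completion time of a job already placed on $i_q$, so $v_q$ is excluded from $A$ in any later pass with $i = i_q$ and is placed on $i_q$ exactly once; hence the argument applies to the final schedule.)

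For step (ii) I would chain Constraint~(\ref{phaseLP_execution}) along $v_{|\chain|}\prec\dots\prec v_1$, use $\start{v_{|\chain|}}\ge 0$, and apply Constraint~(\ref{phaseLP_completionlb}) at $v_1$ to obtain $C \ge \sum_q \proc{v_q}\sum_i \jobmachine{v_q}{i}/\speed{i}$. Exactly as in the proof of Lemma~\ref{lem:Gen:HeightBound}, at least half of $v_q$'s fractional mass lies on machines of speed below $2\groupspeed{h}$ where $h = \med{v_q}$, so $\sum_i \jobmachine{v_q}{i}/\speed{i} \ge 1/(4\groupspeed{h})$; since $\jobtogroup{v_q}\ge\med{v_q}$ and group speeds increase with index, this is at least $1/(4\groupspeed{\jobtogroup{v_q}})$. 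Combining, $\sum_q \proc{v_q}/\groupspeed{\jobtogroup{v_q}} \le 4C$, and by step (i) we have $\speed{i_q}\ge\groupspeed{\jobtogroup{v_q}}$, so $\sum_q \proc{v_q}/\speed{i_q} \le 4C \le 4\optmakespan$.

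Finally, for step (iii): every chain phase is a $\link{q}$-phase for some $q$, since by definition it spends at least $\delay/2$ time on some chain element $\link{q}$; hence it suffices to count, over all $q$, the number of $\link{q}$-phases. In the final schedule the copy $(v_q,i_q)$ runs (without preemption) over a single interval of length $\proc{v_q}/\speed{i_q}$, the phases are disjoint intervals of length $\delay$, and a $\link{q}$-phase overlaps that interval in at least $\delay/2$; therefore the number of $\link{q}$-phases is at most $2\proc{v_q}/(\delay\speed{i_q})$. Summing over $q$ and invoking step (ii), the number of chain phases is at most $(2/\delay)\sum_q \proc{v_q}/\speed{i_q} \le 8\optmakespan/\delay = O(\optmakespan/\delay)$, as claimed.
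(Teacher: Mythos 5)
Your proof is correct and follows essentially the same route as the paper's: you bound $\sum_q \proc{v_q}/\groupspeed{\jobtogroup{v_q}}$ by $O(\optmakespan)$ via chaining Constraints~(\ref{phaseLP_execution}) and (\ref{phaseLP_completionlb}) with the median-group speed bound, and you use condition~\ref{condition:predecessor_size} to argue that any long copy must be the main job and hence lands in its assigned group. The only difference is presentational: you make explicit the final phase-counting step (charging each $\link{q}$-phase $\delay/2$ of the copy's execution interval), which the paper leaves implicit.
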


\begin{proof}
  First we observe that for any job $v$, 
  \begin{equation*}
       \sum_i \frac{\jobmachine{v}{i}}{\speed{i}} ~\ge~
       \sum_{i\le \jobtogroup{u}} \frac{\jobmachine{v}{i}}{\speed{i}} ~\ge~
       \sum_{i\le \jobtogroup{u}} \frac{\jobmachine{v}{i}}{2\groupspeed{\jobtogroup{v}}} ~=~
       \frac{1}{2\groupspeed{\jobtogroup{v}}} \sum_{i\le \jobtogroup{u}} \jobmachine{v}{i} ~\ge~
       \frac{1}{4\groupspeed{\jobtogroup{v}}}.
  \end{equation*}
  Recall that $\chain = \{ \link{1}, \ldots, \link{|\chain|} \}$. Let $c_q=(v_q, i_q)$. Let us sum up constraint (\ref{phaseLP_execution}) for $v=v_1, ..., v_{|\chain|-1}$, and, respectively, $u=v_2, ..., v_{|\chain|}$, as well as constraint (\ref{phaseLP_completionlb}) for $v_1$. We get
    \begin{equation*}
       \optmakespan + \sum_{q=1}^{|\chain|-1} \start{v_q} ~\ge~
       \sum_{q=1}^{|\chain|} \start{v_q} + \sum_{q=1}^{|\chain|} \proc{v_q} \sum_i \frac{\jobmachine{v_q}{i}}{\speed{i}} ~\ge~
       \sum_{q=1}^{|\chain|} \start{v_q} + \sum_{q=1}^{|\chain|} \proc{v_q}  \frac{1}{4\groupspeed{\jobtogroup{v}}},
  \end{equation*}
  where the second inequality uses the bound above. This implies that $\optmakespan \ge \start{v_{|\chain|}} + \sum_{v \in \chain} \frac{\proc{v}}{4 \groupspeed{\jobtogroup{v}}}$, and thus $\sum_{v \in \chain} \frac{\proc{v}}{\groupspeed{\jobtogroup{v}}} \le 4\optmakespan$.
  
  \junk{

    Recall that $\chain = \{ \link{1}, \ldots, \link{|\chain|} \}$. We claim that for any $\link{q} \in \chain$ where $\link{q} = (v,i)$ we have \zoya{you are reusing $i$ as an index below}
    \begin{equation}
        \start{v} + \proc{v}\sum_i \jobmachine{v}{i} / \speed{i} \ge \sum_{q' \ge q : \link{q'} = (u,j)} \frac{\proc{u}}{2\groupspeed{\jobtogroup{u}}} \equiv \chi(q). 
        \label{eq:chain_inequality}
    \end{equation}
    By Constraint~(\ref{phaseLP_loadlb}) and Lemma~\ref{lem:Gen:Relaxation}, (\ref{eq:chain_inequality}) entails that $\sum_{v \in \chain} \proc{v}/\groupspeed{\jobtogroup{v}} \le 4\optmakespan$.
    }
    Therefore, in order to prove the lemma,
    all that remains is to show that all chain jobs are scheduled on the groups to which they are assigned.
    \junk{
    We prove (\ref{eq:chain_inequality}) by reverse induction on $q$. The base case follows trivially from Constraint~(\ref{phaseLP_nonnegstart}). We prove the inductive step. Suppose the claim holds for $\{ \link{|\chain|}, \ldots, \link{q} \}$ with $\link{q} = (u,j)$ and $\link{q-1} = (v,i)$ and $q > 1$. By definition of $\chain$, we have that $u \in \predex{v}$. So
    \begin{align*}
        \start{v}  &\ge \start{u} + \proc{u}\sum_i \frac{\jobmachine{u}{i}}{\speed{i}}  &\text{by Constraint (\ref{phaseLP_execution})}
        \\
        &\ge \chi(q)  = \chi(q-1) - \frac{\proc{v}}{2\groupspeed{\jobtogroup{v}}} &\text{by induction}
        \\
        &\ge \chi(q-1) - \proc{v} \sum_{k \le \jobtogroup{v}} \sum_{i \in \group{k}} \frac{\jobmachine{v}{i}}{\groupspeed{\jobtogroup{v}}} \ge \chi(q-1) - \proc{v} \sum_{i} \frac{\jobmachine{v}{i}}{\groupspeed{\jobtogroup{v}}} &\text{by definition of}~ \jobtogroup{v}.
    \end{align*}
    Claim~(\ref{eq:chain_inequality}) follows.
    }
    Specifically, we show that if $\sigma$ schedules job $v$ on machine $i$ and $\proc{v} > 8\delay\speed{i}$, then $i \in \group{\jobtogroup{v}}$.
    Suppose Algorithm~\ref{alg:scheduler} places job $v$ on machine $i \in \group{k}$ and $\proc{v}/\speed{i} > 8\delay$. 
    We fix the step $\step$ to the last execution of line~\ref{line:conditions} prior to placing $v$ on $i$ and show that $v\superstep{} = v$. Suppose otherwise. Then $v \in A\superstep{} \setminus \{v\}$. By condition~\ref{condition:predecessor_size} we have that $8\delay\groupspeed{k} \ge  A\superstep{} \setminus \{v\} \ge \proc{v}$. This, however, contradicts our supposition that $(v,i) \in \chain$. Therefore, $v = v\superstep{}$ which entails, by selection of $v$ at line~\ref{line:assigned_jobs_loop}, that $i \in \group{\jobtogroup{v}}$.
\end{proof}

We define {\em long} jobs be those jobs $v$ such that $\sigma$ schedules some copy of $v$ on any machine $i$ and $\proc{v}/\speed{i} > 8\delay$. All jobs that are not long are {\em short}. 

\begin{lemma}[{\bf Height Phase Bound}]
     The number of height phases is $O(K(\optmakespan + \delay)  \cdot \log_{\overlap} (\delay)/\delay)$.
     \label{lem:heightphase_bound}
\end{lemma}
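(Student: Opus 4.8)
The plan is to bound the number of height phases by $O(Kr^*\log_{\overlap}\delay)$, where $r^*$ is the number of nonempty bands, and then to note that Constraint~(\ref{phaseLP_completionlb}) forces $\start{v}\le C\le\optmakespan$ for every $v$, so $\band{r}=\varnothing$ once $r>4\optmakespan/\delay+1$ and hence $r^*=O((\optmakespan+\delay)/\delay)$. Throughout I use Lemma~\ref{lem:Gen:HeightBound}, which caps the total size of the predecessors of any $v\in\band{r}$ that also lie in $\band{r}$ by $8\delay\groupspeed{\jobtogroup{v}}$. As in the chain analysis I restrict attention to the jobs that either precede some chain job or complete after $\link{1}$; in any height phase at least one such job is still unplaced, since otherwise the next chain job could be placed (contradicting "height phase") or the schedule would be complete. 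To each phase $\tau$ I would assign its \emph{frontier} $(r_\tau,k_\tau)$: the lowest-indexed band $\band{r_\tau}$ still containing an unplaced (relevant) job at the end of $\tau$, and the slowest group $\group{k_\tau}$ containing such a job of $\band{r_\tau}$. Because jobs never become unplaced and, by Constraint~(\ref{phaseLP_execution}), every predecessor of a job lies in a band no higher than that job's, the map $\tau\mapsto(r_\tau,k_\tau)$ is nondecreasing in lexicographic order and so takes at most $r^*K$ distinct values; it therefore suffices to show that each frontier value persists for only $O(\log_{\overlap}\delay)$ height phases.

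Next I would fix a maximal block of consecutive phases with a common frontier $(r,k)$ and let $\tau^*$ be its first height phase (if it has none there is nothing to prove). At $\tau^*$ every relevant job of a band below $r$, and every relevant job of $\band{r}$ assigned to a group below $k$, is already placed; consequently, for any unplaced $w\in\band{r}\cap\grouptojobs{k}$, each of $w$'s remaining predecessors lies in $\band{r}$ and is assigned to a group of index $\ge k$, and wherever it is eventually placed it runs for at most $8\delay$ time. This last fact uses both Lemma~\ref{lem:Gen:HeightBound} (the predecessor has size $\le 8\delay\groupspeed{k}$) and the speed ordering of the groups from Section~\ref{sec:group_assign} (the predecessor is hosted either on its own group, of speed $\ge\groupspeed{k}$, or as a duplicated predecessor where Condition~\ref{condition:predecessor_size} caps its running time). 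Hence whenever the scheduler checks line~\ref{line:conditions} with $v\superstep{}=w$ on a machine of $\group{k}$, Conditions~\ref{condition:predecessor_size} and~\ref{condition:predecessor_groups} hold automatically, so $w$ is placed unless Condition~\ref{condition:overlap} fails, i.e.\ unless $\setproc{A\superstep{}\setminus\Placed\superstep{}}<\setproc{A\superstep{}}/\overlap$.

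Then I would run, inside the block, a series of height phases $\tau^*=\tau_0<\tau_1<\cdots$, taking $\tau_{j+1}$ to be the first height phase starting at least $10\delay$ after $\tau_j$, so consecutive terms differ by $O(1)$ height phases. For each $j$ and each still-unplaced $w\in\band{r}\cap\grouptojobs{k}$: in $\tau_{j+1}$ the group $\group{k}$ has an idle machine, hence a gap of length $>\delay/2$ inside $\tau_{j+1}$, and every predecessor of $w$ placed by the step that considered $w$ in $\tau_j$ has by now completed and been communicated (within $9\delay<10\delay$ of its placement, by the running-time bound above). Invoking Lemma~\ref{lem:predecessor_removal} for this gap gives a step $\step$ in $\tau_{j+1}$ with $v\superstep{}=w$, $i\superstep{}\in\group{k}$, and $A\superstep{}$ containing none of those already-placed predecessors; thus $A\superstep{}$ consists only of $w$ together with predecessors of $w$ not yet placed at $\tau_j$, so either $w$ is placed in $\tau_{j+1}$ or Condition~\ref{condition:overlap} fails again and $\setproc{A\superstep{}}\le\setproc{A^{(j)}\setminus\Placed^{(j)}}<\setproc{A^{(j)}}/\overlap$, a strict drop by a factor exceeding $\overlap$ from one term of the series to the next. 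Since this quantity always exceeds $\proc{w}$, starts at most $\proc{w}+8\delay\groupspeed{k}\le 9\delay\,\proc{w}$, and cannot fall below $\overlap\,\proc{w}$ without forcing Condition~\ref{condition:overlap} (as $w\in A\superstep{}\setminus\Placed\superstep{}$), $w$ is placed within $O(\log_{\overlap}\delay)$ terms of the series, hence within $O(\log_{\overlap}\delay)$ height phases of $\tau^*$. Applying this to every $w\in\band{r}\cap\grouptojobs{k}$ shows the block contains $O(\log_{\overlap}\delay)$ height phases; summing over the $O(Kr^*)$ frontier values and using $r^*=O((\optmakespan+\delay)/\delay)$ yields the claimed $O\!\big(K(\optmakespan+\delay)\log_{\overlap}\delay/\delay\big)$.

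I expect the main obstacle to be making the series argument fully rigorous: one must verify that the predecessors placed at the step considered in $\tau_j$ really have completed and been communicated before the gap used in $\tau_{j+1}$ — which is exactly why the frontier is taken at the lowest band and slowest unplaced group, so that Condition~\ref{condition:predecessor_size} together with the group ordering bound all relevant running times by $O(\delay)$, and why $O(1)$ height-phase separation between series terms suffices despite intervening chain and load phases — and that Lemma~\ref{lem:predecessor_removal} can indeed be invoked for the gap supplied by an idle machine of $\group{k}$ in a height phase. Long jobs and the at most $O(1)$ phases trailing the completion of $\link{1}$ should contribute only an additive $O(1)$ and not affect the bound.
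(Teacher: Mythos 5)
Your overall accounting (at most $K r^*$ frontier values, $r^* = O((\optmakespan+\delay)/\delay)$ via Constraint~(\ref{phaseLP_completionlb}), and a factor-$\overlap$ decrease of the pending set every $O(1)$ height phases via Lemmas~\ref{lem:Gen:HeightBound} and~\ref{lem:predecessor_removal}) matches the paper's mechanism, but replacing the paper's chain-anchored windows (the anchors $\tau_q$ and the sets $V_q$) by frontier blocks creates a genuine gap in the step ``each frontier value persists for only $O(\log_{\overlap}\delay)$ height phases.'' The frontier controls which relevant jobs are \emph{placed}, but the algorithm's set $A$ at line~\ref{line:A_def} is pruned only by \emph{completion and communication}. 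A frontier witness $w\in\band{r}\cap\grouptojobs{k}$ can therefore have a predecessor $x$ that lies in a lower band, was placed long before the block began, and is \emph{long} (its execution lasts far more than $8\delay$, e.g.\ $\proc{x}>8\delay\groupspeed{k}$). Such an $x$ is not a chain job in general (the chain keeps only the latest-completing long predecessor at each level), so the phases during which it runs need not be chain phases and can be height phases; throughout them $x$ remains in every $A\superstep{}$ with $v\superstep{}=w$, Condition~\ref{condition:predecessor_size} (or~\ref{condition:overlap}) blocks placing $w$, and no factor-$\overlap$ drop occurs, so your per-block budget of $O(\log_{\overlap}\delay)$ height phases is not established. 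Your ``lowest band, slowest group'' choice bounds the running time only of the \emph{unplaced} predecessors (those in $\band{r}$ with group index $\ge k$); it says nothing about placed-but-unfinished predecessors from earlier blocks, which is exactly what the paper's construction is for: for $v\in V_q$, every long predecessor has already \emph{completed} by the completion of $\link{q+1}$, and the phases before that completion are chain phases, so inside the window the $10$-phase separation really does flush all of $A\cap\Placed$.

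Two smaller points: (i) if $w$ is relevant only because it completes after $\link{1}$ (rather than preceding a chain job), its predecessors need not be relevant, so the frontier does not even guarantee that its lower-band predecessors are placed, breaking the claim that all remaining predecessors of $w$ lie in $\band{r}$ with group index $\ge k$; (ii) ``otherwise the next chain job could be placed (contradicting `height phase')'' is not the right justification that a relevant unplaced job exists in every height phase — placing a chain job does not preclude the phase from being a height phase. To repair the argument you essentially need to reintroduce the paper's decomposition: account for height phases window by window between consecutive chain completions, using as witnesses only jobs in $V_q$ (whose long predecessors are already done), and only then run your frontier/series argument inside each window.
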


{\em Proof Idea.}
We provide a high level proof idea. Consider any two consecutive chain jobs, such as $v_4$ and $v_3$ in Figure~\ref{fig:chain_construction}. The goal is to upper bound the number of height phases which occur between these chain jobs, in this case after the completion of $v_4$ on machine 4 and before the start of $v_3$ on machine 2. So we consider such a height phase $\tau$.
Because of the way the algorithm introduces gaps in the schedule, all jobs placed while $T < t$, for some time $t$, will start their execution before the next height phase after time $t$. So we focus on the steps of the algorithm in which $T \le \tau\delay$.
By the fact that all jobs (with the exception of $v_3$) in $V_3$ are short, if all jobs in $V_3$ are placed on any group during these steps, then $v_3$ starts on machine 2 within 8 height phases phases after $\tau$. So we can assume that some job in $V_q$ has not been placed on any group during these steps. 

Consider all such jobs that are in the lowest band according to the LP values (see Definition~\ref{def:band}) and, among those, pick a job $v$ which is assigned to the slowest machine group $\group{k}$.
We know that group $\group{k}$ has a machine with idle time in phase $\tau$, so the only reason $v$ is not placed on some machine in $\group{k}$ during these steps is that it fails to meet one of the three conditions given in line~\ref{line:conditions} of Algorithm~\ref{alg:scheduler}. 
We can infer that $v$ meets condition \ref{condition:predecessor_size} by virtue of being in the lowest band and by Lemma \ref{lem:Gen:HeightBound}. We can also infer that $v$ meets condition \ref{condition:predecessor_groups} by virtue of being assigned to the slowest group. Thus, it must fail condition \ref{condition:overlap}. This means that a $(1-1/\overlap)$ fraction of its incomplete predecessors has already been placed. By definition of $V_q$, all these predecessors must be short. Since these predecessors have all been placed on machines prior to checking $v$, these predecesors will begin before next height phase and have time to complete and communicate $O(1)$ phases after that. By Lemma \ref{lem:Gen:HeightBound} and scaling of speeds and job sizes, $v$ has at most $8\delay$ predecessors in $V_q$. So, in $O(\log_{\overlap}\delay)$ phases, they all complete and $v$ can be scheduled. Repeating this argument for all $K$ groups and all   $O((\optmakespan + \delay)/\delay)$ bands gives the bound in the lemma. We now present the complete proof.

\begin{proof}
    We specify a sequence of phases in $\sigma$ that will structure our argument. We let $\tau_0$ denote the first phase of the schedule. For $1 \le q \le |\chain|$, we let $\tau_q$ denote the first phase in which $v$ is scheduled on $i$, where chain element $\link{q} = (v,i)$. We let $\tau_{|\chain|+1}$ denote the last phase in which some job is started, and $\tau_{|\chain|+2}$ to be the last phase in $\sigma$. We also define the function $\phi$ as follows, where $r_{\max} = \max\{ r : \band{r} \ne \varnothing \}$. 
    \begin{align*}
        \phi(q) = \begin{cases}
            0 &\text{if } q = 0 \\
            r &\text{if } 1 \le q \le |\chain| \text{ and } \link{|\chain| - (q-1)} = (v,i) \text{ and } v \in \band{r} \\
            r_{\max} + d &\text{if } q = |\chain| + 1  + d \text{ with } d \in \{0,1\}.
        \end{cases}
    \end{align*}
    Intuitively, $\phi$ provides us a way to to reason over any pre-chain bands ($\phi(0)$ to $\phi(1)$), those bands of jobs between chain elements ($\phi(q)$ to $\phi(q+1)$), any post-chain bands ($\phi(|\chain|)$ to $\phi(|\chain|+1)$), and the final runtime of any jobs once all jobs have been started ($\phi(|\chain|+1)$ to $\phi(|\chain|+2))$. 
    Finally, we let $H_q$ be the number of height phases up to and including $\tau_q$. We argue that, for any $0 \le q \le |\chain|+2$,
    \begin{equation}
        H_q \le \phi(q) \cdot K \log_{\overlap} \delay. 
        \label{eq:height_phase_bound}
    \end{equation}
    For any job $v \in \band{r_{\max}}$, we have that $\start{v} \ge \delay(r_{\max}-1)/4$ by definition of $\band{r_{\max}}$. So
    $\delay r_{\max} = 4\frac{\delay(r_{\max}-1)}{4} + \delay \le 4\start{v} + \delay \le 4(\optmakespan + \delay)$ by Constraint (\ref{phaseLP_completionlb}). So proving (\ref{eq:height_phase_bound}) is sufficient to prove the lemma.
    
    We prove (\ref{eq:height_phase_bound}) by induction on $q$. The inequality holds trivially for $q=0$, so we suppose it holds for some $0 \le q \le |\chain| + 1$. Since $H_{q+1} = H_q + (H_{q+1} - H_q)$, by induction it suffices  to show that $H_{q+1} - H_q \le (\phi(q+1) - \phi(q)) \cdot K \log_{\overlap} \delay$. 
    
    We first consider the case where $0 \le q \le |\chain|$. Recall that $\tau$ is a height phase if, for every $k$, there is some $i \in \group{k}$ that is idle in $\tau$.
    We consider an arbitrary band $\band{r^*}$ such that $\phi(q-1) < r^* \le \phi(q)$ and an arbitrary group $\group{k^*}$. Let $\tau^*$ be the last phase during which any job in $\band{r^*-1} \cap \fastergroupstojobs{k^*} \cap V_q$ is started. (Recall the sets $V_q$ from Figure \ref{fig:chain_construction}.) Note that, if $q>0$ then $\tau^*$ is after the last chain phase for $\link{q}$.
    For a given phase $\tau$, we define $\step(\tau)$ to be the set of steps $\step$ of Algorithm~\ref{alg:scheduler} for which $T\superstep{} \le \tau\delay$. 
    Let $v$ be some job in $\band{r^*} \cap \fastergroupstojobs{k^*} \cap V_q$ such that no copy of $v$ is placed on any machine in any step $\step \in \step(\tau^*)$.
    Let $\tau_v$ be the phase in which the first copy of $v$ is started. 
    We define the sequence
    $\tau^*_1, \tau^*_{2}, \ldots$ where $\tau^*_0 = \tau^*$ and \begin{align*}
        \tau^*_{d+1} &= \min_{\tau} \{\tau < \tau_v : \exists \tau' > \tau_d^* \text{ such that } \tau \text{ and } \tau' \text{ are height phases and } \tau \ge \tau' + 10\}. 
    \end{align*}
    
    
    We prove that the length of this sequence is $O(\log_{\overlap} \delay)$ by establishing the following claim: for all $d \ge 1$ there is some step $\step$ of line~\ref{line:conditions} such that $v\superstep{} = v$, $i\superstep{} \in \group{k^*}$, $T_{i\superstep{}}\superstep{} \le \delay \tau^*_d$ and $\setproc{A\superstep{}} \le 8\delay\groupspeed{k^*}/\overlap^d$, i.e.\ while scheduling each phase in the sequence, there is a check to schedule $v$ on some machine in $\group{\jobtogroup{v}}$ during which the total size of $v$'s uncompleted predecessor set has been reduced by a factor of $\overlap$.
    The claim entails that for every step $\step$ executing line~\ref{line:conditions} in which $v\superstep{} = v$, $i\superstep{} \in \group{k^*}$, and $T\superstep{} \ge \delay \tau^*_{1 +\log_{\overlap}\delay}$, we have that $\setproc{A\superstep{}} = 0$. 
    Therefore, by Lemma~\ref{lem:predecessor_removal}, $v$ is scheduled before (or during) the height phase immediately after $\tau^*_{1 +\log_{\overlap}\delay}$. Because we have chosen an arbitrary group, this entails that, if $\link{q} = (v^*, i)$, then all predecessors of $v^*$ in $\band{r^*}$ are started within $10 K\log_{\overlap}\delay$ height phases after $\tau^*$. Also, because we have chosen an arbitrary band, all bands $r$ such that $\phi(q) \le r \le \phi(q+1)$, are started in $(\phi(q+1) - \phi(q)) \cdot K \log_{\overlap}\delay$ height phases after the first chain phase of $\link{q+1}$. Therefore, $H_{q+1} - H_q \le 10 \cdot (\phi(q+1) - \phi(q)) \cdot K \log_{\overlap}\delay$. 
    
    We prove the claim by induction on $d$. 
    By definition of $r^*$, all jobs in band $r^*-1$ are placed during steps $\step(\tau^*)$. Note that, since $c_{q+1} = (u,j)$ was the last long predecessor of $v$ to complete, none of $v$'s predecessors placed after $\link{q}$ are long. Since these predecessors have all been placed while $T \le \tau^*\delay$ and since gaps are introduced only when $T$ exceeds some $T_i$ at line~\ref{line:skipahead}, all these predecessors have started before (or during) the next height phase. This entails that all are completed at least $\delay$ time before the start of phase $\tau^*_1$. 
    Since there is some gap on some machine in phase $\tau^*_1$, by Lemmas~\ref{lem:Gen:HeightBound} and \ref{lem:predecessor_removal}, we have that there is some step $\step$ executing line~\ref{line:conditions} in which $v\superstep{} = v$, $i\superstep{} \in \group{k}$, $T_{i\superstep{}}\superstep{} \le \delay\tau^*$, and $\setproc{A\superstep{}} \le 8\delay\groupspeed{\jobtogroup{v}}$. 
    This proves the base case.
    
    We now prove the claim for $d+1$ assuming it holds for $d$.
    Consider phase $\tau^*_d$ and let $\step$ be any step executing line~\ref{line:conditions} in which $v\superstep{} = v$, $i\superstep{} \in \group{k^*}$, $T_{i\superstep{}}\superstep{} \le \delay \tau^*_d$ and $\setproc{A\superstep{}} \le 8\delay\groupspeed{k^*}/\overlap^d$, which exists by induction. By definition of the sequence and our selection of $v$, we know that conditions~\ref{condition:predecessor_size} and \ref{condition:predecessor_groups} are met in step $\step$. Since $v$ was not placed in step $\step$ it must be that condition~\ref{condition:overlap} was not met. This entails that $\setproc{A\superstep{}\setminus\Placed\superstep{}} < \setproc{A\superstep{}}/\overlap$. Since all jobs in $A\superstep{} \cap \Placed\superstep{}$ are short and all are placed while $T \le \delay\tau^*_d$, we have that they all start before the next height phase and complete at least $\delay$ time before the start of phase $\tau^*_{d+1}$. Therefore, by Lemma~\ref{lem:predecessor_removal}, we have that there exists a step $\step'$ in which $v\superstep{'} = v$, $A\superstep{'} \cap A\iter{d} \cap \Placed\iter{d} = \varnothing$, $i\superstep{'} \in \group{k^*}$, and $T_{i\superstep{'}}\superstep{'} \le \delay\tau^*_{d+1}$. This entails that $A\superstep{'} \le \setproc{A\superstep{}} / \overlap$, which proves the inductive step.
    
    Finally, we consider the case where $q = |\chain| + 1$. We show that $H_{q+1} - H_q = O(1)$. 
    By definition, all jobs have been started by the end of phase $\tau_q$. 
    Note that if any of these jobs are chain jobs then the remainder of the phases, with the possible exception of the last, are chain phases, so $H_{q+1} - H_q \le 2$. On the other hand, if no remaining jobs are chain jobs, then all remaining jobs take less than $8\delay$ time on the machines where they've been scheduled, so $H_{q+1} - H_q \le 8$. This completes the proof of inequality (\ref{eq:height_phase_bound}).
\end{proof}

\begin{lemma}[{\bf Load Phase Bound}]
    The number of load phases is $O(K\optmakespan\overlap/\delay)$.
    \label{lem:loadphasebound}
\end{lemma}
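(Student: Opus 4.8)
The idea is to charge each load phase to a machine group all of whose machines are busy during it, and then, for each group $\group{k}$, to bound the number of such phases by the total amount of work the algorithm ever places on $\group{k}$ divided by the work any fully‑busy phase must accomplish on $\group{k}$.

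Concretely, since every load phase is in particular a phase in which every machine of \emph{some} group is busy, the number of load phases is at most $\sum_{k=1}^K B_k$, where $B_k$ is the number of phases in which every machine of $\group{k}$ is busy. Fix $k$; let $V(i)$ be the set of jobs with a copy scheduled on machine $i$ and put $w_k = \sum_{i \in \group{k}} \setproc{V(i)}$, the total size of work (counted with duplicates) ever executed on $\group{k}$. If $\grouptojobs{k} = \varnothing$, then since Algorithm~\ref{alg:scheduler} only considers jobs $v$ with $\jobtogroup{v} = k$ while iterating over $\group{k}$, no job is ever placed on $\group{k}$, so $w_k = B_k = 0$. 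Otherwise, a machine $i \in \group{k}$ that is busy in a phase processes at least $(\delay/2)\speed{i} \ge (\delay/2)\groupspeed{k}$ units of work within that phase, so each phase counted in $B_k$ uses at least $(\delay/2)\groupsize{k}\groupspeed{k}$ units of work on $\group{k}$; since the work executed on $\group{k}$ during distinct phases lies in disjoint time intervals and totals at most $w_k$, we get $B_k \le 2w_k/(\delay\,\groupsize{k}\groupspeed{k})$. Hence the number of load phases is at most $(2/\delay)\sum_{k=1}^K w_k/(\groupsize{k}\groupspeed{k})$, and it remains to show $\sum_{k=1}^K w_k/(\groupsize{k}\groupspeed{k}) = O(\overlap K \optmakespan)$.

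Write $c_k = \groupsize{k}\groupspeed{k}$ and $P_k = \setproc{\grouptojobs{k}}$, and let $\mathcal{K} = \{k : \grouptojobs{k} \ne \varnothing\}$, which contains every index with $w_k > 0$ or $P_k > 0$. Two facts drive the estimate. First, Lemma~\ref{lem:eta_load_increase} gives, for every $k$, the \emph{tail} bound $\sum_{k' \ge k} w_{k'} \le \overlap \cdot \setproc{\fastergroupstojobs{k}} = \overlap \sum_{k' \ge k} P_{k'}$. Second, by Definition~\ref{def:group}, if some job $w$ satisfies $\jobtogroup{w} = k$ then group $k$ has the largest capacity among all groups of index at least $\med{w}$; hence for $k \in \mathcal{K}$ and any $k' \ge k$ we have $c_k \ge c_{k'}$, so the sequence $(c_k)_{k \in \mathcal{K}}$ is non‑increasing. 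Since $(c_k)$ is non‑increasing on $\mathcal{K}$ and $w_k = P_k = 0$ off $\mathcal{K}$, applying summation by parts to $\sum_{k} w_k/c_k$ — where the weights $1/c_k$ are nondecreasing along $\mathcal{K}$, so all the resulting coefficients are nonnegative — and then replacing each tail sum of $(w_k)$ by $\overlap$ times the corresponding tail sum of $(P_k)$ yields
\[
  \sum_{k=1}^K \frac{w_k}{c_k} \;\le\; \overlap \sum_{k=1}^K \frac{P_k}{c_k} \;=\; \overlap \sum_{k=1}^K \frac{\setproc{\grouptojobs{k}}}{\groupsize{k}\groupspeed{k}} \;=\; O(\overlap K \optmakespan),
\]
the last step being Lemma~\ref{lem:LoadBound}. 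Combining with the previous paragraph, the number of load phases is $O(K\optmakespan\overlap/\delay)$.

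The step I expect to be the crux is the passage from the tail bound of Lemma~\ref{lem:eta_load_increase} to a bound on $\sum_k w_k/c_k$: because Lemma~\ref{lem:eta_load_increase} only controls the aggregate load on a \emph{suffix} of groups rather than each group's load individually, one cannot bound $w_k \le \overlap P_k$ term by term (duplication can pile disproportionate load on a few fast groups). The summation‑by‑parts argument recovers the needed bound without paying an extra factor of $K$, and it is valid precisely because capacities are monotone along the groups that actually receive jobs — which is exactly the structural property of the group assignment highlighted just after Definition~\ref{def:group}.
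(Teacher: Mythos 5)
Your proof is correct and follows essentially the same route as the paper: charge load phases to fully-busy groups, bound each group's phase count by its realized load divided by $(\delay/2)\groupsize{k}\groupspeed{k}$, convert the suffix bounds of Lemma~\ref{lem:eta_load_increase} into the weighted per-group bound using the monotonicity of capacities among groups that actually receive jobs, and finish with Lemma~\ref{lem:LoadBound}. The only differences are cosmetic: you charge a load phase to every fully-busy group rather than to the slowest such group, and you carry out the suffix-to-weighted-sum step by summation by parts, whereas the paper proves the same inequality via an auxiliary LP and an exchange argument -- both rest on exactly the same capacity-monotonicity property noted after Definition~\ref{def:group}.
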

\begin{proof}
    For this proof, we assume that $\group{1}, \ldots, \group{K}$ is the set of groups $\group{k}$ for which $\grouptojobs{k} \ne \varnothing$. By definition of $\kappa$, this entails that $\groupsize{k}\groupspeed{k} \ge \groupsize{k+1}\groupspeed{k+1}$. Note that the algorithm does not schedule any jobs on the other groups.
    
    We begin with a useful technical claim. Suppose that we wanted to find values $x_1, ..., x_K \ge 0$ that optimize the following linear program:
    \begin{equation}
        \textrm{maximize~~} \sum_{k} \frac{x_k}{\groupsize{k}\groupspeed{k}} \textrm{~~~~~subject to~~~} \forall k':  \sum_{k\ge k'} x_{k} \le \overlap \cdot \sum_{k\ge k'} \setproc{\grouptojobs{k}}.
        \label{lp:load}
    \end{equation}   
    Note that all $\setproc{\grouptojobs{k}}>0$.
    We claim that setting the variables in such a way that all these constraints are tight, namely $x^*_k = \overlap \cdot \setproc{\grouptojobs{k}}$, is the optimal solution. For contradiction, assume that there is some other solution that gives a higher objective, and let $x'_1, ..., x'_K$ be the lexicographically smallest optimal solution. Note that if $x'_k=0$ for some $k$, then the constraint for $k'=k$ is not tight, since the constraint bounds strictly increase as $k'$ decreases. If $x'_1=0$, then we can increase $x'_1$ and improve the objective. 
    Otherwise, let $k^*$ be such that the constraint for $k^*$ is not tight and $x'_{k^*-1}>0$. Then we can increase $x'_{k^*}$ by a small value $\varepsilon$ and decrease $x'_{k^*-1}$ by the same amount. This is neutral for all constraints except $k^*$, and the objective increases by $\varepsilon / \groupsize{k^*}\groupspeed{k^*} - \varepsilon/\groupsize{k^*-1}\groupspeed{k^*-1} \ge 0$, since $\groupsize{k^*-1}\groupspeed{k^*-1} \ge \groupsize{k^*}\groupspeed{k^*}$. Thus, the solution $x'$ is either not optimal or not lexicographically smallest, which proves the claim.
    
    Let $V(i)$ be the set of jobs scheduled on machine $i$. Now, the values $x_k \equiv \sum_{i\in \group{k}} \setproc{V(i)}$ constitute a feasible solution to (\ref{lp:load}), with Lemma \ref{lem:eta_load_increase} showing that they satisfy the constraints. Thus, their objective value is at most that of $x^*$:
    \begin{equation}
        \sum_k \sum_{i\in \group{k}} \frac{\setproc{V(i)}}{\groupsize{k}\groupspeed{k}} \le
        \sum_k \frac{\overlap \cdot \setproc{\grouptojobs{k}}}{\groupsize{k}\groupspeed{k}}.
        \label{eq:loadineq}
    \end{equation}
    
    Let $L_k$ be the number of load phases in which group $k$ is the slowest group with all machines busy for at least $\delay/2$ time.
    The amount of time that the machines in $\group{k}$ are busy during such phases is at least $\frac{\delay}{2} \cdot \groupsize{k}\cdot L_k$. 
    The total amount of time that the machines in $\group{k}$ are busy is $\sum_{i\in \group{k}} \setproc{V(i)}/ \speed{i}$, so 
    \begin{equation*}
        \frac{\delay}{2} \groupsize{k} L_k ~\le~ 
        \sum_{i\in \group{k}} \frac{\setproc{V(i)}}{ \speed{i}} ~\le~
        \sum_{i\in \group{k}} \frac{\setproc{V(i)}}{\groupspeed{k}}.
    \end{equation*}
    Thus, for all $k$,
    \begin{equation*}
        L_k ~\le~ 
        \frac{2}{\delay} \sum_{i\in \group{k}}\frac{ \setproc{V(i)}}{\groupsize{k}\groupspeed{k}}.
    \end{equation*}
    Summing over $k$, using (\ref{eq:loadineq}), and applying Lemma \ref{lem:LoadBound}, the total number of load phases is
    \begin{equation*}
        \sum_k L_k ~\le~ 
        \frac{2}{\delay} \sum_k \sum_{i\in \group{k}}\frac{ \setproc{V(i)}}{\groupsize{k}\groupspeed{k}} ~\le~
        \frac{2}{\delay} 
        \sum_k \frac{\overlap \cdot \setproc{\grouptojobs{k}}}{\groupsize{k}\groupspeed{k}} ~=~
        O(K\optmakespan\overlap/\delay),
    \end{equation*}
    which proves the lemma.
\end{proof}

\junk{
\begin{proof}\zoya{original version}
    For this proof, we assume that $\group{1}, \ldots, \group{K}$ is the set of groups $\group{k}$ for which $\grouptojobs{k} \ne \varnothing$. By definition of $\kappa$, this entails that $\groupsize{k}\groupspeed{k} \ge \groupsize{k+1}\groupspeed{k+1}$. Note that the algorithm does not schedule any jobs on the other groups.
    
    Let $L_k$ be the number of $k$-phases. Then, by Lemma~\ref{lem:LoadBound}, it is sufficient to show that, for any $k^*$,
    \begin{align}
        \sum_{k \ge k^*} L_k \le \frac{2\overlap}{\delay} \sum_{k \ge k^*} \frac{\setproc{\grouptojobs{k}}}{\groupsize{k}\groupspeed{k}}.
        \label{eq:loadphase_eta_increase}
    \end{align}
    We prove the inequality by reverse induction on $k^*$. Let $V(i)$ be the set of jobs scheduled on machine $i$. Note that, for any $k$,
    \begin{equation}
        \sum_{k' \ge k} L_{k'} \delay\groupsize{k'}\groupspeed{k'}/ 2 \le \sum_{k' \ge k} \sum_{i \in \group{k'}} \setproc{V(i)}
        \label{eq:loadphase_totalload}
    \end{equation}
    by definition of a load phase.
    For $k^* = K$, we have that $L_K \le 2\overlap \cdot \setproc{\grouptojobs{K}} / \delay\groupsize{K}\groupspeed{K}$  by Lemma~\ref{lem:eta_load_increase} and (\ref{eq:loadphase_totalload}). So we assume that the inequality holds for all values of $k$ such that $k^* \le k \le K$. In this case, we have the following inequalities: \zoya{something is wrong with indices in the first one: what's $k'$? Also, could you please not write $\forall x \le y \le z$, as it's not clear which variable is quantified over. Instead, $\forall y: x \le y\le z$ would be better.}
    \begin{align}
        \sum_{k \ge k'} L_{k} \le \sum_{k \ge k'} \frac{2\overlap \cdot \setproc{\grouptojobs{k}}}{\delay\groupsize{k}\groupspeed{k}} &&\forall k': k^* \le k' \le K
        \label{eq:kphaseinduction}
        \\
        \sum_{k \ge k^*} L_k \delay\groupsize{k}\groupspeed{k}/ 2 \le \sum_{k \ge k^*} \overlap \cdot \setproc{\grouptojobs{k}}
        \label{eq:workupperbound}
        \\
        \groupsize{k}\groupspeed{k} \ge \groupsize{k+1}\groupspeed{k+1} &&\forall k: k^* \le k < K,
        \label{eq:increasingcapacities}
    \end{align}
    where (\ref{eq:kphaseinduction}) follows by induction; (\ref{eq:workupperbound}) follows from (\ref{eq:loadphase_totalload}), Lemma~\ref{lem:eta_load_increase} and the fact that $\sum_{k \ge k^*} \overlap \cdot \setproc{\grouptojobs{k}} = \overlap \cdot \setproc{\fastergroupstojobs{k}}$ by Definition~\ref{def:group}.  \zoya{Lemma \ref{lem:eta_load_increase} does not have a summation on the rhs. So should (\ref{eq:workupperbound}) also remove the summation from rhs?} \david{I corrected it}; and 
    (\ref{eq:increasingcapacities}) follows by our assumption on $\group{1},\ldots,\group{K}$. We now show that the claim holds for $k^* - 1$. 
    
    For simplicity of notation, we define $\gamma(k) = \delay\groupsize{k}\groupspeed{k}$ and $\gamma(K+1) = 1$.
    We define the $\ell$\textsuperscript{th} \textit{operation} on (\ref{eq:workupperbound}) to be the multiplication of both sides of the inequality by $\gamma(K-(\ell-1))/\gamma(K-\ell)$ for $0 \le \ell \le K - k^*$. We show by induction on $\ell$ that after operations 1 through $K - k^*$, the claim is proved for $k^*-1$. Specifically, we show that after $\ell$ operations, (\ref{eq:workupperbound}) has the form
    \begin{equation}
        \sum_{k = k^*}^{K - \ell} \Big( \frac{L_k \gamma(k)}{\gamma(K-\ell)} - 2\overlap \cdot \setproc{\grouptojobs{k}} \Big) + \sum_{k = K - (\ell-1)}^{K} L_k + \frac{2\overlap \cdot \setproc{\grouptojobs{k}}}{\gamma(k)} \le 0.
        \label{eq:changedinequality}
    \end{equation} 
    (\ref{eq:changedinequality}) holds trivially for operation 0. We now suppose that (\ref{eq:changedinequality}) holds for all iterations up to $0 \le \ell < K - k^*$ and we show that it holds for iteration $\ell + 1$. By induction, after $\ell$ iterations, we have (\ref{eq:changedinequality}) which, by applying operation $\ell+1$,
    \begin{align*}
        \Rightarrow & \Big( \frac{\gamma(K-\ell)}{\gamma(K-(\ell+1))} \Big) \sum_{k = k^*}^{K - \ell} \Big( \frac{L_k \gamma(k)}{\gamma(K-\ell)} - 2\overlap \cdot \setproc{\grouptojobs{k}} \Big) + \sum_{k = K - (\ell-1)}^{K} L_k + \frac{2\overlap \cdot \setproc{\grouptojobs{k}}}{\gamma(k)} \le 0
    \end{align*}
    where we do not apply the multiplication to the second term because it is at most 0, by (\ref{eq:kphaseinduction}). This entails
    \begin{align*}
        \Rightarrow & \sum_{k = k^*}^{K - (\ell+1)} \Big( \frac{L_k \gamma(k)}{\gamma(K-(\ell + 1))} - 2\overlap \cdot \setproc{\grouptojobs{k}} \Big) + \sum_{k = K - \ell}^{K} L_k + \frac{2\overlap \cdot \setproc{\grouptojobs{k}}}{\gamma(k)} \le 0
    \end{align*}
    So, after $K - k^*$ operations, we have $\sum_{k \ge k^*} L_k \le \frac{2\overlap}{\delay} \sum_{k \ge k^*} \frac{\setproc{\grouptojobs{k}}}{\groupsize{k}\groupspeed{k}}$ which proves the lemma.
\end{proof}
}

\begin{theorem}[{\bf Makespan approximation}]
    There is a polynomial time algorithm that, given an instance of the DAG scheduling with fixed communication delay problem, computes a schedule whose makespan is $O(\log m \log \delay/ \log \log \delay)(\optmakespan + \delay)$, where $\optmakespan$ is the optimal makespan for the given instance.
    \label{thm:makespan}
\end{theorem}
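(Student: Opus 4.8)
The plan is to assemble the components built up through Section~\ref{sec:makespan} and then count phases. First I would run the preprocessing of Section~\ref{sec:preprocess}: by Lemma~\ref{lem:elim_slow} it suffices to solve the instance $(G,M',\delay)$, whose optimal makespan is $O(\optmakespan)$ and in which every speed is within a factor $m$ of the fastest speed --- precisely the hypothesis the group construction needs, so that the number of groups is $K \le \log m$. Next I would solve LP, obtaining by Lemma~\ref{lem:Gen:Relaxation} a feasible fractional solution of value $\LPmakespan = O(\optmakespan)$; apply the rounding of Definition~\ref{def:group} to get the group assignment $\assignfunc$; and run Algorithm~\ref{alg:scheduler} on $(G,M',\delay)$, $\assignfunc$, and overlap parameter $\overlap = \log\delay/\log\log\delay$. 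That the whole pipeline runs in polynomial time and outputs a valid schedule follows from Lemma~\ref{lem:elim_slow}, polynomial solvability of LP, and the correctness/runtime lemma for Algorithm~\ref{alg:scheduler}.

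The core of the argument is to bound the makespan of the produced schedule $\sigma$ by a phase count. I would cut $\sigma$ into phases of length $\delay$ and use the trichotomy already fixed: a phase is a chain phase, or (if not) a load phase when some group has all machines busy, or else a height phase. The three counts are in hand --- Lemma~\ref{lem:ChainBound} gives $O(\optmakespan/\delay)$ chain phases, Lemma~\ref{lem:loadphasebound} gives $O(K\optmakespan\overlap/\delay)$ load phases, and Lemma~\ref{lem:heightphase_bound} gives $O(K(\optmakespan+\delay)\log_{\overlap}\delay/\delay)$ height phases. Summing and multiplying by the phase length $\delay$ bounds the makespan by
\[
O\!\big(\optmakespan + K\,\optmakespan\,\overlap + K(\optmakespan+\delay)\log_{\overlap}\delay\big).
\]

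It then remains to optimize $\overlap$. With $\overlap = \log\delay/\log\log\delay$ one has $\log\overlap = \Theta(\log\log\delay)$, hence $\log_{\overlap}\delay = \Theta(\log\delay/\log\log\delay)$, so the load-phase term ($\propto\overlap$) and the height-phase term ($\propto\log_{\overlap}\delay$) are both $\Theta(\log\delay/\log\log\delay)$; this is (up to constants) the minimizer of $\overlap + \log_{\overlap}\delay$, which is why this value is chosen. Plugging in and using $K \le \log m$ and $\optmakespan \le \optmakespan + \delay$ collapses the bound to $O(\log m\,\log\delay/\log\log\delay)(\optmakespan+\delay)$, proving the theorem.

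I do not expect a real obstacle at this stage: essentially all the difficulty is already discharged by the three phase-bound lemmas, and in particular by Lemma~\ref{lem:heightphase_bound}, whose proof is the technically demanding piece (it relies on the gap/predecessor-removal machinery of Lemma~\ref{lem:predecessor_removal} together with the per-band predecessor-size bound of Lemma~\ref{lem:Gen:HeightBound}). Relative to that, the present theorem is bookkeeping; the one genuine decision is the value of $\overlap$, dictated by the balancing described above, and the additive $\delay$ in the final ratio enters only through the height-phase count, via the largest nonempty band index $r_{\max}$ with $\delay r_{\max} = O(\optmakespan+\delay)$.
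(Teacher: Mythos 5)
Your proposal is correct and follows essentially the same route as the paper: the paper's proof of Theorem~\ref{thm:makespan} likewise sets $\overlap = \log\delay/\log\log\delay$, invokes Lemmas~\ref{lem:ChainBound}, \ref{lem:loadphasebound}, and \ref{lem:heightphase_bound} for the three phase counts, converts phases to time, and uses $K \le \log m$ to reach $O(\log m \log\delay/\log\log\delay)(\optmakespan+\delay)$. Your additional remarks on preprocessing, LP validity, and the balancing of $\overlap$ against $\log_{\overlap}\delay$ are consistent with the setup the paper assumes at the start of its analysis.
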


\begin{proof}
    We set $\overlap = \log\delay / \log\log\delay$. By Lemma~\ref{lem:heightphase_bound}, we have the number of height phases is $O(K(\optmakespan+\delay)\log\delay/\log\log\delay)$. By Lemma~\ref{lem:loadphasebound} we have that the number of load phases is $O(\optmakespan K \log\delay/\log\log\delay)$. By Lemma~\ref{lem:ChainBound} we have that the number of chain phases is $O(\optmakespan)$. Since there are no other phases, if $C$ is the makespan of $\sigma$ we have
    \[ C = O(K(\optmakespan+\delay)\log\delay/\log\log\delay) + O(\optmakespan K \log\delay/\log\log\delay) + O(\optmakespan) = O(K\log\delay/\log\log\delay)(\optmakespan+\delay). \]
    Since the number of groups $K \le \log m$, this yields a bound of $O(\log m \log\delay/\log\log\delay)(\optmakespan+\delay)$.
\end{proof}
    

\section{Integrality gap of \texorpdfstring{$\Omega(\sqrt{\log \rho})$}{Lg}}
\label{sec:integrality}
\newcommand{\anote}[1]{}
\newcommand{\eps}{\varepsilon}
\newcommand{\set}[1]{\{#1 \}}
\newcommand{\calP}{\mathcal{P}}

The main result of this section is the development of an instance of the problem for which the integrality gap of LP is $\Omega(\sqrt{\log \rho})$.  

\begin{theorem}[{\bf Integrality gap}]
\label{thm:IG}
There is a family of instances such that for any $\rho$ that is at least some sufficiently large constant, the linear programming relaxation LP has a gap of at least $\Omega(\sqrt{\log \rho})$.
\end{theorem}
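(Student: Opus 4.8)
The plan is to exhibit a layered random DAG on which the LP has a feasible solution of value $O(\rho)$, while every integral schedule needs $\Omega(L\rho)$ makespan, giving a gap of $\Omega(L)=\Omega(\sqrt{\log\rho})$. I would set up the instance exactly as in the overview: $L$ layers $V_1,\dots,V_L$ each with $n$ jobs, every job in layer $\ell$ depending on $d$ uniformly random jobs in $V_{\ell+1}$, with identical machines ($m=\rho$), unit job sizes, and parameters $\rho=d^L$, $L=c_1\sqrt{\log n}$, $d=2^{c_2\sqrt{\log n}}$ for suitable constants. The LP-feasibility half is the easy direction: since every machine can hold $\rho$ jobs' worth of work in a single phase of length $\rho$, and $nL \le m\rho = \rho^2$ when parameters are chosen so $nL \le \rho^2$, a fractional solution that spreads each job's mass uniformly across all machines, sets all start times to $0$, and sets the $\machinephase{u}{v}{i}$ variables appropriately (each predecessor fractionally ``in the same phase'') satisfies constraints (\ref{phaseLP_completionlb})--(\ref{phaseLP_samephase}) with $C=O(\rho)$; I would verify (\ref{phaseLP_phasepredecessors}) using that the number of predecessors of any $v$ is $\sum_{\ell} d^{\ell} = O(d^{L}) = O(\rho)$ and each enters with small machine-dependent mass. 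The computations here are routine.

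The substance is the integral lower bound: I would argue that in any integral schedule, $\Omega(L)$ phases (each of length $\rho$) are required, hence makespan $\Omega(L\rho)$. Fix a phase boundary and let $P$ be the set of jobs completed in phases $1,\dots,t$. A job $u \in V_i$ can be scheduled in phase $t+1$ only if, along every one of its $d^{\ell-i}$ length-$(\ell-i)$ dependency paths into some fixed later layer $V_\ell$, every path-vertex is either already in $P$ or lies on the same machine as $u$ in phase $t+1$ (and there are only $\rho$ slots per machine, $\rho\cdot m = \rho^2$ total per phase). The first step is to prove the generalized robust expansion lemma (Lemma~\ref{lem:generalized}): w.h.p., for every layer $i$, every $\ell>i$, and every ``small'' $S\subset V_i$, no vertex set $T$ in $V_\ell$ of size $o(d^{\ell-i}|S|)$ receives $\Omega(d^{\ell-i}|S|)$ of the length-$(\ell-i)$ paths ending in $S$. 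This is the combinatorial heart; I would prove it by a union bound over choices of $S$ and $T$, bounding the probability that many paths land in $T$ via the independence of the random out-neighbor choices layer by layer, choosing the sizes so the failure probability beats the $\binom{n}{|S|}\binom{n}{|T|}$ factors — this is where $L=\Theta(\sqrt{\log n})$ and $d=2^{\Theta(\sqrt{\log n})}$ are forced.

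Granting robust expansion, I would run an inductive/potential argument on the number of phases. For the target layer $V_\ell$ with $\ell$ roughly in the middle, I use robust expansion in two ways, as the overview indicates: (i) the jobs of $V_i$ whose dependency set in $V_\ell$ is \emph{mostly already in $P$} number at most $o(|P\cap V_\ell|/d^{\ell-i})$ plus lower-order terms — because otherwise a small set $P\cap V_\ell$ would absorb too many paths, contradicting robust expansion; and (ii) the jobs of $V_i$ that could be completed this phase by co-locating their missing path-vertices on their own machine number at most $o(\rho^2/d^{\ell-i})$, again by robust expansion applied to the set $T$ of co-located vertices of total size $\le \rho^2$. Summing, at most an $o(1)$ fraction of $V_1$ (say) becomes newly schedulable each phase until $\Omega(L)$ phases have elapsed, so the schedule cannot finish $V_1$ before phase $\Omega(L)$. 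Putting the two halves together, the gap is $\Omega(L\rho)/O(\rho)=\Omega(L)=\Omega(\sqrt{\log\rho})$.

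\medskip
\noindent\textbf{Main obstacle.} The crux is the generalized robust expansion lemma and, more precisely, coupling it to the adversarial evolution of the residual graph: after phase $1$ the unscheduled subgraph is no longer random, and the adversary's choice of $P$ can depend on the DAG's randomness. The robust expansion statement is designed precisely so that it holds \emph{simultaneously for all} small sets $S$ and $T$ (hence survives the adversarial conditioning), but proving it with parameters strong enough to push the argument through $\Omega(\sqrt{\log n})$ phases — while the relevant set sizes grow by a $\Theta(d)$ factor per layer — is the delicate part, and is what pins down the $\sqrt{\log\rho}$ rather than a larger bound.
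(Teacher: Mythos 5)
Your proposal follows essentially the same route as the paper: the same layered random DAG with $\rho=d^L$, $m=\rho$, the easy fractional solution of value $O(\rho)$, and an integral lower bound of $\Omega(L)$ phases proved via a robust-expansion-for-paths lemma applied in exactly the two ways the paper uses it (bounding jobs whose layer-$\ell$ dependencies are mostly already completed, and bounding per-machine the jobs whose missing dependencies are co-located), with the union-bound-over-all-$(S,T)$ formulation precisely so the statement survives the adversarial, non-random residual graph. One cosmetic slip: setting all start times to $0$ violates constraint~(\ref{phaseLP_execution}); instead stagger them by layer within the phase (e.g.\ $S_v=(L-\ell)\rho/L$ for $v\in V_\ell$, as the paper does), which still gives LP value at most $\rho$.
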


\paragraph{Input Instance} The input to the scheduling problem is a layered directed acyclic graph $G$ (see Figure \ref{fig:lb}).  Each vertex in $G$ represents a unit size job.  All machines are of unit speed so that every job can be processed at any machine in unit time.  The instance is parametrized by $d, L\in \N$, where $L$ is the number of levels, and $d$ is the number of immediate predecessors for any job at any of the levels $1$ through $L - 1$.  We set the communication delay $\rho=d^L$.  The set $V_i$ of vertices at level $i$ has size $n = m \rho/L$, where $m$ is the number of machines.    Finally, we specify the precedence constraints.  Between any two levels $V_i, V_{i+1}$ we have a random graph $G_i$ that is bipartite and $d$ left-regular (for each vertex in $V_i$, $d$ random vertices in $V_{i+1}$ are chosen as its neighbors). For our purposes, it will suffice to pick $L=\eps_1 \sqrt{\log n}$, $d=2^{\eps_2\sqrt{\log n}}$ and $m=\rho$, for some appropriate absolute constants $\eps_1, \eps_2 >0$. In particular, for our setting we have $\log \rho = \Theta(\log n)$; so our gap will be $\Omega(\sqrt{\log n})$. 

\begin{figure}
    \centering
	\includegraphics[]{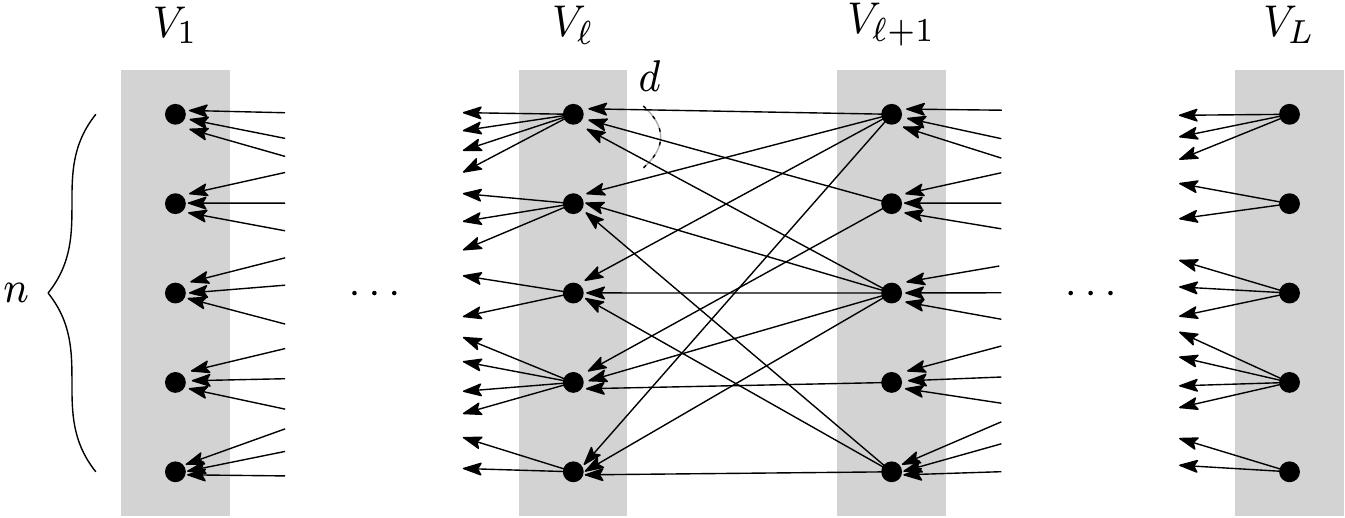}
	\caption{\label{fig:lb} The figure shows the DAG with $L$ layers $V_1, \dots, V_L$ each with $n$ vertices, representing the $nL$ jobs. Each of the $n$ jobs in $V_{\ell}$ has  dependencies on $d$ randomly chosen jobs in $V_{\ell+1}$.} 
\end{figure}

\paragraph{Overview of Challenges and Proof Outline} 
We construct a new integrality gap instance that achieves a $\omega(1)$ integrality gap in the presence of delays. The gap construction consists of a layered DAG (with $L=\Theta(\sqrt{\log n})$ layers), where each layer corresponds to a random graph with (left)-degree $d=2^{\Theta(\sqrt{\log n})}$. The parameters of the construction are set up in such a way that fractionally, all the jobs can be assigned in one phase (hence the LP solution value is $1$). 

The main technical challenge is to argue that $\Omega(L)$ phases are needed in order to schedule all the jobs. From the expansion of the random graph in each layer, it is easy to argue that at most $o(1)$ fraction of the jobs (in layers $\set{1,\dots, L-2}$) can be scheduled in the first phase (since at most $\rho \ll n$ of the jobs can be on one machine). However, in the next phase the jobs that were scheduled previously are now available in all the machines; moreover the choice of these jobs could depend on the randomness in the DAG. Hence the remaining graph in each layer (after removing vertices that have already been scheduled) in the subsequent phases is {\em not random} any longer!

To overcome this technical hurdle, 
we identify and exploit the property of {\em robust expansion}, which may be of independent interest. The vertex expansion property of a random graph says that w.h.p. any subset $S \subset V_\ell$ of size $|S| \le n/d$ has a neighborhood of size $|\Gamma(S)| =\Omega(d |S|)$. However, random graphs have the stronger property that no subset $T$ of size $o(d |S|)$ can have $\Omega(d |S|)$ of the edges from $S$ incident on it. This property of robust expansion along with its generalization to paths of length $\ell < L$ (in Lemma~\ref{lem:generalized}) is crucial in the analysis. 
This property will allow us to prove that we need at least $L/2$ phases before most of the jobs in $V_1$ can be scheduled.

Recall that the relaxation in Section~\ref{sec:makespan} tries to minimize the number of phases, where each phases corresponds to roughly $\rho$ time units (see Lemma~\ref{lem:Gen:Relaxation}). 
In the rest of the section, we will measure the length of the schedule in terms of the number of phases. We first show that the LP has value at most $1$ in Lemma~\ref{lem:integrality_gap_fractional}. Then we prove the necessary robust expansion properties of the instance in Section~\ref{sec:expansion}, and then use this property to lower bound the number of phases in the optimal schedule by $\Omega(L)$ in Proposition~\ref{prop:lb:phases}.  Theorem~\ref{thm:IG} follows directly by just combining Lemma~\ref{lem:integrality_gap_fractional} and Proposition~\ref{prop:lb:phases}. 

\paragraph{Upper bound on the LP solution value.}

We begin by proving that the LP has a fractional solution of value at most $\delay$.  

\begin{lemma}
\label{lem:integrality_gap_fractional}
The linear program LP has a value of at most $\delay$. 
\end{lemma}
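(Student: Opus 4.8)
The plan is to exhibit an explicit fractional solution to LP with objective value at most $\delay$ and verify all constraints. First I would spread every job uniformly across all $m$ machines: set $\jobmachine{v}{i} = 1/m$ for all $v,i$. Since $m = \delay = \rho$ and every job has unit size, the load on each machine is $n/m = \delay/L \le \delay$, so Constraint~(\ref{phaseLP_loadlb}) holds, and Constraint~(\ref{phaseLP_finishjob}) holds by construction. The key choice is the start times: I would set $\start{v} = 0$ for \emph{every} job $v$. This trivially satisfies Constraint~(\ref{phaseLP_nonnegstart}), and it reduces Constraint~(\ref{phaseLP_execution}) to requiring $0 \ge 0 + \proc{u}\sum_i \jobmachine{u}{i}/\speed{i} = 1$, which is \emph{false}. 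So a pure "all at time zero" assignment does not work, and the real content is to pick start times large enough to absorb the per-job processing term $\proc{v}\sum_i \jobmachine{v}{i}/\speed{i}$, which here equals $1$ for each job, while still keeping the makespan $C$ bounded by $\delay$.

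The fix is to stagger the levels: assign $\start{v} = L - \ell$ for each $v \in V_\ell$ (recall level $L$ is the sink level, level $1$ is the source level, and $u \prec v$ means $u$ is a predecessor so $u$ lies in a \emph{higher-numbered} level). Then for $u \in V_{\ell'}$, $v \in V_\ell$ with $u \prec v$ we have $\ell' > \ell$, so $\start{v} - \start{u} = (L-\ell) - (L-\ell') = \ell' - \ell \ge 1 = \proc{u}\sum_i \jobmachine{u}{i}/\speed{i}$, giving Constraint~(\ref{phaseLP_execution}). Constraint~(\ref{phaseLP_completionlb}) requires $C \ge \start{v} + 1 \le (L-1) + 1 = L \le \delay$, so setting $C = \delay$ (indeed $C = L$ would do, but $\delay \ge L$) works. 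It remains to set the $\machinephase{u}{v}{i}$ variables and check the two new constraints. Since consecutive levels differ in start time by exactly $1 < \delay$, every predecessor is "within $\delay$" of its successor; I would set $\machinephase{u}{v}{i} = \sum_{j \le i}\jobmachine{v}{j} = i/m$ for all $u \prec v$. This makes the bracketed term in the delay Constraint~(\ref{phaseLP_delay}) exactly zero, so it reduces to $\start{v} \ge \start{u}$, which holds. For the phase Constraint~(\ref{phaseLP_phasepredecessors}), the right-hand side is $\sum_{u \prec v}\proc{u}\machinephase{u}{v}{i}/(\delay\speed{i}) = (i/m)\cdot |\predex{v}|/\delay$; I need this $\le \sum_{j\le i}\jobmachine{v}{j} = i/m$, i.e. $|\predex{v}| \le \delay$. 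Here $|\predex{v}| \le d + d^2 + \cdots + d^{L-1} < d^L = \delay$ (a job at level $\ell$ has at most $d^{L-\ell}$ ancestors, summed over levels, still a geometric sum bounded by $d^L/(d-1) \cdot (d-1) < d^L$; more carefully $\sum_{k=1}^{L-1} d^k = \frac{d^L - d}{d-1} < d^L = \delay$ for $d \ge 2$), so the constraint holds.

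I expect the only mild subtlety — and hence the "main obstacle," though it is minor — to be the bookkeeping in the phase Constraint~(\ref{phaseLP_phasepredecessors}): one must confirm that the \emph{total} ancestor count $|\predex{v}|$ (not just the immediate in-degree $d$) stays below $\delay = d^L$, which follows from the geometric-series bound above and the fact that the DAG has only $L$ levels. All other constraints collapse to trivialities once the staggered start times and the uniform $\jobmachine{v}{i} = 1/m$, $\machinephase{u}{v}{i} = i/m$ choices are in place. Putting this together yields a feasible LP solution of value $C = \delay$ (in fact $C = L \le \delay$), proving the lemma; combined with Proposition~\ref{prop:lb:phases}'s $\Omega(L\delay)$ integral lower bound this gives the claimed $\Omega(\sqrt{\log\delay})$ gap.
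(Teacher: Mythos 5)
Your proposal is correct and essentially the paper's own proof: the same uniform assignment $x_{v,i}=1/m$, the same choice $z_{u,v,i}=i/m=\sum_{j\le i}x_{v,j}$ (which zeroes out the delay constraint~(\ref{phaseLP_delay})), level-staggered start times (the paper uses $S_v=(L-\ell)\delay/L$ rather than your $L-\ell$, an immaterial rescaling), and the phase constraint~(\ref{phaseLP_phasepredecessors}) verified via the same ancestor bound $|A_v|<d^L=\delay$. One small slip worth fixing: the per-machine load sums over all $L$ layers, so it is $nL/m=\delay$ rather than $n/m=\delay/L$, but since this still equals $C=\delay$, Constraint~(\ref{phaseLP_loadlb}) holds (with equality) and the conclusion is unaffected.
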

\begin{proof}
We distribute each job uniformly across all machines, so $x_{v,i} = 1/m$ for every $v$ and every $i$. Set $z_{u,v,i}= i/m$ for all $u,v,i$. Finally the variables $C_v, S_v$ respect the layered structure of the instance; we set $C_v = (L-\ell+1)\delay/L, S_v=(L-\ell)\delay/L$ for all $v \in V_\ell$, and set $C = \rho$. 
Constraints~\eqref{phaseLP_completionlb} and~\eqref{phaseLP_execution} are immediate, since our assignments of $C_v, S_v$ satisfy the layer structure of the DAG (note all the $p_u, s_i=1$).  
Constraint~\eqref{phaseLP_loadlb} is satisfied with equality since $m= nL/\rho$.  
Constraint~\eqref{phaseLP_delay} is satisfied since $\sum_{j \le i} x_{v,j}=z_{u,v,i}$ for all $u,v,i$; this combined with $|A_v| \le \rho$ also establishes \eqref{phaseLP_phasepredecessors}.  
The other constraints are easily seen to hold, completing the proof of the desired claim about LP. 
\end{proof}

The remainder of this section establishes that there is a setting of the parameters for which an optimal integral schedule has makespan $\Omega(\sqrt{\log \rho})$.  For our purposes, $L=\eps_1 \sqrt{\log n}$, $d=2^{\eps_2\sqrt{\log n}}$ and $m=\rho$, for some appropr absolute constants $\eps_1, \eps_2 >0$.  We begin by establishing certain robust expansion properties of $G$ in Section~\ref{sec:expansion}, and then prove in Section~\ref{sec:integer} the lower bound on the makespan of any integer solution.

    \subsection{Robust expansion properties}
    \label{sec:expansion}
    
In the rest of this section, we will ignore the directionality of the edges, and treat the graphs $\set{G_i}$ between layers as undirected graphs for convenience. 
The following lemma shows that each of the layers $G_i$ has a certain {\em robust expansion} property, that states that for any $S \subset V_i$ and any $T \subseteq \Gamma(S)$ that has a constant fraction of the edges from $S$ incident on it should be of size $\Omega(d |S|)$.  
\begin{lemma}\label{lem:robustexpansion}
For every $i \in [L-1]$, there is an absolute constant $c_1 \le 8$ such that the following holds with probability $1-o(1)$: for every set $S \subset V_i,T \subset V_{i+1}$ with $|S| \le n/d^2$ and $|E(S,T)| \le c_1 |S|+ c_1 |T|$.
\end{lemma}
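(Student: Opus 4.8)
I would read the statement as: with probability $1-o(1)$, every $S\subset V_i$ with $|S|\le n/d^2$ and every $T\subset V_{i+1}$ satisfy $|E(S,T)|\le c_1|S|+c_1|T|$. The plan is a first-moment union bound over pairs $(S,T)$, taking $c_1=8$. Call a pair \emph{bad} if $S\subset V_i$, $T\subset V_{i+1}$, $|S|\le n/d^2$, and $|E(S,T)|>c_1|S|+c_1|T|$; the goal is then to show no bad pair exists w.h.p.\ (note a bad pair automatically has $|T|\le ds/c_1$, since $|E(S,T)|\le ds$). Because distinct jobs of $V_i$ choose their $d$ right-neighbours independently, and the $d$ choices of a single job are negatively associated even in the without-replacement model, for a fixed pair with $|S|=s$, $|T|=t$ the count $|E(S,T)|$ is a sum of $ds$ negatively associated $\{0,1\}$ variables each of mean $t/n$. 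Hence, writing $\mu=dst/n$ and $k=c_1(s+t)$ and noting $\mu\le t/d\le t\le k$ (which uses $s\le n/d^2$), the Chernoff upper tail gives
\[
\Pr\!\big[|E(S,T)|\ge k\big]\ \le\ \Big(\tfrac{e\mu}{k}\Big)^{k}\ =\ \Big(\tfrac{e\,dst}{c_1n(s+t)}\Big)^{k}\ \le\ \Big(\tfrac{eds}{c_1n}\Big)^{c_1(s+t)},
\]
using $\tfrac{st}{s+t}\le s$ in the last step; the base here is $\le e/(c_1 d)<1$.

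\textbf{Union bound.} Using $\binom ns\le(en/s)^s$ and $\binom nt\le(en/t)^t$, the probability that layer $i$ has a bad pair is at most
\[
\sum_{s=1}^{\lfloor n/d^2\rfloor}\binom ns\Big(\tfrac{eds}{c_1n}\Big)^{c_1 s}\sum_{t\ge1}\binom nt\Big(\tfrac{eds}{c_1n}\Big)^{c_1 t}\ \le\ \sum_{s=1}^{\lfloor n/d^2\rfloor}\binom ns\Big(\tfrac{eds}{c_1n}\Big)^{c_1 s}\,e^{nq_s},
\]
where $q_s:=(eds/(c_1n))^{c_1}$ and I used $\sum_{t\ge0}\binom nt q_s^{\,t}=(1+q_s)^n\le e^{nq_s}$. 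Now two one-line estimates, both exploiting $s\le n/d^2$, finish the proof: with $c_1=8$ one gets $\binom ns(eds/(c_1n))^{c_1 s}\le\big[e^{9}d^{8}s^{7}/(8^{8}n^{7})\big]^{s}\le d^{-6s}$ (since $e^{9}/8^{8}<1$), and $nq_s=e^{8}d^{8}s^{8}/(8^{8}n^{7})\le s$. Therefore the $s$-th summand is at most $d^{-6s}e^{s}=(e/d^6)^{s}$, so the whole sum is $\sum_{s\ge1}(e/d^6)^{s}=O(1/d^{6})=o(1)$ as $n\to\infty$ (recall $d\to\infty$). A final union bound over the $L-1=O(\sqrt{\log n})$ layers multiplies by $O(\sqrt{\log n})$, which is still $o(1)$ because $d=2^{\Theta(\sqrt{\log n})}$; this establishes the claim with $c_1=8$.

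\textbf{Main obstacle.} The crux is the balancing in the last step. One cannot split the double sum into a product, because the $T$-sum alone equals $(1+q_s)^n$, which is super-polynomially large precisely when $s$ is near $n/d^2$; the fix is to keep $e^{nq_s}$ multiplied by the combinatorial $S$-factor $\binom ns(eds/(c_1n))^{c_1 s}$, which is $e^{-\Theta(s\log d)}$ and swamps it --- but this works only once $c_1$ exceeds a small constant, so that the $-(c_1-2)\log d$ saving coming from $s\le n/d^2$ beats the $+2\log d$ coming from $\binom ns$; $c_1=8$ leaves ample room. Everything else --- the Chernoff bound, the with/without-replacement bookkeeping for the model, and the final layer union bound --- is routine.
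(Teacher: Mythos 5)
Your proposal is correct, and its probabilistic core coincides with the paper's: your Chernoff bound $\Pr[|E(S,T)|\ge k]\le (e\mu/k)^k$ is exactly what the paper obtains by the first-moment count $\binom{d|S|}{m}(|T|/n)^m$ followed by $\binom{a}{b}\le (ea/b)^b$ (and indeed you could drop the negative-association discussion entirely and use that counting bound, which is valid in both the with- and without-replacement neighbor models). Where you genuinely diverge is in the bookkeeping of the union bound and in which event you exclude. The paper rules out the stronger ``max-form'' event ($|E(S,T)|\ge c_1|S|$ and $|E(S,T)|\ge c_1|T|$), reduces WLOG to $|T|=m/c_1$, and then balances exponents term by term (charging half of $m\log(nm/(etds))$ against each of $s\log(en/s)$ and $t\log(en/t)$) to get failure probability $O(1/n)$ per layer. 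You instead exclude only the sum-form event $|E(S,T)|>c_1(|S|+|T|)$ -- which is precisely what the lemma states and what Lemma~\ref{lem:generalized} uses -- and you absorb the sum over all sizes of $T$ via $\sum_t\binom{n}{t}q_s^t\le e^{nq_s}$, checking that the $S$-factor $d^{-6s}$ swamps $e^{nq_s}\le e^s$ once $c_1=8$ and $s\le n/d^2$. This avoids the paper's WLOG step and its exponent juggling, at the cost of a quantitatively weaker per-layer failure probability, $O(d^{-6})$ rather than $O(1/n)$; after the union over the $L-1=O(\sqrt{\log n})$ layers this is still $o(1)$ (since $d=2^{\Theta(\sqrt{\log n})}$), so it fully suffices for the lemma and for its downstream applications.
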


\begin{proof}
Let $\mathcal{E}$ represent the event that there exists $S,T$ satisfying $|E(S,T)|\ge c_1 |S|$ and $|E(S,T)| \ge c_1 |T|$. 
Fix such a $S,T$. Let $m:=|E(S,T)| \ge c_1 |S|$.
We can also assume without loss of generality that $|T|:=t= m/c_1$; note that $t \le d s \le n/d$. Also fix an index set  $J \subseteq \set{1, \dots, d|S|}$ with $|J|=m$, to represent a subset of edges incident on $S$, and let $E_J$ represent the corresponding edges.
The probability that the edges corresponding to $E_J$ are all incident on $T$ is 
at most $(|T|/n)^m$. 
By a union bound over the choice of the index set $J$, we have
\begin{align*} 
    \Pr[|E(S,T)| \ge m] &= \Pr\Big[ \exists J \subset \set{1, \dots, d|S|} \text{ s.t. } E_J \subset S \times T , |J|=m \Big] \le \binom{d |S|}{m} \Big( \frac{|T|}{n} \Big)^{m}. 
\end{align*}
Now performing a union bound over all $S,T$, 
\begin{align}
    \Pr[\mathcal{E}] & \le \sum_{s \le \frac{n}{d^2}} \sum_{S : |S|=s}\binom{n}{s}  \binom{n}{|T|} \binom{d |S|}{m} \left(\frac{|T|}{n} \right)^{m} \nonumber\\
    &\le \sum_s \exp\Big( s \log(en/s) + t \log(en/t) +m \log(e d s/m) - m \log(n/t)   \Big) \nonumber \\
    &\le \sum_s \exp\left( s \log(en/s) + t \log \Big(en/t\Big)  - m\log\Big( \frac{n m}{etds}\Big)\right). \nonumber\\
    &\le \sum_s \exp\left(-s \Big((\tfrac{m}{2s}-1)\log(\tfrac{en}{s}) - \tfrac{m}{2s}\log(\tfrac{e^2 td}{m}) \Big) -  t \Big((\tfrac{m}{2t}-1)\log(\tfrac{en}{t})-\tfrac{m}{2t}\log(\tfrac{e^2 ds}{m})\Big)\right).\label{eq:exponent}
\end{align}
For the first expression in \eqref{eq:exponent}, recall that $\tfrac{m}{2s} > \tfrac{c_1}{2}  \ge 4$, and $n/s \ge d^2$. Similarly $\tfrac{m}{2t} \ge \tfrac{ c_1}{2} \ge 4$. Hence,
\begin{align*}
    (\tfrac{m}{2s}-1) \log(en/s) &\ge (\tfrac{m}{2s}-1) \log (ed^2) \ge \tfrac{3}{2}\cdot \tfrac{m}{2s} \log(d) \ge \tfrac{3}{2} \cdot \tfrac{m}{2s}\log(e^2 t d/m), \mbox{  and}\\
(\tfrac{m}{2t}-1) \log(en/t) &\ge (\tfrac{m}{2t}-1) \log(\tfrac{en}{\beta m})\ge (\tfrac{m}{2t}-1) \log( \tfrac{e^2 sd^2}{m}) \ge \tfrac{3m}{8t}\log( d\cdot \tfrac{e^2 sd}{m}) \ge \tfrac{3}{2}\cdot  \tfrac{m}{2t}\log(\tfrac{e^2 s d}{m}),
\end{align*}
since $e^2 ds/m \le d$ for our choice of $m \ge c_1 s \ge e^2 s$. Hence, substituting in \eqref{eq:exponent} we have
$$\Pr[\mathcal{E}] \le \sum_s \exp\left(-\tfrac{s}{3} (\tfrac{m}{2s}-1)\log(\tfrac{en}{s})\right) \le \sum_s \exp\left(-s \log(\tfrac{en}{s})\right) = O(\tfrac{1}{n}).$$
This concludes the proof.
\end{proof}

The following lemma generalizes the above lemma to paths of longer length $\ell \ge 1$ as well, and will be crucially used in the analysis. \anote{Note that the parameters below are not optimal. In particular, getting rid of the $exp(\ell-1)$ term in front of $d^{\ell-1} |S|$ term will suffice to go up to $L=\Omega(d)$, I think.}

\begin{lemma}[Robust Expansion for Paths]\label{lem:generalized}
In the above construction $G$, there exists universal constants $c>1$ such that the following holds for any $\ell \le L$ and $i \in \set{1,2,\dots,L-\ell}$ with probability $1-o(1)$. For any $S \subseteq V_i$ s.t. $|S| \le n/(4d^{\ell+1})$ and for any $T \subseteq V_{i+\ell}$, the number of length-$\ell$ paths between $S$ and $T$ is at most $c^{\ell} |T|+ (2cd)^{\ell-1} |S|$.

\end{lemma}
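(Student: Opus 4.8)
The plan is to induct on $\ell$, peeling one layer off the front of each path. For a set $X$ at one level and a set $Y$ at a higher level, write $P_k(X,Y)$ for the number of length-$k$ paths between them, and write $\deg_S(w)$ for the number of neighbors of $w$ in $S$. Because the DAG is layered, any length-$\ell$ path between $S\subseteq V_i$ and $T\subseteq V_{i+\ell}$ increases the level at every step, so
\[
P_\ell(S,T)=\sum_{w\in V_{i+1}}\deg_S(w)\,P_{\ell-1}(\{w\},T).
\]
Two ingredients drive the argument: (i) the robust expansion of each $G_i$ from Lemma~\ref{lem:robustexpansion}, and (ii) the crude bound $P_{\ell-1}(\{w\},T)\le d^{\ell-1}$, valid because a single vertex of $V_{i+1}$ has exactly $d^{\ell-1}$ monotone walks of length $\ell-1$ leaving it. The base case $\ell=1$ is exactly Lemma~\ref{lem:robustexpansion}: $P_1(S,T)=|E(S,T)|\le c_1(|S|+|T|)$ (its hypothesis is met since $|S|\le n/(4d^2)\le n/d^2$), which is of the claimed form once $c$ is a sufficiently large absolute constant.

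For the inductive step, fix $\ell\ge 2$, a valid $i$, and $S\subseteq V_i$ with $|S|\le n/(4d^{\ell+1})$; set $R=\Gamma(S)\subseteq V_{i+1}$, so $|R|\le d|S|\le n/(4d^{\ell})$ — precisely the size bound needed to apply the induction hypothesis at parameter $\ell-1$ and level $i+1$ (this is exactly why the definition carries the factor $4$). Split $R$ into a \emph{light} set $R_{L}=\{w\in R:\deg_S(w)\le 2c_1\}$ and a \emph{heavy} set $R_{H}=R\setminus R_{L}$. Paths through $R_L$ contribute at most $2c_1\sum_{w\in R_L}P_{\ell-1}(\{w\},T)=2c_1\,P_{\ell-1}(R_L,T)\le 2c_1\bigl(c^{\ell-1}|T|+(2cd)^{\ell-2}|R_L|\bigr)\le 2c_1 c^{\ell-1}|T|+2c_1 d(2cd)^{\ell-2}|S|$, using the induction hypothesis (applicable since $|R_L|\le|R|\le n/(4d^{\ell})$) and $|R_L|\le d|S|$. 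For $R_H$, apply Lemma~\ref{lem:robustexpansion} to the pair $(S,R_H)$: since each $w\in R_H$ carries at least $2c_1$ edges from $S$, $2c_1|R_H|\le |E(S,R_H)|\le c_1|S|+c_1|R_H|$, whence $|R_H|\le|S|$ and $|E(S,R_H)|\le 2c_1|S|$; therefore paths through $R_H$ contribute at most $d^{\ell-1}\sum_{w\in R_H}\deg_S(w)=d^{\ell-1}|E(S,R_H)|\le 2c_1 d^{\ell-1}|S|$.

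Adding the two contributions,
\[
P_\ell(S,T)\le 2c_1 c^{\ell-1}|T|+\bigl(2c_1 d(2cd)^{\ell-2}+2c_1 d^{\ell-1}\bigr)|S|,
\]
so it remains to check $2c_1 c^{\ell-1}\le c^\ell$, $2c_1 d(2cd)^{\ell-2}\le \tfrac12(2cd)^{\ell-1}$, and $2c_1 d^{\ell-1}\le \tfrac12(2cd)^{\ell-1}$; the first two reduce to $c\ge 2c_1$ and the last to $(2c)^{\ell-1}\ge 4c_1$, both of which hold for all $\ell\ge 2$ when $c$ is a large enough absolute constant. Finally a union bound over the $O(L^2)$ pairs $(\ell,i)$ finishes the proof: the only randomness used is that Lemma~\ref{lem:robustexpansion} holds simultaneously on all layers, an event of probability $1-o(1)$, and the inductive step is otherwise deterministic.

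The step I expect to be the main obstacle is the handling of heavy vertices. A vertex $w\in V_{i+1}$ adjacent to many elements of $S$ is counted $\deg_S(w)$ times — once per edge from $S$ — so if it also lay on many length-$(\ell-1)$ paths into $T$, it could inflate the count by a factor as large as $|S|$, defeating any linear bound. What rescues the argument is that robust expansion pins the \emph{total} number of $S$-edges into heavy vertices at $O(|S|)$, which is exactly the slack that lets us afford the otherwise hopelessly lossy per-vertex bound $d^{\ell-1}$ on those edges; balancing this against the induction-hypothesis bound on the light part, and correctly propagating the shrinking size budget $n/(4d^{\ell+1})\to n/(4d^{\ell})$ down the recursion, are the details that require care.
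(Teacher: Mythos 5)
Your proof is correct and rests on the same skeleton as the paper's: induction on $\ell$, a heavy/light (the paper's bad/good) dichotomy at one intermediate layer, and the single-layer robust expansion lemma to control the problematic side — but the decomposition is mirrored. The paper peels the \emph{last} layer: it classifies the vertices of $V_{i+\ell-1}$ by the number of length-$(\ell-1)$ paths they receive from $S$, uses the induction hypothesis on $(S,T_{\mathrm{bad}})$ to bound the size of the bad set (then multiplies by $d$ to cross the final layer), and applies Lemma~\ref{lem:robustexpansion} to $(T_{\mathrm{good}},T)$ with a per-edge path multiplicity of $O(c^{\ell-1})$. You instead peel the \emph{first} layer: you classify $\Gamma(S)\subseteq V_{i+1}$ by $\deg_S$, apply the induction hypothesis once to $(R_L,T)$ at level $i+1$ (your verification $|R_L|\le d|S|\le n/(4d^{\ell})$ is exactly what the factor $4$ in the size budget is for), and handle heavy vertices by the crude $d^{\ell-1}$ per-edge bound, which is affordable precisely because robust expansion caps $|E(S,R_H)|$ at $2c_1|S|$. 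Your variant invokes the induction hypothesis only once and makes the probabilistic bookkeeping cleaner than the paper's (the induction is deterministic once Lemma~\ref{lem:robustexpansion} holds simultaneously on all $L-1$ layers, a single $1-o(1)$ event), at the cost of the lossier $d^{\ell-1}$ bound on the heavy side, which your constants absorb. One shared quibble: at $\ell=1$ the stated bound is $c|T|+|S|$, whereas Lemma~\ref{lem:robustexpansion} only yields $c_1(|S|+|T|)$, so no choice of large $c$ literally gives the claimed coefficient $1$ on $|S|$; the clean fix is to prove the slightly weaker form $c^{\ell}|T|+c(2cd)^{\ell-1}|S|$, which your inductive arithmetic (and the paper's) absorbs unchanged and which is immaterial for the downstream use in Lemma~\ref{lem:lb:recurrence}. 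The paper's own base case has the same slip, so this does not distinguish your argument from theirs.
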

\begin{proof}
We will prove this by induction. The base case when $\ell=1$ follows immediately from Lemma~\ref{lem:robustexpansion}. 

Set $c:=4c_1$. Let us assume that the statement of the lemma holds of all $\ell' < \ell$. We will prove the statement for $\ell$. Let $\calP_\ell$ be the set of length-$\ell$ paths between $S \subseteq V_i$ and $T \subseteq V_{i+\ell}$. Let $T_{\ell-1} \subseteq V_{i+\ell-1}$ be the subset of vertices on which $\calP_\ell$ are incident. 

For $u \in V_{\ell-1}$, let $d_{\ell-1}(u)$ denote the number of length $(\ell-1)$ paths between $u$ and $S$. 
Let $T_{bad}=\set{u \in V_{\ell-1}: d_{\ell-1}(u)> 2c^{\ell-1}}$, and let $T_{good}=T_{\ell-1} \setminus T_{bad}$.
We will count the number of 
length $(\ell-1)$ paths through $T_{bad}$ and $T_{good}$ separately. 

First for $T_{bad}$, we get an upper bound on $|T_{bad}|$ as follows. From the inductive hypothesis, and since the number of length $(\ell-1)$ paths incident on each vertex in $T_{bad}$ is at least $2c^{\ell-1}$, we have
\begin{align*}
    2c^{\ell-1} |T_{bad}| &\le c^{\ell-1} |T_{bad}|  +  (2c d)^{\ell-2} |S|\\
    |T_{bad}| &\le  \frac{(2c)^{\ell-1}}{ c^{\ell-1}} |S| d^{\ell-2} \le  2^{\ell-2} d^{\ell-2} |S|.
\end{align*}

But the total number of length-$\ell$ paths through $T_{bad}$ is at most $d$ times the number of length $(\ell-1)$  paths incident on $T_{bad}$. Hence, the number of length $\ell$ paths through $T_{bad}$ is at most
$$ d \times \Big( c^{\ell-1} |T_{bad}| + (2c)^{\ell-2} |S| d^{\ell-2} \Big) \le \Big( \tfrac{1}{2}(2c)^{\ell-1} + (2c)^{\ell-2} \Big) |S| d^{\ell-1}  \le (2c)^{\ell-1} |S| d^{\ell-1}.$$

On the other hand for $T_{good}$, we first observe that $|T_{good}| \le |S| d^{\ell-1} \le n/(4d^2)$. Hence from Lemma~\ref{lem:robustexpansion}, w.h.p. the number of edges $|E(T_{good}, T)| \le c_1 (|T_{good}|+|T|) \le \tfrac{c}{4} |S| d^{\ell-1}+\tfrac{c}{4} |T|$. The total number of length-$\ell$ paths through $T_{good}$ is at most $\tfrac{1}{2}c^{\ell-1} \times  ( |S| d^{\ell-1}+ |T|)  \le c^{\ell}|T|+c^{\ell} |S| d^{\ell-1}$. Hence, in total the number of $\ell$ paths between $S$ and $T$ is at most $c^{\ell} |T|+ (2c)^\ell |S| d^{\ell-1}$. 
\end{proof}

    \subsection{Bounding the Integer Solution Value}
    \label{sec:integer}
    
For any $i \in [L]$ and $t \le L$, let $n_i(t)$ is the number of jobs in layer $i$ that can be scheduled in the first $t$ phases. The following proposition upper bounds the number of phases required to schedule all the jobs. 

\begin{proposition}\label{prop:lb:phases}
In the above notation and construction, there is an absolute constant $\beta>0$ such that for any layer $i \in [L]$, and any phase $t \le [L/2]$, we have with probability $1-o(1)$, 
\begin{equation}\label{eq:phaselb}
    n_i(t+1) \le n \times \min\Big\{\frac{8L}{(\beta d)^{L-2t-i}}, 1 \Big\} .
\end{equation} 
In particular, with probability $1-o(1)$, we have $n_1(\lfloor \tfrac{L}{2} \rfloor) \le n/2$; hence the number of phases needed in the optimal solution is at least $\lfloor L/2 \rfloor$ with probability $1-o(1)$.  
\end{proposition}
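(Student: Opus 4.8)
The plan is to prove the displayed bound \eqref{eq:phaselb} by induction on the phase index $t$, and then to read off the two stated consequences by plugging in $i=1,\ t+1=\floor{L/2}$ and using that $d=2^{\Theta(\sqrt{\log n})}$ is much larger than $L=\Theta(\sqrt{\log n})$, so that $8L/(\beta d)^{e}<\tfrac12$ whenever $e\ge 1$ and $n$ is large. Throughout, I divide the time axis of the optimal schedule into \emph{phases} $[t\rho,(t+1)\rho)$ and let $n_i(t)$ be the number of jobs of $V_i$ having a copy completed by time $t\rho$. The one scheduling fact I will use is: if $z\prec u$, the first copy of $u$ completed by time $(t+1)\rho$ lies on machine $M$, and $z$ has \emph{no} copy completed before time $t\rho$, then $z$ has a copy on $M$ completed within phase $t+1$ --- otherwise the communication constraint would force a copy of $z$ completing by $(u\text{'s start})-\rho< t\rho$, a contradiction. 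I will also use that a machine runs at most $\rho+1$ unit jobs during a single phase.

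For the inductive step, fix $i$ and a phase $t$ and set $g:=L-2t-i$; if $g\le 0$ the claimed bound is only $n_i(t+1)\le n$, which is trivial, so assume $g\ge 1$. The key choice is the target layer $\ell:=L-2t$, selected precisely so that the path length from layer $i$ to layer $\ell$ equals $g$. Partition the jobs of $V_\ell$ completed by phase $t+1$ into $T_{\mathrm{old}}$ (completed by phase $t$) and $T_{\mathrm{new}}$ (first completed in phase $t+1$). Each job $u\in V_i$ completed by phase $t+1$ has all $d^{\,g}$ of its length-$g$ predecessor-paths ending in $T_{\mathrm{old}}\cup T_{\mathrm{new}}$, so either at least $d^{\,g}/2$ of them end in $T_{\mathrm{old}}$ (call $u$ \emph{type~1}) or at least $d^{\,g}/2$ end in $T_{\mathrm{new}}$ (\emph{type~2}). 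It then suffices to bound the number $|S_1|$ of type-1 jobs and $|S_2|$ of type-2 jobs each by $\tfrac12\,n\cdot 8L/(\beta d)^{g}$.

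For type~1, the induction hypothesis applied to layer $\ell$ and phase $t$ gives $|T_{\mathrm{old}}|=n_\ell(t)\le n\cdot 8L/(\beta d)^{2}$, since the exponent $L-2(t-1)-\ell$ equals $2$ by the choice $\ell=L-2t$. As $S_1$ sends at least $\tfrac12 |S_1|d^{\,g}$ length-$g$ paths into $T_{\mathrm{old}}$, Lemma~\ref{lem:generalized} yields $|S_1|=O\!\big(c^{\,g}|T_{\mathrm{old}}|/d^{\,g}\big)$. For type~2, fix for each type-2 job $u$ a machine $M_u$ carrying a copy of $u$ completed in phase $t+1$; by the scheduling fact the $\ge d^{\,g}/2$ length-$g$ paths of $u$ ending in $T_{\mathrm{new}}$ all end at jobs having a copy on $M_u$ within phase $t+1$, and $M_u$ carries at most $2\rho$ such jobs of $V_\ell$. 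So for each machine $M$, Lemma~\ref{lem:generalized} bounds the number of type-2 jobs with $M_u=M$ by $O(c^{\,g}\rho/d^{\,g})=O(c^{\,g}d^{\,L-g})$; summing over the $m=\rho$ machines gives $|S_2|=O(c^{\,g}d^{\,2L-g})=O\big(nL(c/d)^{g}\big)$, using $n=m\rho/L=d^{2L}/L$. Both estimates are at most $\tfrac12 n\cdot 8L/(\beta d)^{g}$ once $\beta$ is a small enough constant (e.g.\ $\beta\le 1/(2c)$) and $d\ge 4c$; adding them proves \eqref{eq:phaselb}, and the trivial bound $n$ handles the case $8L/(\beta d)^{g}\ge 1$. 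The base case $t=0$ is the only non-inductive case: there $T_{\mathrm{old}}=\varnothing$, so every job is type~2 and the machine-capacity estimate alone --- the statement that a single phase completes only a small, self-contained set of jobs per machine --- gives the claim.

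The one genuine obstacle is a near-circularity: Lemma~\ref{lem:generalized} applies only to sets $S$ with $|S|\le n/(4d^{\,g+1})$, whereas $S_1$ and the per-machine slices of $S_2$ are subsets of exactly the set $\{u\in V_i:\text{completed by phase }t+1\}$ whose size we are bounding. I would resolve this by chopping each such $S$ into pieces of size $\le n/(4d^{\,g+1})$ and summing the per-piece bounds of Lemma~\ref{lem:generalized}; this inflates the ``$c^{\,g}|T|$'' term by the number of pieces, $O\!\big(1+|S|d^{\,g+1}/n\big)$, but the extra contribution is reabsorbed because $|T_{\mathrm{old}}|$ (resp.\ $\rho=d^L$) is far smaller than $n/d^{\,O(g)}$. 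This is exactly where the parameter choice is used: with $d,L=2^{\Theta(\sqrt{\log n})},\Theta(\sqrt{\log n})$, the quantity $L\,c^{\,L}$ is sub-polynomial in $n$ while the relevant powers of $d$ (such as $d^{\,L-1}$) are polynomial, so choosing $\eps_1$ small enough makes all the inequalities hold for all large $n$. A secondary thing to be careful about is verifying that the communication constraint really does pin a ``new'' predecessor onto the same machine within the phase, and that the whole argument is insensitive to where the phase boundaries are placed.
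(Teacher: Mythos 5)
Your proposal is correct and follows essentially the same route as the paper: the paper factors your inductive step out as Lemma~\ref{lem:lb:recurrence} (applied with $\ell=L-2t$, $\beta=1/c$), with your type-1/type-2 split corresponding exactly to its split of jobs whose length-$\ell$ paths mostly hit the previously-completed set $T'_\ell$ versus the same-machine sets $T^{(r)}$ of size at most $\rho$, both bounded via Lemma~\ref{lem:generalized}. The only difference is technical bookkeeping around the size precondition of Lemma~\ref{lem:generalized}: where you chop $S$ into pieces and reabsorb the extra term, the paper instead builds the hypothesis $n_\ell(t)\le n/(16c^\ell d)$ into the recurrence lemma (verified in the induction) and applies the path-expansion lemma to a threshold-size subset to derive a contradiction, which achieves the same effect a bit more cleanly.
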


The following lemma is key to upper bound the number of jobs scheduled in the first $t$ phases. 

\begin{lemma}\label{lem:lb:recurrence}
For any $i \in [L]$ and $t \le L$, we have for some absolute constant $c>0$, and any $\ell \in [L]$ with $\ell \ge i$ such that $n_{\ell}(t) \le n/(16 c^{\ell} d)$ that
\begin{equation}\label{eq:lb:recurrence}
    n_i(t+1) \le \frac{4c^{\ell-i}}{d^{\ell-i}}\Big(nL+ n_{\ell}(t)\Big).
\end{equation}
\end{lemma}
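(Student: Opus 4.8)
I would fix a schedule realizing the quantities $n_{\cdot}(\cdot)$, write $\lambda := \ell - i$, and note that if $\lambda = 0$ the claimed bound is $\ge 4nL \ge n \ge n_i(t+1)$, so assume $\lambda \ge 1$. Let $T^{\mathrm{old}} \subseteq V_\ell$ be the jobs of $V_\ell$ completed within the first $t$ phases, so $|T^{\mathrm{old}}| = n_\ell(t)$, and let $S \subseteq V_i$ be the jobs of $V_i$ completed within the first $t+1$ phases, so $|S| = n_i(t+1)$. Every job of $V_i$ has exactly $d^{\lambda}$ directed length-$\lambda$ paths into $V_\ell$; call $u \in S$ \emph{old-heavy} if at least $\tfrac12 d^\lambda$ of these paths end in $T^{\mathrm{old}}$ and \emph{new-heavy} otherwise, and let $S_1, S_2$ be the two classes, so $S = S_1 \cup S_2$ (disjointly) and $n_i(t+1) = |S_1| + |S_2|$.

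\textbf{Key counting claim.} The workhorse I would isolate first: for any $X \subseteq V_i$ and $Y \subseteq V_\ell$ with $|Y| \le n/(16 c^\lambda d)$ such that every $x \in X$ has at least $\tfrac12 d^\lambda$ length-$\lambda$ paths ending in $Y$, one has $|X| \le (4 c^\lambda / d^\lambda)\,|Y|$. For the proof I would first record that for the chosen parameters ($L=\eps_1\sqrt{\log n}$, $d=2^{\eps_2\sqrt{\log n}}$ with $\eps_1$ small relative to $\eps_2$), we have $(2cd)^{\lambda-1} \le \tfrac14 d^\lambda$ for all $\lambda \le L-1$. Then: if $|X| \le n/(4 d^{\lambda+1})$, Lemma~\ref{lem:generalized} bounds the number of length-$\lambda$ paths from $X$ to $Y$ by $c^\lambda|Y| + (2cd)^{\lambda-1}|X| \le c^\lambda|Y| + \tfrac14 d^\lambda|X|$; this count is at least $\tfrac12 d^\lambda|X|$, and rearranging gives the claim. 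If instead $|X| > n/(4 d^{\lambda+1})$, apply the same argument to an arbitrary $X' \subseteq X$ with $|X'| = \floor{n/(4 d^{\lambda+1})}$ to obtain $\floor{n/(4 d^{\lambda+1})} = |X'| \le (4c^\lambda/d^\lambda)|Y| \le n/(4 d^{\lambda+1})$, a contradiction once $n$ is large, since in both applications below $|Y|$ is bounded by $n/(16 c^\lambda d)$ with a constant multiplicative margin (and a super-polynomial margin in the $R_M$ case). This self-referential use of the robust-expansion lemma — wanting to invoke it on a set whose size is what we are bounding — is the one genuinely delicate point; everything else is bookkeeping.

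\textbf{The two halves.} For $S_1$, apply the claim with $X = S_1$, $Y = T^{\mathrm{old}}$: the hypothesis $n_\ell(t) \le n/(16 c^\ell d)$ together with $\ell = i+\lambda > \lambda$ and $c>1$ gives $|T^{\mathrm{old}}| < n/(16 c^\lambda d)$, so $|S_1| \le (4c^\lambda/d^\lambda)\,n_\ell(t)$. For $S_2$ I would argue per machine. Every $u \in S_2$ is completed in phase exactly $t+1$ (a job completed earlier has all its $V_\ell$-ancestors in $T^{\mathrm{old}}$, hence is old-heavy); fix a machine $M(u)$ on which $u$ is completed in that phase. Any $V_\ell$-ancestor $w$ of $u$ with $w \notin T^{\mathrm{old}}$ is completed in phase $t+1$ and, since $u$ cannot receive the output of a job finished in phase $t+1$ on another machine, $w$ must be processed on $M(u)$ itself; so $w \in R_M := \{\text{jobs of } V_\ell \text{ processed on } M \text{ in phase } t+1\}$, where $|R_M| \le \rho$. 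Hence, with $S_{2,M} = \{u \in S_2 : M(u) = M\}$, every $u \in S_{2,M}$ has at least $\tfrac12 d^\lambda$ length-$\lambda$ paths into $R_M$; since the parameter choice gives $\rho \le n/(16 c^\lambda d)$ (equivalently $16 L c^\lambda d \le d^L$), the claim yields $|S_{2,M}| \le (4c^\lambda/d^\lambda)\,|R_M|$. Summing over the $m = \rho$ machines and using $\sum_M |R_M| \le m\rho = \rho^2 = nL$ (as $n = m\rho/L$ and $m=\rho$) gives $|S_2| \le (4c^\lambda/d^\lambda)\,nL$. Adding the two bounds, $n_i(t+1) = |S_1| + |S_2| \le (4c^{\ell-i}/d^{\ell-i})(nL + n_\ell(t))$.

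\textbf{Loose ends.} The write-up should make explicit that the parameter inequalities used — $(2cd)^{\lambda-1} \le \tfrac14 d^\lambda$ and $16 L c^\lambda d \le d^L$ for all $1 \le \lambda \le L-1$ — follow from $L=\eps_1\sqrt{\log n}$, $d=2^{\eps_2\sqrt{\log n}}$ once $\eps_1$ is small enough relative to $\eps_2$ and $n$ is large; and that the informal notion of "phase" is used exactly as in the discussion surrounding Lemma~\ref{lem:integrality_gap_fractional}, the only facts needed being that a unit-speed machine runs at most $\rho$ unit jobs per phase and that a job started in phase $t+1$ cannot use the output of a job completed during phase $t+1$ on a different machine. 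As flagged, the main obstacle is the circular application of Lemma~\ref{lem:generalized}; the rest is routine.
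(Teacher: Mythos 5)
Your proposal is correct and follows essentially the same route as the paper's proof: the same split of the newly completable jobs in $V_i$ into those with at least half their length-$(\ell-i)$ paths landing in previously completed jobs of $V_\ell$ versus (per machine) those relying on the at most $\rho$ jobs of $V_\ell$ processed on the same machine in the current phase, with both halves bounded via Lemma~\ref{lem:generalized} applied to a threshold-size subset to avoid the circularity you flag. Your bookkeeping of the path length as $\lambda=\ell-i$ and the explicit parameter checks are in fact slightly cleaner than the paper's write-up, which conflates $d^{\ell}$ with $d^{\ell-i}$, but the argument is the same.
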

We will apply the lemma with $\ell = L-2t$ to obtain our desired bounds.
\anote{The above equation may need to be modified.}

\begin{proof}
The proof of the above lemma will use the robust expansion property for path (Lemma~\ref{lem:generalized}) in two different ways. 
Let $T'_\ell \subset V_{\ell}$ be the subset of jobs in their respective layers that have been completed in the previous phases (hence $|T'_{\ell}|=n_{\ell}(t)$). 

For a job to get scheduled in phase $(t+1)$ on a machine $r \in [m]$, all of its ancestor jobs should have either been completed already, or should be scheduled in the same machine $r$ in phase $t+1$. Let $T^{(r)}$ be the jobs in layer $V_{\ell}$ that will be scheduled on machine $r$ in the current phase $t+1$. Let $S^{(r)} \subset V_i$ represent the subset of jobs in $V_i$ that can be completed in phase $t+1$ on machine $r \in [m]$ in the $(t+1)$th phase.  Hence all of the $|S^{(r)}| d^{\ell}$ paths of length $\ell$ to $S^{(r)}$ have to be incident on the vertices in $T^{(r)} \cup T'_\ell$. We divide such vertices into two cases depending on whether most of its length-$\ell$ paths go from $T'_\ell$ or $T^{(r)}$.  Let $S' \subset V_i$ be the subset of jobs such that at least $\tfrac{1}{2} d^{\ell}$ paths to $S'$ that are incident on $T'_\ell$. Note that $S'$ also includes the jobs from $V_i$ that have been completed in the previous phases. We will first show that $|S'| \le 4 c^\ell n_\ell(t)/d^{\ell}$. Suppose for contradiction that $S'' \subset S'$ with $|S''|=\lfloor \tfrac{4c^\ell n_\ell(t) }{d^{\ell}}\rfloor +1 $. Recall that $|T'_\ell|=n_\ell(t)$, and from assumption $|S''| \le n/(4d^{\ell+1-i})$. Hence we can apply Lemma~\ref{lem:generalized} with $S=S'' \subset V_i$ and $T=T'_\ell \subset V_{\ell}$ (along with the fact that $(2c)^{\ell-1} < d/4$) to conclude that 
\begin{align}
    \frac{1}{2}|S''| d^{\ell} &\le c^{\ell} |T'_{\ell}| + (2c d)^{\ell-1} |S''| \le c^{\ell} |T'_{\ell}| + \tfrac{1}{4}|S''|d^{\ell} ~~~\implies ~~~ 
    |S''| \le  \frac{4 c^{\ell} |T'_\ell|}{d^{\ell}} \nonumber\\
    \text{ Hence, } |S'| &\le \frac{4 c^{\ell} |T'_\ell|}{d^{\ell}} \text{ by contradiction}. \label{eq:recursive:1}
\end{align}


On the other hand, for any job in $S^{(r)} \setminus S'$, at least $\tfrac{1}{2}d^{\ell}$ of the length-$\ell$ paths to it have to be incident on $T^{(r)}$, which has size at most $\rho$. Assume for contradiction that $|S^{(r)}| >  \frac{4 c^{\ell} \rho}{d^{\ell}}$, and let $S \subset S^{(r)}$ of size $|S| =  \lfloor \frac{4 c^{\ell} \rho}{d^{\ell}} \rfloor +1$. Again, from our choice of parameters $|S| \le n/(4d^{\ell})$. Hence, from Lemma~\ref{lem:generalized} applied to set $S$ and $T^{(r)}$
\begin{align}
    \frac{1}{2}|S| d^{\ell} &\le c^{\ell} |T^{(r)}| + (2c d)^{\ell-1} |S| \le c^{\ell} \rho + \tfrac{1}{4}|S| d^{\ell} ~~~\implies ~~~ 
    |S| \le  \frac{4 c^{\ell} \rho}{d^{\ell}} \nonumber \\ 
    \text{ Hence  by contradiction, }    |S^{(r)}| &\le  \frac{4 c^{\ell} \rho}{d^{\ell}} ~\forall r \in [m]. \label{eq:recursive:2}
\end{align}
Combining \eqref{eq:recursive:1} and \eqref{eq:recursive:2} and using $|T'_\ell| = n_{\ell}(t)$, we get
$$n_i(t+1) \le |S'|+\sum_{r \in [m]} |S^{(r)}|  \le  \frac{4c^{\ell}}{d^{\ell}} ( m \rho + n_{i+\ell}(t) ) =\frac{4c^{\ell}}{d^{\ell}} ( n L + n_{i+\ell}(t) ),$$
where the last equality follows from our setting of parameters.
\end{proof}

We now proceed to the proof of Proposition~\ref{prop:lb:phases}. Note that Proposition~\ref{prop:lb:phases} implies that we need at least $L/2$ phases to schedule all the jobs. Hence, the optimum is $\Omega( L)$. 

\begin{proof}[Proof of Proposition~\ref{prop:lb:phases}]
We will prove this by induction using the relation in Lemma~\ref{lem:lb:recurrence}. Set $\beta:=1/c$.

Firstly, $n_\ell(0)=0$ for all $\ell \in [L]$.
For the base case when $t=0$, we have from Lemma~\ref{lem:lb:recurrence} applied with $\ell=L$, that for every $i \in [L]$, $n_i(1) \le   nL/(d/c)^{L-i}$ as required. 

Assume the inductive hypothesis is true for the first $t$ phases. Consider $\ell:=L-2t$. From the inductive hypothesis,
$$n_\ell(t) \le n \cdot \frac{8L}{(\beta d)^{L-2(t-1)-(L-2t)}} \le n \cdot \frac{8L}{(\beta d)^2} \le \frac{n}{16 c^{\ell}d},$$
since for our choice of parameters $c^L < 128 \beta^2 d$. Hence, applying Lemma~\ref{lem:lb:recurrence} with $\ell=L-2t$, we get for every $i \le \ell$
$$n_i(t+1) \le  \Big( \frac{4nL}{(\beta d)^{L-2t-i}} + \frac{n_\ell(t)}{(\beta d)^{L-2t-i}} \Big) \le n \cdot \frac{8L}{(\beta d)^{L-2t-i}}, $$
as required. Hence by induction the proposition follows. 
\end{proof}

\section{Bounding the duplication advantage}
\label{sec:duplication}

In this section, we quantify the advantage that job duplication may offer.  We say that a schedule for a given instance is a {\em no-duplication}\/ schedule if every job in the instance is processed exactly once in the schedule.

\begin{theorem}[{\bf Bounding the duplication advantage}]
There exists an instance with $n/2 = m = 2^\rho$ for which any no-duplication schedule has makespan at least $\rho/\log \rho$ times the optimal makespan.
Given any instance $I$ with $n$ jobs, $m$ machines, communication delay $\rho$, and a schedule $\sigma$ with makespan $M^* \ge \rho$, there exists a polynomial-time computable no-duplication schedule for $I$ with makespan $O(M^* \cdot \log^2 n \log m)$.
\label{thm:duplication_bound}
\end{theorem}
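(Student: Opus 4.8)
The plan is to prove the two halves separately.

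\textbf{Lower bound.} Take the instance to be (essentially) the complete binary tree of depth $\rho$ with edges directed away from the root: it has $m=2^{\rho}$ leaves and $n\approx 2m$ unit-size jobs, run on $m$ unit-speed machines with delay $\rho$. The optimal makespan is $O(\rho)$: dedicate a machine to each of the $m$ root-leaf paths, process that path's $\rho+1$ nodes in order, and simply duplicate every shared ancestor on all machines below it; since each path runs entirely on one machine, no communication is ever needed. For any no-duplication schedule $\sigma$ with makespan $M$, fix a root-leaf path $P=(v_0,\dots,v_\rho)$ with completion times $f_0\le\cdots\le f_\rho$. Whenever $v_i$ and $v_{i-1}$ lie on different machines we must have $f_i\ge f_{i-1}+\rho$, so if $P$ has $k_P$ such machine-switches then $M\ge f_\rho\ge \rho+k_P(\rho-1)$, forcing $k_P=O(M/\rho)$; hence $P$ decomposes into $O(M/\rho)$ maximal same-machine stretches, and since $P$ has $\rho+1$ nodes some stretch has length $\Omega(\rho^2/M)$. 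This long stretch is a chain of $\Omega(\rho^2/M)$ nodes all placed on a single machine. Combining this with the load constraint that each machine processes at most $M$ jobs, a counting argument over the $2^{\rho}$ paths -- tracking, for a machine holding an ancestor-closed sub-tree of size at most $M$, the depths at which root-leaf paths must leave that machine, and iterating over the $O(M/\rho)$ stretches of a path -- shows that forcing every path to contain a same-machine stretch of length $\ell$ is impossible unless $\ell=O(\log M)$. Setting $\ell=\Theta(\rho^2/M)$ equal to $O(\log M)$ and solving yields $M=\Omega(\rho^2/\log\rho)$, i.e.\ a ratio $\Omega(\rho/\log\rho)$ over the optimum. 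I expect the delicate point to be exactly this counting, which must balance the forced same-machine chain length against the number of machines.

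\textbf{Upper bound: reduction to a single phase.} Given $\sigma$ with makespan $M^*\ge\rho$, cut time into phases of length $\rho$. First set aside the \emph{long} jobs -- those whose processing time on the machine $\sigma$ uses exceeds $\rho$ -- together with all their copies; on any chain of long jobs the total processing time in $\sigma$ is $O(M^*)$, so such a chain has $O(M^*/\rho)$ long jobs, and the long jobs can be processed once each, in precedence order, on suitable machines, with a $\rho$ delay inserted at each inter-machine hop, at $O(M^*)$ additive cost. After removing long jobs, every phase only starts or completes short jobs; excise the phases in which nothing is started or completed, leaving polynomially many \emph{active} phases. It therefore suffices to prove the core claim: the sub-DAG $D$ of jobs processed within one phase of $\sigma$ admits a no-duplication schedule of makespan $O(\rho\log^2 n\log m)$. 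Concatenating these phase-schedules with a $\rho$ delay between them, and interleaving the long-job track, gives total makespan $O(M^*\log^2 n\log m)$.

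\textbf{Upper bound: de-duplicating one phase.} For $v\in D$, let $\mathsf{dup}(v)$ be the number of machines on which $\sigma$ processes $v$ within this phase. Since transmitting a result costs $\rho$ time, a job $v\in D$ placed on machine $q$ during a length-$\rho$ phase can only use in-$D$ predecessors that are themselves on $q$ during the phase; hence $u\prec v$ with $u,v\in D$ implies $\mathsf{dup}(u)\ge\mathsf{dup}(v)$. Bucket $D$ into $O(\log n\log m)$ groups, with group $b$ holding the jobs whose $\mathsf{dup}$ value lies in $[(1+\tfrac1{2\log n})^{b},(1+\tfrac1{2\log n})^{b+1})$, and handle the groups from highest duplication level down to lowest, separating consecutive groups by a $\rho$ delay so that all out-of-group predecessors of a group's jobs are already completed and communicated. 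Within a group $B$, form the undirected graph $H$ on the sink jobs of $B$ (those with no successor in $B$), putting an edge between two sinks whenever they share a common still-unprocessed predecessor; the structural heart of the argument is that any union of $H$-clusters of diameter $O(\log n)$ that pairwise share no common $H$-neighbor can be scheduled -- together with the predecessors they need -- without duplication in $O(\rho\log n)$ time (the clusters occupy disjoint machine sets and run in parallel, and inside a cluster one walks through its $O(\log n)$ levels). A standard low-diameter (ball-growing) decomposition extracts such a subset covering a constant fraction of $H$; removing it, recomputing sinks, and recursing clears all of $B$ in $O(\log n)$ rounds of cost $O(\rho\log n)$ each, i.e.\ $O(\rho\log^2 n)$ per group and $O(\rho\log^2 n\log m)$ in total. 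The main obstacle here is controlling this recursion: one must show it terminates after only polylogarithmically many rounds and, at the same time, that no job is ever scheduled twice -- which is precisely what the ``no shared neighbor across clusters'' guarantee of the low-diameter decomposition is designed to ensure.
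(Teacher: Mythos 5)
Your overall architecture for the upper bound (phases of length $\rho$, bucketing by duplication level with ratio $1+\tfrac{1}{2\log n}$, processing buckets from most- to least-duplicated, building the conflict graph $H$ on sinks and extracting a low-diameter decomposition) matches the paper, but the step you call the ``structural heart'' is exactly where the paper does its real work and where your proposal has a genuine gap. You assert that a union of $H$-clusters of diameter $O(\log n)$ with no shared neighbors can be scheduled, with all needed predecessors and without duplication, in $O(\rho\log n)$ time by ``walking through its levels.'' Nothing bounds a cluster's total work by $O(\rho\log n)$: a diameter-$2$ cluster can contain polynomially many sinks, any two adjacent sinks share an unprocessed predecessor and so cannot be split across machines without duplication, and processing them sequentially costs time proportional to the cluster size. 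The paper avoids this by actually using the fact that all jobs in a bucket have duplication counts within a factor $1+\tfrac1{2\log n}$: a shared predecessor forces $u,v$ to be co-processed on at least $m_u+m_v-m_p$ machines, propagating this along the $\le\log n$ hops to the cluster center loses only $(1-\mu)^{\log n}\ge 1/2$, and an averaging argument then yields a \emph{single} machine $j_\ell$ that already processed at least half of the cluster's sinks inside the original length-$\rho$ phase; scheduling that half-cluster plus its predecessors on $j_\ell$ is a subset of what $\sigma_i$ did on $j_\ell$, so it finishes in \emph{one} phase. Your proposal never uses the closeness of duplication levels, so the whole purpose of the fine bucketing is lost. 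Moreover, even granting your per-round cost of $O(\rho\log n)$, your own accounting gives $O(\log n)$ rounds per bucket and $O(\log n\log m)$ buckets, i.e.\ $O(\rho\log^3 n\log m)$ per phase and $O(M^*\log^3 n\log m)$ overall, a $\log n$ factor worse than the theorem (your stated total $O(\rho\log^2 n\log m)$ does not follow from your own counts). The long-job treatment (``a separate track at $O(M^*)$ additive cost'') is also looser than the paper's, which keeps a phase-spanning job as the last job of its machine in that phase's no-duplication schedule and removes that machine for the phases it remains busy; but this is secondary to the cluster-scheduling gap.

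For the lower bound you take a genuinely different route from the paper. The paper proves by induction that a no-duplication schedule completes at most $2\rho^t$ jobs of the binary tree in its first $t$ phases (only one machine can work per remaining independent subtree, and an exchange argument bounds the number of such subtrees), forcing $\Omega(\rho/\log\rho)$ phases. You instead argue per root-leaf path that at most $O(M/\rho)$ machine switches are affordable, so every path needs a same-machine stretch of length $\Omega(\rho^2/M)$, and then appeal to an unproven counting claim that with at most $M$ jobs per machine and no duplication one cannot force every path to have a stretch of length $\ell$ unless $\ell=O(\log M)$. That claim is in fact true and your approach can be completed -- e.g.\ charge each covered leaf to the top node $t$ of its stretch: $t$'s machine holds at most $M$ candidate bottoms, each covering at most $2^{\rho-\mathrm{depth}(t)-\ell+1}$ leaves, and summing over all nodes $t$ gives total coverage at most $(\rho+1)M2^{\rho-\ell+1}$, forcing $\ell\le\log M+\log\rho+O(1)$ -- but as written your sketch (counting ``ancestor-closed subtrees'' and depths at which paths leave a machine) is not a proof, and you flag it yourself as the delicate point. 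So the lower-bound half is a valid alternative strategy with its key lemma missing but repairable, while the upper-bound half is missing the central mechanism and, as stated, does not reach the claimed bound.
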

We first present the proof for the case where no job takes more than $\rho$ steps in $\sigma$.  We then show how to extend the theorem to the general case.   

We divide time into phases of length $\rho$ (the communication delay).  Consider the $i$th phase $[i\rho, (i+1)\rho)$ of $\sigma$ for integer $i \ge 0$.  Let $\sigma_i$ denote the schedule $\sigma$ restricted to job executions that begin in the time interval $[i\rho, (i+1)\rho)$.  Let $G_0$ denote the subgraph of $G$ induced by the jobs processed in $\sigma$ during phase 0.  For $i > 0$, let $G_i$ denote the subgraph of $G$ induced by jobs not in $\cup_{\ell < i} G_\ell$ whose first execution in $\sigma$ begins in $[i\rho, (i+1)\rho)$.  For convenience, we use $G_{< i}$ to denote $\cup_{\ell < i} G_\ell$.   

\begin{lemma}
\label{lem:duplication predecessors}
For any job $u$ processed on machine $j$ in $\sigma_i$, every predecessor of $u$ is either in $G_{< i}$ or processed on $j$ in $\sigma_i$.  
\end{lemma}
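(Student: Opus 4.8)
The plan is to invoke the communication-delay feasibility of $\sigma$ together with the standing assumption (in force for this case of Theorem~\ref{thm:duplication_bound}) that no job takes more than $\rho$ time in $\sigma$. Fix a job $u$ executed on machine $j$ in $\sigma_i$, starting at some time $t \in [i\rho,(i+1)\rho)$, and fix a predecessor $w \prec u$. Feasibility of $\sigma$ (the constraint stated in Section~\ref{sec:prelim}, which is already phrased for all predecessors and not merely immediate ones, so no induction along precedence paths is needed) gives a copy of $w$ that completes either (a) on $j$ by time $t$, or (b) on some machine $j'\neq j$ by time $t-\rho$. I would treat these two cases separately, showing that (b) always places $w$'s first execution before phase $i$, while (a) places it either before phase $i$ or as a copy belonging to $\sigma_i$ on $j$.

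In case (b), the completing copy of $w$ finishes by $t-\rho<i\rho$, hence it starts before $i\rho$, so the first execution of $w$ in $\sigma$ begins strictly before phase $i$. In case (a), the copy of $w$ completes on $j$ by time $t$; since it is not preempted and takes at most $\rho$ time, its start time on $j$ lies in $[t-\rho,t)\subseteq[(i-1)\rho,(i+1)\rho)$. Thus either this copy of $w$ starts in $[i\rho,(i+1)\rho)$ --- in which case it belongs to $\sigma_i$ and $w$ is processed on $j$ in $\sigma_i$, one of the two conclusions --- or it starts before $i\rho$, and again the first execution of $w$ in $\sigma$ begins before phase $i$.

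It then remains to verify the bookkeeping claim: if the first execution of $w$ begins strictly before time $i\rho$, then $w\in G_{<i}$. Let $\ell<i$ be the phase containing the earliest start time of $w$. If $\ell=0$, then $w$ is processed during phase $0$, so $w\in G_0\subseteq G_{<i}$. If $\ell\ge 1$ and $w\in\bigcup_{k<\ell}G_k$, then $w\in G_{<\ell}\subseteq G_{<i}$; otherwise $w\notin\bigcup_{k<\ell}G_k$ while its first execution begins in $[\ell\rho,(\ell+1)\rho)$, so by definition $w\in G_\ell\subseteq G_{<i}$. In every case $w\in G_{<i}$, which completes the argument.

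I do not expect a genuine obstacle here; the only point requiring care is the off-by-one in case (a) --- ruling out that the copy of $w$ completing on $j$ by time $t$ could have started many phases earlier and run for a long time --- which is exactly what the ``no job longer than $\rho$'' assumption together with the no-preemption rule forbid, pinning its start time to within $\rho$ of $t$.
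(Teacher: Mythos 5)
Your proof is correct and takes essentially the same route as the paper, which argues the contrapositive: if a predecessor is not in $G_{<i}$, its first execution starts at or after $i\rho$, so the communication-delay constraint forces a copy of it to complete on $j$ within phase $i$. The only cosmetic difference is your use of the no-job-longer-than-$\rho$ assumption to pin the start of the copy on $j$ to $[t-\rho,t)$, which is not actually needed --- in case (a) it suffices to split on whether that copy starts before or at/after $i\rho$.
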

\begin{proof}
Consider a job $u$ processed on machine $j$ in $\sigma_i$; that is $\sigma[u,j] \in [i\rho, (i+1) \rho)$.  Let $v$ be any predecessor of $u$.  If $v$ is not in $G_{< i}$, then $v$ is first processed at time at least $i \rho$.  Since $u$ is processed at time less than $\rho$ after $i\rho$ and has $v$ as its predecessor, $v$ must be processed in phase $i$ on every machine where $u$ is processed in phase $i$. 
\end{proof}
Our algorithm for transforming an arbitrary schedule $\sigma$ to a no-duplication schedule $\widehat{\sigma}$ consists of transforming each $\sigma_i$, $i \ge 0$, to a  no-duplication schedule $\widehat{\sigma}_i$ that completes $G_i$ in $O(\rho \log^2 n \log m)$ time.  Since each $\sigma_i$ is of length $\rho$, it follows that $\widehat{\sigma}$ is of length $O(M^* \cdot \log^2 n \log m)$.

Fix $i$, and consider schedule $\sigma_i$.  The no-duplication schedule $\widehat{\sigma}_i$ begins with $\rho$ steps allocated for communication delay so that the results of the execution of all jobs in $G_{< i}$ are available to every machine.  So the remainder of the schedule focuses on completing $G_i$.

\begin{lemma}
\label{lem:G_ir.neighbors}
Suppose jobs $u$ and $v$ in $G_i$ share a common predecessor $p$ in $G_i$, and let $m_u$, $m_v$, and $m_p$ denote the number of machines that process $u$, $v$, and $p$, respectively, in $\sigma_i$.
Then, there exist at least $m_u + m_v - m_p$ machines that process both $u$ and $v$ in $\sigma_i$. 
\end{lemma}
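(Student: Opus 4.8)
The plan is to argue by a simple counting/inclusion-exclusion argument, using Lemma~\ref{lem:duplication predecessors} as the key structural input. First I would fix the shared predecessor $p \in G_i$ and let $M_p$ be the set of machines on which $p$ is processed in $\sigma_i$, so $|M_p| = m_p$. The crucial observation is that, by Lemma~\ref{lem:duplication predecessors}, whenever $u$ is processed on some machine $j$ in $\sigma_i$, then since $p$ is a predecessor of $u$ and $p \in G_i$ (hence $p \notin G_{<i}$), the job $p$ must also be processed on $j$ in $\sigma_i$. In other words, the set $M_u$ of machines processing $u$ in $\sigma_i$ satisfies $M_u \subseteq M_p$; likewise $M_v \subseteq M_p$ for the set $M_v$ of machines processing $v$ in $\sigma_i$, where $|M_u| = m_u$ and $|M_v| = m_v$.

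Next I would apply inclusion-exclusion within the ground set $M_p$. Since $M_u, M_v \subseteq M_p$, we have
\[
|M_u \cap M_v| = |M_u| + |M_v| - |M_u \cup M_v| \ge m_u + m_v - |M_p| = m_u + m_v - m_p,
\]
because $|M_u \cup M_v| \le |M_p| = m_p$. The machines in $M_u \cap M_v$ are exactly those that process both $u$ and $v$ in $\sigma_i$, so this gives the claimed bound of at least $m_u + m_v - m_p$ such machines. (If this quantity is non-positive the statement is vacuous, so there is nothing further to check.)

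The only step that requires any care is the invocation of Lemma~\ref{lem:duplication predecessors}: I must make sure the hypotheses are met, namely that $u$ (resp.\ $v$) is processed in $\sigma_i$ on the machine in question and that $p$, being a predecessor of $u$ in $G_i$, is genuinely \emph{not} in $G_{<i}$ — this is immediate since $G_i$ and $G_{<i}$ are disjoint by construction. Everything else is elementary set counting, so I do not expect a real obstacle here; the lemma is essentially a direct corollary of Lemma~\ref{lem:duplication predecessors} together with the pigeonhole/inclusion-exclusion inequality $|A \cap B| \ge |A| + |B| - |U|$ for subsets $A, B$ of a common set $U$.
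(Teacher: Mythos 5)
Your proposal is correct and follows essentially the same route as the paper: the containments $M_u, M_v \subseteq M_p$ (which the paper asserts directly and you justify via Lemma~\ref{lem:duplication predecessors}, since $p \in G_i$ forces $p$ onto every machine processing $u$ or $v$ in $\sigma_i$) followed by inclusion-exclusion inside $M_p$. Your explicit citation of Lemma~\ref{lem:duplication predecessors} just makes transparent a step the paper leaves implicit.
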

\begin{proof}
Let $M_p$ (resp., $M_u$ and $M_v$) denote the set of $m_p$ (resp., $m_u$ and $m_v$) machines processing $p$ (resp., $u$ and $v$) in $\sigma_i$.  Since $M_p \supseteq M_u, M_v$, it follows that at most $m_p - m_u$ (resp., $m_p - m_v$) of the machines in $M_p$ do not process $u$ (resp., $v$).  Thus, at least $m_p -  (m_p - m_u) -  (m_p - m_v) = m_u + m_v - m_p$ machines in $M_p$ process both $u$ and $v$, yielding the desired claim.
\end{proof}

\begin{lemma}
\label{lem:duplication.main}
There exists a polynomial-time computable no-duplication schedule that can complete $G_i$ in $O(\log^2 n \log m)\rho$ steps.
\end{lemma}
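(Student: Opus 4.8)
The plan is to process $G_i$ one group at a time, where groups are determined by the level of duplication of jobs in the original schedule $\sigma_i$. For each job $w$ in $G_i$, let $m_w$ denote the number of machines that process $w$ in $\sigma_i$; since $\sigma_i$ has length $\rho$ and no job exceeds $\rho$ steps, $1 \le m_w \le m$. Partition $G_i$ into $O(\log n \log m)$ groups so that group $g$ consists of all jobs $w$ with $m_w$ lying in the interval $[(1 + 1/(2\log n))^{g}, (1 + 1/(2\log n))^{g+1})$. The key monotonicity fact, which follows directly from Lemma~\ref{lem:duplication predecessors} (every predecessor of a job processed on machine $j$ in $\sigma_i$ is either in $G_{<i}$ or processed on $j$ in $\sigma_i$), is that if $p$ is a predecessor of $w$ within $G_i$ then $m_p \ge m_w$: every machine processing $w$ must also process $p$. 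Hence we process the groups in decreasing order of duplication level $g$, and within a group we may assume all jobs currently being handled have no predecessors among the remaining (unprocessed) jobs of $G_i$ — i.e. we focus on the \emph{sinks} of the sub-DAG induced by the not-yet-scheduled jobs of the current group.

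For a fixed group, let $D$ be the set of its sink jobs (as above). Build the undirected ``conflict'' graph $H$ on $D$ with an edge between two sinks iff they share a common predecessor in $G_i$. I would then invoke a classical low-diameter decomposition (e.g.\ the ball-growing / region-growing partition of~\cite{awerbuch+p:partition,linial+s:decomposeJ,peleg:distributeBook}) on $H$ with radius parameter $\Theta(\log n)$: this produces a partition of $H$ into $O(\log n)$ color classes such that each color class is a disjoint union of ``clusters,'' each of weak diameter $O(\log n)$ in $H$, and such that clusters in the same class do not share a common predecessor. Taking the largest color class yields a subset $D' \subseteq D$ of size $\Omega(|D|/\log n)$ with this structure. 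The heart of the argument is to show $D'$ can be completed in a \emph{single} no-duplication phase: for each cluster $B \subseteq D'$, since $B$ has diameter $O(\log n)$ in $H$, any two jobs of $B$ are joined by a path of $O(\log n)$ conflict-edges; chaining Lemma~\ref{lem:G_ir.neighbors} along such a path (using that all jobs in $B$ lie in the same duplication group, so their $m_w$ values are within a $(1 + 1/(2\log n))^{O(\log n)} = O(1)$ factor of each other, hence $m_u + m_v - m_p = \Omega(m_u)$ at each step) shows there is a common machine processing every job of $B$ in $\sigma_i$. Thus the jobs of each cluster $B$ can all be placed, in topological order, on one machine, completing in at most $\rho$ time; and distinct clusters in $D'$ use distinct machines and, by the decomposition property, share no common predecessor, so we never need to duplicate a predecessor — every predecessor of a job in $D'$ is either in $G_{<i}$ (already available) or on the same machine. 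So one phase of length $\rho$ completes $\Omega(|D|/\log n)$ sinks of the current group.

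Iterating this subroutine within a group: each application removes an $\Omega(1/\log n)$ fraction of the remaining jobs of the group, so after $O(\log n \cdot \log |D|) = O(\log^2 n)$ phases the group is exhausted — wait, more carefully, each application removes an $\Omega(1/\log n)$ fraction of the current \emph{sink set}, and after removing all current sinks the DAG depth drops; since the induced sub-DAG on one group has at most $n$ jobs, $O(\log n)$ rounds of ``peel all sinks'' suffice, each round costing $O(\log n)$ phases to drain the sink set by the fraction bound, for $O(\log^2 n)$ phases per group. Summing over the $O(\log n \log m)$ groups gives $O(\log^3 n \log m)$ phases, each of length $\rho$ — slightly worse than claimed, so to hit $O(\log^2 n \log m)$ I would instead observe that across \emph{all} groups the total work is organized so that the $O(\log n)$ depth-peeling rounds and the $O(\log n)$ decomposition colors interact only multiplicatively once, not per group: more precisely, handle the $O(\log n \log m)$ groups and within each the $O(\log n)$ decomposition-color iterations, and charge the depth-reduction globally, yielding $O(\log^2 n \log m)$ phases total, hence makespan $O(\log^2 n \log m)\rho$ as desired.

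The main obstacle I anticipate is the bookkeeping in the last paragraph: getting the phase count down to exactly $O(\log^2 n \log m)$ rather than a naive $O(\log^3 n \log m)$ requires carefully amortizing the low-diameter decomposition's $\log n$ factor against the group count $\log n \log m$ and the DAG-depth-peeling $\log n$, and arguing that these do not all multiply. The other delicate point is verifying the ``chaining'' of Lemma~\ref{lem:G_ir.neighbors} along an $O(\log n)$-length path in $H$: one must ensure the lower bound $m_u + m_v - m_p$ stays positive (indeed $\Omega(m_u)$) at every step, which is exactly why the group width $(1 + 1/(2\log n))$ is chosen so that a path of length $O(\log n)$ only loses a constant factor in the machine counts — I would make this quantitative by an explicit induction on the path length showing the common-machine set shrinks by at most a $(1 - O(1/\log n))$ factor per edge.
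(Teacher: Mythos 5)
Your overall architecture matches the paper's (the same duplication-level grouping with width $1+1/(2\log n)$, processing groups in decreasing duplication order, a conflict graph on sinks, short conflict paths controlled via Lemma~\ref{lem:G_ir.neighbors}, and a low-diameter decomposition), but two steps have genuine gaps. First, your claim that each cluster $B$ of diameter $O(\log n)$ has a \emph{single} machine processing every job of $B$ does not follow from chaining Lemma~\ref{lem:G_ir.neighbors}. The chaining controls, for each individual job $u\in B$, the number of machines processing both the cluster center $s$ and $u$ (at least $(1+\mu)^r(1-\mu)^{\log n}\ge (1+\mu)^r/2$ with $\mu=1/(2\log n)$, out of fewer than $(1+\mu)^{r+1}$ machines processing $s$), but when you intersect over \emph{all} members of $B$ you lose a constant fraction of $s$'s machines per member, not per path edge; with more than a constant number of jobs in the cluster the common intersection can be empty, so your induction "the common-machine set shrinks by at most $(1-O(1/\log n))$ per edge" cannot be made to cover the whole cluster. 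The paper's fix is an averaging argument: since every $u\in S_\ell$ shares at least half of $s_\ell$'s machines, some machine $j_\ell$ processes at least $|S_\ell|/2$ of the cluster in $\sigma_i$, and only that sub-cluster $S'_\ell$ (together with its predecessors in the group) is placed on $j_\ell$; this still completes a constant fraction of all sinks in the phase.

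Second, the phase count. Keeping only the largest color class of an $O(\log n)$-color decomposition completes only an $\Omega(1/\log n)$ fraction of sinks per phase, giving $O(\log^2 n)$ phases per group and $O(\log^3 n\log m)$ overall; the "global amortization" you invoke to remove a $\log n$ factor is not justified and has no evident mechanism, since the $O(\log n\log m)$ groups must be handled sequentially (every predecessor of a job inside $G_i$ has duplication at least as large, so its group is processed no later) and phases within a group are sequential, so the factors genuinely multiply. The paper instead uses one-shot ball growing that discards only the boundary of each ball: it yields pairwise non-adjacent clusters of radius at most $\log n$ containing at least half of the remaining sinks, so with the averaging step each phase finishes at least a quarter of the remaining sinks, i.e.\ $O(\log n)$ phases per group and $O(\log^2 n\log m)$ phases in total. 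Relatedly, your extra "depth peeling" rounds are an artifact of taking the working set to be jobs with no unprocessed predecessors; with the paper's convention (sinks are jobs with no successors in the group, and each selected sink is scheduled together with all of its not-yet-scheduled predecessors in the group on the same machine) the group is finished exactly when its sinks are, so no depth factor arises at all, and predecessors lying in earlier-processed, higher-duplication groups (or in $G_{<i}$, by Lemma~\ref{lem:duplication predecessors}) are already complete and need not be co-located.
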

\begin{proof}
Recall that $\sigma_i$ is a one-phase schedule for completing $G_i$ (with possible duplication of jobs).  We divide the jobs of $G_i$ into groups based on the number of machines they are replicated on in $\sigma_i$.  For any integer $r \ge 0$, let $G_{ir}$ denote the subset of jobs in $G_i$ with the amount of duplication in $[(1 + \mu)^r, (1 + \mu)^{r+1})$, where $\mu  = 1/(2\log n)$.  Since the maximum amount of duplication is $m$, we obtain that the number of subgroups $r^*$ is at most $\log_{1 + \mu} m = O(\log m \log n)$.

We first argue that for any $r$, any job in $G_{ir}$ has no predecessor in $G_{ir'}$ for $r' < r$.  To see this, note that any job $u$ in $G_{ir}$ is replicated on at least $(1 + \mu)^r$ machines.  Since there is no communication in $\sigma_i$, it follows that any predecessor of $u$ needs to be executed on every machine where $u$ is executed, which implies that that any predecessor of $u$ is in $G_{ir'}$ for some $r' \ge r$.  

Our algorithm consists of computing a no-duplication schedule for $G_{ir}$, in order from $r = r^*$ to $r = 0$.  In the remainder, we show that any $G_{ir}$ can be completed by a no-duplication schedule in $O(\log n)$ phases.  Together with the bound on the number of subgroups, this yields the desired bound.

Fix integer $r \in [0, r^*]$.  We show how to construct a no-duplication schedule that completes at least $1/4$ of the sinks (jobs with no successors) in $G_{ir}$ in one phase.  Repeating this at most $2\log n$ times completes the scheduling of all the sinks of $G_{ir}$, and hence also all of $G_{ir}$ in $2 \log n$ phases (non-sink jobs are scheduled with sink jobs for which they are required).  

We construct an auxiliary undirected graph $H$ over the sinks in $G_{ir}$ as follows: there is an edge between sink job $u$ and sink job $v$ if and only if $u$ and $v$ share a common predecessor in $G_{ir}$.  Using a standard ball-growing technique (or the notion of sparse partitions), we determine a collection $\{S_\ell\}$ of disjoint sets of sinks in $H$ such that (a) in every set $S_\ell$, there exists a sink $s_\ell$ that is within $\log n$ hops of every sink in $S_\ell$, (b) for any distinct $\ell, \ell'$ and any two sinks $s \in S_\ell$ and $s' \in S_{\ell'}$,  $s$ is not adjacent to $s'$; and (c) the total number of sinks in the collection is at least $|H|/2$.  Our  algorithm for obtaining a collection $\{S_\ell\}$ is as follows.  For any undirected graph $K$, vertex $v \in K$, and integer $x \ge 0$, let $B_x(K, v)$ denote the ball of radius $x$ around $v$ in $K$.
\begin{enumerate}
    \item Set $H'$ to $H$ and $\ell$ to 0.
    \item Repeat until $H'$ is empty:
    \begin{enumerate}
        \item Let $s_\ell$ be an arbitrary node in $H'$.
        \item Determine the smallest $x$ such that $|B_{x+1}(H',s_\ell)| \le 2|B_x(H',s_\ell)|$. 
        \item Set  $S_\ell$ to $B_x(H', s_\ell)$ and $H'$ to $H' \setminus B_{x+1}(H', s_\ell)$.
    \end{enumerate}
\end{enumerate}
We now argue the three properties we desire.  For (a), we note that in step~2a, $x \le \log n$ since otherwise $|B_{y+1}(H', s_\ell)| > 2|B_y(H', s_\ell)|$ for $0 \le y  < \log n$, implying that $|B_{\log n}(H', s_\ell)|$ exceeds $n \ge |H'|$, a contradiction.  For (b), we note that once we include a set $S_\ell$, we remove all sinks in $H' \setminus S_\ell$ that are adjacent to a sink in  $S_\ell$, which ensures that any sink in $S_\ell$ is not adjacent to any sink in $S_{\ell'}$ for $\ell' > \ell$, thus establishing (b).  Finally, for (c), we observe that  when $S_\ell$ is included in  the collection, we remove a set of size at most $2|S_\ell|$ from  $H'$, implying that the total number of sinks in the collection $\{S_\ell\}$ is  at least $|H'|/2$, as desired.

Consider any edge $(u, v)$ in $H$.  By Lemma~\ref{lem:G_ir.neighbors}, since $u$ and $v$ share a predecessor in $G_{ir}$, it follows that there exist at least $(1 + \mu)^r(1 - \mu)$ machines that process both $u$ and $v$.  Let $u$  be any job in $S_\ell$.  By a repeated application of the lemma along the shortest path from $s_\ell$ to $u$, we obtain that $s_\ell$ and $u$ are processed on at least $(1 + \mu)^r(1 - \mu)^{\log n} \ge (1 + \mu)^r/2$ machines.  By a standard averaging argument, it follows that there is a machine $j_\ell$ that processes a subset $S'_\ell$ of at least $|S_\ell|/2$ of the jobs in $S_\ell$ in $\sigma_i$.  

The desired no-duplication schedule, which we denote by $\widehat{\sigma}_H$ then consists of processing $S'_\ell$ and all of its predecessors in $G_{ir}$ on machine $j_\ell$, for every $\ell$.  Since no two jobs in $S_\ell$ and $S_{\ell'}$ share  any predecessors, it follows that no job 
is executed on more than one machine, hence ensuring that $\widehat{\sigma}_H$ is indeed a no-duplication schedule.  Furthermore, since the jobs scheduled by $\widehat{\sigma}_H$ on a given machine $j$ is a subset of the jobs scheduled by $\sigma_i$ on $j$, $\widehat{\sigma}_H$ completes in a phase.  Finally, since $|S'_\ell| \ge |S_\ell|/2$ and $|\cup_\ell S_\ell| \ge |H|/2$, it follows that at least $|H|/4$ of the sinks are completed in $\widehat{\sigma}_H$.  We thus have obtained a no-duplication schedule that completes at least $1/4$ of the sinks in $G_{ir}$ in one phase,
thus completing the proof of the lemma. 
\end{proof}

For the special case  where every job completes in $\rho$ steps in $\sigma$, the theorem follows immediately from Lemma~\ref{lem:duplication.main}.  The final no-duplication schedule consists of appending the no-duplication schedules for $G_i$, $i \ge 0$.

For the general case, we extend the above proof by first marking the jobs in $G_i$ that begin in a phase but end at a different phase; there is at most one job on each machine in a given phase.  We then apply the above proof to all the jobs in $G_i$.  In our schedule, any marked job, if executed, would be the last job scheduled on the respective machine.  We  add an additional delay of $\rho$ so that any marked jobs that complete in the following phase in $\sigma$ are completed in the no-duplication schedule.  If a machine works on its marked job for the next $t > 1$ phases in $\sigma$, then we remove the machine from consideration for the next $t$ iterations since it is not executing any jobs in $G_{i+1}$ through $G_{i+t}$. 

\section{Open Problems}
We have presented the first approximation algorithms for scheduling precedence-constrained jobs of non-uniform sizes on related machines with a fixed communication delay, with the objective of minimizing makespan.  Using standard arguments, we can extend our results to the objective of weighted completion times.  Our work leaves several open problems and directions for future research.   Can we improve on the approximation factor achieved for general schedules?  Is there a $\omega(1)$ hardness of approximation for the problem?  We conjecture that the integrality gap of the relaxations is $\Omega(\log \rho/\log\log \rho)$. Improving the current bound and broadening the class of programs is of interest.  Also, there is a gap between the lower and upper bounds for the duplication advantage.  Narrowing this gap, and finding better approximation algorithms for no-duplication schedules would be useful for scenarios where job duplication is not a viable option.

We believe the most significant direction for future research is to study the scheduling problem under more general communication delay environments.  From a practical standpoint, developing algorithms that account for delays in a hierarchical network, which may be modeled for instance by a hierarchically well-separated metric, would be valuable for many datacenter scheduling problems.

\appendix

\section{Scheduling Without Duplication}
\label{sec:no_dup}
\junk{We consider the case where the duplication of jobs in multiple machines is not allowed, in order to arrive at a lower bound on the number of time steps required. Instead of a DAG we consider the special case of a directed binary tree. A lower bound on this naturally extends to the general DAG.}



\begin{lemma}[{\bf Lower bound on duplication advantage}]
\label{lem:duplication}
There is an instance for which the makespan of an optimal no-duplication schedule is at least $ \Omega(\frac{\rho}{\log \rho})$ times that of an optimal schedule.
\end{lemma}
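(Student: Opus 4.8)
The plan is to give a single instance --- a complete directed binary tree --- on which the optimum exploits duplication freely while any no-duplication schedule is throttled by the interaction of precedence and communication delay. Fix a large $\rho$, let $G$ be the complete rooted binary tree of depth $\rho$ with every edge directed from parent to child, give every job unit size, and take $m=2^{\rho}$ unit-speed machines with communication delay $\rho$, so that $n=2^{\rho+1}-1$ and $m=2^{\rho}$ (matching $n/2 \approx m$ up to lower-order terms). The upper bound on the optimal makespan is immediate: there are exactly $2^{\rho}$ root-to-leaf paths, we assign one to each machine, and processing the $\rho+1$ jobs of a path in topological order on a single machine requires no communication and finishes by time $\rho+1=O(\rho)$.

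The core of the argument is the lower bound for no-duplication schedules. Fix any no-duplication schedule of makespan $M$, partition $[0,M)$ into $K=\lceil M/\rho\rceil$ \emph{phases} of length $\rho$, and let $D_k$ be the set of jobs completed by time $k\rho$, so $D_0=\varnothing$ and $D_K=V$. I would prove by induction on $k$ that (i) $D_k$ is ancestor-closed, hence a rooted subtree of $G$, and (ii) $|D_{k+1}| \le (|D_k|+1)\rho + |D_k|$. For (ii), consider phase $k+1$ on a fixed machine $i$: nothing completed inside the phase can reach another machine before the phase ends (a job finishing at time $t<(k+1)\rho$ becomes available elsewhere only at $t+\rho\ge(k+1)\rho$), so every predecessor of a job that $i$ processes in phase $k+1$ must already lie in $D_k$ or be processed earlier in the same phase on $i$ itself, and without duplication it is \emph{forced} onto $i$. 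Consequently every job of $D_{k+1}\setminus D_k$ lies in the subtree hanging below some \emph{frontier} node of $D_k$ (a node outside $D_k$ whose parent is in $D_k$); and since a frontier node is the unique entry into its subtree and lives on a single machine, all jobs of $D_{k+1}\setminus D_k$ below a given frontier node are processed on one machine, which handles at most $\rho$ unit jobs per phase. A rooted subtree of a binary tree of size $s$ has at most $s+1$ frontier nodes, so $|D_{k+1}\setminus D_k|\le(|D_k|+1)\rho$, giving (ii); ancestor-closedness of $D_{k+1}$ follows since the predecessors of any newly completed job lie in $D_k$ together with the earlier phase-$(k+1)$ jobs on its machine.

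Unrolling the recurrence yields $|D_k|\le(c\rho)^k$ for an absolute constant $c$, so $D_K=V$ forces $(c\rho)^K\ge n=2^{\rho+1}$, i.e. $K=\Omega(\rho/\log\rho)$, hence $M>(K-1)\rho=\Omega(\rho^2/\log\rho)$. Dividing by the $O(\rho)$ optimum gives the claimed $\Omega(\rho/\log\rho)$ duplication advantage; and because $\rho=\log m=\Theta(\log n)$ in this instance, the same example also yields the $\Omega(\log m/\log\log m)$ and $\Omega(\log n/\log\log n)$ bounds stated in the introduction.

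The step I expect to be the main obstacle is the frontier bound $|D_{k+1}\setminus D_k|\le(|D_k|+1)\rho$: it is exactly where the no-duplication hypothesis is used (two machines cannot both make progress below the same frontier node in one phase) and where the communication delay is used (no intra-phase coordination). Dropping either ingredient weakens the recurrence to $|D_{k+1}|\le|D_k|+m\rho$, which is far too weak to force $\Omega(\rho/\log\rho)$ phases. I would also be careful about the base case: since $D_0=\varnothing$, every non-root job has the root as an uncompleted predecessor, so in phase $1$ only the machine that processes the root does any work at all, and it processes a rooted subtree of at most $\rho$ nodes, giving $|D_1|\le\rho$ and anchoring the induction.
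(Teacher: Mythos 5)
Your proposal is correct and follows essentially the same route as the paper: the identical binary out-tree instance (up to one level and the auxiliary root), the same $\rho$-length phase decomposition, and the same key observation that without duplication each "active" subtree of the uncompleted forest is confined to a single machine, so the number of completed jobs can grow only by a factor of $O(\rho)$ per phase, forcing $\Omega(\rho/\log\rho)$ phases. The only difference is cosmetic: you bound the number of active subtrees by counting frontier nodes of the ancestor-closed completed set ($\le s+1$), whereas the paper bounds the number of independent subtrees via an exchange argument against the level-by-level configuration; your counting is arguably cleaner but yields the same bound.
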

\begin{proof}
Our lower bound instance is a directed complete binary out-tree $G = (V,E)$  whose edges define the precedence order from the root $v_0$ to the leaves. In order to simplify the arguments below, we also include an additional node $v'$ that precedes $v_0$.  So $u \prec v$ if and only if $u$ lies on the path from $v'$ to $v$. $G$ has $2^{\rho-1}$ leaves and $\rho$ levels (not including $v'$). All jobs have unit processing time, i.e. for all $v \in V, \proc{v} = 1$.  We set $m = 2^{\rho-1}$, and assume all machines have unit speed, i.e. for all $i \in M, \speed{i} = 1$.  

If duplication is allowed, then $G$ can be scheduled on $M$ in $\rho+1$ time by executing each path from $v'$ to leaf in topological order on a separate machine. \junk{Additionally, if $\rho$ is the communication delay, the best known approximation algorithm, given in \cite{LR02}, gives a schedule of length $O(\rho \log \rho)$.} We prove that the makespan of any no-duplication schedule of $G$ on $M$ is $\Omega(\rho^2 / \log \rho)$. \junk{To show this, we define a slightly different model than the one defined in section \ref{sec:prelim}.}

\junk{It follows from communication delay requirements.
The new model is identical to the one given in section \ref{sec:prelim} except with regard to the communication delay. As before, all of $\predex{v}$ must be completed before $v$ can be executed. In the new model, however, a schedule $\sigma$ is not required to wait $\rho$ time between the completion of a predecessor of $v$ on one machine and the execution of $v$ on a different machine. Instead, $\sigma$ is permitted to execute $v$ at any time on any machine in $\rhophase$ $t$ if all predacessors of $v$ have been completed by the end of some $\rhophase$ $t' \le t$. There is no communication within a $\rhophase$. Intuitively, this amounts to instantaneous communication at time $\rho t$ of all jobs completed in $\rhophase$ $t$ to all machines.

\begin{claim}
    Any schedule that satisfies the conditions given in section \ref{sec:prelim} with $\rho$ communication delay also satisfies the conditions given in the new model. 
\label{claim:new_old_models}
\end{claim}

\begin{proof}
    Let $\sigma$ be some schedule that satisfies the conditions given in section \ref{sec:prelim} with $\rho$ communication delay. By definition of communication delay, the result of any job completed within any $\rhophase$ $t$ of $\sigma$ is not communicated to any other machine until some following $\rhophase$ $t'$. The claim follows.
\end{proof}

Let $C^*$ be the optimal makespan of any schedule that satisfies the new model, and let $\widehat{C}^*$ be the optimal makespan of any schedule that satisfies the conditions given in section \ref{sec:prelim}. By claim \ref{claim:new_old_models}, $C^* \le \widehat{C}^*$.}

Let $\sigma$ be any no-duplication schedule of $G$ on $M$.  We define the $t$th $\rhophase$ to be the time interval $[(t-1)\rho, t\rho)$.  In the following, we will show that the maximum number of jobs that can be completed in $t$th phase $\rhophase$ of $\sigma$ is at most $2\rho^t$. Since the union of all levels $(t-1)\log(\rho) + 1$ to $t \log(\rho)+1$ contains at least $\rho^t$ jobs, this implies that the number of $\rhophase$s in $\sigma$ is at least $\rho / (2\log \rho)$. So we have that $C^* \ge \rho^2/(2\log \rho)$, from which the lemma follows.

In the remainder of this proof, we show by induction on $t$
that the maximum number of jobs that can be completed in $t$th phase $\rhophase$ of $\sigma$ is at most $2\rho^t$.
For the base case, consider $t = 1$. By the structure of the graph, $v'$ must be executed before any other job, and this job can be scheduled on only one machine. Therefore, only one machine can run in the first $\rhophase$, and it can execute at most $\rho$ jobs.
    
    Suppose the claim holds up to some $t \ge 1$. Then the total number of jobs executed in the first $t$ $\rhophase$s is at most $2\rho^t$, by the induction hypothesis. Since there is no duplication, the only jobs in the remaining graph that can be scheduled on different machines are those that share no predacessors. Let $V_t$ be the set of jobs completed in the first $t$ $\rhophase$s. Then the number of machines on which we can schedule jobs in $\rhophase$ $t+1$ is equal to the number of independent trees in the subgraph $G_t$ of $G$ induced on $V \setminus V_t$. We show that this number is no more than $2\rho^{t}$.
    
    We show that there are at most $2\rho^t$ independent trees in $G_t$ using a simple exchange argument. 
    Note that, for any set of jobs $V_t$ completed in the first $t$ $\rhophase$s of $\sigma$, for every $v \in V_t$, $\{u : v_0 \in \predex{u} \text{ and } u \in \predex{v} \} \subseteq V_t$. Let $V[\ell]$ be the union of all sets of jobs in levels $0,\ldots,\ell$ of $G$. Define $V_t = V[t\log\rho]$.
    Then the number of independent subtrees in $G_t$ is at most $2\rho^t$.
    
    We can reach any other configuration by swapping one job at a time: if $v \in V_t$ and $v$ has no descendants in $V_t$ and $u \not\in V_t$ and $\predex{u} \subseteq V_t$, then we replace $v$ with $u$. This swap maintains the property that all jobs in $V_t$ have all their predacessors also in $V_t$. Note, however, that executing a swap can only decrease the number of resulting independent subtrees--if we swap internal node for internal node, the number of subtrees remains the same, but if we swap internal node for leaf, the number of resulting subtrees decreases. Therefore, the maximum number of independent subtrees in $G_t$ is $2\rho^t$.
    
    This entails that the total number of machines that can execute jobs in $\rhophase$ $t+1$ is at most $2\rho^t$. Since each machine can execute at most $\rho$ jobs in a single $\rhophase$, the total number jobs completed after $\rhophase$ $t+1$ is at most $2\rho^{t+1}$.
\end{proof}

We note that the above lower bound is tight for this instance as we can construct an $O(\rho^2 / \log \rho)$ length schedule of $G$ on $M$ as follows. In the first two $\rhophase$s, run one machine for $\rho$ steps to execute all jobs in the first $\log(\rho)$ levels. In the next two $\rhophase$s, we can complete the next $\log (\rho)$ levels by repeating the same process for all $\rho$ subtrees freed up by the previous phases. In general, for each $\rhophase$s $t$ and $t+1$, we can run $\rho^{t-1}$ machines to complete the first $\log(\rho)$ levels of all independent subtrees of $G_t$. This yields a schedule with $O(\rho/\log\rho)$ $\rhophase$s with makespan $O(\rho^2 / \log \rho)$.

\section{Proof of Lemma~\ref{lem:elim_slow}}
\label{sec:elim_slow_machines}

\junk{
We prove Lemma~\ref{lem:elim_slow} in two isolated steps. For a given instance $I = (G,M,\delay)$, we first consider a schedule $\sigma_0$ in which some jobs are scheduled on machines $M_0$ where $M_0 = \{i \in M: \speed{i}<\speed{m}/m\}$. Recall that machine $m$ is the fastest machine in $M$. From $I$, we construct a schedule $\sigma_1$ in which all jobs assigned to some machine in $M_0$ are executed on an added machine $m'$ of speed $\speed{m}$. We show that the makespan of $\sigma_1$ is no more than three times the makespan of $\sigma_0$. We then construct the schedule $\sigma_2$ from $\sigma_1$ in which all jobs execued on machine $m'$ are scheduled on machine $m$. We show that the makespan of $\sigma_2$ is no more than twice the makespan of $\sigma_1$, thereby proving the lemma. 
}

Let $\sigma_0$ be some schedule of $(G,M,\delay)$. We define phase $\tau$ of $\sigma_0$ to be the period of time $[\rho(\tau-1), \rho \tau)$ in $\sigma_0$. We also define $M_0 = \{i \in M: \speed{i}<\speed{m}/m\}$.
We first construct a schedule $\sigma_1$ on $(G,(M \setminus M_0) \cup \{m'\}, \delay)$ where $\speed{m'} = \speed{m}$ (see Figure \ref{fig:elim_slow_machines}). The construction maps all the job start times in a given phase of $\sigma_0$ to a start time in a \textit{step} of the new schedule $\sigma_1$. We define a step of $\sigma_1$ as follows. For any phase $\tau$ of $\sigma_0$, let $U_{\tau}$ be the set of all jobs started on some machine in $M_0$ in phase $\tau$. Then step $\tau$ begins at time 
\[ b(\tau) = \sum_{\tau' < \tau} \Big( \max\{ \rho, \setproc{U_{\tau'}}/\speed{m} \} + \rho \Big)\] 
and ends at time $b(\tau+1) = e(\tau)$.
\begin{figure}
    \centering
    \includegraphics{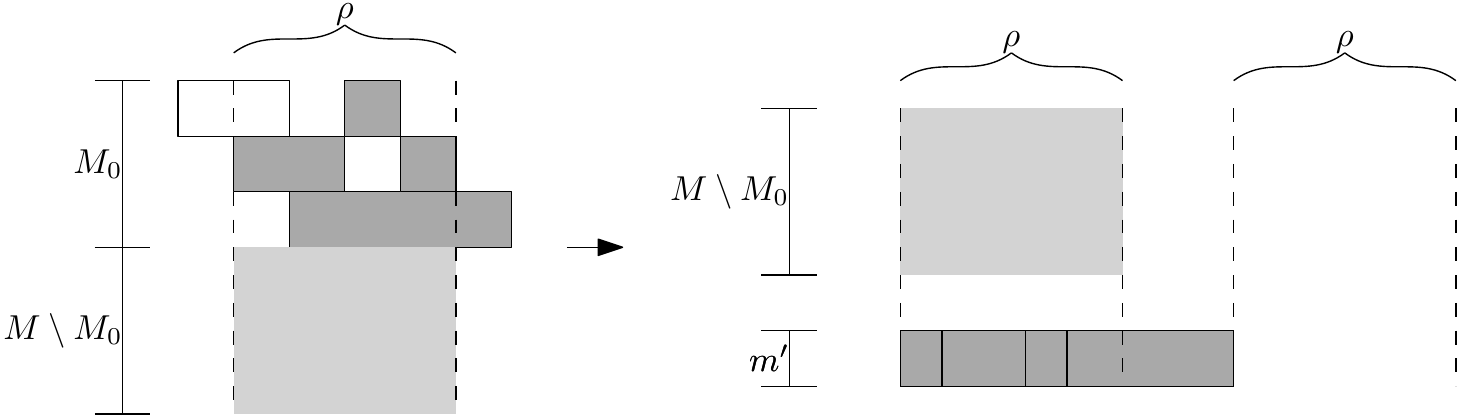}
    \caption{Constructing a step of $\sigma_1$ (right) from a phase of $\sigma_0$ (left). In each, machines are shown vertically on the left and time increases from left to right. Bordered boxes represent jobs. The light gray borderless box represents the jobs being executed on $M \setminus M_0$. All jobs starting in this phase are placed in topological order on the new machine $m'$ and a $\delay$ delay inserted after their completion.  The white job is started in a previous phase and so is ignored.}
    \label{fig:elim_slow_machines}
\end{figure}
We construct an arbitrary step $\tau$ of $\sigma_1$. For each job $v \in U_{\tau}$, schedule $v$ on machine $m'$ at the earliest possible time after $b(\tau)$, maintaining topological order. For each job $v$ that starts on some machine in $M\setminus M_0$ in phase $\tau$, we set 
\[ \sigma_1[v,i] = b(\tau) + \sigma_0[v,i] - (\tau-1)\rho . \]

We now construct a schedule $\sigma_2$ of $(G, M \setminus M_0, \delay)$ (see Figure \ref{fig:elim_extra_machine}). We order all jobs that are executed on machine $m'$ by their start times on $m'$. This gives the ordering $v_1, \ldots, v_L$ where $\sigma_1[v_{\ell},m'] < \sigma_1[v_{\ell+1},m']$. We also define the set $W_{\ell}$ of pairs $(v,i)$ such that $i \ne m'$ and $\sigma_1[v_{\ell},m'] \le \sigma_1[v,i]$ and $\sigma_1[v,i] < \sigma_1[v_{\ell+1},m']$ if $v_{\ell+1}$ exists (and $\sigma_1[v,i] \ne \bot$).
Finally, we define $\mathsf{fin}(v,i,x) = \sigma_x[v,i] + \proc{v}/\speed{i}$ and $u_{\ell} = \arg\max_{v: (v,m) \in W_{\ell}} \{ \sigma_1[v,m]\}$. 
We then construct $\sigma_2$ in $L$ stages, where $L$ is the number of jobs on $m'$. $\sigma_{(2,0)} = \sigma_1$ and
\begin{equation*}
    \sigma_{(2,\ell+1)}[v,i] =
    \begin{cases}
        \sigma_{(2,\ell)}[v,i] &\text{if}~ (v,i) \in \bigcup_{\ell' = 1}^{\ell} W_{\ell'}
        \\
        \mathsf{fin}(u_{\ell},m,(2,\ell))  &\text{if}~ v = v_{\ell+1}
        \\
        \mathsf{fin}(u,m,(2,\ell)) + (\sigma_1[v,m'] - \mathsf{fin}(u_{\ell},m,1)) + &\text{if}~ (v,i) \in \bigcup_{\ell' = \ell+1}^L W_{\ell'}.
        \\
        \quad  \min\{0,\mathsf{fin}(v_{\ell+1},m',1) - \sigma_1[v,m'] \}
    \end{cases}
\end{equation*}
$\sigma_2 = \sigma_{2,L}$.

\begin{figure}
    \centering
    \includegraphics[width=\textwidth]{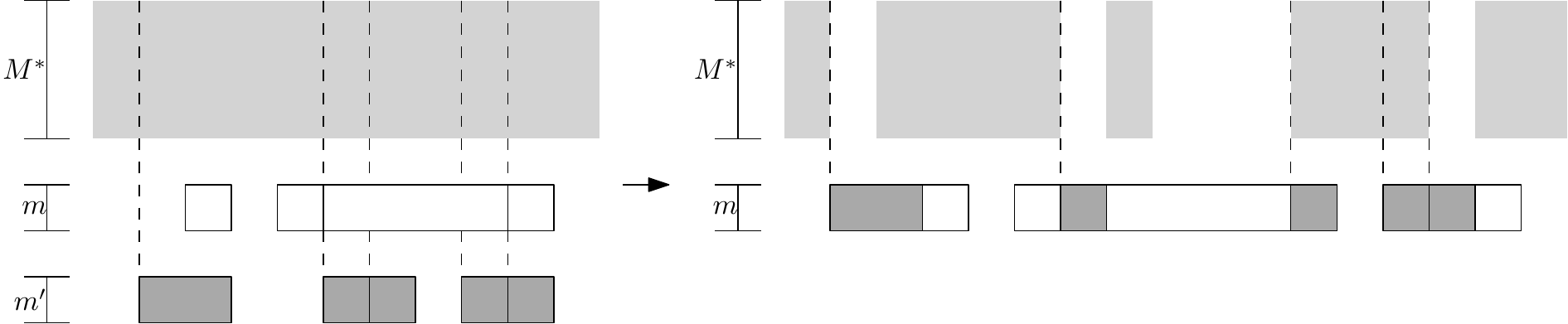}
    \caption{Constructing $\sigma_2$ (right) from $\sigma_1$ (left). In each, machines are shown vertically on the left and time increases from left to right. $M^* = (M \setminus M_0) \setminus \{m\}$. Dashed lines indicate starting times of jobs on $m'$. Bordered boxes represent jobs. Unbordered regions represent starting times of jobs. Note that jobs on $M^*$ may be executing in the white regions of $\sigma_2$. All jobs on $m'$ in $\sigma_1$ are placed immediately prior to later jobs on $m$ in $\sigma_2$, and all starting times are correspondingly delayed.} 
    \label{fig:elim_extra_machine}
\end{figure}

\begin{claim}
    If $\sigma_0$ is a valid schedule of $(G,M,\delay)$, then $\sigma_2$ is a valid schedule of $(G,M\setminus M_0, \delay)$.
\end{claim}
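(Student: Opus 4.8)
I would establish validity in the same two stages that the construction proceeds in: first show that $\sigma_1$ is a valid schedule of $(G,(M\setminus M_0)\cup\{m'\},\delay)$, and then show that each refinement $\sigma_{(2,\ell)}\to\sigma_{(2,\ell+1)}$ preserves validity, so that $\sigma_2=\sigma_{(2,L)}$ is valid on $(G,M\setminus M_0,\delay)$. The unifying observation is that every transformation we perform relabels start times by a \emph{monotone stretching} map $f$ on the time axis: $f$ is non-decreasing and $f(t_2)-f(t_1)\ge t_2-t_1$ for $t_1\le t_2$. Such an $f$ automatically preserves the precedence ordering of any two jobs and never shrinks a $\delay$-length window, so it cannot break a communication constraint between jobs that do not move; all that remains is to track the jobs that \emph{do} change machines.

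For $\sigma_1$: phase $\tau$ of $\sigma_0$ becomes step $\tau$ of length $\max\{\delay,\setproc{U_\tau}/\speed{m}\}+\delay\ge 2\delay$, and the shift applied to a job started in phase $\tau$ on a surviving machine is $b(\tau)-(\tau-1)\delay$, which increases by at least $\delay$ from one phase to the next; this is exactly the monotone-stretching property. No overlap occurs on a machine in $M\setminus M_0$ because within a step the within-phase offsets are preserved while the per-step length is at least $\delay$, and a job of $\sigma_0$ that spans several phases blocks its machine for those phases in $\sigma_0$, hence nothing is laid on top of it in $\sigma_1$. The jobs of $U_\tau$ are packed sequentially on $m'$ into $[b(\tau),b(\tau)+\setproc{U_\tau}/\speed{m}]$, which is disjoint from the corresponding interval of any other step thanks to the trailing $\delay$-gap. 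For precedence and communication one checks: if the surviving copy of a predecessor $u$ of $v$ is on $M\setminus M_0$, its shift is at most that of $v$ (since $u$ finishes no later than $v$ starts, $u$ started in a phase at most that of $v$), so the $\sigma_0$-gap is only stretched; if that copy came from an $M_0$-machine and was moved to $m'$, then because $u$ finished at least $\delay$ before $v$ started in $\sigma_0$ it sat in a phase $\tau_u'$ with $\tau_u'\le\tau_v-1$, hence completes in $\sigma_1$ by $b(\tau_u'+1)-\delay\le b(\tau_v)$, which is at most the start of $v$ in $\sigma_1$, giving the required margin; and two jobs landing in the same step on $m'$ are placed in topological order, so same-machine precedence is immediate.

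For $\sigma_1\to\sigma_2$: I would induct on $\ell$ with the invariant that $\sigma_{(2,\ell)}$ is a valid schedule of $(G,(M\setminus M_0)\cup\{m'\},\delay)$ whose jobs on $m'$ are exactly $v_{\ell+1},\dots,v_L$, with their $m'$-start-times unchanged from $\sigma_1$. The $(\ell+1)$th step moves $v_{\ell+1}$ onto $m$ immediately after $u_\ell$ (the last $m$-job of window $W_\ell$) finishes, and delays the events in windows $W_{\ell+1},\dots,W_L$ by the same monotone stretching map, with the $\min\{0,\cdot\}$ terms crediting any idle stretch on $m$ just before $\sigma_1[v_{\ell+1},m']$ so that no slack is wasted. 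Since $\speed{m'}=\speed{m}$, job $v_{\ell+1}$ runs on $m$ in $\proc{v_{\ell+1}}/\speed{m}$ time and fits into the gap before the next $m$-job, which has been pushed back by at least that much; so $m$ stays conflict-free and its $\sigma_2$-job set is precisely the $\sigma_1$-jobs of $m$ and of $m'$ interleaved in start-time order. Precedence across that interleaving holds because $u\prec v$ in $\sigma_1$ with a surviving copy of $u$ on $m$ or $m'$ forces $u$ to complete before $v$ starts there, hence $u$ precedes $v$ in the merged order; and communication to and from the other machines survives by the stretching property exactly as before.

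The hard part will be the bookkeeping in the second stage: verifying that the window-by-window shift is genuinely monotone and stretching over the \emph{entire} time axis (not merely inside one window), and that the idle-time credits never let two jobs on $m$, or a job on $m$ and its predecessor elsewhere, violate their $\delay$-separation. The cleanest way to discharge this is to rewrite each stage's transformation as a single global map $f$ on times, prove the two inequalities ``$f$ non-decreasing'' and ``$f(t_2)-f(t_1)\ge t_2-t_1$'' once, and observe that all relocated jobs land on $m$ (or $m'$) in start-time order without overlap; every precedence and communication constraint then follows uniformly, and the phase-offset map used for $\sigma_1$ satisfies the same two inequalities, so the whole argument collapses to checking these.
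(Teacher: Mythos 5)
Your proposal is correct and follows essentially the same route as the paper: a case analysis on where $u$ and $v$ run to verify $\sigma_1$ (using that a job's start phase equals its start step, that steps are at least twice as long as phases, and that jobs on $m'$ finish at least $\delay$ before their step ends), followed by the observation that the $\sigma_1\to\sigma_2$ transformation only stretches time gaps, so precedence and communication constraints carry over. Your explicit window-by-window induction and the "global monotone stretching map" framing are just a more detailed rendering of the paper's one-line claim that if $v$ starts time $t$ after the completion of $u$ in $\sigma_1$ then it starts at least $t$ after $u$ in $\sigma_2$.
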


\begin{proof}
    Suppose $\sigma_0$ is a valid schedule. We first show that $\sigma_1$ is valid.
    For any job $u$, let $t_0^c(u)$ and $t_1^c(u)$ be the completion times of $u$ in $\sigma_0$ and $\sigma_1$, respectively, and let $t_0^s(u)$ and $t_1^s(u)$ be the starting times. We show that the communication delay restriction is satisfied with the following claim: if $u$ and $v$ are two jobs such that $u \prec v$, then either $u$ and $v$ are on the same machine in $\sigma$, or $t_1^c(u) - t_1^s(u) \ge \rho$.
    
    Suppose $u$ is completed in phase $\tau_u$ and $v$ is started in phase $\tau_v$. If $\tau_u = \tau_v$ then, since there is no communication within a phase, $u$ and $v$ are executed in order on the same machine in both $\sigma_0$ and $\sigma_1$, by construction. So we assume that $\tau_u <\tau_v$. If both $u$ and $v$ execute on some machine in $M_0$ in $\sigma_0$, then they are on the same machine and in order in $\sigma_1$ by construction. Also, if $u$ is executed on $m'$ in $\sigma_1$, then the result follows immediately since it finishes at least $\rho$ time before the end of its step.
    Also, if $u$ and $v$ both execute on machines in $M\setminus M_0$ then, by construction $t_1^s(v) - t_1^c(u) \ge t_0^s(v) - t_0^c(u)$ so the result follows by the validity of $\sigma_0$.
    So suppose that $u$ executes on some machine in $M \setminus M_0$ and $v$ on machine $m'$. Note that any job's starting phase in $\sigma_0$ is the same as its starting step in $\sigma_1$, and the length of each step is at least twice the length of each phase. So, the ending step of $u$ in $\sigma_1$ is at least as large as its ending phase in $\sigma_0$ and, in its ending step, $u$ completes at least $\rho$ time before the end of the step, by construction. This shows that $\sigma_1$ is valid. 
    
    We now show that $\sigma_2$ is valid if $\sigma_1$ is valid. Consider any two jobs $u$ and $v$ in $G$. By construction of $\sigma_2$, if $v$ starts time $t$ after the completion of $u$ in $\sigma_1$, then $v$ starts at least time $t$ after $u$ in $\sigma_2$. The claim follows by the validity of $\sigma_1$.
\end{proof}

\begin{claim}
    If $\sigma_0$ has makespan $C_0$ and $\sigma_2$ has makespan $C_2$  the $C_2 \le 6C_0$.
    \label{claim:makespan_after_elim}
\end{claim}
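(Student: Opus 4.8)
The claim is proved by chaining the two constructions $\sigma_0 \to \sigma_1 \to \sigma_2$ used in the appendix: first bound the makespan of the intermediate schedule $\sigma_1$ (on $(G,(M\setminus M_0)\cup\{m'\},\delay)$) in terms of $C_0$, then bound the makespan of $\sigma_2$ (on $(G,M\setminus M_0,\delay)$) in terms of that of $\sigma_1$. The preceding claim already shows $\sigma_1$ and $\sigma_2$ are valid, so only the two makespan estimates remain.

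\textbf{Step 1: $\sigma_1$ has makespan $O(C_0)$.} The governing quantity is $\sum_\tau \setproc{U_\tau}$, the total size, counted once per copy, of all jobs that $\sigma_0$ starts on machines of $M_0$. Since each machine $i\in M_0$ runs its jobs without preemption and finishes by time $C_0$, the total size of the jobs it runs is at most $C_0\,\speed{i} < C_0\,\speed{m}/m$; summing over the at most $m$ machines of $M_0$ gives $\sum_\tau \setproc{U_\tau} < C_0\,\speed{m}$, i.e.\ $\sum_\tau \setproc{U_\tau}/\speed{m} < C_0$. Because $\sigma_0$ has makespan $C_0$ it has at most $\lceil C_0/\delay\rceil$ nonempty phases, so $\sigma_1$ has at most $\lceil C_0/\delay\rceil$ steps, and step $\tau$ has length $\max\{\delay,\setproc{U_\tau}/\speed{m}\}+\delay \le 2\delay + \setproc{U_\tau}/\speed{m}$. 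Summing over steps, the makespan of $\sigma_1$ is at most $2\delay\lceil C_0/\delay\rceil + C_0 = O(C_0)$ under the paper's standing assumption $C_0 \ge \delay$; a slightly more careful split of the steps into those where the $\setproc{U_\tau}/\speed{m}$ term dominates and those where the $\delay$ term dominates yields the explicit constant. Note also that the makespan of $\sigma_1$ is at least $\delay\lceil C_0/\delay\rceil \ge C_0$, a fact used in Step~2.

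\textbf{Step 2: $\sigma_2$ has makespan at most twice that of $\sigma_1$.} In $\sigma_1$, machine $m'$ has speed $\speed{m}$ and runs all the rescheduled $M_0$-copies, so its total busy time is exactly $\sum_\tau \setproc{U_\tau}/\speed{m} < C_0 \le$ makespan$(\sigma_1)$. Forming $\sigma_2$ amounts to splicing the $m'$-jobs $v_1,\dots,v_L$ (in their $m'$-order) into machine $m$'s timeline one at a time, each $v_{\ell+1}$ placed immediately after the job $u_\ell$ that $m$ was running latest during $v_\ell$'s window $W_\ell$, and pushing all later activity on every machine back by the induced shift (this is exactly the piecewise definition of $\sigma_{(2,\ell+1)}$). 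I would prove by induction on $\ell$ that after the $\ell$th stage the start time of every job has increased by at most the total processing time on $m$ of $v_1,\dots,v_\ell$, which is at most the busy time of $m'$ in $\sigma_1$, hence at most $C_0 \le$ makespan$(\sigma_1)$: the greedy placement right after $u_\ell$ guarantees the new copy of $v_{\ell+1}$ overlaps nothing already on $m$, the $\min\{0,\cdot\}$ correction prevents jobs that were already ahead of $v_{\ell+1}$ on $m'$ from being over-delayed, and every precedence/communication gap only widens because everything downstream shifts by at least as much. Thus makespan$(\sigma_2) \le$ makespan$(\sigma_1) + C_0 \le 2\,$makespan$(\sigma_1)$, and combining with Step~1 gives $C_2 \le 6C_0$.

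\textbf{Main obstacle.} The delicate part is the inductive bookkeeping in Step~2: one must show simultaneously that (i) the delay accumulated by an arbitrary job on an arbitrary machine never exceeds the length of $m'$'s schedule, (ii) no job acquires a second copy when it is moved off $m'$, and (iii) after every shift each inter-machine communication gap is still at least $\delay$. Verifying (i)--(iii) against the three-case formula for $\sigma_{(2,\ell+1)}$ — in particular checking that jobs lying in windows later than $W_\ell$ are shifted by precisely the amount needed and no more — is the heart of the argument; Steps~1 and the reduction of the claim to these two estimates are otherwise routine.
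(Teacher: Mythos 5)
Your Step 1 is essentially the paper's first estimate: the load argument $\setproc{W}/\speed{m}\le\sum_{i\in M_0}C_0\speed{i}/\speed{m}\le C_0$ together with the $2\delay$-per-step accounting gives $C_1\le 3C_0$ (the paper is equally loose about the $\lceil C_0/\delay\rceil$ rounding), so that part is fine. The genuine gap is in Step 2. Your inductive invariant — that after stage $\ell$ every start time has increased by at most the total processing time on $m$ of $v_1,\dots,v_\ell$, hence by at most the busy time of $m'$, hence by at most $C_0$ — is false for the construction you (and the paper) use. When $v_{\ell+1}$ is spliced in ``immediately after $u_\ell$,'' the induced shift includes the time spent waiting for $u_\ell$, the machine-$m$ job with the latest start in window $W_\ell$, to finish; that residual processing time has nothing to do with $m'$'s load. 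Concretely, if $m'$ runs two unit jobs $v_1$ (at time $0$) and $v_2$ (at time $100$) while $m$ starts a job of length $200$ at time $99$, then $v_2$ and everything in later windows is pushed back by roughly $199$, vastly exceeding the total size of the $m'$-jobs. So the inequality ``makespan$(\sigma_2)\le$ makespan$(\sigma_1)+C_0$'' is unsupported, and the items (i)--(iii) you flag as the ``main obstacle'' cannot all be verified as stated, because (i) is simply not true.

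The paper's proof avoids this by charging differently: the per-stage increase $\delta_{\ell+1}$ is charged partly to the processing time of $v_{\ell+1}$ on $m$ and partly to the processing times of the machine-$m$ jobs that run in parallel with $v_{\ell+1}$ in $\sigma_1$; since all charged regions are disjoint, $\sum_\ell\delta_\ell\le C_1$, giving $C_2\le C_1+\sum_\ell\delta_\ell\le 2C_1\le 6C_0$. In other words, the correct budget for the cumulative shift is machine $m$'s schedule length ($\approx C_1$), not machine $m'$'s load ($<C_0$); your final constant $6$ happens to coincide with the paper's, but only because your unjustified $+C_0$ bound also yields a factor $2$. Replacing your charging by the paper's disjoint-interval charging repairs the argument without changing anything else in your outline.
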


\david{simplify proof to get a worse factor approximation}

\begin{proof}
    Let $C_0, C_1$, and $C_2$ be the makespans of $\sigma_0$, $\sigma_1$, and $\sigma_2$ respectively. 
    We first show that $C_1 \le 3C_0$. 
    Let $W = \{v: \sigma_0[v,i]\ne \bot ~\text{for some}~ i \in M_0\}$. The extra communication phase at the end of each step adds $C_0$ to the length of the whole schedule $\sigma_1$. So we have that
    \[ C_1 \le 2C_0 + \setproc{W}/\speed{m} \le 2C_0 +  \sum_{i\in M_0} C_0 \speed{i}/\speed{m} \le 2C_0 + C_0 \sum_{i \in M} 1/m \le 3C_0. \]
    
    We now show that $C_2 \le 2 C_1$.
    Consider difference $\delta_{\ell+1}$ in makespan between $\sigma_{(2,\ell)}$ and $\sigma_{(2,\ell+1)}$. This difference is partially made up of the extra time needed to place $v_{\ell+1}$ on $m$ after the last job on executed on $m$ from a prior $W_{\ell'}$ with $\ell' \le \ell$. The difference is also partially made up of the amount of time needed to delay all start times in $W_{\ell'}$ with $\ell' \ge \ell$. Notice that these two amounts can be charged to the processing time of $v_{\ell+1}$ and to the processing times of those jobs executed in parallel to $v_{\ell+1}$ on machine $m$, and note that all charged regions are disjoint.
    
    \begin{figure}[H]
    \centering
    \includegraphics{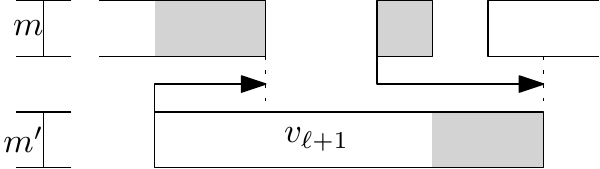}
    \end{figure}
    
    \noindent
    So, $\sum_{\ell=1}^L \delta_{\ell} \le C_1$.  Therefore, by construction of $\sigma_2$, we have that $C_2 = C_1 + \sum_{\ell=1}^L \delta_{\ell} \le 2C_1$.
\end{proof}

\junk{

\subsection{Eliminating the added fast machine}

For this section, we let $M = \{1,\ldots,m\}$ be a set of $m$ machines where the fastest machine $m$ has speed $\speed{m}$, and let $M' = M \cup \{m'\}$ where machine $m'$ also has speed $\speed{m}$. We assume that we are given a schedule $\sigma_0$ of some DAG $G$ on $M'$ with makespan $C_0$. For simplicity, we assume the existence of a independent job $u^* \in V$ such that $\proc{u^*} = 0$ and $\sigma[u^*,m'] = 0$. Clearly this addition does not change the makespan of the schedule. Also for simplicity, we define $G' = (V',E')$. We let $V'$ be the set of jobs executed in $\sigma'$ \textit{including duplications}. That is, if the job $v$ is duplicated $n'$ times in $\sigma'$, then there are $n'$ distinct copies of $v$ in $V'$. These duplicate jobs have the same dependencies as their originals, but have no precedence relation with their originals. Clearly, from $\sigma_0$ we can construct a valid schedule of $G'$ with makespan $C' = C_0$ where every job is executed exactly once.

The goal of the section is to construct a valid schedule $\sigma$ of $G'$ on $M$ with makespan $C \ge C'/2$. From this, we can, again, easily construct a schedule $\sigma_1$ of $G$ on $M$ with makespan $C_1 = C$.

\paragraph{The construction.}
We first order all those jobs executed on machine $m'$ by start time in $\sigma'$, $\{u_0, u_1, \ldots, u_r\}$ where $u_0 = u^*$. For all jobs $v$ executed on some machine $i$ in $M$ in $\sigma'$, we define 
\[f(v)  = \max\{q: \sigma'[u_{q},m'] \le \sigma'[v,i]\}.\]
If $v$ is executed on machine $m'$ in $\sigma'$, then $v = u_{\ell}$ for some $\ell = 1,\ldots, r$. In this case, we define $f(u_{\ell}) = \ell$. We then perform the following iterative procedure, given in algorithm \ref{alg:elim_extra_machine}. Recall that $\sigma[v,i]$ and $\sigma'[v,i]$ give the \textit{starting} execution time of job $v$ on machine $i$ in $\sigma$ and $\sigma'$ respectively.
\begin{algorithm}
\caption{Shifting Job Times}
\For{$q \leftarrow 0$ to $r$}{
    let $U$ be the set of all jobs $v$ executed on machine $m$ by $\sigma'$ such that $f(v) = u_{\ell}$\;
    $c_1, c_2 \leftarrow 0$\;
    \If{$U \ne \varnothing$}{
        $c_1 \leftarrow \max_{v \in U}\{0,\sigma'[u_{\ell},m'] - \sigma'[v,m]\}$\;
        $c_2 \leftarrow \max_{v \in U}\{ 0, (\sigma'[v,m] + \proc{v}/\speed{m}) - \sigma'[u_{\ell+1},m'] \}$\;
    }
    \ForAll{$v \in V'$ such that $f(v) \ge {\ell}$ and $v \ne u_{\ell}$}{
        let $i$ be the machine on which $v$ is executed in $\sigma'$\;
        $\sigma[v,i] \leftarrow \sigma'[v,i] + c_1$\;
        \If{$f(v) > \ell$}{
            $\sigma[v,i] \leftarrow \sigma[v,i] + c_2$\;
        }
    }    
}
\label{alg:elim_extra_machine}
\end{algorithm}

\begin{claim}
    If the given schedule $\sigma'$ is a valid schedule of $G'$ on $M'$, then $\sigma$ is a valid schedule of $G'$ on $M$. 
\end{claim}

\begin{proof}
    
\end{proof}

\begin{claim}
    If $\sigma'$ has makespan $C'$, then $\sigma$ has makespan $C \le 2C'$.
\end{claim}

\begin{proof}
    [Proof idea.]
    
\end{proof}

\begin{claim}
    If the minimum length schedule of $G$ on $M$ has makespan $C$, then any schedule of $G$ on $M'$ has makespan at least $C/2$.
\end{claim}

\begin{proof}
    [Proof idea.]
    Proof by contradiction. Suppose there were some schedule $\sigma'$ of $G$ on $M'$ with makespan less than $C/2$. Use construction depicted in figure to show that we can construct a schedule $\sigma$ on $M$ from $\sigma'$ such that that the makespan of $\sigma$ is less than $C$, contradicting the fact that $C$ is the minimum.
\end{proof}

}

\section{Additional Motivation for our Approach}
\label{sec:alternate_LPs}
\junk{
\subsection{Time Indexed Version of our LP}

The following non-convex program is a time-indexed analog to the non-convex program given previously. This program assumes all jobs have unit size and all machines have unit speed, but allows for arbitrary delay.
\begin{align}
    \sum_{i,t} \timeindexedvar{v,i,t} &\ge 1 & \forall v 
    \\
    \sum_{v} \timeindexedvar{v,i,t} &\le 1 &\forall i,t
    \\
    \sum_{i, t' \le t+1} \timeindexedvar{v,i,t'} &\le \sum_{i,t' \le t} \timeindexedvar{u,i,t'} &\forall u \prec v, t
    \\
    \sum_{j,t} t \timeindexedvar{v,j,t} &\ge \frac{\delay}{2} \sum_{j \le i, t} \timeindexedvar{v,j,t} + \min_{u \in A} \Big\{ \sum_{j,t} t \timeindexedvar{u,j,t} \Big\} &\forall v, i, A \subseteq \predex{v} : |A| \ge 2 \delay
    \\
    \timeindexedvar{v,i,t} &\ge 0 & \forall v, i,t
\end{align}

Set start time of $v$: $S_v = \sum_{j,t} t \timeindexedvar{v,j,t}$. 
Define band $\band{r} = \{ v : \delay(r-1)/4 \le S_v < \delay r/4 \}$.

\begin{lemma}
    For any $v \in \band{r}$, $|\predex{v} \cap \band{r}| \le 2\delay$.
\end{lemma}

\begin{proof}
    Fix $v$ such that $v \in \band{r}$. Let $A = \predex{v} \cap \band{r}$ and let $i = \min\{ j: \sum_{j' \le j, t} \timeindexedvar{v,j',t} \ge 1/2\}$. Suppose, for contradiction, that $|\predex{v} \cap \band{r}| > 2\delay$. By suppositon, we have $S_v < \delay r/4$ and, for all $u \in \predex{v} \cap \band{r}$, we have $S_u \ge \delay(r-1)/4$. So
    \[ \sum_{j,t} t \timeindexedvar{v,j,t} - \frac{\delay}{2} \sum_{j \le i, t} \timeindexedvar{v,j,t} < \frac{\delay r}{4} - \frac{\delay}{4} = \frac{(\delay-1)r}{4} \le S_u = \sum_{j,t} t \timeindexedvar{u,j,t}. \]
    Therefore, the delay constraint is violated.
\end{proof}

The following program incorporates non-unit sizes and non-unit speeds. As above, the delay constraint has been expanded into two linear constraints.
\begin{align}
    \sum_{i,t} \timeindexedvar{v,i,t} &\ge 1 & \forall v 
    \label{timeindexedLP_start}
    \\
    \sum_{v} \proc{v} \timeindexedvar{v,i,t} &\le \speed{i} &\forall i,t
    \\
    \sum_{i, t' \le t+1} \timeindexedvar{v,i,t'} &\le \sum_{i,t' \le t} \timeindexedvar{u,i,t'} &\forall u \prec v, t
    \\
    \sum_{j,t} t \timeindexedvar{v,j,t} &\ge \frac{\delay}{2} \sum_{j \le i, t} \timeindexedvar{v,j,t} + \min_{u \in A} \Big\{ \sum_{j,t} t \timeindexedvar{u,j,t} \Big\} &\forall v, i, A \subseteq \predex{v} : |A| \ge 2 \delay
    \\
    \sum_{j,t} t \timeindexedvar{v,j,t} &\ge \delay \sum_{j \le i,t} \timeindexedvar{v,j,t} + \sum_{j,t} t \timeindexedvar{u,j,t} &\forall u\prec v,i: \proc{u} \ge \delay\speed{i}
    \\
    \timeindexedvar{v,i,t} &\ge 0 & \forall v, i,t
    \label{timeindexedLP_end}
\end{align}

We solve the non-convex program using the ellipsoid method with the following linear constraint.
\begin{align}
    \sum_{j,t} t \timeindexedvar{v,j,t} &\ge \frac{\delay}{2} \sum_{j \le i, t} \timeindexedvar{v,j,t} + \frac{1}{\trimsetproc{A}{i}} \sum_{u \in A} \trimproc{u}{i} \sum_{j,t} t \timeindexedvar{u,i,t} &\forall v, i, A \subseteq \predex{v} : \trimsetproc{A}{i} \ge 2 \delay \speed{i}
\end{align}
Set start time of $v$: $S_v = \sum_{j,t} t \timeindexedvar{v,j,t}$. 
Define band $\band{r} = \{ v : \delay(r-1)/4 \le S_v < \delay r/4 \}$. Define 
\[ \med{v} = \min \Big\{ k : \sum_{k' \le k} \ \sum_{i \in k'} \ \sum_t \timeindexedvar{v,i,t} \ge 1/2 \Big\} .\]
Then, for any $v$, 
\[ \assign{v} = \argmax_k \Big\{ \sum_{i \in \group{k}} \speed{i} : k \ge \med{v} \Big\} . \]

\begin{lemma}
    For any two jobs $u,v$ such that $u \prec v$, we have $S_u \le S_v$. 
\end{lemma}

\begin{proof}

\end{proof}

\begin{lemma}
    Let $v$ be any job in band $\band{r}$ assigned to group $k$. Then $\setproc{\predex{v} \cap \band{r}} \le 4\delay\groupspeed{k}$.
\end{lemma}
}


    \subsection{A Combinatorial Approach}
    \label{sec:combinatorial_approach}
    To motivate our linear program formulation, we show that a natural extension of the combinatorial algorithm in \cite{LR02} performs poorly in the worst case. 
We summarize the original algorithm here, making some simplifications. 

The algorithm given in \cite{LR02} assumes all jobs have unit size and all machines have unit speed. The algorithm constructs the schedule in a series of rounds. In a given round, the algorithm picks the machine $i$ with the least load and finds those jobs that have fewer than $\delay$ predecessors. Let $v$ be one such job. If fewer than $\sfrac{1}{2}$ of $v$'s uncompleted predecessors have already been scheduled on other machines, $v$ and \textit{all} its uncompleted predecessors are placed on $i$. The algorithm then finds the machine with the least load after adding $v$ and its predecessors and continues scheduling the phase. Once no jobs satisfy the above conditions, the algorithm inserts a communication delay and starts the next round. 

We extend this algorithm to the setting with unit job sizes and arbitrary speeds. In this setting, we can load balance using the machine with minimum load as above while looking at the sets of jobs that can be completed in $\delay$ steps for each machine. The extended algorithm is given as Algorithm~\ref{alg:combinatorial_extension}.

\begin{algorithm}
    $T \leftarrow 0;\ \forall i, T_i \leftarrow 0$\;
    \While{there is some unscheduled job}{
        $W \leftarrow \{v :$ some copy of $v$ is scheduled after time $T \}$\;
        $i \leftarrow \argmin_j\{ T_j\}$\;
        $V_i \leftarrow \{v: v$ has fewer than $\delay\speed{i}$ remaining predecessors$\}$\;
        $\forall v \in V_i, U_v \leftarrow (\predex{v} \cup \{v\}) \setminus \{u : \text{some copy of } u \text{ finishes by time } T\}$\;
        \eIf{there is some $v \in V_i$ such that $|U_v \setminus W| \ge |U_v|/2 $}{
            place $U_v$ on $i$ in topological order starting at time $T_i$\;
            $T_i \leftarrow T_i + |U_v|$
        }
        {$T \leftarrow \delay + \max\{$completion time of any currently scheduled job$\}$\;
        $\forall i, T_i \leftarrow \max\{ T_i , T\}$}
    }
    \caption{Extended Combinatorial Algorithm of \cite{LR02}}
    \label{alg:combinatorial_extension}
\end{algorithm}

\begin{lemma}
    For a given instance of precedence constrained scheduling, let $\optmakespan$ be the optimal makespan for that instance. Then there is some instance such that the schedule output by Algorithm~\ref{alg:combinatorial_extension} has makespan upper bound by $O(C^*/\delay) = O(C^* /\sqrt{m}) = O(C^* /n)$.
\end{lemma}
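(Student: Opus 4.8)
The plan is to exhibit a single family of instances on which the schedule produced by Algorithm~\ref{alg:combinatorial_extension} has makespan $\Omega(\rho)$ times $C^{*}$, with parameters chosen so that in the instance $\rho=\Theta(\sqrt m)=\Theta(n)$ — which is exactly the three bounds asserted in the statement.

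\emph{The instance.} Take $m=\rho^{2}$ machines, of which machines $1,\dots,m-1$ have speed $1$ (``slow'') and machine $m$ has speed $\rho$ (``fast''); rescaling so the fastest machine has speed $1$ does not change any of the ratios below. The DAG is two vertex-disjoint directed paths (``chains'') $C^{(1)},C^{(2)}$, each a path on $\rho$ unit-size jobs $v^{(k)}_{1}\to v^{(k)}_{2}\to\cdots\to v^{(k)}_{\rho}$, with no edges between the chains, so $n=2\rho$ and $\rho=\sqrt m=n/2$. Scheduling all $2\rho$ jobs on the fast machine in topological order uses no communication, so $C^{*}\le 2\rho/\rho = 2=O(1)$.

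\emph{Lower bound on the algorithm's makespan.} I would analyze round $1$ of the algorithm, i.e.\ the iterations executed before the algorithm first advances $T$ (in the \textbf{else}-branch); throughout round $1$, $T=0$, so no job is ever ``finished by time $T$''. Three facts drive the bound. (i) Every block $U_{v}$ placed in round $1$ equals $v$ together with all of its ancestors, and since the two chains are disjoint this block is a \emph{prefix} $\{v^{(k)}_{1},\dots,v^{(k)}_{\ell}\}$ of a single chain. (ii) For a fixed chain, the test $|U_{v}\setminus W|\ge|U_{v}|/2$ forces each successively placed prefix of that chain to be at least twice as long as the chain-prefix currently in $W$; hence after $O(\log\rho)$ placements some prefix of that chain of length in $[\rho/2,\rho]$ is placed as a single block. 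Moreover the round cannot advance before this happens for some chain: if $W$ contains only a prefix of length $\le\rho/2$ of chain $k$, then its sink $v^{(k)}_{\rho}$ lies in the current $V_{i}$ (it has $\rho-1<\rho$ remaining predecessors) and $|U_{v^{(k)}_{\rho}}\setminus W|\ge\rho/2=|U_{v^{(k)}_{\rho}}|/2$, so the algorithm makes a placement rather than advancing $T$. (iii) The fast machine is $\arg\min_{j}T_{j}$ at most once during round $1$: all loads are equal at the start of the round, and after the fast machine processes one block its load is positive while $\gg\log\rho$ slow machines still have load $0$, so it is never re-selected in that round. Combining these: the fast machine can absorb a prefix of at most one of the two chains during round $1$, so for the other chain a length-$\ge\rho/2$ prefix gets placed, as a single block, on a speed-$1$ machine; executing that block takes $\ge\rho/2$ time, making the makespan $\ge\rho/2$. (If instead $T$ is ever advanced, the assignment $T\leftarrow\rho+(\text{max completion time})$ already forces the makespan to be $\ge\rho$.) Hence the algorithm's makespan is $\ge\rho/2\ge(\rho/4)\,C^{*}$, which is $\Omega(\rho)\,C^{*}=\Omega(\sqrt m)\,C^{*}=\Omega(n)\,C^{*}$, equivalently $C^{*}=O(\text{makespan}/\rho)=O(\text{makespan}/\sqrt m)=O(\text{makespan}/n)$.

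\emph{Main obstacle.} The delicate part is making the lower bound independent of how $\arg\min_{j}T_{j}$ and the choice of a valid $v$ are resolved. A single chain already defeats the natural ``lowest-indexed (slowest) machine'' tie-break — the fast machine is then never selected at all — but to make the argument tie-break-independent one needs two disjoint chains, so that at most one of them can benefit from the fast machine within a round while the other is necessarily processed on a slow machine (or the round advances), and either way $\Omega(\rho)$ time is spent. The rest — tracking which prefixes are placed and in which round each sink is first placed — is routine bookkeeping.
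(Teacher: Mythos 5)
Your proposal is correct for the statement as written, but it is a genuinely different construction and argument from the paper's. The paper uses the DAG of Figure~\ref{fig:gap_constructions}(b) --- a length-$\rho$ chain in which every chain job has $\rho$ pendant successors --- together with one speed-$\rho$ machine and roughly $\rho^2$ slow machines, and exhibits a single \emph{adversarial execution} (exploiting the unspecified choice of $v$ in Algorithm~\ref{alg:combinatorial_extension}) in which only $O(1)$ chain levels are consumed per round while each round costs $\approx\rho$ from the inserted delay, giving makespan $\Theta(\rho^2)$ against $C^*=\Theta(\rho)$. Your two-disjoint-chains instance instead proves an $\Omega(\rho)$ ratio for \emph{every} execution, independently of how $\argmin_j T_j$ and the choice of $v$ are resolved (a nice strengthening), and it matches the displayed relations $\rho=\sqrt m=\Theta(n)$ more cleanly than the paper's own instance, which only yields a gap of $\Theta(\sqrt n)$. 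The trade-off is qualitative: on your instance $C^*=O(1)\ll\rho$ and the algorithm's makespan is only $\Theta(\rho)=\Theta(C^*+\rho)$, so the whole gap is the one unavoidable communication round; measured against $C^*+\rho$ --- the benchmark the paper actually targets, in the regime $C^*\ge\rho$ it declares to be the natural one --- your instance exhibits no failure of Algorithm~\ref{alg:combinatorial_extension} at all, whereas the paper's instance shows a multiplicative $\Omega(\rho)$ loss even against $C^*+\rho$. So your argument establishes the lemma as literally stated, but it does not serve the appendix's motivating purpose, namely that the combinatorial extension cannot achieve a guarantee of the form $f\cdot(C^*+\rho)$ for small $f$. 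A minor caveat shared by both constructions: each uses a speed-$\rho$ machine with unit jobs, so under the paper's normalization (delay measured in units of the smallest job on the fastest machine) the delay parameter is really $\rho^2$; your rescaling remark is fine because the algorithm's behavior is scale-invariant, but the bookkeeping of the gap as a function of the delay parameter is loose in both your proof and the paper's.
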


\begin{proof}
    Consider the DAG shown in Figure \ref{fig:gap_constructions}(b) and let $v_{\ell}$ be the job in the chain at level $\ell$ (starting at level 1) and let $u_{\ell,d}$ be the $d$\textsuperscript{th} non-chain successsor of $v_{\ell}$, in any order.  Suppose that $\delay = m^2$ and that we are given $m^2 - 1$ machines of speed 2 and one machine of speed $\delay$. In this case, Algorithm~\ref{alg:combinatorial_extension} may repeatedly place the pair $(v_{\ell}, u_{\ell,1}$ on the speed $\delay$ machine and, for $d = 2,$ place the pairs $(v_{\ell},u_{\ell,d}$ on the $d-1$\textsuperscript{th} speed 1 machine. This would result in a makespan of $\delay^2 = m = \Omega(n)$. However, the optimal makespan is at most $\delay+2$ by placing the entire chain on the speed $\delay$ machine in the first step, then introducing a communication delay, then placing one remaining  job on each machine. 
\end{proof}

We note that the given proof applies to other natural extensions of the combinatorial algorithm, including prioritizing higher speed machines and prioritizing higher capacity groups of machines. 

\begin{figure}
    \centering
    \includegraphics[width=\textwidth]{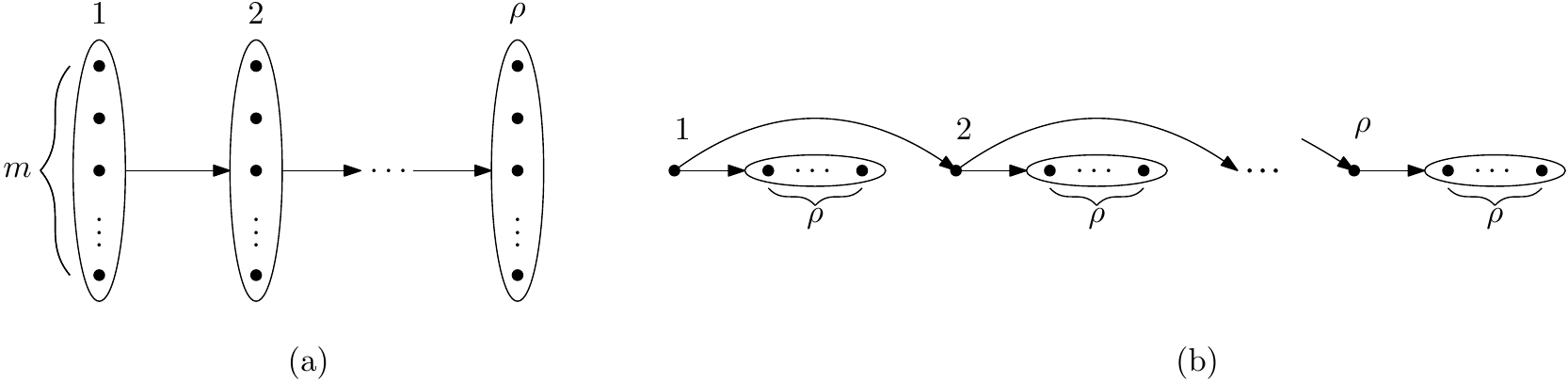}
    \caption{Integrality gap constructions for alternate LP's and the combinatorial heuristic. (a) has $\delay$ layers, each containing $m$ independent jobs. All jobs in layer $\ell$ are predecessors of every job in layer $\ell + 1$. (b) shows a length $\delay$ chain, where each job in the chain precedes a group of $\delay$ independent jobs.}
    \label{fig:gap_constructions}
\end{figure}
    
    \subsection{Same-Machine Variable Relaxation}
    \label{sec:same_machine_variable}
    \newcommand{\reviewerLP}{\ref{revLP_completionlb} - \ref{revLP_samemachine}}

We refer to the following linear program as \LP{\reviewerLP}, which minimizes $C$ subject to the given constraints. 

\noindent
\begin{minipage}{.6\linewidth}
  \begin{flalign}
        \LPmakespan &\ge \start{v}+1 &\forall v
        \label{revLP_completionlb}
        \\
        \LPmakespan\speed{i} &\ge \sum_v  \jobmachine{v}{i} &\forall i
        \label{revLP_loadlb}
        \\
        \start{v}  &\ge \start{u} + \delay \ \Big(1- \sum_{i \in M} \delta_{i,u,v} \Big) & \forall u \prec v
        \label{revLP_delay}
       \\
       \start{v} &\ge \start{u} + 1 &\forall u \prec v
        \label{revLP_execution}
        \\
        \start{v} &\ge 0 &\forall v
        \label{revLP_nonnegstart}
  \end{flalign}
\end{minipage}\hfill
\begin{minipage}{.34\linewidth}
  \begin{flalign}
        \sum_i  \jobmachine{v}{i} &= 1 &\forall v
        \label{revLP_finishjob}
        \\
        \jobmachine{v}{i} &\in [0,1] &\forall v,i
        \label{revLP_jobmachine}
        \\
        \delta_{i,u,v} &\le \jobmachine{v}{i} &\forall v
        \\
        \delta_{i,u,v} &\le \jobmachine{u}{i} &\forall u
        \\
        \delta_{i,u,v} &\in [0,1] &\forall u,v,i
        \label{revLP_samemachine}
  \end{flalign}
\end{minipage}
\vspace{4mm}

Consider the instance given in Figure~\ref{fig:gap_constructions}(a).  The instance is divided into $\delay$ \textit{levels}, each of which contains $m$ independent jobs. All jobs at level $\ell$ precede those jobs at level $\ell+1$. There are $n = m^2$ jobs, $m$ machines, and communication delay of $\delay = m$. All jobs have unit sizes and all machines have unit speed. We show that the optimal duplication schedule has makespan $\Omega(\delay^2)$. Note that it is trivial to construct a $\delay^2 + \delay$ schedule by scheduling all jobs in level $\ell$ to start at time $\delay\ell + (\ell-1)$, and so complete by time $\delay\ell + \ell$. We prove the lower bound by induction on $\ell$ for the following claim, which straightforwardly entails the lower bound.
    
\begin{lemma}
    In any schedule of the instance given in Figure~\ref{fig:gap_constructions}(a) with $m$ unit speed machines and communication delay $\delay$, no job in level $\ell$ can be scheduled before time $\delay \ell$. 
    \label{lem:gapinstance_optlb}
\end{lemma}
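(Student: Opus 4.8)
The plan is to prove the lemma by induction on the level index $\ell$, showing that in \emph{any} schedule --- duplication allowed --- no copy of any level-$\ell$ job can begin execution before time $\delay\ell$. All of the work is in the inductive step; the base case $\ell=1$ is direct (it accounts for the initial placement of the level-$1$ jobs and is the only place where the statement's exact additive constant enters --- the same argument also yields the bound $\delay(\ell-1)$ with a vacuous base case, which already forces makespan $\Omega(\delay^2)$ and hence suffices for the integrality-gap conclusion).

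For the inductive step, assume no copy of a level-$(\ell-1)$ job starts before time $\delay(\ell-1)$, fix a copy of a level-$\ell$ job $v$ that begins on some machine $i$ at time $t$, and aim to show $t\ge\delay\ell$. The crucial structural fact is that \emph{every} one of the $m=\delay$ jobs of level $\ell-1$ is a predecessor of $v$, so for each such job $u$ the communication constraint forces a copy of $u$ that either (i) completes on machine $i$ by time $t$, or (ii) completes on some machine $j\neq i$ by time $t-\delay$. Let $B$ be the set of level-$(\ell-1)$ jobs having no copy that completes on $i$ by time $t$ --- so every job in $B$ is of type (ii) --- and let $A$ be the remaining level-$(\ell-1)$ jobs; these two cases are exhaustive for each job.

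I would then argue two cases. If $B\neq\varnothing$, pick $u\in B$: some copy of $u$ completes by time $t-\delay$ and hence, being a unit job, starts by time $t-\delay-1$, so by the induction hypothesis $\delay(\ell-1)\le t-\delay-1$, giving $t\ge\delay\ell+1>\delay\ell$. If $B=\varnothing$, then all $\delay$ jobs of level $\ell-1$ have a copy that completes on machine $i$ by time $t$; choosing one such copy per job gives $\delay$ distinct executions on machine $i$, each of duration $\proc{u}/\speed{i}=1$, each completing by time $t$, and --- by the induction hypothesis --- each starting no earlier than $\delay(\ell-1)$. Since machine $i$ runs at most one job at a time, these $\delay$ disjoint unit-length intervals must all lie inside $[\delay(\ell-1),t)$, forcing $t-\delay(\ell-1)\ge\delay$, i.e.\ $t\ge\delay\ell$. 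In both cases $t\ge\delay\ell$, closing the induction; applying it at $\ell=\delay$ shows every schedule of this instance has makespan $\ge\delay^2$, which --- together with the $O(\delay)$-value fractional solution that spreads each job uniformly over the machines --- yields the claimed polynomial-in-$\delay$ integrality gap for the same-machine-variable relaxation.

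The step I expect to be the main obstacle is the case $B=\varnothing$, since it is the one that genuinely uses a single machine's capacity: the delicate points are that (a) duplication cannot help here, because every copy of a level-$(\ell-1)$ job consumes a full unit of processing wherever it is placed, and (b) the induction hypothesis must be invoked for \emph{every} copy of \emph{every} level-$(\ell-1)$ job, not merely for one designated execution of each. A smaller point to get right is that $A$ and $B$ being simultaneously nonempty causes no trouble: the $B\neq\varnothing$ argument needs only a single job of $B$, so the two-case split is genuinely exhaustive, and the bound $t\ge\delay\ell$ holds as soon as even one level-$(\ell-1)$ predecessor fails to have a same-machine copy completing before $t$.
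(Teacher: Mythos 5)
Your proof is correct and follows essentially the same route as the paper's: induction over levels, using the fact that every job in a given level has all $m=\rho$ jobs of the previous level as predecessors, so either some predecessor completes on a different machine at least $\rho$ earlier (forcing a start time $\ge \rho\ell+1$) or all $\rho$ unit-size predecessors must be serialized on the job's own machine after time $\rho(\ell-1)$ (forcing $\ge \rho\ell$). Your two-case analysis is actually spelled out more carefully than the paper's terse one-line argument, and your hedging about the base case and the additive $\rho$ is consistent with the paper, whose induction likewise starts from a trivially true level $\ell=0$ and only needs the $\Omega(\rho^2)$ makespan consequence.
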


\begin{proof}
    For $\ell = 0$ the claim is trivially true, so we suppose the claim is true for all levels up to $\ell \ge 0$. In this case, all jobs in level $\ell$ are scheduled after time $\delay\ell$. Consider any job $v$ in level $\ell+1$. Since $m > \delay$, we can schedule $v$ earliest by scheduling all jobs in level $\ell$ on different machines at time $\delay\ell$ and scheduling $v$ on some machine at time $\delay\ell + 1 + \delay$. Thus, the claim is proved. 
\end{proof}

\begin{lemma}
    For a given instance of precedence constrained scheduling with fixed communication delay, let $C^*$ be the optimal makespan for that instance. Then there is some instance for which the optmal value of \LP{\reviewerLP} is upper bounded by $O(C^*/\delay) = O(C^*/m) = O(C^*/\sqrt{n})$.
    \label{lem:same_machine_gap}
\end{lemma}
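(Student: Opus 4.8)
The plan is to exhibit a feasible solution to $\LP{\reviewerLP}$ of value $\delay$ for the instance of Figure~\ref{fig:gap_constructions}(a), and to combine this with Lemma~\ref{lem:gapinstance_optlb}, which implies that every schedule of this instance has makespan $C^* = \Omega(\delay^2)$ (no job at the deepest level can start before time $\Omega(\delay^2)$). Dividing $C^*$ by the LP value then gives a ratio of $\Omega(\delay)$; since $\delay = m$ and $n = m^2$, this is exactly the claimed bound $O(C^*/\delay) = O(C^*/m) = O(C^*/\sqrt{n})$ on the LP value.

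Concretely, I would number the $\delay$ levels $1, 2, \dots, \delay$ in topological order, write $\ell(v)$ for the level of job $v$, and recall that $u \prec v$ holds exactly when $\ell(u) < \ell(v)$. Then set $\jobmachine{v}{i} = 1/m$ for every job $v$ and machine $i$ (spread each job evenly over all machines), $\delta_{i,u,v} = 1/m$ for every $u \prec v$ and every $i \in M$, $\start{v} = \ell(v) - 1$ for every $v$, and $\LPmakespan = \delay$.

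The verification is routine. Constraint~(\ref{revLP_finishjob}) holds with equality since $\sum_i \jobmachine{v}{i} = m\cdot(1/m) = 1$; the box constraints~(\ref{revLP_jobmachine}), (\ref{revLP_samemachine}), (\ref{revLP_nonnegstart}) and the bounds $\delta_{i,u,v}\le \jobmachine{v}{i}$ and $\delta_{i,u,v}\le\jobmachine{u}{i}$ are immediate (the latter holding with equality). The key point is constraint~(\ref{revLP_delay}): because $\sum_{i\in M}\delta_{i,u,v} = m\cdot(1/m) = 1$, its right-hand side collapses to $\start{u}$, and $\start{v}-\start{u} = \ell(v)-\ell(u)\ge 1 > 0$ whenever $u\prec v$, so it is satisfied; the same inequality yields~(\ref{revLP_execution}). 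Finally~(\ref{revLP_completionlb}) reads $\LPmakespan\ge \start{v}+1 = \ell(v)\le\delay$, and~(\ref{revLP_loadlb}) reads $\LPmakespan\,\speed{i} = \delay \ge \sum_v\jobmachine{v}{i} = n/m = m = \delay$; both hold. Hence the LP value is at most $\delay$, and the lemma follows by dividing the bound $C^* = \Omega(\delay^2)$ of Lemma~\ref{lem:gapinstance_optlb} by this value.

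There is no real technical obstacle here; the substance of the lemma is the conceptual observation that $\LP{\reviewerLP}$ is weak precisely because the same-machine indicators $\delta_{i,u,v}$ can be spread fractionally so that $\sum_i\delta_{i,u,v} = 1$ while every $\jobmachine{v}{i}$ stays at $1/m$, which makes the delay constraint~(\ref{revLP_delay}) vacuous — even though integrally a successor cannot simultaneously share a machine with all of its (many) predecessors. Identifying this is the whole point; once it is in hand, the feasibility check above is a few lines.
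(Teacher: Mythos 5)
Your proposal is correct and matches the paper's own proof essentially verbatim: the same instance from Figure~\ref{fig:gap_constructions}(a), the same fractional assignment $\jobmachine{v}{i}=\delta_{i,u,v}=1/m$ with level-based start times and $\LPmakespan=\delay$, and the same key observation that $\sum_i\delta_{i,u,v}=1$ renders constraint~(\ref{revLP_delay}) vacuous, combined with Lemma~\ref{lem:gapinstance_optlb} for the $\Omega(\delay^2)$ integral lower bound. Your verification is, if anything, slightly more careful than the paper's (the shift to $\start{v}=\ell(v)-1$ cleanly satisfies~(\ref{revLP_completionlb})), but it is the same argument.
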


\begin{proof}
    For each $v,i$, we set $x_{v,i}$ to $1/m$. For each $v$ in level $\ell$, we set set $S_v = \ell$. For all $u,v,i$ we set $\delta_{u,v,i} = 1/m$. Finally, we set $C = \delay$. It is easy to see that constraints (\ref{revLP_completionlb}), (\ref{revLP_loadlb}), (\ref{revLP_execution}) - (\ref{revLP_samemachine}) are satisfied, so we focus on (\ref{revLP_delay}). Consider any two jobs $u,v$ such that $u\prec v$. In this case, we have $\sum_i \delta_{i,u,v} = 1$, so (\ref{revLP_delay}) is satisfied.
\end{proof}

    \subsection{A Time-Indexed Relaxation}
    \label{sec:time_indexed}
    \newcommand{\KulkarniLP}{\ref{KulkarniLP_start}-\ref{KulkarniLP_end}}
\newcommand{\timeindexedvar}[1]{x_{#1}}

We refer to the following time-indexed program as \LP{\KulkarniLP}. Here, we use notation $[a]$ to refer to the set of all positive integers less than or equal to $a$.




\begin{minipage}{.55\linewidth}
  \begin{flalign}
        &\sum_{i,t} \timeindexedvar{v,i,t} = 1 & \forall v 
        \label{KulkarniLP_start}
        \\
        &\timeindexedvar{v,i,t+1} + \sum_{i' \in [m] \setminus \{i\}} \sum_{t'=t-\delay}^t \timeindexedvar{u,i',t} \le 1 &\forall u \prec v, i,t
        \label{KulkarniLP_delay}
        \\
        &\sum_{i, t' \le t+1} \timeindexedvar{v,i,t'} \le \sum_{i,t' \le t} \timeindexedvar{u,i,t'} &\forall u \prec v, t
        \label{KulkarniLP_precedence}
  \end{flalign}
\end{minipage}\hfill
\begin{minipage}{.3\linewidth}
  \begin{flalign}
        &\sum_{v} \timeindexedvar{v,i,t} \le 1 & \forall i,t
        \label{KulkarniLP_capacity}
        \\
        &\timeindexedvar{v,i,t} \in [0,1] & \forall v, i,t
        \label{KulkarniLP_end}
  \end{flalign}
\end{minipage}
\vspace{4mm}



\begin{lemma}
    For a given instance of precedence constrained scheduling with fixed communciation delay, let $C^*$ be the optimal makespan for that instance. Then there is an instance for which the optimal value of \LP{\KulkarniLP} is upper bound by $O(C^*/\delay) = O(C^*/m) = O(C^*/\sqrt{n})$.
\end{lemma}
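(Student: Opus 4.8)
The plan is to reuse, essentially verbatim, the layered instance of Figure~\ref{fig:gap_constructions}(a) that was used for the same-machine relaxation: the DAG has $\delay$ levels, each consisting of $m$ mutually independent unit-size jobs, every job of level $\ell$ precedes every job of level $\ell+1$, all $m$ machines have unit speed, and the communication delay is $\delay = m$ (so $n = m\delay = m^2$, hence $m = \delay = \sqrt n$). Lemma~\ref{lem:gapinstance_optlb} already establishes that in \emph{any} schedule (even one allowing duplication) no job of level $\ell$ can be scheduled before time $\delay\ell$; applying this to the last level gives $\optmakespan \ge \delay^2 = n$, i.e.\ $\optmakespan = \Omega(\delay^2)$. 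Hence it suffices to produce a feasible fractional solution to \LP{\KulkarniLP} whose makespan — the largest time index carrying positive mass — is $3\delay$; then the optimal value of \LP{\KulkarniLP} is at most $3\delay \le 3\optmakespan/\delay$, and substituting $\delay = m = \sqrt n$ yields the bounds $O(\optmakespan/m)$ and $O(\optmakespan/\sqrt n)$ as well.

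The candidate solution is the uniform one: for every job $v$ in level $\ell$ set $x_{v,i,3\ell} = 1/m$ for all machines $i$, and set every other variable to $0$. The largest index carrying positive mass is $t = 3\delay$, so this solution has makespan $3\delay = O(\delay)$. I would then check the five constraint families, reading the inner summand of Constraint~(\ref{KulkarniLP_delay}) as $x_{u,i',t'}$, the natural time-indexed interpretation. Constraint~(\ref{KulkarniLP_start}) holds with equality, $\sum_{i,t} x_{v,i,t} = m\cdot(1/m) = 1$. Constraint~(\ref{KulkarniLP_capacity}) is tight at $t = 3\ell$ — only the $m$ jobs of level $\ell$ complete there, each putting $1/m$ on machine $i$ — and vacuous otherwise. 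For Constraint~(\ref{KulkarniLP_precedence}) with $u$ in level $\ell$ and $v$ in level $\ell' > \ell$: the left side equals $1$ only when $t+1 \ge 3\ell'$, i.e.\ $t \ge 3\ell' - 1 > 3\ell$, at which point the right side already equals $1$. For Constraint~(\ref{KulkarniLP_delay}), the first term is at most $1/m$, and since $u$'s entire mass sits at time $3\ell$ (spread as $1/m$ over the $m$ machines), the double sum $\sum_{i' \ne i}\sum_{t' = t-\delay}^{t} x_{u,i',t'}$ is either $0$ (when $3\ell \notin [t-\delay,t]$) or exactly $(m-1)/m$ (when $3\ell \in [t-\delay,t]$); in every case the total is at most $1/m + (m-1)/m = 1$. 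Nonnegativity~(\ref{KulkarniLP_end}) is immediate. Combining these with the lower bound on $\optmakespan$ finishes the proof.

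There is no genuine obstacle here: the only care needed is the bookkeeping in the delay constraint — tracking the off-by-one in the window $[t-\delay,t]$ and noting that $u \prec v$ holds between \emph{all} pairs of levels, not just consecutive ones, so that whether $3\ell$ lies in the window depends on $\ell'-\ell$ — but in all cases the two terms sum to at most $1$. All the conceptual weight is carried by Lemma~\ref{lem:gapinstance_optlb}, which is already proved, so the argument above reduces to a direct constraint check on an explicit fractional solution.
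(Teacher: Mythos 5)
Your proposal is correct and follows essentially the same route as the paper: the same layered instance, the same uniform fractional solution $x_{v,i,\cdot}=1/m$ with levels placed at consecutive (you: every third) time indices, the same constraint-by-constraint verification, and the same appeal to Lemma~\ref{lem:gapinstance_optlb} for the $\Omega(\delay^2)$ integral lower bound. The only deviation — spacing levels $3$ steps apart rather than $1$ — is immaterial to the $O(\delay)$ fractional makespan and hence to the stated gap.
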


\begin{proof}
    We consider the same instance as Lemma~\ref{lem:same_machine_gap} for which the optimal schedule has makespan $\Omega(\delay^2)$ by Lemma~\ref{lem:gapinstance_optlb}. 
    We assign the variables of \LP{\KulkarniLP} as follows.
    For each $\ell$ and $v \in \ell$ and $i$, we assign $\timeindexedvar{v,i,\ell} = 1/m$ and $\timeindexedvar{v,i,t} = 0$ for $t \ne \ell$. 
    Note that, since every level is completed in a single step, the total number of steps with any nonzero variable is $\delay$.
    
    We show that this assignment satisfies all constraints of \LP{\KulkarniLP}. Constraint \ref{KulkarniLP_end} is trivially satisfied.
    Since every job has a $1/m$ fraction assigned to $m$ machines, constraint \ref{KulkarniLP_start} is satisfied. Since there are $m$ jobs with a $1/m$ fraction assigned to each machine at each step $t$, constraint \ref{KulkarniLP_capacity} is satisfied. There is no constraint \ref{KulkarniLP_precedence} for those jobs in the first level, so we show that \ref{KulkarniLP_precedence} is satisfied for those jobs in levels $\ell \ge 1$. Let $v$ be a job in level $\ell \ge 1$ and let $u$ be any predecessor of $v$. By construction, $u$ is in some level $\ell' < \ell$. Let $v$ be fractionally scheduled to start in step $t$. By construction, all predecessors are completed by time $t$. So we have that \ref{KulkarniLP_precedence} is satisfied for all $t' < t$. Consider $t' \ge t$. In this case, both the right and left sums equal 1, so the constraint is satisfied. 
    
    All that remains is to show that constraint \ref{KulkarniLP_delay} is satisfied. Again, we consider a job $v$ in level $\ell \ge 1$ and some predecessor $u$ of $v$ in level $\ell' < \ell$. We fix the machine $i$ and suppose that $v$ is scheduled in step $t^*$. By construction, we have that $\sum_{i' \in [m] \setminus \{i\}} \sum_{t'=t-\delay}^t \timeindexedvar{u,i',t}$ equals 0 or $(1-1/m)$, depending on $t$. Similarly, we have that $\timeindexedvar{v,i,t}$ equals 0 or $1/m$ depending on $t$.
    For a given time $t < t^* -1$, the term $\timeindexedvar{v,i,t+1} = 0$, so the constraint is trivially satisfied. So we assume $t \ge t^*-1$. In this case, $\timeindexedvar{v,i,t+1} = 1/m$. 
    Let $\hat{t}$ be the step when $u$ is scheduled.
    By construction, $\hat{t} \ge t^* - \delay$. So, $\sum_{i' \in [m] \setminus \{i\}} \sum_{t'=t-\delay}^t \timeindexedvar{u,i',t} = 1 - 1/m$, so the constraint is satisfied (and tight).
    
    Thus, the makespan of the fractional schedule resulting from the variable assignments is $O(\delay) = O(m)$ by definition of $m$, and the best duplication schedule has makespan $\Omega(\delay^2) = \Omega(m^2)$. The lemma follows.
\end{proof}


Note that, since duplicated schedules include non-duplicated schedules, the above lemma applies to non-duplicated schedules as well. Finally, note that, since our algorithm finds an approximation within $\log \rho$ of the optimal solution to our LP, this implies that there is a gap of $\Omega(\delay/\log \delay)$ between \LP{\KulkarniLP} and our LP.
    
    \subsection{A Same-Phase Variable Relaxation}
    \label{sec:simple_same_phase_extension}
    \newcommand{\simpleLP}{\ref{simple_completionlb} - \ref{simple_samephase}}

The linear program \LP{\simpleLP} minimizes $C$ subject to the following constraints.

\noindent
\begin{minipage}{.6\linewidth}
  \begin{flalign}
        \LPmakespan &\ge \start{v} &\forall v
        \label{simple_completionlb}
        \\
        \LPmakespan\speed{i} &\ge \sum_v \proc{v} \jobmachine{v}{i} &\forall i
        \label{simple_loadlb}
        \\
        \start{v} &\ge \start{u} + \delay(1-\phasevariable{u}{v}) &\forall u \prec v
        \label{simple_delay}
       \\
       \delay \sum_{i}  \speed{i} \jobmachine{v}{i}  &\ge \sum_{u \prec v} \proc{u} \phasevariable{u}{v}   &\forall v
       \label{simple_phase}
  \end{flalign}
\end{minipage}\hfill
\begin{minipage}{.34\linewidth}
  \begin{flalign}
        \start{v} &\ge \start{u} &\forall u \prec v
        \label{simple_precedence}
        \\
        \start{v} &\ge 0 &\forall v
        \label{simple_nonnegstart}
        \\
        \sum_i  \jobmachine{v}{i} &= 1 &\forall v
        \label{simple_finishjob}
        \\
        \jobmachine{v}{i} &\in [0,1] &\forall v,i
        \label{simple_jobmachine}
        \\
        \phasevariable{u}{v} &\in [0,1] &\forall u,v
        \label{simple_samephase}
  \end{flalign}
\end{minipage}
\vspace{4mm}

\begin{lemma}
    For a given instance of precedence constrained scheduling with fixed communication delay, let $C^*$ be the optimal makespan for that instance. Then there is an instance for which the optimal solution to \LP{\simpleLP} is upper bound by $O(C^*/m^{1/12}) = O(C^*/\delay^{1/6}) = O(C^*/n^{1/18})$.
\end{lemma}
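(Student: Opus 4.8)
The plan is to build a family of instances whose precedence structure is ``deep'' — a collection of long chains, or equivalently a layered graph in the style of Figure~\ref{fig:gap_constructions} — together with one machine that is polynomially faster than the rest, and then to tune $m$, the number $n$ of jobs, and the delay $\rho$ against one another so that the gap is simultaneously $\Omega(m^{1/12})$, $\Omega(\rho^{1/6})$, and $\Omega(n^{1/18})$. The LP side exploits exactly the flaw identified in Section~\ref{sec:overview.makespan}: a vanishing fraction of each job placed on the fast machine inflates the left-hand side $\rho\sum_i s_i\jobmachine{v}{i}$ of the phase constraint~(\ref{simple_phase}) enough that \emph{every} same-phase variable $\phasevariable{u}{v}$ can be set to $1$, which collapses every delay constraint~(\ref{simple_delay}) to the trivial precedence constraint~(\ref{simple_precedence}); then $\start{v}=0$ is feasible for all $v$, so~(\ref{simple_completionlb}) imposes nothing, and $\LPmakespan$ is driven only by the (small, load-balanced) load constraint~(\ref{simple_loadlb}).

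\textbf{The LP solution.} Concretely, let the instance consist of $m-1$ unit-speed machines and one machine of speed $\sigma$, where $\sigma$ is a power of $m$ to be fixed, and a unit-job DAG in which every job has at most $P$ ancestors, with $P$ chosen to exceed $\rho$. Set $\jobmachine{v}{i}=(1-\varepsilon_v)/(m-1)$ on each slow machine and $\jobmachine{v}{\mathrm{fast}}=\varepsilon_v$, with $\varepsilon_v=(P/\rho-1)/(\sigma-1)$, which is at most $1$ provided $\sigma\ge P/\rho$. Then $\rho\sum_i s_i\jobmachine{v}{i}=\rho(\varepsilon_v\sigma+1-\varepsilon_v)\ge P\ge\sum_{u\prec v}\proc{u}\phasevariable{u}{v}$ when all $\phasevariable{u}{v}=1$, so~(\ref{simple_phase}) holds; setting $\start{v}=0$ and $\LPmakespan$ equal to the maximum load then satisfies~(\ref{simple_completionlb})--(\ref{simple_nonnegstart}), while~(\ref{simple_finishjob})--(\ref{simple_samephase}) hold by construction. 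Choosing $\sigma$ large enough that the fast machine's load-time $\sum_v\proc{v}\varepsilon_v/\sigma$ is also $O(n/m)$ (this is what the second, load-balance, requirement on $\sigma$ does), the value of the LP is $O(n/m)$.

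\textbf{The integral lower bound, and the main obstacle.} The remaining task — and the step I expect to be hardest — is to show $\optmakespan=\Omega(m^{1/12}\cdot n/m)$. The danger is that the fast machine short-circuits the argument: a priori an integral schedule could run whole chains (or whole levels) on the fast machine and beat any layer-by-layer bound. The plan is to pick the width of each level (or the number of chains relative to $m$) so that, except for $O(1)$ of the levels, a level is too wide to be completed within a single $\rho$-phase on \emph{any} one machine — including the fast one, whose speed is bounded above by the load-balance requirement — so that for those levels the fast machine contributes only a vanishing fraction of the level's jobs per phase. The argument then reduces to the unit-speed, layer-by-layer bound of Lemma~\ref{lem:gapinstance_optlb}: each level can begin only after the previous level has completed and its results have been communicated, forcing one $\rho$-phase per level and hence $\Omega(L\rho)$ overall. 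Comparing $\Omega(L\rho)$ with the LP value $O(n/m)$ and substituting the relationships among $L$, the level width, $\sigma$, $\rho$, $m$, and $n$ yields the claim; the reason the exponent is only $1/12$ (rather than a larger power) is the three-way tension between needing $\sigma$ large for the phase-constraint trick, needing $\sigma$ small and the levels wide so the fast machine cannot rescue the schedule, and needing the total load $n/m$ to stay small — with $\rho=\Theta(m^{1/2})$ and $n=\Theta(m\rho)=\Theta(m^{3/2})$ the balance point gives a gap of $\Theta(m^{1/12})=\Theta(\rho^{1/6})=\Theta(n^{1/18})$.
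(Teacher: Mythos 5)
There is a genuine gap, and it sits exactly at the junction between your two steps. Your LP solution requires \emph{every} same-phase variable to be set to $1$, so for a job $v$ in level $\ell\ge 2$ of a layered instance the phase constraint~(\ref{simple_phase}) forces $\rho\sum_i \speed{i}\jobmachine{v}{i}\ge |\predex{v}|\ge W$, where $W$ is the level width; since $\sum_i \speed{i}\jobmachine{v}{i}\le \sigma$, this gives $W\le \rho\sigma$. But your integral lower bound needs precisely the opposite inequality: the layer-by-layer argument of Lemma~\ref{lem:gapinstance_optlb} only forces one $\rho$-phase per level if no single machine — in particular the fast one — can process an entire level within a phase, i.e.\ $W>\rho\sigma$. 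So the ``three-way tension'' you hope to balance at exponent $1/12$ is not a tension but a contradiction: whenever the all-$y=1$ solution is feasible, the fast machine can execute the whole DAG by itself level by level with no communication in time $n/\sigma$, and (since feasibility also forces $\varepsilon\sigma\gtrsim P/\rho$ and the load constraint forces $\varepsilon\lesssim \sigma/m$ if the LP value is to be $O(n/m)$) the resulting ratio collapses to $O(1)$ rather than $m^{1/12}$. The parenthetical ``a collection of long chains, or equivalently a layered graph'' does not rescue this: for disjoint chains Lemma~\ref{lem:gapinstance_optlb} does not apply at all (its levels are independent across chains), and you give no substitute lower-bound argument for that case.

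The paper's proof avoids this by \emph{not} setting all $\phasevariable{u}{v}=1$: it partitions the $\sqrt m$ levels into classes of $m^{1/12}$ consecutive levels, sets $\phasevariable{u}{v}=1$ only within a class, and lets the LP start times advance by $\rho$ per class, so the LP value is $\rho\cdot(\text{number of classes})=m^{11/12}$ — driven by the delay constraints between classes, not by the load. The fast machine (speed $m^{1/3}$, fraction $m^{-1/4}$ per job) is only strong enough to cover the $m^{7/12}$ within-class predecessors, which is compatible with the integral side. On the integral side, the instance consists of $\sqrt m$ disjoint copies of the layered graph of Figure~\ref{fig:gap_constructions}(a) precisely so that the single fast machine cannot rescue more than a vanishing fraction of the copies, which is what makes an $\Omega(\rho\sqrt m)=\Omega(m)$ makespan bound provable and yields the stated gap $m^{1/12}=\rho^{1/6}=n^{1/18}$. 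If you want to salvage your plan, you must either give up ``all $y=1$, all $S_v=0$'' (and accept an LP value of the form $\rho\cdot(\text{number of blocks})$ as in the paper), or replace the layer-by-layer lemma with a lower-bound argument that survives a machine fast enough to absorb an entire ancestor set in one phase — and the first obstacle shows the latter is impossible for layered instances.
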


\begin{proof}
    We consider graph with $\sqrt{m}$ copies of a graph similar to the construction in Figure~\ref{fig:gap_constructions}(a), except these instances we have $\sqrt{m}$ levels with $\sqrt{m}$ jobs per level. For machines, we have $m$ machines of speed 1 and one machine of speed $m^{1/3}$. The communication delay $\delay = \sqrt{m}$ and the number of jobs $n = m^{3/2}$.
    
    A similar argument to the one used in the proof of Lemma~\ref{lem:gapinstance_optlb} entails that any schedule of this graph has makespan $\Omega(\delay\sqrt{m})$. We now prove an upper bound on the optimal value of \LP{\simpleLP}.  We assign values to each variable as follows. Let $i^*$ be the speed $m^{1/3}$ machine. For each $v,i$, if $i = i^*$ then we set $\jobmachine{v}{i} = 1/m^{1/4}$ and we distribute the remaining $\jobmachine{v}{i} = 1-1/m^{1/4}$ evenly over all remaining machines. We partition the levels into $\sqrt{m}/m^{1/12}$ classes where class $c_{r} = \{v \in$ level $\ell$ for $(r-1)m^{1/12} \le \ell < rm^{1/12}\}$. For each $v \in c_r$, we set $S_v = \delay r$ and $C_v = S_v + 1$. We set $\phasevariable{u}{v} = 1$ if, for some $r$, $u \in c_r$ and $v \in c_r$ and set $\phasevariable{u}{v} = 0$ otherwise. Finally, we set $C = \delay \sqrt{m}/m^{1/12}$.  
    
    We show that all constraints of \LP{\simpleLP} are satisfied. It is straightforward to check that constraints  (\ref{simple_precedence}) - (\ref{simple_samephase}) are satisfied, so we focus on constraints (\ref{simple_completionlb}), - (\ref{simple_phase}). We note that $\max_v \{\start{v}\} = \delay(\sqrt{m}/m^{1/12})$, so (\ref{simple_completionlb}) is satisfied. 
    For $i^*$, we have that $C\speed{i^*} = m^{17/12} > m^{15/12} = \sum_v \proc{v}\jobmachine{v}{i^*}$. For $i \ne i^*$, (\ref{simple_loadlb}) follows easily, so the constraint is satisfied.
    (\ref{simple_delay}) follows from the fact that, if $\phasevariable{u}{v} = 1$ then $\start{v} = \start{u}$ and, if $\phasevariable{u}{v} = 0$ and $u \prec v$ then $u$ is in a lower class than $v$ so $\start{v} \ge \start{u} + \delay$. 
    We now show that (\ref{simple_phase}) is satisfied. For any $v$,
    \begin{align*}
        \delay\sum_i \speed{i} \phasevariable{v}{i} &= \delay\sum_{i \ne i^*} \phasevariable{v}{i} + \delay \phasevariable{v}{i^*}m^{1/3} &\text{by construction}
        \\
        &= \delay(1-1/m^{1/4}) + \delay m^{1/3}/m^{1/4} &\text{by assignment}
        \\
        &= m^{1/2} - m^{1/4} + m^{7/12} &\text{by definition of }\delay.
    \end{align*}
    Since any $v$ has at most $\sqrt{m} \cdot m^{1/12} = m^{7/12}$ predecessors $u$ such that $\phasevariable{u}{v} = 1$, this shows the constraints are satisfied. 
    
    By assignment of $C$, we have that $C = \delay\sqrt{m}/m^{1/12} = m^{11/12}$. Since the optimal makespan is at least $m$ this proves the lemma. 
\end{proof}

\bibliographystyle{plain}
\bibliography{bibliography.bib}
\end{document}